\def\submission{0}

\ifnum\submission=0
\documentclass[11pt,letterpaper]{article}
\else
\documentclass[envcountsect]{llncs}
\pagestyle{plain}
\makeatletter
\renewcommand{\@Opargbegintheorem}[4]{%
  #4\trivlist\item[\hskip\labelsep{#3#2\@thmcounterend}]}
\makeatother
\fi

\usepackage[utf8]{inputenc}
\usepackage{amsmath}
\usepackage{amsfonts}
\usepackage{xcolor}
\usepackage{threeparttable}
\usepackage{booktabs}
\usepackage{breakcites}

\ifnum\submission=0
\usepackage{amsthm}
\fi

\usepackage{braket}
\usepackage{authblk}
\usepackage[colorlinks=true,allcolors=blue]{hyperref}

\ifnum\submission=0
\usepackage[margin=1in]{geometry}
\fi

\usepackage{xcolor}
\usepackage{algorithm}
\usepackage{mdframed} 
\mdfdefinestyle{figstyle}{ %
  linecolor=black!7, %
  backgroundcolor=black!7, %
  innertopmargin=10pt, %
  innerleftmargin=\textwidth, %
  innerrightmargin=\textwidth, %
  innerbottommargin=5pt %
}

\usepackage{comment}
\usepackage[capitalise,noabbrev]{cleveref}

\ifnum\submission=0
\newtheorem{theorem}{Theorem}[section]

\newtheorem{lemma}[theorem]{Lemma}

\newtheorem{corollary}[theorem]{Corollary}
\newtheorem{definition}[theorem]{Definition}

\newtheorem{remark}{Remark}

\fi

\newtheorem{myclaim}[theorem]{Claim}

	\crefname{theorem}{Theorem}{Theorems}
	\crefname{assumption}{Assumption}{Assumptions}
	\crefname{construction}{Construction}{Constructions}
	\crefname{corollary}{Corollary}{Corollaries}
	\crefname{conjecture}{Conjecture}{Conjectures}
	\crefname{definition}{Definition}{Definitions}
	\crefname{example}{Example}{Examples}
	\crefname{experiment}{Experiment}{Experiments}
	\crefname{counterexample}{Counterexample}{Counterexamples}
	\crefname{lemma}{Lemma}{Lemmas}
	\crefname{observation}{Observation}{Observations}
	\crefname{proposition}{Proposition}{Propositions}
	\crefname{remark}{Remark}{Remarks}
	\crefname{claim}{Claim}{Claims}
	\crefname{myclaim}{Claim}{Claims}
	\crefname{fact}{Fact}{Facts}
	\crefname{note}{Note}{Notes}
	\crefname{figure}{Figure}{Figures}

\def \sample { \overset{\hspace{0.1em}\mathsf{\scriptscriptstyle\$}}{\leftarrow} }

\newcommand{\bit}{\{0,1\}}
\newcommand{\ra}{\rightarrow}
\newcommand{\la}{\leftarrow}
\newcommand{\secpar}{\lambda}
\newcommand{\negl}{\mathsf{negl}}
\newcommand{\poly}{\mathsf{poly}}
\newcommand{\st}{\mathsf{st}}
\newcommand{\td}{\mathsf{td}}
\newcommand{\concat}{\|}

\newcommand{\whx}{\widehat{x}}
\newcommand{\whz}{\widehat{z}}

\newcommand{\reprogram}{\mathsf{Reprogram}}
\newcommand{\ROdist}{\mathsf{ROdist}}

\newcommand{\commit}{\mathsf{Commit}}
\newcommand{\decom}{d}

\newcommand{\game}{\mathsf{Game}}
\newcommand{\event}{\mathsf{E}}
\newcommand{\Win}{\mathsf{Win}}

\newcommand{\Tr}{\mathrm{Tr}}
\newcommand{\hist}{\mathrm{hist}}
\newcommand{\ham}{\mathcal{H}}

\newcommand*{\regA}{\mathbf{A}}
\newcommand*{\regC}{\mathbf{C}}
\newcommand*{\regB}{\mathbf{B}}
\newcommand*{\regP}{\mathbf{P}}
\newcommand*{\regV}{\mathbf{V}}

\newcommand{\A}{\mathcal{A}}

\newcommand{\dist}{\mathcal{D}}
\newcommand{\ora}{\mathcal{O}}

\newcommand{\NP}{\mathbf{NP}}
\newcommand{\QMA}{\mathbf{QMA}}
\newcommand{\BQP}{\mathbf{BQP}}
\newcommand{\statement}{\mathtt{x}}
\newcommand{\witness}{\mathtt{w}}
\newcommand{\yes}{\mathsf{yes}}
\newcommand{\no}{\mathsf{no}}

\newcommand{\OT}{\mathsf{OT}}
\newcommand{\ot}{\mathsf{ot}}

\newcommand{\otst}{\mathsf{st}}

\newcommand{\sender}{\mathsf{Sender}}
\newcommand{\receiver}{\mathsf{Receiver}}
\newcommand{\derive}{\mathsf{Derive}}

\newcommand{\vecmu}{\boldsymbol{\mu}}
\newcommand{\Open}{\mathsf{Open}}

\newcommand{\sen}{\mathsf{sen}}
\newcommand{\rec}{\mathsf{rec}}

\newcommand{\onen}{1\text{-}n}
\newcommand{\kn}{k\text{-}n}

\newcommand{\com}{\mathsf{com}}

\newcommand{\NIP}{\mathsf{NIP}}

\newcommand{\NIZK}{\mathsf{NIZK}}

\newcommand{\setup}{\mathsf{Setup}}
\newcommand{\prove}{\mathsf{Prove}}
\newcommand{\verify}{\mathsf{Verify}}
\newcommand{\siml}{\mathsf{Sim}}
\newcommand{\crs}{\mathsf{crs}}
\newcommand{\CRS}{\mathsf{CRS}}
\newcommand{\pkey}{k_P}
\newcommand{\vkey}{k_V}

\newcommand{\virone}{\mathsf{vir}\text{-}1}
\newcommand{\virtwo}{\mathsf{vir}\text{-}2}
\newcommand{\CRSVP}{CRS + $({V\ra P})~$}
\newcommand{\QROVP}{QRO + $({V\ra P})~$}

\newcommand{\mode}{\mathsf{mode}}
\newcommand{\binding}{\mathsf{binding}}
\newcommand{\hiding}{\mathsf{hiding}}
\newcommand{\preprocess}{\mathsf{Preprocess}}
\newcommand{\crsgen}{\mathsf{CRSGen}}

\newcommand{\DM}{\mathsf{DM}}

\newcommand{\DEnc}{\mathsf{DEnc}}
\newcommand{\keygen}{\mathsf{KeyGen}}
\newcommand{\enc}{\mathsf{Enc}}
\newcommand{\dec}{\mathsf{Dec}}
\newcommand{\findmessy}{\mathsf{FindMessy}}
\newcommand{\trapkeygen}{\mathsf{TrapKeyGen}}
\newcommand{\messymode}{\mathsf{messy}}
\newcommand{\decmode}{\mathsf{dec}}
\newcommand{\pk}{\mathsf{pk}}
\newcommand{\sk}{\mathsf{sk}}
\newcommand{\ct}{\mathsf{ct}}

\newcommand{\LE}{\mathsf{LE}}
\newcommand{\injgen}{\mathsf{InjGen}}
\newcommand{\lossygen}{\mathsf{LossyGen}}


\newcommand{\yam}[1]{{\color{orange}[Yamakawa: #1]}}
\newcommand{\mor}[1]{{\color{purple}[Morimae: #1]}}

\newcommand{\msg}{\mathsf{msg}}

\ifnum\submission=0
\title{Classically Verifiable NIZK for QMA with Preprocessing}
\else
\title{Classically Verifiable NIZK for QMA with Preprocessing\thanks{{\color{red} To reviewers who have reviewerd this paper before: See \cref{sec:differences} for the main differences from the previous version.}}}
\fi
\begin{document}
\ifnum\submission=0
\begin{flushright}
YITP-21-10
\end{flushright}
\fi

\ifnum\submission=0
\newcommand*{\email}[1]{\normalsize\href{mailto:#1}{#1}}
\author[1]{Tomoyuki Morimae}
\author[2]{Takashi Yamakawa}
\affil[1]{Yukawa Institute for Theoretical Physics,
Kyoto University and PRESTO, JST, Japan \email{tomoyuki.morimae@yukawa.kyoto-u.ac.jp}}
\affil[2]{NTT Corporation, Japan \email{ takashi.yamakawa.ga@hco.ntt.co.jp}}
\else
\author{\empty}\institute{\empty} 
\fi

{\let\newpage\relax\maketitle}

\ifnum\submission=1
\vspace{-7mm} 
\fi

\begin{abstract}
We propose three constructions of
classically verifiable 
non-interactive zero-knowledge proofs and arguments (CV-NIZK) for $\QMA$ in various preprocessing models.  
\begin{enumerate}
    \item We construct a CV-NIZK for $\QMA$ in the quantum secret parameter model where a trusted setup sends a quantum proving key 
    to the prover and a classical verification key to the verifier. It is information theoretically sound and zero-knowledge. 
    \item 
    Assuming the quantum hardness of the learning with errors problem,  
    we construct a CV-NIZK for $\QMA$  in a model where a trusted party generates a CRS and the verifier sends an instance-independent quantum message to the prover as preprocessing.
    This model is the same as one considered in the recent work by Coladangelo, Vidick, and Zhang (CRYPTO '20).  
    Our construction has the so-called dual-mode property, which means that there are two computationally indistinguishable modes of generating CRS, and we have information theoretical soundness in one mode and information theoretical zero-knowledge property in the other.
    This answers an open problem left by Coladangelo et al, which is to achieve either of soundness or zero-knowledge information theoretically.
    To the best of our knowledge, ours is the first dual-mode NIZK for $\QMA$ in any kind of model.
    \item
    We construct a CV-NIZK for $\QMA$ with quantum preprocessing in the quantum random oracle model. This quantum preprocessing is the one where the verifier sends a random Pauli-basis states to the prover. Our construction uses the Fiat-Shamir transformation. The quantum preprocessing can be replaced with the setup that distributes Bell pairs among the prover and the verifier, and therefore 
    we solve the open problem by Broadbent and Grilo (FOCS '20) about the possibility of NIZK for $\QMA$ in the shared Bell pair model
    via the Fiat-Shamir transformation.
    
\end{enumerate}
\end{abstract}
\section{Introduction}
\subsection{Background}

\if0
\paragraph{Classical verification of quantum computation.}

Verification of quantum computing is an interactive protocol between a verifier and a prover. The verifier is required to be as classical as possible, and the prover can do any polynomial-time quantum computing.
Intuitively, the goal of the verifier is to verify the correctness of quantum computing with the help of the prover. The prover might be malicious, and try to make the verifier accept a wrong statement. 
More precisely, we say that an interactive protocol between a verifier and a prover is a verification 
of quantum computing if the following is satisfied for any $\BQP$ problem $L$: for any yes instance $x\in L$, a quantum polynomial-time prover makes the verifier accept with high probability, and for any no instance $x\notin L$,
the verifier does not accept except for a small probability whatever the (unbounded/quantum polynomial-time) prover does. If the verifier can do any polynomial-time quantum computing, the verification of quantum computing is trivially possible, but
if the verifier's quantum ability is restricted, verification of quantum computing is highly non-trivial.

The most ideal case would be the classical verifier, the information-theoretical soundness (i.e., proofs), and the non-interactive property.
We immediately notice, however, that it will be impossible, because it means $\BQP\subseteq\mathbf{MA}$, which is not believed to happen~\cite{FOCS:Watrous00}.
Furthermore, even a constant-round protocol will be impossible because it means $\BQP\subseteq\mathbf{AM}$, which is not likely either~\cite{STOC:RazTal19}.  
If we allow a polynomial-round communication between the verifier and the prover, 
though there is no known negative result,   
we do not know how to construct a verification of quantum computing, and in fact,
whether $\BQP\subseteq\mathbf{IP}_{\BQP}$ or not is a long-standing open problem in quantum complexity theory~\cite{Gottesman,AharonovVazirani,Andru_review}, where
$\mathbf{IP}_{\BQP}$ is $\mathbf{IP}$ with a honest quantum polynomial-time prover~\footnote{
Note that the well-known result ${\BQP}\subseteq\mathbf{PSPACE}=\mathbf{IP}$ does not solve the open problem, because the honest prover is not known to be quantum polynomial-time. }.

The protocols~\cite{FK,Aharonov} achieve verifications of quantum computing with the information-theoretical soundness
and ``almost classical" verifiers, but
the interaction is polynomial round.
Protocols~\cite{RUV,ICALP:Grilo19,EC:CGJV19} have classical verifiers and information-theoretical soundness, but the extra assumption of the non-communicating
multiple provers is required.   
The recent breakthrough by Mahadev~\cite{FOCS:Mahadev18a} (and a related protocol~\cite{FOCS:GheVid19}) 
enables the verification of quantum computing with a completely classical verifier, but the soundness is computational one, i.e., arguments~\cite{EC:BraCre89}.
The protocols~\cite{HM,posthoc} are close to the most ideal case, because they are non-interactive, information-theoretically sound, and verifiers
are ``almost classical" in the sense
that they need only single-qubit measurements.
However, the verifiers in these protocols~\cite{HM,posthoc} are still quantum, and therefore we have the following open question.

\begin{center}
\emph{Can we construct a
non-interactive proof for $\BQP$ with a quantum polynomial-time honest prover and 
classical verifier with quantum preprocessing?}
\end{center}
Note that the quantum preprocessing is unavoidable as long as we require the classical verifier and the information-theoretical soundness.
\fi

The zero-knowledge \cite{SICOMP:GolMicRac89}, which ensures that the verifier learns nothing beyond the statement proven by the prover,
is one of the most central concepts in cryptography.
Recently, there have been many works that constructed non-interactive zero-knowledge (NIZK) \cite{STOC:BluFelMic88} proofs or arguments for $\QMA$, which is the ``quantum counterpart" of $\NP$, in various kind of models \cite{TCC:ACGH20,C:ColVidZha20,FOCS:BroGri20,C:Shmueli21,C:BCKM21a,BartusekMalavolta}.   
We note that we require the honest prover to run in quantum polynomial-time receiving  sufficiently many copies of a witness when we consider NIZK proofs or arguments for $\QMA$.
All known protocols except for the protocol of Broadbent and Grilo \cite{FOCS:BroGri20} only satisfy computational soundness. 
The protocol of \cite{FOCS:BroGri20} satisfies information theoretical soundness and zero-knowledge in the \emph{secret parameter (SP) model} \cite{C:PasShe05} where a trusted party generates proving and verification keys and gives them to the corresponding party while keeping it secret to the other party as setup.\footnote{The SP model is also often referred to as \emph{preprocessing model} \cite{C:DeSMicPer88}. 
}
A drawback of their protocol is that  the prover sends a quantum proof to the verifier, and thus the verifier should be quantum. 
Therefore it is natural to ask the following question.
\begin{center}
\emph{Can we construct a
NIZK proof for $\QMA$ with classical verification assuming a trusted party that generates proving and verification keys?} 
\end{center}

In addition, the SP model is not a very desirable model since it assumes a  strong trust in the setup.  
In the classical literature, there are  constructions of NIZK proofs for $\NP$ in the common reference string (CRS) model \cite{STOC:BluFelMic88,FLS99,C:PeiShi19} where the only trust in the setup is that a classical string is chosen according to a certain distribution and then published.
Compared to the SP model, we need to put much less trust in the setup in the CRS model.
Indeed, several works \cite{FOCS:BroGri20,C:ColVidZha20,C:Shmueli21} mention it as an open problem to construct a NIZK proofs (or even arguments) for $\QMA$ in the CRS model. 
Though this is still open, there are several constructions of NIZKs for $\QMA$ in different models that assume less trust in the setup than in the SP model \cite{C:ColVidZha20,C:Shmueli21,C:BCKM21a}.   
However, all of them are arguments.
Therefore, we ask the following question.
\begin{center}
\emph{Can we construct a
NIZK proof for $\QMA$ with classical verification in a model that assumes less trust in the setup than in the SP model?}
\end{center}

The Fiat-Shamir transformation \cite{C:FiaSha86} is one of the most important techniques in cryptography that have many applications.
In particular, NIZK can be constructed from a $\Sigma$ protocol: the prover generates the verifier's challenge $\beta$
by itself by applying a random oracle $H$ on the prover's first message $\alpha$, and then the prover
issues the proof $\pi=(\alpha,\gamma)$, where
$\gamma$ is the third message generated from $\alpha$ and $\beta=H(\alpha)$.
It is known that Fiat-Shamir transform works in the post-quantum setting where we consider classical protocols secure against quantum adversaries \cite{C:LiuZha19,C:DFMS19,C:DonFehMaj20}.
On the other hand, it is often pointed out that (for example, \cite{C:Shmueli21,FOCS:BroGri20}) 
this standard technique cannot be used in the fully quantum setting. In particular, due to 
the no-cloning, the application of random oracle on the first message does not work when the first message is
quantum 
like so-called the $\Xi$-protocol constructed by Broadbent and Grilo~\cite{FOCS:BroGri20}.
Broadbent and Grilo 
left the following open problem:
\begin{center}
\emph{Is it possible to construct NIZK for $\QMA$ in the CRS model (or shared Bell pair model) via
the Fiat-Shamir transformation?}
\end{center}
Note that the shared Bell pair model is the setup model where the setup distributes Bell pairs among the prover and the verifier.
It can be considered as a ``quantum analogue" of the CRS \cite{Kobayashi03}.

\subsection{Our Results}
We answer the above questions affirmatively.
\begin{enumerate}
\if0
    \item We construct a (not zero-knowledge) classically verifiable non-interactive proof (CV-NIP) for $\QMA$ in the \emph{quantum secret parameter (QSP) model} where a trusted party generates a quantum proving key and classical verification key and gives them to the corresponding parties. 
    We do not rely on any computational assumption for this construction. 
    Furthermore, the proving key is the simplest quantum state, i.e.,
    a tensor product of random computational or Hadamard basis states.
    Alternatively, we can think of the protocol as one in a model where the verifier sends  
    an instance-independent quantum message to the prover as preprocessing since soundness is not harmed even if the verifier plays the role of the trusted party. 
    In the preprocessing, the verifier only needs to  do single-qubit quantum operations (Hadamard or bit-flip gates)
    and send qubits one-by-one, and the online phase of the verifier is completely classical.
    Note that the honest prover in our construction runs in quantum polynomial time receiving a witness state.
     Because $\BQP$ is in $\QMA$ with a trivial witness state (such as the all-zero state), this answers our first question. 
   
    \fi
    
    \item We construct a classically verifiable NIZK  (CV-NIZK) for $\QMA$ in the QSP model where a trusted party generates a quantum proving key and classical verification key and gives them to the corresponding parties. 
    We do not rely on any computational assumption for this construction either, and thus both soundness and the zero-knowledge property are satisfied information theoretically. 
    This answers our first question. 
    Compared with \cite{FOCS:BroGri20}, ours has an advantage that verification is classical at the cost of making the proving key quantum. 
    The proving key is a very simple state, i.e., a tensor product of randomly chosen
    Pauli $X$, $Y$, or $Z$ basis states.
    We note that we should not let the verifier  play the role of the trusted party for this construction since that would break the zero-knowledge property. 
    \item 
    Assuming the quantum hardness of the learning with errors problem (the LWE assumption)~\cite{JACM:Regev09},  
    we construct a CV-NIZK for $\QMA$ in a model where a trusted party generates a CRS and the verifier sends an instance-independent quantum message to the prover as preprocessing.
    We note that the CRS is reusable for generating multiple proofs but the quantum message in the preprocessing is not reusable. 
    In this model, we only assume a trusted party that just generates a CRS once, and thus this answers our second question. 
    This model is the same as one considered in \cite{C:ColVidZha20} recently, and we call it the \CRSVP model. 
    Compared to their work, our construction has the following advantages.
    \begin{enumerate}
    \item In their protocol, both soundness and the zero-knowledge property hold only against quantum polynomial-time adversaries, and they left it open to achieve either of them information theoretically. 
    We answer the open problem.
    Indeed, our construction has the so-called dual-mode property~\cite{GOS,C:PeiShi19}, which means that there are two computationally indistinguishable modes of generating CRS, and we have information theoretical soundness in one mode and information theoretical zero-knowledge property in the other.
    To the best of our knowledge, ours is the first dual-mode NIZK for $\QMA$ in any kind of model.
    \item Our protocol uses underlying cryptographic primitives (which are lossy encryption and oblivious transfer with certain security) only in a black-box manner whereas their protocol heavily relies on non-black-box usage of the underlying primitives.
    Indeed, their protocol uses 
    fully homomorphic encryption to homomorphically runs the proving algorithm of a NIZK for $\NP$, which would make the protocol extremely inefficient. 
    On the other hand, our construction uses the underlying primitives only in a black-box manner, which results in a much more efficient construction.
    We note that black-box constructions have been considered desirable for both theoretical and practical reasons in the cryptography community (e.g., see introduction of \cite{STOC:IKLP06}). 
    \item The verifier's quantum operation in our preprocessing is simpler than that in theirs: in the preprocessing of our protocol, the verifier has only to do single-qubit gate operations (Hadamard, bit-flip or phase gates), while in
    the preprocessing of their protocol, the verifier has to do five-qubit (entangled) Clifford operations.
    In their paper~\cite{C:ColVidZha20}, they left the following open problem: how far their preprocessing phase could be weakened? Our construction with the weaker verifier therefore partially answers the open problem.
    \end{enumerate}
On the other hand, Coladangelo et al. \cite{C:ColVidZha20} proved that their protocol is also an \emph{argument of quantum knowledge (AoQK)}.
We leave it open to study if ours is also a proof/argument of knowledge.
\item
    We construct a CV-NIZK for $\QMA$ with quantum preprocessing in the quantum random oracle model. This quantum preprocessing is the one where the verifier sends a random Pauli-basis states to the prover. Our construction uses the Fiat-Shamir transformation. Importantly, the quantum preprocessing can be replaced with the setup that distributes Bell pairs among the prover and the verifier. The distribution of Bell pairs by the setup can be considered as a ``quantum analogue" of the CRS. This result gives an answer 
    to our third question (and the second question as well).
    (Note that both the soundness and zero-knowledge property of the construction
    are computational one, but it does not mean that we use some computational assumptions:
    just the oracle query is restricted to be polynomial time.)
\end{enumerate}

\paragraph{Comparison among NIZKs for $\QMA$.}
\begin{table}[t]
\setlength\tabcolsep{0.5eM}
\begin{center}
\begin{minipage}[c]{\textwidth} \scriptsize
\begin{center}
\begin{threeparttable}
\caption{Comparison of NIZKs for $\QMA$. }
\label{tbl:compare_nizk}
\begin{tabular}{lllllll}
\toprule
\multicolumn{1}{c}{Reference} & \multicolumn{1}{c}{Soundness} &\multicolumn{1}{c}{ZK}
&\multicolumn{1}{c}{Verification}
&\multicolumn{1}{c}{Model} &\multicolumn{1}{c}{Assumption}
&\multicolumn{1}{c}{Misc}\\
\midrule
\cite{TCC:ACGH20}&comp. & comp. &classical& SP & LWE + QRO \\\hline
\cite{C:ColVidZha20}&comp.&comp.&quantum+classical&\CRSVP&LWE&AoQK\\\hline
\cite{FOCS:BroGri20}&stat. & stat. &quantum &SP &None \\\hline
\cite{C:Shmueli21}&comp.&comp.&quantum& MDV& LWE&reusable\\\hline
\cite{C:BCKM21a}&comp.&comp.&quantum& MDV& LWE&
\begin{tabular}{@{}l@{}} 
reusable and\\
single-witness
\end{tabular}
\\\hline
\cite{BartusekMalavolta}&comp.& stat. &classical& CRS & iO + QRO (heuristic) \\\hline
\cref{sec:CV-NIZK}& stat.&stat.&classical &QSP& None  \\\hline
\cref{sec:Dual-mode}& 
\begin{tabular}{@{}l@{}} 
stat.\\
comp.
\end{tabular}
&
\begin{tabular}{@{}l@{}} 
comp.\\
stat.
\end{tabular}
&quantum+classical &\CRSVP& LWE& dual-mode  \\\hline
\cref{sec:Fiat-Shamir}&comp.(query)&comp.(query)&classical&$V\to P$/Bell pair&QRO\\
\bottomrule\\
\end{tabular}
In column ``Soundness''  (resp. ``ZK''), stat., and comp. mean statistical, and computational soundness (resp. zero-knowledge), respectively. 
Also, comp.(query) means that only the number of queries should be polynomial.
In column ``Verification", ``quantum+classical" means that the verifier needs to send a quantum message in preprocessing but the online phase of verification is classical. 
QRO means the quantum random oracle.
\end{threeparttable}
\end{center}
\end{minipage}
\end{center}
\end{table}

We give more comparisons among our and known constructions of  NIZKs for $\QMA$.
Since we already discuss comparisons with ours and \cite{FOCS:BroGri20,C:ColVidZha20}, we discuss comparisons with other works.  A summary of the comparisons is given in \cref{tbl:compare_nizk}. 

\if0
Alagic et al. \cite{TCC:ACGH20} gave a construction of a NIZK for $\QMA$ in the SP model.  
Their protocol has an advantage that both the trusted party and verifier are completely classical.   
On the other hand, the drawbacks are that their security proof relies on random oracle (RO) heuristic, and only achieves computational soundness and zero-knowledge
whereas our constructions achieve (at least) either statistical soundness or zero-knowledge without relying on RO (though in different models). 
\fi
Alagic et al. \cite{TCC:ACGH20} gave a construction of a NIZK for $\QMA$ in the SP model.  
Their protocol has an advantage that both the trusted party and verifier are completely classical.   
On the other hand, the drawback is that only computational soundness and zero-knowledge are achieved,
whereas our first two constructions achieve (at least) either statistical soundness or zero-knowledge.
Their protocol also uses the Fiat-Shamir transformation with quantum random oracle like our third result, but
their setup is the secret parameter model, whereas ours can be the sharing Bell pair model, which is a quantum analogue of the CRS model.

Shmueli \cite{C:Shmueli21} gave a construction of a NIZK for $\QMA$ in the malicious designated-verifier (MDV) model, where a trusted party generates a CRS and the verifier sends an instance-independent classical message to the prover as preprocessing. 
In this model, the preprocessing is \emph{reusable}, i.e., a single preprocessing can be reused to generate arbitrarily many proofs later. 
This is a crucial advantage of their construction compared to ours. 
On the other hand,  
in their protocol, proofs are quantum and thus 
the verifier should perform quantum computations in the online phase whereas the online phase of the verifier is classical in our constructions.  
Also, their protocol only satisfies computational soundness and zero-knowledge whereas we can achieve (at least) either of them statistically.

Recently, Bartusek et al. \cite{C:BCKM21a} gave another construction of a NIZK for $\QMA$ in the MDV model that has an advantage that the honest prover only uses a single copy of a witness. (Note that all other NIZKs for $\QMA$ including ours require the honest prover to take multiple copies of a witness if we require neglible completeness and soundness errors.)
However, their construction also requires quantum verifier in the online phase and only achieves computational soundness and zero-knowledge similarly to \cite{C:Shmueli21}.

Subsequently to our work, Bartusek and Malavolta~\cite{BartusekMalavolta} recently constructed the first CV-NIZK argument for $\QMA$ in the CRS model assuming the LWE assumption and ideal obfuscation for classical circuits.
An obvious drawback is the usage of ideal obfuscation, which has no provably secure instantiation.\footnote{In the latest version, they give a candidate instantiation based on indistinguishability obfuscation and random oracles. However,
the instantiation is heuristic since they obfuscate circuits that involve the random oracle, which cannot be done in the quantum random oracle model.}
They also construct a witness encryption scheme for $\QMA$ under the same assumptions.
They use the verification protocol of Mahadev~\cite{FOCS:Mahadev18b} and therefore the LWE assumption is necessary.
If our CV-NIZK in the QSP model is used, instead, a witness encryption for $\QMA$ (with quantum ciphertext) would be constructed without the LWE assumption,  
which is one interesting application of our results.

\subsection{Technical Overview}

\paragraph{Classically verifiable NIZK for $\QMA$ in the QSP model.}
Our starting point is the NIZK for $\QMA$ in \cite{FOCS:BroGri20},
which is based on the
fact that 
a $\QMA$ language can be reduced to the $5$-local Hamiltonian problem with \emph{locally simulatable} history 
states \cite{FOCS:BroGri20,FOCS:GriSloYue19}.
(We will explain later the meaning of ``locally simulatable".)
An instance $\statement$ corresponds to an  $N$-qubit Hamiltonian $\ham_\statement$ of the form  
$
\ham_\statement=\sum_{i=1}^M p_i \frac{I+s_i P_i}{2},
$
where $N=\poly(|\statement|)$, $M=\poly(|\statement|)$, $s_i\in\{+1,-1\}$, $p_i> 0$, $\sum_{i=1}^M p_i=1$, and
$P_i$ is a tensor product of Pauli operators $(I,X,Y,Z)$
with at most $5$ nontrivial Pauli operators $(X,Y,Z)$.
There are $0<\alpha<\beta<1$ with $\beta-\alpha = 1/\poly(|\statement|)$ such that 
if $\statement$ is a yes instance, then there exists a state $\rho_\hist$ (called the \emph{history state}) 
such that $\Tr(\rho_\hist \ham_\statement)\leq \alpha$, and
if $\statement$ is a no instance, then for any state $\rho$, we have $\Tr(\rho \ham_\statement)\geq \beta$.

The completeness and the soundness of the NIZK for $\QMA$ in \cite{FOCS:BroGri20} is based on the posthoc 
verification protocol \cite{posthoc}, which is explained as follows.
To prove that $\statement$ is a yes instance, the prover sends the history state to the verifier.
The verifier first chooses $P_i$ with probability $p_i$,   
and measures each qubit in the Pauli basis corresponding to $P_i$.
Let $m_j\in\{0,1\}$ be the measurement result on $j$th qubit.
The verifier accepts if  $(-1)^{\oplus_j m_j}=-s_i$ and rejects otherwise. 
The probability that the verifier accepts is $1-\Tr(\rho \ham_\statement)$ when the prover's quantum message
is $\rho$, and therefore 
the verifier accepts with probability at least $1-\alpha$ if $\statement$ is a yes instance and the prover is honest whereas it accepts with probability at most $1-\beta$ if $\statement$ is a no instance. (See \cref{lem:prob_and_energy} and \cite{posthoc}.)
The gap between completeness and soundness can be amplified by simple parallel repetitions.

The verifier in the posthoc protocol is, however, not classical,
because it has to receive a quantum state and measure each qubit.
Our first idea to make the verifier classical is to use the quantum teleportation.
Suppose that the prover and verifier share sufficiently many Bell pairs at the beginning. 
Then the prover  can send the history state to the verifier with classical communication by the quantum teleportation. 
Though this removes the necessity of quantum communication, the verifier still needs to be quantum since it has to keep halves of Bell pairs and perform a measurement after receiving a proof. 

To solve the problem, we utilize our observation that the verifier's measurement and the prover's measurement commute
with each other, which is our second idea. In other words,  we can let the verifier perform the measurement at the beginning without losing completeness or soundness. 
In the above quantum-teleportation-based protocol, when the prover sends its measurement outcomes $\{(x_j,z_j)\}_{j\in [N]}$ to the verifier,  the verifier's state collapses to $X^xZ^z  \rho_\hist Z^z X^x$  
where $\rho_\hist$ denotes the history state and 
$X^xZ^z$ means $\prod_{j=1}^{N} X_j^{x_j}Z_j^{z_j}$.  
Then the verifier applies the Pauli correction $X^xZ^z$ and then measures each qubit in a Pauli basis. 
We observe that the Pauli correction can be applied even after the verifier measures each qubit 
because $X_j^{x_j}Z_j^{z_j}$ before a Pauli measurement on the $j$th qubit
has the same effect as XOR by $z_j$ or $x_j$ after the measurement (see \cref{lem:XZ_before_measurement}).
Therefore, if a trusted party generates Bell pairs and measures half of them in random Pauli basis and gives the unmeasured halves to the prover as a proving key while the measurement outcomes to the verifier as a verification key, a completely classical verifier can verify the $\QMA$ language.

The last remaining issue is that the 
distribution of bases that appear in $P_i$ depends on the instance $\statement$, and thus we cannot sample the distribution at the setup phase 
where $\statement$ is not decided yet. 
To resolve this issue, we use the following idea (which was also used in \cite{TCC:ACGH20}). 
The trusted party just chooses random bases, and the verifier just accepts if they are inconsistent to $P_i$ chosen by the verifier in the online phase. 
Since there are only $3$ possible choices of the bases and $P_i$ non-trivially acts on  at most $5$ qubits, the probability that the randomly chosen bases are consistent to $P_i$ is at least $3^{-5}$.\footnote{There is a subtle issue that the probability depends on the number of qubits on which $P_i$ non-trivially acts. We adjust this by an additional biased coin flipping.}
Therefore we can still achieve inverse-polynomial gap between completeness and soundness.

The zero-knowledge property of the NIZK for $\QMA$ in \cite{FOCS:BroGri20} uses the local simulatability of the history state.
It roughly means that a classical description of the reduced density matrix of the history state 
for any $5$-qubit subsystem can be efficiently computable without knowing the witness.
Broadbent and Grilo \cite{FOCS:BroGri20} used this local simulatability to achieve the zero-knowledge property
as follows.
A trusted party randomly chooses $(\whx,\whz)\sample \bit^{N}\times \bit^{N}$, and randomly picks a random subset $S_V\subseteq [N]$ such that $1\le|S_V|\leq 5$. 
Then it gives $(\whx,\whz)$ to the prover as a proving key and gives $\{(\whx_j,\whz_j)\}_{j\in S_V}$ to the verifier as a verification key where $\whx_j$ and $\whz_j$ denote the $j$-th bits of $\whx$ and $\whz$, respectively.
The prover generates the history state $\rho_\hist$ and sends $\rho'=X^{\whx}Z^{\whz}\rho_\hist Z^{\whz}X^{\whx}$ to the verifier as a proof. 
The verifier then measures each qubit as is done in the posthoc verification protocol. This needs the quantum verifier, but as we have explained, we can make the verifier classical by using the teleportation technique. 

An intuitive explanation of why it is zero-knowledge is that the verifier can access at most five qubits of the history state, because other qubits are quantum one-time padded.
Due to the local simulatability of the history state, the information that the verifier gets can be classically
simulated without the witness.
This results in our classically verifiable NIZK for $\QMA$ in the QSP model. 
In our QSP model, the trusted setup sends random Pauli basis states to the prover and their classical description
to the verifier. Furthermore, the trusted setup also
sends randomly chosen $(\whx,\whz)\sample \bit^{N}\times \bit^{N}$ to the prover, 
and $\{(\whx_j,\whz_j)\}_{j\in S_V}$ to the verifier with randomly chosen subset $S_V$.

\paragraph{Classically verifiable NIZK for QMA in the \CRSVP model.}
We want to reduce the trust in the setup, so let us first examine what happens if the verifier runs the setup as preprocessing. 
Unfortunately, such a construction is not zero-knowledge since the verifier can know whole bits of $(\whx,\whz)$ and thus it may obtain information of qubits of $\rho_\hist$ that are outside of $S_V$, in which case we cannot rely on the local simulatability.
Therefore, for ensuring the zero-knowledge property, we have to make sure that the verifier only knows $\{(\whx_j,\whz_j)\}_{j\in S_V}$.
Then suppose that the prover chooses $(\whx,\whz)$ whereas other setups are still done by the verifier. 
Here, the problem is how to let the verifier know $\{(\whx_j,\whz_j)\}_{j\in S_V}$.
A naive solution is that the verifier sends  $S_V$ to the prover and then the prover returns $\{(\whx_j,\whz_j)\}_{j\in S_V}$.
However, such a construction is not sound since it is essential that the prover ``commits" to a single quantum state independently of $S_V$ when reducing soundness to the local Hamiltonian problem.
So what we need is a protocol between the prover and verifier where the verifier only gets $\{(\whx_j,\whz_j)\}_{j\in S_V}$ and the prover does not learn $S_V$.
We observe that this is exactly the functionality of \emph{$5$-out-of-$N$ oblivious transfer}~\cite{C:BraCreRob86}.

Though it may sound easy to solve the problem by just using a known two-round $5$-out-of-$N$ oblivious transfer, there is still some subtlety. 
For example, if we use an oblivious transfer that satisfies only indistinguishability-based notion of receiver's security (e.g., \cite{NP01,TCC:BraDot18}),\footnote{The indistinguishability-based  receiver's security is also often referred to as half-simulation security \cite{EC:CamNevshe07}.} which just says that the sender cannot know indices chosen by the receiver, we cannot prove soundness. 
Intuitively, this is because the indistinguishability-based receiver's security does not prevent a malicious sender from generating a malicious message such that the message derived on the receiver's side depends on the chosen indices, which does not force the prover to ``commit" to a single state.  

If we use a 
\emph{fully-simulatable} \cite{RSA:Lindell08a} 
oblivious transfer, the above problem does not arise and we can prove both soundness and zero-knowledge.  
However, the problem is that we are not aware  of any efficient fully-simulatable $5$-out-of-$N$ oblivious transfer based on post-quantum assumptions (in the CRS model).
The LWE-based construction of \cite{C:PeiVaiWat08} does not suffice for our purpose since a CRS can be reused only a bounded number of times in their construction.
Recently, Quach \cite{SCN:Quach20} resolved this issue, and proposed an efficient fully-simulatable $1$-out-of-$2$ oblivious transfer based on the LWE assumption.\footnote{Actually, his construction satisfies a stronger UC-security \cite{Canetti20,C:PeiVaiWat08}.} 
We can extend his construction to a fully-simulatable $1$-out-of-$N$ oblivious transfer efficiently. 
However, we do not know how to convert this into $5$-out-of-$N$ one efficiently without losing the full-simulatability.
We note that a conversion from $1$-out-of-$N$ to $5$-out-of-$N$ oblivious transfer by a simple $5$-parallel repetition  loses the full-simulatability against malicious senders since a malicious sender can send different inconsistent messages in different sessions, which should be considered as an attack against the full-simulatability.
One possible way to prevent such an inconsistent message attack is to let the sender prove that the messages in all sessions are consistent by using (post-quantum) CRS-NIZK for $\NP$ \cite{C:PeiShi19}.
However, such a construction is very inefficient since it uses the underlying $1$-out-of-$N$ oblivious transfer in a non-black-box manner, which we want to avoid.

We note that the parallel repetition construction preserves indistinguishability-based  receiver's security and fully-simulatable sender's security for two-round protocols. Therefore, we have an efficient (black-box) construction of $5$-out-of-$N$ oblivious transfer if we relax the receiver's security to the indistinguishability-based one.
As already explained, such a security does not suffice for proving soundness. 
To resolve this issue, we add an additional mechanism to force the prover to ``commit" to a single state. 
Specifically, instead of directly sending $(x,z)$ by a $5$-out-of-$N$ oblivious transfer, the prover sends a commitment of $(x,z)$ and then sends $(x,z)$ and the corresponding randomness used in the commitment by a $5$-out-of-$N$ oblivious transfer.
When the verifier receives $\{x_j,z_j\}_{j\in S_V}$ and corresponding randomness, it checks if it is consistent to the commitment by recomputing it, and immediately rejects if not.
This additional mechanism prevents a malicious prover's inconsistent behavior, which resolves the problem in the proof of soundness. 

Finally, our construction satisfies the dual-mode property if we assume appropriate dual-mode properties for building blocks.
A dual-mode oblivious transfer (in the CRS model) has two modes of generating a CRS and it satisfies statistical (indistinguishability-based) receiver's security in one mode and statistical (full-simulation-based) sender's security in the other mode.
The construction of \cite{SCN:Quach20} is an instantiation of a $1$-out-of-$2$ oblivious transfer with such a dual-mode property, and this can be converted into $5$-out-of-$N$ one as explained above. We stress again that it is important to relax the receiver's security to  the indistinguishability-based one to make the conversion work. 
A dual-mode commitment (in the CRS model) has two modes of generating a CRS and it is statistically binding in one mode and statistically hiding in the other mode.
We can use lossy encryption \cite{EC:BelHofYil09,JACM:Regev09} as an instantiation of such a dual-mode commitment.
Both of dual-mode $5$-out-of-$N$ oblivious transfer and lossy encryption are 
based on the LWE assumption (with super-polynomial modulus for the former) and fairly efficient in the sense that they do not rely on non-black-box techniques. 
Putting everything together, we obtain a fairly efficient (black-box) construction of a dual-mode NIZK for $\QMA$ in the \CRSVP model.

\paragraph{NIZK for $\QMA$ via Fiat-Shamir transformation.}
Finally, let us explain our construction of NIZK for $\QMA$ via the Fiat-Shamir transformation.
It is based on so-called the $\Xi$-protocol for $\QMA$ \cite{FOCS:BroGri20}, which is equal to the standard
$\Sigma$-protocol except that the first message is quantum.
Because the first message is quantum, the Fiat-Shamir technique cannot be directly applied.
Our idea is again to use the teleportation technique: if we introduce a setup that sends random Pauli basis states
to the prover and their classical description to the verifier, the first message can be classical.
We thus obtain a (classical) $\Sigma$-protocol in the QSP model,
where the trusted setup sends random Pauli basis states to the prover and their classical description
to the verifier.
This task can be, actually, done by the verifier, not the trusted setup,
unlike our first construction.
We therefore obtain a (classical) $\Sigma$-protocol with quantum preprocessing (\cref{def:sigma_q_prepro}),
where the verifier sends random Pauli basis states to the prover as the preprocessing.

We then apply the (classical) Fiat-Shamir transformation to the $\Sigma$-protocol
with quantum preprocessing, and obtain the 
CV-NIZK for $\QMA$ in the quantum random oracle plus $V\to P$ model (\cref{def:qrovp_nizk}), 
where $V\to P$ means the communication from the verifier to the prover as the preprocessing. 
Note that we are considering a classical $\Sigma$-protocol with quantum preprocessing differently from previous works. By a close inspection, we show that an existing security proof for classical $\Sigma$-protocol in the QROM \cite{C:DonFehMaj20} also works in our setting.

Importantly, in this case, unlike the previous two constructions, the quantum preprocessing can be replaced with 
the setup that distributes Bell pairs among the prover and the verifier.
As a corollary, we therefore obtain NIZK for $\QMA$ in the shared Bell pair model (plus quantum random
oracle).
The distribution of Bell pairs by a trusted setup
can be considered as a ``quantum analogue" of the CRS, and therefore 
we can say that we obtain NIZK for $\QMA$ in the ``quantum CRS" model via the Fiat-Sharmir transformation.

\subsection{Related Work}
\ifnum\submission=1
We have already compared our results with previous NIZKs for $\QMA$.
More related works are discussed in \cref{sec:omitted_related_works}.
\else

\if0
\paragraph{Subsequent work.}
Subsequently to our work, Bartusek and Malavolta~\cite{BartusekMalavolta} recently constructed the first CV-NIZK argument for $\QMA$ in the CRS model assuming the LWE assumption and ideal obfuscation for classical circuits.
An obvious drawback is the usage of ideal obfuscation, which has no provably secure instantiation.
They also construct a witness encryption scheme for $\QMA$ under the same assumptions.
They use the verification protocol of Mahadev~\cite{FOCS:Mahadev18b} and therefore the LWE assumption is necessary.
If our CV-NIP is used, instead, a witness encryption for $\QMA$ (with quantum ciphertext) would be constructed without the LWE assumption,  
which is one interesting application of our CV-NIP.
\fi

\paragraph{More related works on quantum NIZKs.}
Kobayashi \cite{Kobayashi03} studied (statistically sound and zero-knowledge) NIZKs in a model where the prover and verifier share Bell pairs, and gave a complete problem in this setting.
It is unlikely that the complete problem contains (even a subclass of) $\NP$ \cite{MW18} and thus even a NIZK for all $\NP$ languages is unlikely to exist in this model.
Note that if we consider the prover and verifier sharing Bell pairs in advance like this model,
the verifier's preprocessing message 
of our protocols (and the protocol of~\cite{C:ColVidZha20}) 
becomes
classical.
Chailloux et al. \cite{TCC:CCKV08} showed that there exists a (statistically sound and zero-knowledge) NIZK for all languages in $\mathbf{QSZK}$ in the help model where a trusted party generates a pure state \emph{depending on the statement to be proven} and gives copies of the state to both prover and verifier.

\paragraph{Interactive zero-knowledge for $\QMA$.}
There are several works of interactive zero-knowledge proofs/arguments for $\QMA$.
The advantage of these constructions compared to non-interactive ones is that they do not require any trusted setup. 
Broadbent, Ji, Song, and Watrous \cite{BJSW20} gave the  first construction of a zero-knowledge proof for $\QMA$.
Broadbent and Grilo \cite{FOCS:BroGri20} gave an alternative simpler construction.
Bitansky and Shmueli \cite{STOC:BitShm20} gave the first constant round zero-knowledge argument for $\QMA$ with negligible soundness error. 
Brakerski and Yuen \cite{BraYue} gave a construction of $3$-round \emph{delayed-input} zero-knowledge proof for $\QMA$ where  the prover needs to know the statement and witness only for generating its last message. 
By considering the first two rounds as preprocessing, we can view this construction as a NIZK in a certain kind of preprocessing model. 
However, their protocol has a constant soundness error, and it seems difficult to prove the zero-knowledge property for the parallel repetition version of it.


\fi

\section{Preliminaries}\label{sec:preliminary}
\ifnum\submission=1
Basic notations used in this paper are given in \cref{sec:notations}, which are mostly standard in cryptography.
\else
\paragraph{Notations.}
We use $\secpar$ to denote the security parameter throughout the paper. 
For a positive integer $N$, $[N]$ means the set $\{1,2,...,N\}$.
For a probabilistic classical  or quantum algorithm $\A$, we denote by $y\sample \A(x)$ to mean $\A$ runs on input $x$ and outputs $y$. 
For a finite set $S$ of classical strings, $x\sample S$ means that $x$ is uniformly randomly chosen from $S$.
For a classical string $x$, $x_i$ denotes the $i$-th bit of $x$. 
For classical strings $x$ and $y$, $x\concat y$ denotes the concatenation of $x$ and $y$. 
We write $\poly$ to mean an unspecified polynomial and $\negl$ to mean an unspecified negligible function. 
We use PPT to stand for (classical) probabilistic polynomial time and QPT to stand for quantum polynomial time.
When we say that an algorithm is non-uniform QPT, it is expressed as a family of polynomial size quantum circuits with quantum advice. 
\fi

\subsection{Quantum Computation Preliminaries}\label{sec:quantum_preliminaries}
Here, we briefly review basic notations and facts on quantum computations.  

For any quantum state $\rho$ over registers $\regA$ and $\regB$,
$\Tr_{\regA}(\rho)$ is the partial trace of $\rho$ over $\regA$.
We use $I$ to mean the identity operator. 
(For simplicity, we use the same $I$ for all identity operators with different dimensions, because the
dimension of an identity operator is clear from the context.)
We use $X$, $Y$, and $Z$ to mean Pauli operators  
i.e., 
$  X := \left(
    \begin{array}{cc}
      0 & 1 \\
      1 & 0  
    \end{array}
  \right)$,
  $  Z := \left(
    \begin{array}{cc}
      1 & 0 \\
      0 & -1  
    \end{array}
  \right)$,
and $Y :=iXZ$.
We use $H$ to mean Hadamard operator, i.e., 
$  H := \frac{1}{\sqrt{2}}\left(
    \begin{array}{cc}
      1 & 1 \\
      1 & -1  
    \end{array}
  \right)$.
  We also define the $T$ operator by
 $  T := \left(
    \begin{array}{cc}
      1 & 0 \\
      0 & e^{i\pi/4}  
    \end{array}
  \right)$.
The 
$CNOT:=|0\rangle\langle0|\otimes I+|1\rangle\langle1|\otimes X$
is the controlled-NOT operator.
  
We define 
$V(Z):=I$, $V(X):=H$, and $V(Y):=
\frac{1}{\sqrt{2}}\left(
    \begin{array}{cc}
      1 & 1 \\
      i & -i  
    \end{array}
  \right)$ 
 so that for each $W\in \{X,Y,Z\}$, $V(W)\ket{0}$ and $V(W)\ket{1}$ are the eigenvectors of $W$ with eigenvalues 
 $+1$ and $-1$, respectively.
For each  $W\in \{X,Y,Z\}$, we call $\{V(W)\ket{0},V(W)\ket{1}\}$ the $W$-basis.


When we consider an $N$-qubit system, for a Pauli operator $Q\in \{X,Y,Z\}$, $Q_j$ denotes the operator that acts on $j$-th qubit as $Q$ and trivially acts on all the other qubits.
Similarly, $V_j(W)$ denotes the operator that acts on $j$-th qubit as $V(W)$ and trivially acts on all the other qubits.
For any $x\in \bit^N$ and $z\in \bit^N$, 
$X^xZ^z$ means $\prod_{j=1}^{N} X_j^{x_j}Z_j^{z_j}$. 

We call the state $\frac{1}{\sqrt{2}}\left(\ket{0}\otimes\ket{0}+\ket{1}\otimes\ket{1}\right)$ the Bell pair. We call the set $\{\ket{\phi_{x,z}}\}_{(x,z)\in \bit^{2}}$ the Bell basis where  
$
|\phi_{x,z}\rangle:= (X^xZ^z\otimes
I)\frac{\ket{0}\otimes\ket{0}+\ket{1}\otimes\ket{1}}{\sqrt{2}}.
$
Let us define
$U(X):=V(X)$,
$U(Y):=V(Y)X$,
and
$U(Z):=V(Z)$.

\begin{lemma}[State Collapsing]\label{lem:statecollapsing}
If we project one qubit of a Bell pair onto $V(W)|m\rangle$ with $W\in\{X,Y,Z\}$ and $m\in\{0,1\}$, 
the other qubit collapses to $U(W)|m\rangle$.
\end{lemma}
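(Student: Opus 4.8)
The plan is to verify the claim by a direct computation in the Bell basis, using the explicit definitions of $V(W)$ and $U(W)$ given just above the statement. Write the Bell pair as $|\Phi\rangle = \tfrac{1}{\sqrt 2}(|00\rangle + |11\rangle)$. The key algebraic fact I would invoke is the "transpose trick": for any single-qubit operator $A$, one has $(A\otimes I)|\Phi\rangle = (I\otimes A^{\mathsf T})|\Phi\rangle$, where $A^{\mathsf T}$ is the transpose in the computational basis. Equivalently, $\langle\psi|\otimes I)|\Phi\rangle = \tfrac{1}{\sqrt 2}\,\overline{|\psi\rangle}$ where $\overline{|\psi\rangle}$ is the entrywise complex conjugate of $|\psi\rangle$ in the computational basis. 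This is the one-line identity that does all the work.

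First I would fix $W\in\{X,Y,Z\}$ and $m\in\{0,1\}$, and compute the (unnormalized) post-measurement state of the second qubit when the first qubit is projected onto $V(W)|m\rangle$: it is $\big(\langle m|V(W)^\dagger \otimes I\big)|\Phi\rangle = \tfrac{1}{\sqrt2}\,\overline{V(W)|m\rangle}$. So the claim reduces to checking, for each of the three bases and each $m$, that $\overline{V(W)|m\rangle}$ is proportional to $U(W)|m\rangle$ (a global phase is irrelevant to the state). For $W=Z$: $V(Z)=I$, its columns $|0\rangle,|1\rangle$ are real, and $U(Z)=I$, so both sides are $|m\rangle$. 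For $W=X$: $V(X)=H$ is real, so $\overline{H|m\rangle}=H|m\rangle$, and $U(X)=V(X)=H$, done. For $W=Y$: $V(Y)=\tfrac{1}{\sqrt2}\begin{pmatrix}1&1\\ i&-i\end{pmatrix}$, so $V(Y)|0\rangle = \tfrac{1}{\sqrt2}\binom{1}{i}$ and $V(Y)|1\rangle=\tfrac{1}{\sqrt2}\binom{1}{-i}$; conjugating swaps these two, i.e. $\overline{V(Y)|m\rangle}=V(Y)|m\oplus 1\rangle = V(Y)X|m\rangle = U(Y)|m\rangle$. That is exactly the definition $U(Y):=V(Y)X$, so the three cases together give the lemma.

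I do not anticipate a genuine obstacle here — the statement is essentially definitional once the transpose trick is in hand. The only point requiring minor care is bookkeeping of normalization and global phases: the projector onto $V(W)|m\rangle$ should be written as $V(W)|m\rangle\langle m|V(W)^\dagger$, the resulting second-qubit state is unnormalized with norm $1/\sqrt2$ (reflecting that each outcome occurs with probability $1/2$), and one must note that $U(W)|m\rangle$ and $\overline{V(W)|m\rangle}$ agree up to a phase of modulus one (indeed they are literally equal for $W\in\{X,Z\}$ and for $W=Y$ as computed), which suffices since states are identified up to global phase. I would present the $W=Y$ case explicitly as the representative computation and remark that $W=X,Z$ are immediate because $V(X),V(Z)$ have real matrix entries.
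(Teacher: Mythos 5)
Your argument is correct and is precisely the kind of direct computation the paper has in mind when it declares the lemma's proof ``straightforward'' and omits it. The transpose trick $(\langle\psi|\otimes I)\ket{\Phi}=\tfrac{1}{\sqrt{2}}\overline{\ket{\psi}}$ is the right one-line reduction, and your case checks (real $V(Z),V(X)$; conjugation of $V(Y)|m\rangle$ equaling $V(Y)X|m\rangle=U(Y)|m\rangle$) are all correct and in fact give exact equality, not merely equality up to phase.
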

\begin{lemma}[Effect of $X^xZ^z$ before measurement]\label{lem:XZ_before_measurement}
For any $N$-qubit state $\rho$, $(W_1,...,W_N)\in \{X,Y,Z\}^{N}$, and $(x,z)\in \bit^{N}\times\bit^{N}$, the distributions of $(m'_1,...m'_n)$ sampled in the following two ways are identical.
\begin{enumerate}
    \item 
    For $j\in[N]$, measure $j$-th qubit of $\rho$ in $W_j$ basis, let $m_j\in\{0,1\}$ be the outcome, and
    set 
    \begin{eqnarray*}
m_{j}':= 
\left\{
\begin{array}{cc}
m_{j}\oplus x_{j}&(W_j=Z),\\
m_{j}\oplus z_{j}&(W_j=X),\\
m_{j}\oplus x_{j}\oplus z_j&(W_j=Y).
\end{array}
\right.
\end{eqnarray*}
\item For $j\in[N]$, measure $j$-th qubit of $X^xZ^z\rho Z^zX^x$ in $W_j$ basis and let $m'_j\in\{0,1\}$ be the outcome.
\end{enumerate}
\end{lemma}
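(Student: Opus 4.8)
The plan is to reduce the $N$-qubit statement to a single-qubit statement and then dispatch the three cases $W_j \in \{X,Y,Z\}$ by direct computation on $2\times 2$ matrices. First I would observe that measuring each qubit $j$ in the $W_j$-basis is a tensor product of single-qubit measurements, and the Pauli operator $X^x Z^z = \prod_{j=1}^N X_j^{x_j} Z_j^{z_j}$ is likewise a tensor product acting qubit-wise; hence conjugation by $X^x Z^z$ affects the measurement statistics of each qubit independently, and the joint distribution of $(m'_1,\dots,m'_N)$ factors across qubits in both sampling procedures. So it suffices to prove the claim for $N=1$: for a single-qubit state $\sigma$, a basis choice $W\in\{X,Y,Z\}$, and bits $(x,z)$, the outcome of measuring $\sigma$ in the $W$-basis, then XORing with $x$ (if $W=Z$), with $z$ (if $W=X$), or with $x\oplus z$ (if $W=Y$), has the same distribution as the outcome of measuring $X^x Z^z \sigma Z^z X^x$ in the $W$-basis.

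Next I would set up the single-qubit computation. Measuring $\sigma$ in the $W$-basis and obtaining outcome $m$ has probability $\bra{m} V(W)^\dagger \sigma V(W) \ket{m} = \Tr\big( V(W)\ket{m}\bra{m}V(W)^\dagger \, \sigma\big)$, i.e.\ the measurement projectors are $\Pi^W_m := V(W)\ket{m}\bra{m}V(W)^\dagger$. For the right-hand procedure, the probability of outcome $m'$ is $\Tr\big(\Pi^W_{m'} X^x Z^z \sigma Z^z X^x\big) = \Tr\big( (Z^z X^x \Pi^W_{m'} X^x Z^z)\, \sigma\big)$. So the lemma amounts to the operator identity
\begin{equation*}
Z^z X^x \, \Pi^W_{m'} \, X^x Z^z \;=\; \Pi^W_{m' \oplus \delta_W(x,z)},
\end{equation*}
where $\delta_W(x,z)$ is $x$, $z$, or $x\oplus z$ according to $W=Z,X,Y$. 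Since $Z^zX^x$ differs from the Pauli $X^x Z^z$ only by an overall phase $(-1)^{xz}$, which cancels in the conjugation, it is equivalent to show $X^x Z^z \Pi^W_{m'} Z^z X^x = \Pi^W_{m'\oplus\delta_W}$; and because $X,Z$ and their products are (up to phase) involutions, I can check this by computing how $X$ and $Z$ act on each eigenbasis. Concretely: $Z$ commutes with $\Pi^Z_m$ and maps $\Pi^X_m \mapsto \Pi^X_{m\oplus 1}$, $\Pi^Y_m \mapsto \Pi^Y_{m\oplus 1}$; dually, $X$ commutes with $\Pi^X_m$ and maps $\Pi^Z_m\mapsto\Pi^Z_{m\oplus 1}$, $\Pi^Y_m\mapsto\Pi^Y_{m\oplus1}$. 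These six facts follow from the standard relations $ZX = -XZ$ together with $Z\ket{\pm} = \ket{\mp}$, $X H = H Z$, and the analogous statements for the $Y$-eigenvectors $V(Y)\ket{0},V(Y)\ket{1}$ (which one verifies once from the explicit $2\times2$ form of $V(Y)$ given in the preliminaries). Combining: conjugating $\Pi^Z_{m}$ by $X^xZ^z$ flips the index by $x$; conjugating $\Pi^X_m$ flips it by $z$; conjugating $\Pi^Y_m$ flips it by $x\oplus z$ — exactly $\delta_W(x,z)$, as required.

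Finally I would assemble the pieces: the single-qubit operator identity shows the two ways of sampling $m'_j$ agree in distribution for each $j$, and the tensor-product structure (independence across $j$ in both procedures, since the state $\rho$ is shared but the measurements and Pauli corrections are qubit-local) promotes this to the equality of the joint distributions of $(m'_1,\dots,m'_N)$. I do not expect a serious obstacle here; the only mildly delicate point is bookkeeping the phase $(-1)^{xz}$ when commuting $X$ past $Z$, and being careful that "same distribution" — not "same outcome on a fixed run" — is what is claimed, so that the argument genuinely goes through the Born-rule probabilities via the projector identity rather than attempting a pointwise coupling (though in fact the projector identity does give a natural coupling). The case $W=Y$ is the one that needs the explicit matrix for $V(Y)$ and a moment's care, so I would write that case out in full and leave $W\in\{X,Z\}$ as the routine textbook computation.
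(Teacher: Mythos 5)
The paper itself declares the proof of this lemma ``straightforward'' and gives no argument, so there is no paper proof to compare against; your proposal supplies exactly the natural argument, and its core is correct. The reduction to a single-qubit projector identity
\[
Z^z X^x\,\Pi^W_{m'}\,X^x Z^z \;=\; \Pi^W_{m'\oplus\delta_W(x,z)}
\]
is the right engine, and your case-by-case verification ($Z$ commutes with $\Pi^Z_m$ and flips $\Pi^X_m,\Pi^Y_m$; $X$ commutes with $\Pi^X_m$ and flips $\Pi^Z_m,\Pi^Y_m$; phases cancel inside the projector) is sound and matches $\delta_Z=x$, $\delta_X=z$, $\delta_Y=x\oplus z$.

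One phrasing in your reduction step is wrong as stated and you should fix it: you write that ``the joint distribution of $(m'_1,\dots,m'_N)$ factors across qubits in both sampling procedures,'' and later invoke ``independence across $j$.'' Neither claim is true for a general $\rho$ --- if $\rho$ is entangled, the outcome bits are correlated and the joint distribution certainly does not factor. What actually factors is the POVM, not the distribution. The correct way to promote the single-qubit identity is: the projector for joint outcome $(m'_1,\dots,m'_N)$ in $W$-basis measurement is the tensor product $\bigotimes_j \Pi^{W_j}_{m'_j}$, so conjugating by $X^xZ^z=\bigotimes_j X_j^{x_j}Z_j^{z_j}$ acts factor-by-factor, yielding $Z^zX^x\bigl(\bigotimes_j\Pi^{W_j}_{m'_j}\bigr)X^xZ^z=\bigotimes_j\Pi^{W_j}_{m'_j\oplus\delta_{W_j}(x_j,z_j)}$; taking the trace against $\rho$ then gives equality of the two joint distributions directly, with no independence assumption. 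This is a small repair --- the substance of your argument already lives in the projector identity --- but as written the ``independence'' sentence would be flagged as false.
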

The proofs of the above lemmas are straightforward.
\ifnum\submission=1
The following lemma is implicit in previous works, e.g., \cite{MNS,posthoc}. A proof is given in \cref{sec:proof_prob_and_energy}. 
\begin{lemma}\label{lem:prob_and_energy}
Let  
$
\ham:=\frac{1}{2}\left[I+s(\prod_{j\in S_X} X_j)(\prod_{j\in S_Y} Y_j)(\prod_{j\in S_Z}Z_j)\right]
$
be an $N$-qubit projection operator, where $s\in\{+1,-1\}$, 
and $S_X$, $S_Y$, and $S_Z$ are disjoint subsets of $[N]$. 
For any $N$-qubit quantum state $\rho$, suppose that 
for all $j\in S_W$, where $W\in\{X,Y,Z\}$, we measure $j$-th qubit of $\rho$ in the $W$-basis,
and let $m_j\in\{0,1\}$ be the outcome. 
Then we have 
$
\Pr\left[(-1)^{\bigoplus_{j\in S_X \cup S_Y\cup S_Z}m_j}=-s\right]=1-\Tr(\rho \ham).
$
\end{lemma}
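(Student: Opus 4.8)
The plan is to reduce the general case to the single-qubit measurement statistics via a change of basis, then compute the expectation of the relevant sign operator directly. First I would observe that the statement is really about the observable $O := (\prod_{j\in S_X}X_j)(\prod_{j\in S_Y}Y_j)(\prod_{j\in S_Z}Z_j)$, which satisfies $O^2 = I$ and has eigenvalues $\pm 1$, so that $\ham = \tfrac12(I + sO)$ is indeed a projector. The key identity I want is
\[
\mathbb{E}\!\left[(-1)^{\bigoplus_{j\in S_X\cup S_Y\cup S_Z} m_j}\right] = \Tr(\rho\, O),
\]
where the expectation is over the product measurement described in the statement. Granting this, the conclusion follows immediately: $\Pr[(-1)^{\bigoplus m_j} = -s] = \tfrac{1 - s\,\Tr(\rho O)}{2} = \Tr\!\big(\rho\,\tfrac{I - sO}{2}\big) = \Tr\big(\rho(I-\ham)\big) = 1 - \Tr(\rho\ham)$.

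To establish the key identity, I would diagonalize each local measurement. For $W\in\{X,Y,Z\}$ and $j\in S_W$, measuring qubit $j$ in the $W$-basis $\{V(W)\ket0, V(W)\ket1\}$ and recording outcome $m_j$ is, by definition of $V(W)$, the projective measurement $\{\Pi^W_{j,0}, \Pi^W_{j,1}\}$ with $\Pi^W_{j,m} = V_j(W)\ket m\!\bra m V_j(W)^\dagger$, and the eigenvalue of $W_j$ on $\Pi^W_{j,m}$ is $(-1)^{m}$. For qubits $j \notin S_X\cup S_Y\cup S_Z$ the statement prescribes no measurement and they do not contribute to the parity. Since all these measurements act on disjoint qubits, they commute, so the joint outcome distribution is $\Pr[(m_j)_j] = \Tr\big(\rho \prod_j \Pi^{W_j}_{j,m_j}\big)$. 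Multiplying by the sign $\prod_j (-1)^{m_j}$ and summing over all outcome strings, the product $\prod_j \sum_{m_j}(-1)^{m_j}\Pi^{W_j}_{j,m_j} = \prod_j W_j = O$ (restricted to the measured qubits, with identity elsewhere), which gives $\mathbb{E}[(-1)^{\bigoplus m_j}] = \Tr(\rho\, O)$ as desired.

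I do not expect any serious obstacle here; the lemma is essentially bookkeeping once one recognizes that the $W$-basis measurement is the eigenbasis measurement of the Pauli $W$. The only point requiring a little care is confirming from the definitions $V(Z) = I$, $V(X) = H$, $V(Y) = \tfrac{1}{\sqrt2}\big(\begin{smallmatrix}1&1\\ i&-i\end{smallmatrix}\big)$ that $V(W)\ket0$ and $V(W)\ket1$ are precisely the $+1$ and $-1$ eigenvectors of $W$ — but this is exactly how $V(W)$ was defined in \cref{sec:quantum_preliminaries}, so it can be cited rather than recomputed. A secondary minor point is to note explicitly that disjointness of $S_X, S_Y, S_Z$ is what makes the local operators commute and the factorization $\prod_j W_j = O$ valid without sign ambiguities; this is used but is immediate.
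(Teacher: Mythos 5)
Your proposal is correct and is essentially the same argument as the paper's: the paper also diagonalizes each local measurement via $V(W)$ (explicitly conjugating $\rho$ to reduce to $Z$-basis measurements and observing $V(\prod_j Z_j)V^\dagger = O$), and then evaluates the probability as a trace. Your organization — computing $\mathbb{E}[(-1)^{\oplus m_j}] = \Tr(\rho O)$ via the spectral decomposition $W_j = \sum_m (-1)^m\Pi^W_{j,m}$ and then converting the probability to $\tfrac{1-s\Tr(\rho O)}{2}$ — is a minor repackaging of the same computation, not a different route.
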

Some other basic lemmas that are used in our security proofs are given in \cref{sec:basic_lemmas}.  
\else
\begin{lemma}[Pauli Mixing]\label{lem:Pauli_mixing}
Let $\rho$ be an arbitrary quantum state over registers $\regA$ and $\regB$, and let $N$ be the number of qubits in $\regA$. 
Then we have 
\[
\frac{1}{2^{2N}}
\sum_{x\in \bit^{N},z_\in \bit^{N}}
\left(X^xZ^z \otimes I_\regB\right)\rho \left(Z^zX^x \otimes I_\regB\right)=\frac{1}{2^{N}} I_\regA \otimes \Tr_{\regA}(\rho).
\]
\end{lemma}

This is well-known, and one can find a proof in e.g., \cite{FOCS:Mahadev18b}.

\begin{lemma}[Quantum Teleportation]\label{lem:teleportation}
Suppose that we have $N$ Bell pairs between registers $\regA$ and $\regB$, i.e., $\frac{1}{2^{N/2}}\sum_{s\in \bit^{N}}\ket{s}_\regA\otimes \ket{s}_\regB$, and let $\rho$ be an arbitrary $N$-qubit quantum state in register $\regC$. 
Suppose that we measure $j$-th qubits of $\regC$ and $\regA$ in the Bell basis 
and let $(x_j,z_j)$ be the measurement outcome 
for all $j\in [N]$. 
Let $x:=x_1\concat x_2 \concat...\concat x_N$ and $z:=z_1\concat z_2 \concat...\concat z_N$. 
Then the $(x,z)$ is uniformly distributed over $\bit^N \times \bit^N$.
Moreover, conditioned on the measurement outcome $(x,z)$, the resulting state in $\regB$ is
$X^{x}Z^{z}\rho Z^{z}X^{x}$.
\end{lemma}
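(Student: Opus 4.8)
The statement to prove is Lemma~\ref{lem:teleportation} (Quantum Teleportation): the standard fact that teleporting an $N$-qubit state $\rho$ through $N$ Bell pairs via Bell-basis measurements yields a uniformly distributed key $(x,z)$ and leaves the receiver's register in $X^xZ^z\rho Z^zX^x$.

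The plan is to reduce the $N$-qubit claim to the single-qubit case and then verify the single-qubit case by direct linear algebra. First I would observe that the $N$-qubit Bell-basis measurement on the pairs of qubits $(\regC_j,\regA_j)$ factors as a tensor product of single-qubit Bell-basis measurements, and that the $N$ Bell pairs between $\regA$ and $\regB$ are a tensor product of single Bell pairs between $\regA_j$ and $\regB_j$. Because the joint state $\rho_{\regC}\otimes\bigotimes_j|\Phi\rangle_{\regA_j\regB_j}$ has a tensor structure in which the $j$-th teleportation acts only on $\regC_j,\regA_j,\regB_j$, and all these triples are disjoint, the outcomes $(x_j,z_j)$ are mutually independent and each identically distributed, and the post-measurement state on $\regB$ is the tensor product of the per-qubit post-measurement states — except that $\rho$ need not be a product state, so I must be slightly careful: the correct way to phrase it is to fix a purification or simply work at the level of the (unnormalized) Kraus operators acting on $\regC$. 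Concretely, the Bell-basis measurement with outcome $(x,z)$ corresponds, on the input register $\regC$, to the Kraus operator $M_{x,z} = \frac{1}{2^{N/2}} (X^xZ^z)^{\dagger}$ (up to the standard transpose/conjugation conventions), mapping $\regC$ into $\regB$. So the claim is really the operator identity: projecting qubits $j$ of $\regC$ and $\regA$ onto $|\phi_{x_j,z_j}\rangle$ and renaming $\regA\to\regB$ implements $\rho \mapsto \frac{1}{2^{2N}} X^xZ^z \rho Z^zX^x$ as an unnormalized post-measurement state, whence the outcome probability is $\Tr(\frac{1}{2^{2N}}X^xZ^z\rho Z^zX^x) = 2^{-2N}$, i.e.\ uniform, and the normalized state is $X^xZ^z\rho Z^zX^x$.

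The key computational step, which I would do once for a single qubit, is to expand $|\phi_{x,z}\rangle\langle\phi_{x,z}|_{\regC\regA} \otimes I_{\regB}$ acting on $|\psi\rangle_{\regC} \otimes |\Phi\rangle_{\regA\regB}$ where $|\Phi\rangle = \frac{1}{\sqrt2}(|00\rangle+|11\rangle)$, and use the ``transpose trick'' $(A\otimes I)|\Phi\rangle = (I\otimes A^T)|\Phi\rangle$ together with $\langle\Phi|(|\psi\rangle\otimes I) = \frac{1}{\sqrt2}\langle\psi^*|$ to collapse the inner product. This gives $\langle\phi_{x,z}|_{\regC\regA}\big(|\psi\rangle_{\regC}\otimes|\Phi\rangle_{\regA\regB}\big) = \frac{1}{2}\, (X^xZ^z)\,|\psi\rangle_{\regB}$ (possibly up to a global phase and the precise convention for $|\phi_{x,z}\rangle$ fixed earlier via $X^xZ^z$), from which the single-qubit statement follows for pure $|\psi\rangle$, and then by linearity for mixed $\rho$ since everything is linear in $\rho = \sum_k p_k |\psi_k\rangle\langle\psi_k|$ (the cross terms are handled by the same Kraus-operator computation applied to $|\psi_k\rangle\langle\psi_{k'}|$). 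Tensoring over $j\in[N]$ then yields the full statement, with uniformity of $(x,z)$ following because $\sum_{x,z}$ of the per-outcome probabilities is $1$ and each equals $2^{-2N}$.

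I do not expect any real obstacle here — this is a textbook computation — but the one place to be careful, and the place I would spend the most words, is bookkeeping the conventions: the paper has defined $|\phi_{x,z}\rangle := (X^xZ^z\otimes I)\frac{|00\rangle+|11\rangle}{\sqrt2}$, so I must track how the Pauli labels in the Bell-basis outcome translate into the Pauli correction on $\regB$ under the transpose trick (in particular $Z^T = Z$, $X^T = X$, and the fact that $X$ and $Z$ anticommute can introduce sign flips that must be absorbed consistently), and confirm that with this specific convention the correction is exactly $X^xZ^z$ with no extra bit-swap between $x$ and $z$. Since the lemma is stated as a known fact, a cleaner alternative I might use instead is simply to cite the standard teleportation computation and only include the short transpose-trick derivation for completeness; either way the mathematical content is entirely routine.
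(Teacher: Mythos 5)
Your proof is correct; note that the paper itself does not prove \cref{lem:teleportation} at all but merely cites \cite{NC00}, so the ``cleaner alternative'' you suggest at the end---just citing the standard teleportation derivation---is exactly what the paper does. The derivation via the transpose/ricochet identity $\langle\Phi^+|_{\regC\regA}\bigl(\ket\alpha_{\regC}\otimes\ket{\Phi^+}_{\regA\regB}\bigr)=\tfrac12\ket\alpha_{\regB}$, together with absorbing the global phase $(-1)^{x\cdot z}$ from $Z^zX^x=(-1)^{x\cdot z}X^xZ^z$ into the density-operator conjugation, is the textbook argument and closes the conventions issue you flag. One small slip worth fixing: the Kraus operator implementing outcome $(x,z)$ should have prefactor $2^{-N}$ rather than $2^{-N/2}$ so that the unnormalized post-measurement state is $2^{-2N}X^xZ^z\rho Z^zX^x$, which is what you correctly write immediately afterward; this does not affect the argument.
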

This is also well-known, and one can find a proof in e.g., \cite{NC00}.

The following lemma is implicit in previous works e.g., \cite{MNS,posthoc}. 

\begin{lemma}\label{lem:prob_and_energy}
Let  
\[
\ham:=\frac{I+s(\prod_{j\in S_X} X_j)(\prod_{j\in S_Y} Y_j)(\prod_{j\in S_Z}Z_j)}{2}
\]
be an $N$-qubit projection operator, where $s\in\{+1,-1\}$, 
and $S_X$, $S_Y$, and $S_Z$ are disjoint subsets of $[N]$. 
For any $N$-qubit quantum state $\rho$, suppose that 
for all $j\in S_W$, where $W\in\{X,Y,Z\}$, we measure $j$-th qubit of $\rho$ in the $W$-basis,
and let $m_j\in\{0,1\}$ be the outcome. 
Then we have 
\[
\Pr\left[(-1)^{\bigoplus_{j\in S_X \cup S_Y\cup S_Z}m_j}=-s\right]=1-\Tr(\rho \ham).
\]
\end{lemma}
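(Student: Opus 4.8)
The plan is to prove the identity by a direct calculation on a single qubit at a time, leveraging the tensor-product structure of both the Hamiltonian term and the measurement. First I would observe that $\ham = \frac{I + s P}{2}$ where $P = (\prod_{j \in S_X} X_j)(\prod_{j \in S_Y} Y_j)(\prod_{j \in S_Z} Z_j)$ is a (Hermitian, unitary) Pauli string acting nontrivially exactly on the disjoint union $S := S_X \cup S_Y \cup S_Z$, so $\ham$ is indeed a projection (onto the $(-1)$-eigenspace of $P$). Thus $\Tr(\rho \ham) = \frac{1}{2}(1 + s\,\Tr(\rho P))$, and equivalently $1 - \Tr(\rho \ham) = \frac{1}{2}(1 - s\,\Tr(\rho P))$. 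The goal then reduces to showing that the measurement experiment produces $\Pr[(-1)^{\bigoplus_{j \in S} m_j} = -s] = \frac{1}{2}(1 - s\,\Tr(\rho P))$.

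The key step is to identify the random variable $(-1)^{\bigoplus_{j \in S} m_j}$ with the outcome of a single projective measurement of the observable $P$. Recall from the definitions in the excerpt that for $W \in \{X,Y,Z\}$, the eigenvectors of $W$ are $V(W)\ket{0}$ (eigenvalue $+1$) and $V(W)\ket{1}$ (eigenvalue $-1$). Measuring qubit $j$ in the $W$-basis and recording $m_j \in \{0,1\}$ is precisely the projective measurement of the single-qubit observable $W_j$, with reported value $(-1)^{m_j}$. Since $S_X, S_Y, S_Z$ are disjoint and the corresponding single-qubit observables act on distinct tensor factors, they all commute, so measuring all of them jointly is the same as the single projective measurement of their product $P = \prod_{j} W^{(j)}_j$; the eigenvalue obtained for $P$ is exactly $\prod_{j \in S} (-1)^{m_j} = (-1)^{\bigoplus_{j \in S} m_j}$. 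Therefore $(-1)^{\bigoplus_{j \in S} m_j}$ is distributed as the outcome of measuring $P$ on $\rho$, taking value $+1$ with probability $\Tr(\rho \Pi_+)$ and $-1$ with probability $\Tr(\rho \Pi_-)$, where $\Pi_\pm = \frac{I \pm P}{2}$ are the spectral projectors of $P$.

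Putting these together: $\Pr[(-1)^{\bigoplus_{j \in S} m_j} = -s] = \Tr\!\left(\rho\, \tfrac{I - sP}{2}\right) = \tfrac{1}{2}\bigl(1 - s\,\Tr(\rho P)\bigr) = 1 - \Tr(\rho \ham)$, which is exactly the claim. The only mild subtlety — and the place I would be most careful — is justifying rigorously that the sequential single-qubit measurements in the $W_j$ bases yield the same joint distribution of outcomes as the coarse-grained measurement of $P$; this is standard (commuting observables on disjoint tensor factors, and the fact that the joint distribution of the $m_j$ is given by $\Tr(\rho\,\bigotimes_j \ket{V(W)m_j}\!\bra{V(W)m_j} \otimes I)$ regardless of measurement order), but it is the one conceptual ingredient beyond pure algebra. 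Everything else is a one-line trace computation using $P^2 = I$ and $\Tr(\Pi_\pm \rho) = \frac{1}{2}(1 \pm \Tr(\rho P))$.
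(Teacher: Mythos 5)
Your proof is correct and follows the same essential logic as the paper's, but phrased more abstractly. The paper proves the identity by an explicit computation: defining $V := \prod_{j\in S_X}V_j(X)\prod_{j\in S_Y}V_j(Y)\prod_{j\in S_Z}V_j(Z)$, writing the left-hand side as $\sum_{m}\bra{m}V^\dagger\rho V\ket{m}\cdot\frac{1-s(-1)^{\oplus m_j}}{2}$, absorbing the indicator into the diagonal operator $\frac{I-s\prod_j Z_j}{2}$, using cyclicity of the trace, and finally recognizing $V\,\frac{I-s\prod_j Z_j}{2}\,V^\dagger = I-\ham$. You instead skip the coordinate computation and argue directly that the joint single-qubit measurement in the $W_j$-bases realizes the spectral measurement of the Pauli string $P$, and that the parity $(-1)^{\oplus m_j}$ is distributed as the measured eigenvalue of $P$; then $\Pr[\cdots = -s] = \Tr(\rho\,\tfrac{I-sP}{2}) = 1-\Tr(\rho\ham)$ is a one-line consequence. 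The two arguments buy the same thing in the end; yours emphasizes the physical content (commuting observables, coarse-graining a joint measurement to the product observable) and avoids writing the sum over outcomes, while the paper's is a line-by-line algebraic verification that may be preferable when one wants everything reduced to elementary manipulations. One tiny inaccuracy in a parenthetical of yours: $\ham = \frac{I+sP}{2}$ is the projector onto the $(+s)$-eigenspace of $P$, not the $(-1)$-eigenspace in general — but you never use that remark, and your displayed identities ($\Tr(\rho\ham) = \tfrac12(1+s\Tr(\rho P))$, the spectral projectors $\Pi_\pm = \frac{I\pm P}{2}$, and the final computation) are all right.
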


\begin{proof}[Proof of \cref{lem:prob_and_energy}]
Let us define 
$V:=
(\prod_{j\in S_X}V_j(X))
(\prod_{j\in S_Y}V_j(Y))
(\prod_{j\in S_Z}V_j(Z))$, 
and $|m\rangle:=\bigotimes_{j=1}^N|m_j\rangle$.
Then,
\begin{eqnarray*}
\Pr\left[(-1)^{\bigoplus_{j\in S_X \cup S_Y\cup S_Z}m_j}=-s\right]
&=&\sum_{m\in\{0,1\}^N}\langle m|V^\dagger\rho V|m\rangle
\frac{1-s(-1)^{\bigoplus_{j\in S_X\cup S_Y\cup S_Z}m_j}}{2}\\
&=&\sum_{m\in\{0,1\}^N}\langle m|V^\dagger\rho V
\frac{I-s\prod_{j\in S_X\cup S_Y\cup S_Z}Z_j}{2}
|m\rangle\\
&=&\Tr\Big[V^\dagger\rho V
\frac{I-s\prod_{j\in S_X\cup S_Y\cup S_Z}Z_j}{2}\Big]\\
&=&\Tr\Big[\rho V
\frac{I-s\prod_{j\in S_X\cup S_Y\cup S_Z}Z_j}{2}V^\dagger\Big]\\
&=&\Tr\Big[\rho (I-\ham)\Big]\\
&=&1-\Tr(\rho \ham).
\end{eqnarray*}
\end{proof}
\fi

\subsection{$\QMA$ and Local Hamiltonian Problem}

\ifnum\submission=1
The definition of $\QMA$ is given in \cref{def:QMA} in \cref{sec:appendix_QMA}. 
For any $\QMA$ promise problem $L=(L_\yes,L_\no)$ 
and $\statement \in L_\yes$, we denote by $R_L(\statement)$ to mean the (possibly infinite) set of all quantum states 
$\witness$ such that $\Pr[V(\statement,\witness)=1]\geq 2/3$.
\else
\begin{definition}[$\QMA$]\label{def:QMA}
We say that a promise problem $L=(L_\yes,L_\no)$ is in $\QMA$ if there is 
a polynomial $\ell$ and a QPT algorithm $V$ such that the following is satisfied: 
\begin{itemize}
    \item For any $\statement \in L_\yes $, there exists a quantum state $\witness$ of $\ell(|\statement|)$-qubit (called a witness) such that we have $\Pr[V(\statement,\witness)=1]\geq 2/3$.
    \item For any $\statement\in L_\no$ and any quantum state $\witness$ of $\ell(|\statement|)$-qubit, we have $\Pr[V(\statement,\witness)=1]\leq 1/3$.
\end{itemize}
For any $\statement \in L$, we denote by $R_L(\statement)$ to mean the (possibly infinite) set of all quantum states 
$\witness$ such that $\Pr[V(\statement,\witness)=1]\geq 2/3$.
\end{definition}

\fi

Recently, Broadbent and Grilo \cite{FOCS:BroGri20} showed that any $\QMA$ problem can be reduced to a $5$-local Hamiltonian problem with \emph{local simulatability}.
(See also \cite{FOCS:GriSloYue19}.)
Moreover, it is easy to see that we can make the Hamiltonian $\ham_\statement$ be of the form 
$\ham_\statement=\sum_{i=1}^Mp_i \frac{I+s_i P_i}{2}$ where $s_i\in\{+1,-1\}$, $p_i\ge0$, $\sum_{i=1}^Mp_i=1$,
and $P_i$ is a tensor product of Pauli operators $(I,X,Z,Y)$
with at most $5$ nontrivial Pauli operators $(X,Y,Z)$.
See \cref{sec:BG20} for more details. 
Then we have the following lemma.

\begin{lemma}[$\QMA$-completeness of 5-local Hamiltonian problem with local simulatability \cite{FOCS:BroGri20}]\label{lem:five_local_Hamiltonian}
For any $\QMA$ promise problem $L=(L_\yes,L_\no)$, there is a classical polynomial-time computable deterministic function that maps 
$\statement\in \bit^*$ to an $N$-qubit Hamiltonian $\ham_\statement$ of the form  
$
\ham_\statement=\sum_{i=1}^M p_i \frac{I+s_i P_i}{2},
$
where $N=\poly(|\statement|)$, $M=\poly(|\statement|)$, $s_i\in\{+1,-1\}$, $p_i> 0$, $\sum_{i=1}^M p_i=1$, and
$P_i$ is a tensor product of Pauli operators $(I,X,Y,Z)$
with at most $5$ nontrivial Pauli operators $(X,Y,Z)$, and satisfies the following:
There are $0<\alpha<\beta<1$ such that $\beta-\alpha = 1/\poly(|\statement|)$ and
\begin{itemize}
    \item if $\statement \in L_\yes$, then there exists an $N$-qubit state $\rho$ such that $\Tr(\rho \ham_\statement)\leq \alpha$, and
    \item if $\statement \in L_\no$, then for any $N$-qubit state $\rho$, we have $\Tr(\rho \ham_\statement)\geq \beta$.
\end{itemize}
Moreover, for any $\statement \in L_\yes$, 
we can convert any witness $\witness\in R_L(\statement)$ into a state $\rho_{\hist}$, called the history state, such that $\Tr(\rho_\hist \ham_\statement)\leq \alpha$ in quantum polynomial time.
Moreover, 
there exists a classical deterministic polynomial time algorithm $\siml_{\hist}$ such that for 
any $\statement \in L_\yes$ and
any subset $S\subseteq [N]$ with $|S|\leq 5$,  $\siml_{\hist}(\statement,S)$ outputs a classical description of an $|S|$-qubit density matrix $\rho_S$ such that $\|\rho_S-\Tr_{[N]\setminus S}\rho_\hist\|_{tr}=\negl(\secpar)$
where $\Tr_{[N]\setminus S}\rho_\hist$ is the state of $\rho_\hist$ in registers corresponding to $S$ tracing out all other registers.
\end{lemma}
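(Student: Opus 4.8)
The plan is to assemble \cref{lem:five_local_Hamiltonian} by invoking the Kitaev-style circuit-to-Hamiltonian construction as refined by Broadbent and Grilo \cite{FOCS:BroGri20} and Grilo, Slofstra, and Yuen \cite{FOCS:GriSloYue19}, and then perform a handful of elementary reductions to massage the Hamiltonian into the claimed Pauli form. First I would recall that any $\QMA$ problem $L = (L_\yes, L_\no)$ admits, after standard amplification of the completeness/soundness gap, a verification circuit whose associated local Hamiltonian $\ham_\statement$ is $5$-local with a $1/\poly(|\statement|)$ promise gap, and such that the history state $\rho_\hist$ built from any witness $\witness \in R_L(\statement)$ is locally simulatable: there is a deterministic polynomial-time $\siml_\hist$ that, on input $(\statement, S)$ with $|S| \le 5$, outputs a classical description of an $|S|$-qubit density matrix negligibly close in trace distance to $\Tr_{[N]\setminus S}\rho_\hist$. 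This is exactly the content of the cited works, so the core of the proof is really a pointer to \cite{FOCS:BroGri20,FOCS:GriSloYue19} together with the observation that their construction is deterministic and polynomial-time computable from $\statement$.

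Next I would carry out the syntactic normalization of $\ham_\statement$. Writing $\ham_\statement = \sum_j c_j Q_j$ in the Pauli basis with each $Q_j$ a $5$-local Pauli string, I would: (i) absorb signs into the Pauli strings so that each term is $c_j \ge 0$ times a $\pm Q_j$; (ii) rewrite each $\pm Q_j$ term as $2c_j \cdot \frac{I \pm Q_j}{2} - c_j I$, collect the stray $-c_j I$ contributions into a global constant, and then fold that constant into the energies $\alpha, \beta$ by a shift and rescale (which preserves the $1/\poly$ gap); (iii) normalize the nonnegative coefficients $2c_j$ so that $\sum_i p_i = 1$, again absorbing the normalization factor into the promise thresholds. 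After this, $\ham_\statement = \sum_{i=1}^M p_i \frac{I + s_i P_i}{2}$ with $s_i \in \{+1,-1\}$, $p_i > 0$ (dropping any zero-coefficient terms), $\sum_i p_i = 1$, and each $P_i$ a tensor product of Paulis with at most $5$ nontrivial factors; the number of terms $M$ and qubits $N$ remain $\poly(|\statement|)$. The conversion of a witness into the history state $\rho_\hist$ in quantum polynomial time, and the energy bound $\Tr(\rho_\hist \ham_\statement) \le \alpha$, are inherited directly from the circuit-to-Hamiltonian reduction, as is the local simulatability clause once the thresholds and normalization have been adjusted (trace distance is unaffected by these syntactic changes since $\rho_\hist$ itself does not change).

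The step I expect to require the most care is not any calculation but rather checking that the various rescalings and shifts above keep the completeness and soundness energies genuinely separated by $1/\poly(|\statement|)$ while also keeping $\alpha, \beta$ strictly inside $(0,1)$ — in particular that, after dividing by $\sum_i 2c_i = \poly(|\statement|)$, the gap $\beta - \alpha$ does not shrink below inverse-polynomial. This is handled by the standard $\QMA$ gap-amplification: before the reduction one boosts the original verifier's gap to, say, $1 - 2^{-\poly}$ versus $2^{-\poly}$, so that the post-reduction, post-normalization gap still comfortably exceeds $1/\poly(|\statement|)$, and one chooses the shift so that $0 < \alpha < \beta < 1$. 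I would also double-check that the locality bound of $5$ is exactly met by the Broadbent–Grilo construction rather than merely $O(1)$-local, but this is precisely what \cite{FOCS:BroGri20} establishes (and the ``subtle issue'' about the number of nontrivial Paulis per term being at most $5$, rather than exactly $5$, is benign and handled as the paper notes by an auxiliary biased coin). Everything else is bookkeeping, so the proof reduces to citing the prior constructions and performing these elementary manipulations.
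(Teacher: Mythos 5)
Your proposal matches the paper's own proof (given in \cref{sec:BG20}) essentially step for step: cite \cite{FOCS:BroGri20,FOCS:GriSloYue19} for the encoded circuit-to-Hamiltonian construction with $5$-locality and local simulatability, then perform an affine normalization to land in the form $\sum_i p_i \frac{I+s_iP_i}{2}$. The only cosmetic difference is that the paper pulls the normalization directly from \cite{MNS} as the one-line formula $\ham'_\statement = \frac{1}{2}\bigl(I + \ham_\statement/\sum_i|c_i|\bigr)$, whereas you derive the same map by expanding each $c_jQ_j$ as $2|c_j|\frac{I+s_jQ_j}{2} - |c_j|I$ and collecting the identity terms — both give $p_i = |c_i|/\sum_j|c_j|$ and $s_i = \mathrm{sign}(c_i)$, and your extra discussion of why $\alpha,\beta$ stay in $(0,1)$ with an inverse-polynomial gap (spectrum of the normalized Hamiltonian lies in $[0,1]$; the gap shrinks only by the polynomial factor $2\sum_i|c_i|$) is a sound elaboration of what the paper leaves implicit.
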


\subsection{Classically-Verifiable Non-Interactive Zero-knowledge Proofs}\label{sec:def_CV-NIZK}

\begin{definition}[CV-NIZK in the QSP model]\label{def:CV-NIZK}
A classically-verifiable non-interactive zero-knowledge proof (CV-NIZK) for a $\QMA$ promise
problem $L=(L_\yes,L_\no)$ in the quantum secret parameter (QSP) model consists of algorithms $\Pi=(\setup,\prove,\verify)$ with the following syntax:
\begin{description}
\item[$\setup(1^\secpar)$:] This is a QPT algorithm that takes the security parameter $1^\secpar$ as input and outputs a quantum proving key $\pkey$ and a classical verification key $\vkey$. 
\item[$\prove(\pkey,\statement,\witness^{\otimes k})$:] This is a QPT algorithm that takes the proving key $\pkey$, a statement $\statement$, and $k=\poly(\secpar)$ copies $\witness^{\otimes k}$ of a witness $\witness\in R_L(\statement)$ as input and outputs a classical proof $\pi$.
\item[$\verify(\vkey,\statement,\pi)$:]
This is a PPT algorithm that takes the verification key $\vkey$, a statement $\statement$, and a proof $\pi$ as input and outputs $\top$ indicating acceptance or $\bot$ indicating rejection. 
\end{description}
We require $\Pi$ to satisfy the following properties for some $0<s<c<1$ such that $c-s>1/\poly(\secpar)$.
Especially, when we do not specify $c$ and $s$, they are set as $c=1-\negl(\secpar)$ and $s=\negl(\secpar)$. 

\medskip
\noindent \underline{
\textbf{$c$-Completeness.}
}
For all $\statement\in L_\yes\cap \bit^\secpar$, and $\witness\in R_L(\statement)$, we have 
\begin{align*}
    \Pr\left[
    \verify(\vkey,\statement,\pi)=\top 
    :(\pkey,\vkey)\sample \setup(1^\secpar),\pi \sample \prove(\pkey,\statement,\witness^{\otimes k})\right]\geq c.
\end{align*}
\medskip
\noindent \underline{
\textbf{(Adaptive Statistical) $s$-Soundness.}
}
For all unbounded-time adversary $\A$, we have 
\begin{align*}
    \Pr\left[
    \statement \in L_\no \land \verify(\vkey,\statement,\pi)=\top 
    :(\pkey,\vkey)\sample \setup(1^\secpar),(\statement,\pi) \sample \A(\pkey)\right]\leq s.
\end{align*}

\noindent \underline{
\textbf{(Adaptive Statistical Single-Theorem) Zero-Knowledge.}
}
There exists a PPT simulator $\siml$ such that for any  unbounded-time distinguisher $\dist$, we have 
\begin{align*}
    \left|\Pr\left[\dist^{\ora_P(\pkey,\cdot,\cdot)}(\vkey)=1\right]-   \Pr\left[\dist^{\ora_S(\vkey,\cdot,\cdot)}(\vkey)=1\right]\right|=\negl(\secpar)
\end{align*}
where $(\pkey,\vkey)\sample \setup(1^\secpar)$, 
$\dist$ can make at most one query, which should be of the form $(\statement,\witness^{\otimes k})$ where $\witness\in R_L(\statement)$ and $\witness^{\otimes k}$ is unentangled with $\dist$'s internal registers,\footnote{
Though our protocols are likely to remain secure even if they can be entangled, we assume that they are unentangled for simplicity.  
To the best of our knowledge, none of existing works on interactive or non-interactive zero-knowledge for $\QMA$ \cite{BJSW20,C:ColVidZha20,STOC:BitShm20,FOCS:BroGri20,C:Shmueli21,C:BCKM21a} considered entanglement between a witness and distinguisher's internal register. \label{footnote:unentangled}}  $\ora_P(\pkey,\statement,\witness^{\otimes k})$ returns $\prove(\pkey,\statement,\witness^{\otimes k})$, and
  $\ora_S(\vkey,\statement,\witness^{\otimes k})$ returns $\siml(\vkey,\statement)$.


\end{definition}

\if0
\begin{remark}\label{rem:setup}
In a CV-NIP, $\setup$ can be think of an offline quantum preprocessing that is done by the verifier before the instance is determined.
This is because there is no security requirement for a malicious verifier, and thus security is not harmed even if $\setup$ is run by a verifier.
On the other hand, in a CV-NIZK, $\setup$ should be run by a trusted third party since a malicious verifier can break the zero-knowledge property if it can run $\setup$. 
\end{remark}
\fi

It is easy to see that we can amplify the gap between completeness and soundness thresholds by a simple parallel repetition.
Moreover, we can see that this does not lose the zero-knowledge property.
Therefore, we have the following lemma. 
\begin{lemma}[Gap Amplification for CV-NIZK]\label{lem:CV-NIZK_amplification}
If there exists a CV-NIZK for $L$ in the QSP model that satisfies $c$-completeness and $s$-soundness, for some $0<s<c<1$ such that $c-s>1/\poly(\secpar)$, then there exists a CV-NIZK for $L$ in the QSP model (with $(1-\negl(\secpar))$-completeness and $\negl(\secpar)$-soundness).
\end{lemma}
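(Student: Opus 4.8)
\emph{Proof idea.}
The plan is the textbook parallel-repetition-with-threshold amplification. Write $\gamma:=(c-s)/2$, which is $1/\poly(\secpar)$ by hypothesis, set the threshold $\tau:=(c+s)/2=c-\gamma=s+\gamma$, and set the number of repetitions $t:=\lceil\secpar/\gamma^2\rceil=\poly(\secpar)$. Define $\Pi^{(t)}=(\setup^{(t)},\prove^{(t)},\verify^{(t)})$ as follows: $\setup^{(t)}(1^\secpar)$ runs $t$ independent copies of $\setup(1^\secpar)$ to obtain $(\pkey_i,\vkey_i)_{i\in[t]}$ and outputs $\pkey^{(t)}:=(\pkey_i)_i$, $\vkey^{(t)}:=(\vkey_i)_i$; on input $(\pkey^{(t)},\statement,\witness^{\otimes tk})$, $\prove^{(t)}$ splits the $tk$ witness copies into $t$ blocks of $k$ copies each, runs $\pi_i\sample\prove(\pkey_i,\statement,\witness^{\otimes k})$ for every $i$, and outputs $(\pi_i)_i$; and $\verify^{(t)}(\vkey^{(t)},\statement,(\pi_i)_i)$ outputs $\top$ iff at least $\tau t$ of the checks $\verify(\vkey_i,\statement,\pi_i)$ output $\top$. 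The honest prover now consumes $tk=\poly(\secpar)$ copies of the witness, which is permitted. Completeness is immediate: in an honest execution the $t$ triples $(\pkey_i,\vkey_i,\pi_i)$ are mutually independent, so the per-copy acceptance indicators are i.i.d.\ and each is $\top$ with probability at least $c=\tau+\gamma$; a Hoeffding bound bounds the rejection probability by $e^{-2\gamma^2 t}=\negl(\secpar)$.

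For zero-knowledge I would take the single-copy PPT simulator $\siml$ from \cref{def:CV-NIZK} and set $\siml^{(t)}(\vkey^{(t)},\statement):=(\siml(\vkey_1,\statement),\dots,\siml(\vkey_t,\statement))$, then argue indistinguishability of the real and simulated oracles by a hybrid over the $t$ copies. In the $\ell$-th hybrid the single query $(\statement,\witness^{\otimes tk})$ of the distinguisher $\dist$ is answered by returning $\siml(\vkey_i,\statement)$ for $i\le\ell$ and $\prove(\pkey_i,\statement,\witness^{\otimes k})$ on the $i$-th witness block for $i>\ell$; the $0$-th hybrid is $\ora_P$ and the $t$-th is $\ora_S$. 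Two consecutive hybrids are $\negl(\secpar)$-close by the single-theorem zero-knowledge of $\Pi$: the reduction obtains $\vkey_\ell$ from its own challenger, samples the remaining $t-1$ key pairs itself (it knows $\setup$), runs $\dist$, and on $\dist$'s query forwards only $\statement$ and the $\ell$-th witness block to its own $\ora_P/\ora_S$ oracle while producing the other answers locally via $\siml$ (for $i<\ell$, using its self-generated $\vkey_i$) or $\prove$ (for $i>\ell$, using its self-generated $\pkey_i$). Since $\witness^{\otimes tk}$ is a product state unentangled with $\dist$'s registers, the forwarded block $\witness^{\otimes k}$ is unentangled with the reduction's internal registers, so the reduction is a legal single-theorem distinguisher; summing the $t$ gaps yields the claim.

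The step I expect to be the real obstacle is soundness. Fix an unbounded $\A$ outputting $(\statement,(\pi_i)_i)$; we must show $\Pr[\statement\in L_\no\wedge\verify^{(t)}(\vkey^{(t)},\statement,(\pi_i)_i)=\top]=\negl(\secpar)$. When the proving keys are classical this is a clean two-level concentration argument: because the verification key is secret from the prover, conditioned on $\pkey^{(t)}$ the keys $\vkey_1,\dots,\vkey_t$ remain mutually independent, so the per-copy quantity $s_i:=\max_{(\statement',\pi')}\Pr_{\vkey_i}[\statement'\in L_\no\wedge\verify(\vkey_i,\statement',\pi')=\top\mid\pkey_i]$ depends on $\pkey_i$ alone, has average over $\pkey_i$ at most $s$ by single-copy soundness (via the single-copy adversary that, given $\pkey_i$, outputs the maximizing pair), and on the event $\statement\in L_\no$ the acceptance count is stochastically dominated by a sum of independent $\{0,1\}$-variables with means $s_i$; Hoeffding applied first to the i.i.d.\ variables $s_1,\dots,s_t$ (ruling out that their average exceeds $\tau-\gamma/2=s+\gamma/2$) and then to the acceptance count conditioned on a typical average gives the bound $e^{-\Omega(\gamma^2 t)}=\negl(\secpar)$. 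The subtlety is that a \emph{quantum} proving key is in general entangled with the secret verification key, so once $\A$ has produced its joint proof the $\vkey_i$'s need no longer be independent and the argument above breaks; resolving this in full generality appears to need a direct-product / threshold-concentration theorem for the ``weakly-verifiable-puzzle''-like family $\{(\pkey_i,\vkey_i)\}_i$ (one may w.l.o.g.\ allow a different statement per copy, which only helps the adversary). Alternatively — and this already covers every concrete construction in this paper — one uses that the underlying protocol has the form ``the prover commits, via teleportation, to a quantum state and the verifier then tests it with fresh per-copy randomness'', so that the per-copy acceptance events are genuinely independent conditioned on the committed states and Hoeffding applies exactly as in the classical case. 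Together with completeness and the preserved zero-knowledge property, $\Pi^{(t)}$ is then the desired CV-NIZK and the lemma follows.
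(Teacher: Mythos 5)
Your construction and your completeness and zero-knowledge arguments coincide with the paper's. The paper takes exactly the $N$-fold parallel repetition with acceptance threshold $\frac{N(c+s)}{2}$, dispatches completeness \emph{and} soundness in a single sentence (``By Hoeffding's inequality, it is easy to see that we can take $N=O(\log^2\secpar/(c-s)^2)$ so that $\Pi^N$ satisfies $(1-\negl(\secpar))$-completeness and $\negl(\secpar)$-soundness''), and then devotes essentially the entire written proof to the hybrid argument over the copies for zero-knowledge, including the same reduction you describe: embed the external challenge key at position $\ell$, self-generate the remaining key pairs, answer the single query using $\siml$ on earlier copies and $\prove$ on later ones, and forward only the $\ell$-th witness block to the external oracle. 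So on the parts the paper actually argues, you match it.

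The soundness subtlety you flag is genuine, and the paper does not address it: there is no argument beyond the appeal to Hoeffding, i.e., the paper implicitly assumes the independence you correctly question. Because the proving keys are quantum and carry information about the secret classical verification keys, a cheating prover can measure $\pkey_1\otimes\cdots\otimes\pkey_t$ jointly, and conditioned on its classical output the posterior distribution of $(\vkey_1,\dots,\vkey_t)$ is in general correlated, so the two-level Hoeffding argument does not go through verbatim. However, your fallback (b) is also not correct as stated: even conditioned on the teleported ``committed'' state, that global state may be entangled across the $t$ copies, and measurement outcomes on disjoint registers of an entangled state (here, the per-copy Pauli/parity tests) are correlated; so the per-copy acceptance indicators are still not independent. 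What holds unconditionally is only that each copy's \emph{marginal} acceptance probability is at most $s$, and marginals alone are consistent with all copies accepting simultaneously with probability $s$, which would defeat the threshold test. The standard rigorous route is the direct-product/sampling argument for weakly verifiable puzzles: bound $\Pr[\text{all copies in }T\text{ accept}]$ for a subset $T$ by reducing to an $|T|$-fold AND-repetition (the keys outside $T$ are independent of those inside and can be simulated), show this is at most roughly $s^{|T|}$ against entangled measurements, and then convert the AND bound into a threshold bound by averaging over a random subset $T$ of size $k=\Theta(\secpar)$, giving $(s/\tau)^{k}=\negl(\secpar)$ for threshold $\tau>s$. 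Neither the paper nor your sketch carries out the AND step, so you should regard the soundness half of \cref{lem:CV-NIZK_amplification} as needing this additional argument rather than as immediate from Hoeffding; your instinct that this is ``the real obstacle'' is right, but your proposed patch does not yet close it.
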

\ifnum\submission=1
A proof of \cref{lem:CV-NIZK_amplification} is given in \cref{sec:proof_amplification}.
\else
\begin{proof}
Let $\Pi=(\setup,\prove,\verify)$ be a CV-NIZK for $L$ in the SP model that satisfies $c$-completeness, $s$-soundness, and the zero-knowledge property for  some $0<s<c<1$ such that $c-s>1/\poly(\secpar)$.
Let $k$ be the number of copies of a witness $\prove$ takes as input.  
For any polynomial $N=\poly(\secpar)$, $\Pi^{N}=(\setup^{N},\prove^{N},\verify^{N})$ be the $N$-parallel version of $\Pi$. That is, $\setup^{N}$ and $\prove^{N}$ run $\setup$ and $\prove$ $N$ times parallelly and outputs tuples consisting of outputs of each execution, respectively where $\prove^{N}$ takes $Nk$ copies of the witness as input.
$\verify^{N}$ takes $N$-tuple of the verification key and proof, runs $\verify$ to verify each of them separately, and outputs $\top$ if the number of executions of $\verify$ that outputs $\top$ is larger than $\frac{N(\alpha+\beta)}{2}$.
By Hoeffding's inequality, it is easy to see that we can take $N=O\left(\frac{\log^{2} \secpar}{(\alpha-\beta)^2}\right)$ so that $\Pi^{N}$ satisfies $(1-\negl(\secpar))$-completeness and $\negl(\secpar)$-soundness. 

What is left is to prove that $\Pi^{N}$ satisfies the zero-knowledge property.
This can be reduced to the zero-knowledge property of $\Pi$ by a standard hybrid argument.
More precisely, for each $i\in \{0,...,N\}$, let $\ora_i$ be the oracle that works as follows where $\pkey'$ and $\vkey'$ denote the proving and verification keys of $\Pi^{N}$, respectively. 
\begin{description}
\item[$\ora_i(\pkey'=(\pkey^1,...,\pkey^N),\vkey'=(\vkey^1,...,\vkey^N),\statement,\witness^{\otimes Nk})$:]
It works as follows:
\begin{itemize}
    \item For $1\leq j\leq i$, it computes  $\pi_j\sample \siml(\vkey^j,\statement)$.
    \item For $i<j\leq N$,  it computes $\pi_j\sample \prove(\pkey^j,\statement,\witness^{\otimes k})$ where it uses the $(k(j-1)+1)$-th to $kj$-th copies of $\witness$.
    \item Output $\pi:=(\pi_1,...,\pi_N)$.
\end{itemize}
\end{description}
Clearly, we have $\ora_0(\pkey',\vkey',\cdot,\cdot)=\ora_P(\pkey',\cdot,\cdot)$  
and $\ora_N(\pkey',\vkey',\cdot,\cdot)=\ora_S(\vkey',\cdot,\cdot)$.\footnote{$\ora_P(\pkey',\cdot,\cdot)$ and $\ora_S(\vkey',\cdot,\cdot)$ mean the corresponding oracles for $\Pi^{N}$.
} 
Therefore, it suffices to prove that no  distinguisher can distinguish $\ora_i(\pkey',\vkey',\cdot,\cdot)$ 
and 
$\ora_{i+1}(\pkey',\vkey',\cdot,\cdot)$
for any $i\in \{0,1,...,N-1\}$. 
For the sake of contradiction, suppose that there exists a distinguisher $\dist'$ that 
distinguishes $\ora_i(\pkey',\vkey',\cdot,\cdot)$ and $\ora_{i+1}(\pkey',\vkey',\cdot,\cdot)$ with a non-negligible advantage by making one query of the form $(\statement,\witness^{\otimes Nk})$.
Then we construct a distinguisher $\dist$ that breaks the zero-knowledge property of $\Pi$ as follows:
\begin{itemize}
    \item[$\dist^{\ora}(\vkey)$:]
    $\dist$ takes $\vkey$ as input and is given a single oracle access to $\ora$, which is either 
    $\ora_P(\pkey,\cdot,\cdot)$
    or  $\ora_S(\vkey,\cdot,\cdot)$ where $\pkey$ is the proving key corresponding to $\vkey$.\footnote{ $\ora_P(\pkey,\cdot,\cdot)$ and $\ora_S(\vkey,\cdot,\cdot)$ mean the corresponding oracles for $\Pi$ by abuse of notation.
}  (Remark that $\dist$ is not given $\pkey$.)
    It sets $\vkey^{i+1}:=\vkey$ (which implicitly defines $\pkey^{i+1}:=\pkey$) and generates $(\pkey^j,\vkey^j)\sample \setup(1^\secpar)$ for all $j\in [N]\setminus \{i+1\}$.
    It sets $\vkey':=(\vkey^1,...,\vkey^N)$ and runs $\dist'^{\ora'}(\vkey')$ where when $\dist'$ makes a query $(\statement,\witness^{\otimes Nk})$ to $\ora'$, $\dist$ simulates the oracle $\ora'$ for $\dist'$ as follows:
    \begin{itemize}
          \item For $1\leq j\leq i$, $\dist$ computes  $\pi_j\sample \siml(\vkey^j,\statement)$.
    \item For $j=i+1$, $\dist$ queries $(\statement, \witness^{\otimes k})$ to the external oracle $\ora$ where it uses the $(ki+1)$-th to $k(i+1)$-th copies of $\witness$ as part of its query, and lets $\pi_{i+1}$ be the oracle's response.
    \item For $i+1<j\leq N$,  it computes $\pi_j\sample \prove(\pkey^j,\statement,\witness^{\otimes k})$ where it uses the $(k(j-1)+1)$-th to $kj$-th copies of $\witness$.
    We note that this can be simulated by $\dist$ since it knows $\pkey^j$  for $j\neq i+1$. 
    \item $\dist$ returns $\pi':=(\pi_1,...,\pi_N)$ to $\dist'$ as a response from the oracle $\ora'$.
    \end{itemize}
    Finally, when $\dist'$ outputs $b$, $\dist$ also outputs $b$.
\end{itemize}
We can see that the oracle $\ora'$ simualted by $\dist$ works similarly to $\ora_i(\pkey',\vkey',\cdot,\cdot)$ 
when $\ora$ is $\ora_P(\pkey,\cdot,\cdot)$
and works similarly to $\ora_{i+1}(\pkey',\vkey',\cdot,\cdot)$ when $\ora$ is $\ora_S(\vkey,\cdot,\cdot)$ where $\pkey'=(\pkey^1,...,\pkey^N)$.
Therefore, by the assumption that $\dist'$ distinguishes $\ora_i(\pkey',\vkey',\cdot,\cdot)$ and $\ora_{i+1}(\pkey',\vkey',\cdot,\cdot)$ with a non-negligible advantage, $\dist$ distinguishes $\ora_P(\pkey,\cdot,\cdot)$ and $\ora_S(\vkey,\cdot,\cdot)$ with a non-negligible advantage.
However, this contradicts the zero-knowledge property of $\Pi$.
Therefore, such $\dist'$ does not exist, which completes the proof of \cref{lem:CV-NIZK_amplification}. 
\end{proof}
\fi

\section{CV-NIZK in the QSP model}\label{sec:CV-NIZK}
In this section, we construct a CV-NIZK in the QSP model (\cref{def:CV-NIZK}). 
Specifically, we prove the following theorem.
\begin{theorem}\label{thm:CV-NIZK}
There exists a CV-NIZK for $\QMA$ in the QSP model (without any computational assumption). 
\end{theorem}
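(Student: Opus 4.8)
The plan is to follow the blueprint laid out in the technical overview, combining the posthoc verification protocol of \cite{posthoc} with the quantum-teleportation idea, the observation that Pauli corrections commute past Pauli-basis measurements (\cref{lem:XZ_before_measurement}), and the local simulatability of the history state from \cref{lem:five_local_Hamiltonian}. Concretely, I would first fix the $\QMA$-complete $5$-local Hamiltonian $\ham_\statement = \sum_{i=1}^M p_i \frac{I + s_i P_i}{2}$ given by \cref{lem:five_local_Hamiltonian}, writing each $P_i$ as $(\prod_{j\in S_X^i}X_j)(\prod_{j\in S_Y^i}Y_j)(\prod_{j\in S_Z^i}Z_j)$ on a set $T_i := S_X^i\cup S_Y^i\cup S_Z^i$ of size at most $5$. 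Then I would define $\setup(1^\secpar)$: sample $N$ independent random bases $W_1,\dots,W_N \sample \{X,Y,Z\}$, prepare $N$ Bell pairs, measure one half of the $j$-th pair in the $W_j$-basis to get outcome $c_j$ (so the prover's half collapses to $U(W_j)\ket{c_j}$ by \cref{lem:statecollapsing}); additionally sample $(\whx,\whz)\sample\bit^N\times\bit^N$ and a subset $S_V\subseteq[N]$ of size between $1$ and $5$ chosen appropriately (uniform over such subsets, or with the biased distribution hinted at in the footnote to handle the dependence of the acceptance probability on $|T_i|$). The proving key $\pkey$ is the unmeasured halves together with $(\whx,\whz)$; the verification key $\vkey$ is $(\{W_j,c_j\}_{j\in[N]}, \{\whx_j,\whz_j\}_{j\in S_V}, S_V)$.

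Next I would specify $\prove$ and $\verify$. The prover takes enough copies $\witness^{\otimes k}$, constructs the history state $\rho_\hist$ via the quantum-polynomial-time conversion of \cref{lem:five_local_Hamiltonian}, applies the one-time pad $X^{\whx}Z^{\whz}$ to obtain $\rho' = X^{\whx}Z^{\whz}\rho_\hist Z^{\whz}X^{\whx}$, then teleports $\rho'$ into the verifier's (already-measured) registers by Bell-basis measurements against its halves of the shared pairs, obtaining teleportation outcomes $(x_j,z_j)$ for each $j$; it outputs $\pi := \{x_j,z_j\}_{j\in[N]}$. Here I need the crucial commutation observation: since the verifier measured its halves \emph{before} receiving $\rho'$, the whole process is equivalent to the verifier first teleporting $\rho'$ in and then measuring in basis $W_j$; so by \cref{lem:teleportation} and \cref{lem:XZ_before_measurement} the bit $m_j$ the verifier ``would have obtained'' by measuring the $j$-th qubit of $\rho_\hist$ in basis $W_j$ is recoverable from $c_j$ together with $x_j,z_j,\whx_j,\whz_j$ via the explicit XOR formula in \cref{lem:XZ_before_measurement} (composing the teleportation Pauli, the $(\whx,\whz)$ pad, and the state-collapse relation $U(W)\ket{m}$ versus $V(W)\ket{m}$ — note $U(Y)=V(Y)X$ introduces an extra bit flip that must be tracked). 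Then $\verify$: sample $i$ with probability $p_i$; if the sampled set $T_i$ is not contained in $S_V$ \emph{and} not ``consistent'' with the random bases $\{W_j\}$ (i.e. $W_j\neq$ the Pauli that $P_i$ applies at $j$ for some $j\in T_i$), then accept with the appropriate probability dictated by the $3^{-5}$ argument and biased coin; otherwise, compute $m_j$ for $j\in T_i$ from $\vkey$ and $\pi$ as above and accept iff $(-1)^{\bigoplus_{j\in T_i}m_j} = -s_i$.

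For correctness of the security claims: \textbf{Completeness} follows from \cref{lem:prob_and_energy} — conditioned on the bases being consistent and $T_i\subseteq S_V$, the recovered $m_j$ are distributed exactly as measurements of $\rho_\hist$ in the correct bases, so the verifier accepts with probability $1-\Tr(\rho_\hist\ham_\statement)\ge 1-\alpha$ on that branch; mixing with the ``inconsistent'' accept branch and accounting for the $\negl$ error from $\siml_\hist$ gives an inverse-polynomial $c$. \textbf{Soundness}: the prover's teleportation outcomes together with its (arbitrary, possibly entangled) state determine, via \cref{lem:XZ_before_measurement} run in reverse, a single density matrix $\rho^*$ on $N$ qubits that is fixed independently of the verifier's choice of $i$ and of $S_V$ — this is the key point, that the prover ``commits'' before learning anything instance-dependent — and then \cref{lem:prob_and_energy} gives acceptance probability at most $1-\beta$ on the consistent branch when $\statement\in L_\no$, hence overall $s < c$ with $c-s = 1/\poly$; then invoke \cref{lem:CV-NIZK_amplification} to boost to negligible. \textbf{Zero-knowledge}: the simulator $\siml(\vkey,\statement)$ does not know $\witness$; it needs to produce $\pi=\{x_j,z_j\}_j$ whose joint distribution with $\vkey$ matches the honest one. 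By Pauli mixing (\cref{lem:Pauli_mixing}) over the prover's one-time pad $(\whx,\whz)$, the teleported state the verifier ``sees'' is maximally mixed outside any fixed register, so the verifier's entire view depends on $\rho_\hist$ only through its reduced density matrix on $S_V$ (at most $5$ qubits), which $\siml_\hist(\statement,S_V)$ computes up to $\negl$ trace distance; the simulator samples $(x_j,z_j)$ uniformly, and for $j\in S_V$ resamples the implied measurement outcomes from $\siml_\hist$'s output, reprogramming consistently — this reproduces the real distribution up to $\negl(\secpar)$. I expect the \textbf{main obstacle} to be the bookkeeping in the zero-knowledge proof: carefully arguing that, after the one-time pad and teleportation, the only instance-dependent information leaking to the distinguisher is the $|S_V|$-qubit marginal of $\rho_\hist$, and that the simulator's resampling of the $S_V$-outcomes is exactly consistent with the uniformly-random teleportation bits and the fixed $\vkey$ — this requires combining \cref{lem:Pauli_mixing}, \cref{lem:XZ_before_measurement}, \cref{lem:statecollapsing}, and the local-simulatability guarantee without introducing any detectable bias, and handling the subtlety that $S_V$ is revealed to the verifier (so the simulator knows which marginal to request) but is hidden from the honest prover.
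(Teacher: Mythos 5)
Your proposal follows essentially the same construction and proof strategy as the paper: teleport the one-time-padded history state, use the commutation of Pauli corrections past Pauli-basis measurements (\cref{lem:XZ_before_measurement} and the virtual-protocol reformulation) for completeness/soundness via \cref{lem:prob_and_energy}, handle the instance-dependent basis distribution with the biased-coin adjustment, prove zero-knowledge from \cref{lem:Pauli_mixing} and the local simulator $\siml_\hist$, and finish with gap amplification. The only cosmetic difference is your phrasing of the consistency check as ``$T_i$ contained in $S_V$'' where the paper uses the cleaner equality $S_i = S_V$ (which makes the $1/N'$ accounting immediate); otherwise the proposal matches the paper's proof.
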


Our construction of a CV-NIZK for a $\QMA$ promise problem $L$ is given in \cref{fig:CV-NIZK}
where $\ham_\statement$, $N$, $M$, $p_{i}$, $s_{i}$, $P_i$, $\alpha$, $\beta$, and $\rho_\hist$ are as in \cref{lem:five_local_Hamiltonian} for $L$ and $V_j(W_j)$ is as defined in \cref{sec:quantum_preliminaries}. 

We note that there is a slightly simpler construction of CV-NIZK as shown in \cref{fig:CV-NIZK_prime} in \cref{sec:alternative_NIZK}.
However, 
we consider the construction given in \cref{fig:CV-NIZK} as our main construction 
since this is more convenient to extend to the computationally secure construction given in \cref{sec:Dual-mode}.  

Moreover, if we require only the completeness and the soundness, there is a much simpler construction.
For details, see \cref{sec:CV-NIP}.

\begin{figure}[t]
\rule[1ex]{\textwidth}{0.5pt}
\begin{description}
\item[$\setup(1^\secpar)$:]
The setup algorithm chooses
$(W_1,...,W_N)\sample \{X,Y,Z\}^{N}$, $(m_1,...,m_N)\sample \{0,1\}^{N}$,
$(\whx,\whz)\sample \bit^{N}\times \bit^{N}$,  
and a uniformly random subset $S_V\subseteq [N]$ such that $1\le|S_V|\le5$, 
and outputs a proving key $\pkey:=\left(\rho_P:=\bigotimes_{j=1}^N(U(W_j)|m_j\rangle),\whx,\whz\right)$ 
and a verification key $\vkey:=(
W_1,...,W_N,m_1,...,m_N,S_V, 
\{\whx_j,\whz_j\}_{j\in S_V})$.

\item[$\prove(\pkey,\statement,\witness)$:]
The proving algorithm 
parses $\left(\rho_P,\whx,\whz\right)\la \pkey$, 
generates the history state $\rho_\hist$ for $\ham_\statement$ from $\witness$, 
and computes $\rho'_\hist:=X^{\whx} Z^{\whz} \rho_\hist Z^{\whz} X^{\whx}$.
It measures $j$-th qubits of $\rho'_\hist$ and $\rho_P$ in the Bell basis for $j\in [N]$.    
Let $x:=x_1\concat x_2\concat...\concat x_N$, and $z:=z_1\concat z_2\concat...\concat z_N$ where 
$(x_j,z_j)\in\{0,1\}^2$ denotes the outcome of $j$-th measurement.
It outputs a proof $\pi:=(x,z)$.

\item[$\verify(\vkey,\statement,\pi)$:]
The verification algorithm 
parses 
$(W_1,...,W_N,m_1,...,m_N,S_V, 
\{\whx_j,\whz_j\}_{j\in S_V})\la \vkey$ and 
$(x,z)\la \pi$,   
chooses $i\in [M]$ according to the probability distribution defined by $\{p_{i}\}_{i\in[M]}$ (i.e., chooses $i$ with probability $p_{i}$).
Let
\begin{eqnarray*}
S_i:=\{j\in[N]~|~\mbox{$j$th Pauli operator of $P_i$ is not $I$}\}.
\end{eqnarray*}
We note that we have $1\leq |S_i|\leq 5$ by the $5$-locality of $\ham_\statement$. 
We say that $P_i$ is consistent to $(S_V,\{W_j\}_{j\in S_V})$  if and only if 
$S_i = S_V$ and
the $j$th Pauli operator of
$P_i$ is $W_j$ for all
$j\in S_i$.    
If $P_i$ is not consistent to $(S_V,\{W_j\}_{j\in S_V})$,  it outputs $\top$.
If $P_i$ is consistent to 
$(S_V,\{W_j\}_{j\in S_V})$,  
it flips a biased coin that heads with probability $1-3^{|S_i|-5}$.
If heads, it outputs $\top$.
If tails,
it defines
\begin{eqnarray*}
m_{j}':= 
\left\{
\begin{array}{cc}
m_{j}\oplus x_{j}\oplus \hat{x}_j&(W_j=Z),\\
m_{j}\oplus z_{j}\oplus \hat{z}_j&(W_j=X),\\
m_{j}\oplus x_{j}\oplus \hat{x}_j\oplus z_j\oplus \hat{z}_j&(W_j=Y)
\end{array}
\right.
\end{eqnarray*}
for $j\in S_i$, and outputs $\top$ if  $(-1)^{\bigoplus_{j\in S_i}m'_{j}}=-s_{i}$ and $\bot$ otherwise. 
\end{description} 
\rule[1ex]{\textwidth}{0.5pt}
\hspace{-10mm}
\caption{CV-NIZK $\Pi_\NIZK$ in the QSP model.}
\label{fig:CV-NIZK}
\end{figure}

\begin{figure}[t]
\rule[1ex]{\textwidth}{0.5pt}
\begin{description}
\item[$\setup_{\virone}(1^\secpar)$:]
The setup algorithm generates $N$ Bell-pairs between registers $\regP$ and $\regV$ and lets $\rho_P$ and $\rho_V$ be quantum states in registers $\regP$ and $\regV$, respectively. 
It chooses  $(\whx,\whz)\sample \bit^{N}\times \bit^{N}$.  
It chooses a uniformly random subset $S_V\subseteq [N]$ such that $1\le|S_V|\leq 5$, 
and outputs a proving key $\pkey:=\left(\rho_P, \whx,\whz\right)$ 
and a verification key $\vkey:=(\rho_V,S_V,\whx,\whz)$.
\item[$\prove_{\virone}(\pkey,\statement,\witness)$:]
This is the same  as $\prove(\pkey,\statement,\witness)$ in \cref{fig:CV-NIZK}.
\item[$\verify_{\virone}(\vkey,\statement,\pi)$:]
The verification algorithm chooses $(W_1,...,W_N)\sample \{X,Y,Z\}^N$, and measures $j$-th qubit of $\rho_V$ in the $W_j$ basis for all $j\in [N]$, and lets $(m_1,...,m_N)$ be the measurement outcomes. 
The rest of this algorithm is the same as $\verify(\vkey,\statement,\pi)$ given in \cref{fig:CV-NIZK}.
\end{description} 
\rule[1ex]{\textwidth}{0.5pt}
\hspace{-10mm}
\caption{The virtual protocol 1 for $\Pi_{\NIZK}$}
\label{fig:virtual1_CV-NIZK}
\end{figure}

\begin{figure}[t]
\rule[1ex]{\textwidth}{0.5pt}
\begin{description}
\item[$\setup_{\virtwo}(1^\secpar)$:]
This is the same as $\setup_{\virone}(1^\secpar)$ in \cref{fig:virtual1_CV-NIZK}.
\item[$\prove_{\virtwo}(\pkey,\statement,\witness)$:]
This is the same  as $\prove(\pkey,\statement,\witness)$ in \cref{fig:CV-NIZK}.
\item[$\verify_{\virtwo}(\vkey,\statement,\pi)$:]
The verification algorithm 
parses $(\rho_V,S_V, 
\whx,\whz)\la \vkey$ and
$(x,z)\la \pi$, computes $\rho'_V:=X^{x\oplus \whx}Z^{z\oplus \whz}\rho_V Z^{z\oplus \whz}X^{x\oplus \whx}$, 
chooses $(W_1,...,W_N)\sample \{X,Y,Z\}^N$, measures $j$-th qubit of $\rho'_V$ in the $W_j$ basis  for all $j\in[N]$, and lets $(m'_1,...,m'_N)$ be the measurement outcomes. 

It chooses $i\in[M]$ and defines $S_i\subseteq [N]$ similarly to $\verify(\vkey,\statement,\pi)$ in \cref{fig:CV-NIZK}.
If $P_i$ is not consistent to $(S_V,\{W_j\}_{j\in S_V})$,  it outputs $\top$.
If $P_i$ is consistent to 
$(S_V,\{W_j\}_{j\in S_V})$,  
it flips a biased coin that heads with probability $1-3^{|S_i|-5}$.
If heads, it outputs $\top$.
If tails,
it  outputs $\top$ if  $(-1)^{\bigoplus_{j\in S_i}m'_{j}}=-s_{i}$ and $\bot$ otherwise. 
\end{description} 
\rule[1ex]{\textwidth}{0.5pt}
\hspace{-10mm}
\caption{The virtual protocol 2 for $\Pi_{\NIZK}$}
\label{fig:virtual2_CV-NIZK}
\end{figure}

To show \cref{thm:CV-NIZK}, we prove the following lemmas.
\begin{lemma}[Completeness and Soundness]\label{lem:NIZK_completeness_soundness}
$\Pi_\NIZK$ satisfies $\left(1-\frac{\alpha}{N'}\right)$-completeness and 
 $\left(1-\frac{\beta}{N'}\right)$-soundness
 where $N':=3^5\sum_{i=1}^5{N \choose i}$. 
\end{lemma}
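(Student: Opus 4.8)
The plan is to analyze the acceptance probability of $\verify$ by conditioning on the event that the randomly chosen term $P_i$ is consistent with the setup data $(S_V,\{W_j\}_{j\in S_V})$, and then to relate the conditional acceptance probability to the energy $\Tr(\rho\ham_\statement)$ via \cref{lem:prob_and_energy}. A crucial preliminary step is to argue that the whole protocol is equivalent to the following ``idealized'' picture: the prover effectively sends some $N$-qubit state $\rho$ (the honest prover sends $\rho_\hist$, a cheating prover some arbitrary $\rho$ possibly entangled with its residual register), and the verifier measures the qubits of $\rho$ in the bases $\{W_j\}$ and checks the parity condition $(-1)^{\bigoplus_{j\in S_i} m_j} = -s_i$ on $S_i = S_V$. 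This equivalence is exactly what \cref{lem:statecollapsing} (State Collapsing) and \cref{lem:XZ_before_measurement} (Effect of $X^xZ^z$ before measurement) are set up to give: teleporting $\rho'_\hist = X^{\whx}Z^{\whz}\rho_\hist Z^{\whz}X^{\whx}$ through the state $\rho_P = \bigotimes_j U(W_j)\ket{m_j}$ and then applying the verifier's basis measurement, together with the bit corrections defined by $m'_j$, produces exactly the distribution obtained by measuring $\rho_\hist$ directly in the $W_j$-bases. (One should double-check here that the definitions of $U(W)$ versus $V(W)$ line up correctly, in particular the extra $X$ in $U(Y) = V(Y)X$, so that the correction rules for $W_j = Y$ in \cref{fig:CV-NIZK} are precisely the right ones; I expect this bookkeeping to be the fiddliest part.)

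Granting that equivalence, the completeness argument runs as follows. Fix $\statement \in L_\yes$ and an honest prover sending $\rho = \rho_\hist$ with $\Tr(\rho_\hist \ham_\statement)\le \alpha$. The verifier picks $i$ with probability $p_i$. Since the setup chooses $S_V$ uniformly among nonempty subsets of size $\le 5$ and $(W_j)_{j\in S_V}$ uniformly in $\{X,Y,Z\}^{|S_V|}$, and there are $N' = 3^5\sum_{i=1}^5 \binom{N}{i}$ equally likely ``labeled'' configurations in the sense that $\Pr[S_V = S_i \wedge W_j = (\text{$j$th Pauli of }P_i)\ \forall j\in S_i] = \frac{1}{\sum_{k=1}^5\binom{N}{k}}\cdot 3^{-|S_i|}$; multiplying by the biased-coin factor $3^{|S_i|-5}$ from the ``tails'' branch makes the effective probability that the verifier actually performs the nontrivial check on term $P_i$ equal to $p_i \cdot \frac{3^{-|S_i|}}{\sum_k\binom{N}{k}}\cdot 3^{|S_i|-5} = p_i/N'$, independent of $|S_i|$ — this is precisely the point of the biased coin. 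In every other case $\verify$ outputs $\top$. Hence $\Pr[\verify = \top] = (1 - \tfrac{1}{N'}) + \tfrac{1}{N'}\sum_i p_i \Pr[(-1)^{\bigoplus_{j\in S_i}m'_j} = -s_i]$, and by \cref{lem:prob_and_energy} applied to each local projector $\tfrac{I+s_iP_i}{2}$, the inner sum equals $\sum_i p_i(1 - \Tr(\rho_\hist \tfrac{I+s_iP_i}{2})) = 1 - \Tr(\rho_\hist\ham_\statement) \ge 1-\alpha$. Therefore $\Pr[\verify=\top]\ge 1 - \tfrac{1}{N'} + \tfrac{1-\alpha}{N'} = 1 - \tfrac{\alpha}{N'}$, giving $(1-\alpha/N')$-completeness.

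For soundness, fix a (possibly unbounded) adversary $\A$ producing $(\statement,\pi)$ with $\statement\in L_\no$. By the equivalence above, whatever $\A$ does, the induced verifier's view is that of measuring some $N$-qubit density matrix $\rho$ (obtained by tracing out $\A$'s registers after its joint operation with the teleported/Bell-correlated qubits) in the bases $\{W_j\}$ — the key subtlety is that $\rho$ must be fixed \emph{independently} of $S_V$ and $\{W_j\}_{j\in S_V}$, which holds because the proving key $\pkey$ given to $\A$ contains $\rho_P$ and $(\whx,\whz)$ but not $S_V$, and $\A$ commits to $(x,z)$ before any basis information is revealed; once $(x,z)$ and $\rho_P$'s state are fixed, the corrected state on which the parities are evaluated is a fixed $\rho$. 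Then exactly the same computation as in the completeness case gives $\Pr[\verify=\top] = 1 - \tfrac{1}{N'} + \tfrac{1}{N'}(1 - \Tr(\rho\ham_\statement)) = 1 - \tfrac{\Tr(\rho\ham_\statement)}{N'} \le 1 - \tfrac{\beta}{N'}$, using $\Tr(\rho\ham_\statement)\ge\beta$ for no-instances from \cref{lem:five_local_Hamiltonian}. This yields $(1-\beta/N')$-soundness, and since $\beta - \alpha = 1/\poly$ we have $(1-\alpha/N') - (1-\beta/N') = (\beta-\alpha)/N' = 1/\poly$, consistent with the gap requirement in \cref{def:CV-NIZK}. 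The main obstacle, as noted, is making the ``equivalence to direct measurement'' step fully rigorous — tracking the Pauli corrections through teleportation with the $U/V$ basis conventions and confirming that a cheating prover's state is genuinely independent of the hidden $(S_V, \{W_j\})$ — rather than the counting or the energy-probability bookkeeping, which are routine given \cref{lem:prob_and_energy}, \cref{lem:statecollapsing}, and \cref{lem:XZ_before_measurement}.
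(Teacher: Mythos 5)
Your proposal is correct and follows essentially the same route as the paper's proof: the paper formalizes your "idealized picture" as two virtual protocols (first replacing the setup's pre-measured state with Bell halves via state collapsing, then replacing the XOR corrections with Pauli corrections before measurement via the $X^xZ^z$-commutation lemma), after which the acceptance probability is exactly $1-\frac{1}{N'}\Tr(\rho\ham_\statement)$ by the energy-probability lemma and the $N'$-counting you carried out. Your identification of the key soundness subtlety — that the teleported state is fixed independently of the hidden $(S_V,\{W_j\})$ — is likewise the crux of the paper's argument.
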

\begin{lemma}[Zero-Knowledge]\label{lem:NIZK_zero-knowledge}
$\Pi_\NIZK$ satisfies the zero-knowledge property. 
\end{lemma}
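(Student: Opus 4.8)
\textbf{Proof plan for \cref{lem:NIZK_zero-knowledge} (Zero-Knowledge of $\Pi_\NIZK$).}
The plan is to build a PPT simulator $\siml$ that, given only $\vkey = (W_1,\dots,W_N,m_1,\dots,m_N,S_V,\{\whx_j,\whz_j\}_{j\in S_V})$ and the statement $\statement$, produces a proof $\pi=(x,z)$ whose joint distribution with $\vkey$ is statistically close to the real one. The key observation is the one already sketched in the technical overview: in the real protocol the prover's quantum message $\rho'_\hist = X^{\whx}Z^{\whz}\rho_\hist Z^{\whz}X^{\whx}$ is teleported qubit-by-qubit through the Bell pairs whose other halves are $\rho_P = \bigotimes_j U(W_j)\ket{m_j}$. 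First I would pass to the ``virtual protocol 1'' picture of \cref{fig:virtual1_CV-NIZK}, where the setup instead hands out genuine Bell pairs and the verifier measures its halves in bases $W_j$ at verification time; I would prove (as is needed for \cref{lem:NIZK_completeness_soundness} anyway, via \cref{lem:statecollapsing} and \cref{lem:XZ_before_measurement}) that this is perfectly equivalent to the real protocol. In this picture, by \cref{lem:teleportation} (Quantum Teleportation), the outcome $(x,z)$ the prover sends is \emph{uniformly random and independent} of $\rho_\hist$ — teleportation of a fixed state produces a uniformly random Pauli key — so the only statement-dependent information is what the verifier can extract from its retained registers.

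Next I would describe the simulator concretely. $\siml(\vkey,\statement)$ samples $(x,z)\sample\bit^N\times\bit^N$ uniformly and outputs $\pi:=(x,z)$; that is the entire proof. Correctness of this simulator has two halves. First, by the teleportation lemma the marginal of $(x,z)$ in the real protocol is exactly uniform, so the proof alone is perfectly simulated. Second — the substantive part — I must argue that the \emph{joint} distribution of $(\vkey,\pi)$ is reproduced, i.e. that conditioned on $(x,z)$ the verifier's view (in particular the post-teleportation state on the registers indexed by $S_V$, which is what $\verify$ actually touches) is statistically close to what it would be in the real run. Here I invoke the local simulatability clause of \cref{lem:five_local_Hamiltonian}: conditioned on the teleportation outcome $(x,z)$, the verifier holds (a decoded version of) $X^{x\oplus\whx}Z^{z\oplus\whz}\rho_\hist Z^{z\oplus\whz}X^{x\oplus\whx}$, and since $\verify$ only measures the qubits in $S_V$ (with $|S_V|\le5$), its accept/reject behavior depends only on $\Tr_{[N]\setminus S_V}\rho_\hist$, which by \cref{lem:five_local_Hamiltonian} is $\negl(\secpar)$-close in trace distance to the efficiently computable $\rho_{S_V}=\siml_\hist(\statement,S_V)$. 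Crucially the verifier only ever learns $\{\whx_j,\whz_j\}_{j\in S_V}$, so the quantum one-time pad on the $[N]\setminus S_V$ qubits is perfectly opaque to it and we genuinely only need the five-qubit reduced density matrix. A hybrid argument — replace the real teleported state by the maximally mixed state tensored with the true five-qubit marginal (justified by Pauli mixing, which is used for completeness/soundness anyway), then replace that marginal by $\rho_{S_V}$ using local simulatability — carries this through with accumulated error $\negl(\secpar)$.

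The main obstacle I anticipate is bookkeeping the equivalence between the real QSP protocol and the teleportation/virtual picture carefully enough that the ``conditioning on $(x,z)$'' step is rigorous: one has to check that the order of operations (setup measuring its halves to produce $(W_j,m_j)$ vs. verifier measuring them later; prover applying $X^{\whx}Z^{\whz}$ before vs. after the Bell measurement; $\verify$ XOR-ing in $\whx_j,\whz_j$ after measurement rather than the verifier applying Pauli corrections before) all genuinely commute, which is exactly the content of \cref{lem:statecollapsing} and \cref{lem:XZ_before_measurement} but needs to be assembled into a single clean claim that the triple $(\vkey,\pi,\text{accept bit})$ has the same distribution in all these pictures. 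Once that equivalence is in hand, and one is in the virtual-2 picture of \cref{fig:virtual2_CV-NIZK} where the verifier explicitly measures $X^{x\oplus\whx}Z^{z\oplus\whz}\rho_V Z^{z\oplus\whz}X^{x\oplus\whx}$, the zero-knowledge argument is essentially forced: uniform $(x,z)$ from teleportation, opacity of the one-time pad outside $S_V$, and local simulatability of the $\le5$-qubit marginal. Finally I would note that single-theorem zero-knowledge suffices here and that the distinguisher being unbounded is harmless since every reduction step is statistical; then \cref{lem:CV-NIZK_amplification} upgrades the inverse-polynomial completeness/soundness gap while, as that lemma states, preserving zero-knowledge, completing the proof of \cref{thm:CV-NIZK}.
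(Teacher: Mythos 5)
Your proposed simulator — output uniformly random $(x,z)$ — is wrong, and the gap is the silent jump from ``the marginal of $(x,z)$ is uniform'' to ``the conditional of $(x,z)$ given $\vkey$ is uniform.'' The teleportation lemma gives only the former; the latter fails on the qubits in $S_V$. In the real run the prover Bell-measures the $j$-th qubit of $\rho'_\hist = X^{\whx}Z^{\whz}\rho_\hist Z^{\whz}X^{\whx}$ against the fixed state $U(W_j)\ket{m_j}$; for a single qubit in state $\sigma$ and, say, $W_j=Z$, $m_j=0$, the outcome law is $\Pr[(x_j,z_j)]=\tfrac12\bra{x_j}\sigma\ket{x_j}$, which is uniform only when $\sigma$ is maximally mixed. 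For $j\notin S_V$ the verifier is ignorant of $(\whx_j,\whz_j)$, so by Pauli mixing the effective marginal is maximally mixed and $(x_j,z_j)$ is indeed uniform given $\vkey$ — that half of your intuition is right. But for $j\in S_V$ the verifier holds $(\whx_j,\whz_j)$, so the one-time pad is undone and the conditional law of $\{(x_j,z_j)\}_{j\in S_V}$ given $\vkey$ is governed by $\Tr_{[N]\setminus S_V}\rho_\hist$, a definite statement-dependent state. A distinguisher who simply recomputes the correct posterior (which is efficient, since $|S_V|\le 5$ and $\siml_{\hist}(\statement,S_V)$ is public and PPT) tells your simulator from the honest prover with noticeable advantage.

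A related confusion feeds this: you reason about ``the post-teleportation state on the registers indexed by $S_V$, which is what $\verify$ actually touches,'' but in the QSP model the verifier holds \emph{no} quantum register — $\vkey$ is classical and $\verify$ is PPT. The virtual-protocol picture, where $(x,z)$ really is uniform and the verifier holds a quantum $\rho_V$, has a different verification key, and the uniformity does not carry over once $\vkey$ has committed to $(W_j,m_j)$. The fix is exactly the paper's simulator: output uniform $(x_j,z_j)$ only for $j\notin S_V$; for $j\in S_V$, sample $\{(x_j,z_j)\}_{j\in S_V}$ from the Bell-basis outcome distribution between $\left(\prod_{j\in S_V}X_j^{\whx_j}Z_j^{\whz_j}\right)\rho_{S_V}\left(\prod_{j\in S_V}Z_j^{\whz_j}X_j^{\whx_j}\right)$ and $\bigotimes_{j\in S_V}U(W_j)\ket{m_j}$, with $\rho_{S_V}:=\siml_{\hist}(\statement,S_V)$ — a constant-size computation a classical machine can carry out. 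The hybrid you sketch (Pauli-mix the $[N]\setminus S_V$ qubits, then swap $\Tr_{[N]\setminus S_V}\rho_\hist$ for $\rho_{S_V}$ via local simulatability) is then the right one, but it has to validate \emph{this} simulator, not the uniform one.
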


Since $\left(1-\frac{\alpha}{N'}\right)-\left(1-\frac{\beta}{N'}\right)=\frac{\beta-\alpha}{N'}\geq 1/\poly(\secpar)$, 
by combining \cref{lem:CV-NIZK_amplification,lem:NIZK_completeness_soundness,lem:NIZK_zero-knowledge}, \cref{thm:CV-NIZK} follows.

In the following, we give proofs of \cref{lem:NIZK_completeness_soundness,lem:NIZK_zero-knowledge}.
\begin{proof}[Proof of \cref{lem:NIZK_completeness_soundness}]
We prove this lemma by considering virtual protocols that do not change completeness and soundness.
For more details, see \cref{sec:alternative_proof_NIZK}. 
First, we consider  the virtual protocol 1 described in \cref{fig:virtual1_CV-NIZK}.
There are two differences from the original protocol.
The first is that $\vkey$ includes the whole $(\whx,\whz)$ instead of $\{\whx_j,\whz_j\}_{j\in S_V}$.
This difference does not change the (possibly malicious) prover's view since $\vkey$ is not given to the prover. 
The second is that the setup algorithm generates $N$ Bell pairs and gives each halves to the prover and verifier, and the verifier obtains $(m_1,...,m_N)$ by measuring his halves in Pauli basis. 
Because the verifier's measurement and the prover's measurement
commute with each other, 
in the virtual protocol 1, 
the verifier's acceptance probability does not change even if the verifier chooses $(W_1,...,W_N)$ and measures $\rho_V$ in the corresponding basis to obtain outcomes $(m_1,...,m_N)$ before $\rho_P$ is given to the prover.
Moreover, conditioned on the above measurement outcomes, the state in $\regP$ collapses to $\bigotimes_{j=1}^N(U(W_j)|m_j\rangle)$ (See ~\cref{lem:statecollapsing}). 
Therefore, the virtual protocol 1 is exactly the same as the original protocol from the prover's view, and the verifier's acceptance probability of the virtual protocol 1
is the same as that of the original protocol $\Pi_\NIZK$ for any possibly malicious prover.

Next, we further modify the protocol to define the virtual protocol 2 described in \cref{fig:virtual2_CV-NIZK}.
The difference from the virtual protocol 1 is that instead of setting $m_{j}'$, the verification algorithm applies a corresponding Pauli $X^{x\oplus \whx}Z^{z\oplus \whz}$ on $\rho_V$, and then measures it to obtain $m_{j}'$. 
By \cref{lem:XZ_before_measurement}, 
this does not change the distribution of $(m_1',...,m_N')$.
Therefore, the verifier's acceptance probability of the virtual protocol 2
is the same as that of the virtual protocol 1 for any possibly malicious prover.

Therefore, it suffices to prove $(1-\frac{\alpha}{N'})$-completeness and 
$(1-\frac{\beta}{N'})$-soundness for the virtual protocol $2$. 
When $\statement\in L_\yes$ and $\pi$ is honestly generated, then $\rho'_V$  is the history state $\rho_\hist$, which satisfies $\Tr(\rho_\hist \ham_\statement)\leq \alpha$, by the correctness of quantum teleportation (Lemma \ref{lem:teleportation}). 
For any fixed $P_i$, the probability that $P_i$ is consistent to  $(S_V,\{W_j\}_{j\in S_V})$ and
the coin tails is $\frac{1}{N'}$.
Therefore, 
by \cref{lem:prob_and_energy} and \cref{lem:five_local_Hamiltonian}, 
the verifier's acceptance probability is $1-\frac{1}{N'}\Tr(\rho_\hist \ham_\statement)\geq 1-\frac{\alpha}{N'}$. 

Let $\A$ be an adaptive adversary against soundness of virtual protocol $2$.
That is, $\A$ is given $\pkey$ and outputs $(\statement,\pi)$.
We say that $\A$ wins if $\statement \in L_\no$ and $\verify(\vkey,\statement,\pi)=\top$.
For any $\statement$, let 
$\event_\statement$ be the event that the statement output by $\A$ is $\statement$, and
$\rho'_{V,\statement}$ be the state in $\regV$ right before the measurement by $\verify$ conditioned on $\event_\statement$.
Similarly to the analysis for the completeness, 
by \cref{lem:prob_and_energy} and \cref{lem:five_local_Hamiltonian}, we have 
\begin{align*}
    \Pr[\A\text{~wins}]=\sum_{\statement\in L_\no}\Pr[\event_\statement]\left(1-\frac{1}{N'}\Tr(\rho'_{V,\statement} \ham_\statement)\right)\leq 
    \sum_{\statement\in L_\no}\Pr[\event_\statement]\left(1-\frac{\beta}{N'}\right)\leq 1-\frac{\beta}{N'}.
\end{align*}
\end{proof}

\begin{proof}[Proof of \cref{lem:NIZK_zero-knowledge}]
We describe the simulator $\siml$ below.
\begin{description}
\item[$\siml(\vkey,\statement)$:]
The simulator parses $(W_1,...,W_N,m_1,...,m_N,S_V, 
\{\whx_j,\whz_j\}_{j\in S_V})\la \vkey$ and does the following.
\begin{enumerate}
    \item Generate the classical description of the density matrix $\rho_{S_V}:=\siml_{\hist}(\statement,S_V)$
    where $\siml_{\hist}$ is as in \cref{lem:five_local_Hamiltonian}. 
    \item Sample $\{x_j,z_j\}_{j\in S_V}$ according to the probability distribution of outcomes of the Bell-basis measurements of the corresponding pairs of qubits of $\left(\prod_{j\in S_V}X_j^{\whx_j}Z_j^{\whz_j}\right)\rho_{S_V}\left(\prod_{j\in S_V}Z_j^{\whz_j}X_j^{\whx_j}\right)$ and $\bigotimes_{j\in S_V}(U(W_j)\ket{m_j})$.  
We emphasize that this measurement can be simulated in a  classical probabilistic polynomial time since $|S_V|\leq 5$. 
\item Choose $(x_j,z_j)\sample \bit^2$ for all $j\in [N]\setminus S_V$. 
\item Output $\pi:=(x,z)$ where $x:=x_1\concat x_2\concat...\concat x_N$ and $z:=z_1\concat z_2\concat...\concat z_N$.
\end{enumerate}
\end{description} 
We prove that the output of this simulator is indistinguishable from the real proof.  
For proving this, we consider the following sequences of modified simulators.
We note that these simulators may perform quantum computations  unlike the real simulator.
\begin{description}
\item[$\siml_1(\vkey,\statement)$:]
The simulator parses $(W_1,...,W_N,m_1,...,m_N,S_V, 
\{\whx_j,\whz_j\}_{j\in S_V})\la \vkey$ and does the following.
\begin{enumerate}
    \item Generate the classical description of the density matrix $\rho_{S_V}:=\siml_{\hist}(\statement,S_V)$
    where $\siml_{\hist}$ is as in \cref{lem:five_local_Hamiltonian}. (This step is the same as the step 1 of $\siml(\vkey,\statement)$.)
    \item 
    Generate $\widetilde{\rho'}_{\hist}:=\left(\prod_{j\in S_V}X_j^{\whx_j}Z_j^{\whz_j}\right)\rho_{S_V}\left(\prod_{j\in S_V}Z_j^{\whz_j}X_j^{\whx_j}\right)\otimes \frac{I_{[N]\setminus S_V}}{2^{|[N]\setminus S_V|}}$.
    \item Measure $j$-th qubits of $\widetilde{\rho'}_{\hist}$ and $\rho_P:=\bigotimes_{j=1}^N(U(W_j)|m_j\rangle)$ in the Bell basis for $j\in[N]$, and let $(x_j,z_j)$ be the $j$-th measurement result. 
\item Output $\pi:=(x,z)$ where $x:=x_1\concat x_2\concat...\concat x_N$ and $z:=z_1\concat z_2\concat...\concat z_N$.
\end{enumerate}
\end{description} 

Clearly, the distributions of 
$\{x_j,z_j\}_{j\in S_V}$ 
output by $\siml(\vkey,\statement)$ and $\siml_1(\vkey,\statement)$ are the same.
Moreover, the distributions of $\{x_j,z_j\}_{j\in [N]\setminus S_V}$ 
output by $\siml(\vkey,\statement)$ and $\siml_1(\vkey,\statement)$ are both uniformly and independently random. 
Therefore, output distributions of $\siml(\vkey,\statement)$ and $\siml_1(\vkey,\statement)$  are exactly the same.

Next, we consider the following modified simulator that takes a witness $\witness\in R_L(\statement)$ as input.

\begin{description}
\item[$\siml_2(\vkey,\statement,\witness)$:]
The simulator parses $(W_1,...,W_N,m_1,...,m_N,S_V, 
\{\whx_j,\whz_j\}_{j\in S_V})\la \vkey$ and does the following.
\begin{enumerate}
    \item
    Generate the history state $\rho_{\hist}$ for $\ham_\statement$ from $\witness$.
    \item Generate $(\whx_j,\whz_j)\sample \bit^2$ for $j\in [N]\setminus S_V$ and let $\whx:=\whx_1\concat...\concat\whx_N$ and $\whz:=\whz_1\concat...\concat\whz_N$. 
    \item Compute $\rho'_{\hist}:=X^{\whx} Z^{\whz} \rho_{\hist} Z^{\whz}X^{\whx}$.
    \item Measure $j$-th qubits of $\rho'_{\hist}$ and $\rho_P:=\bigotimes_{j=1}^N(U(W_j)|m_j\rangle)$ in the Bell basis for $j\in[N]$, and let $(x_j,z_j)$ be the $j$-th measurement result. 
\item Output $\pi:=(x,z)$ where $x:=x_1\concat x_2\concat...\concat x_N$ and $z:=z_1\concat z_2\concat...\concat z_N$.
\end{enumerate}

By \cref{lem:Pauli_mixing}, we have $\rho'_{\hist}=\left(\prod_{j\in S_V}X_j^{\whx_j}Z_j^{\whz_j}\right)\Tr_{N\setminus S_V}[\rho_{\hist}]\left(\prod_{j\in S_V}Z_j^{\whz_j}X_j^{\whx_j}\right)\otimes \frac{I_{[N]\setminus S_V}}{2^{|[N]\setminus S_V|}}$
from the view of a distinguisher that has no information on $\{\whx_j,\whz_j\}_{j\in[N]\setminus S_V}$. 
By \cref{lem:five_local_Hamiltonian}, we have
$\|\rho_{S_V}-\Tr_{[N]\setminus S_V}\rho_\hist\|_{tr}=\negl(\secpar)$. 
Therefore, we have $\|\widetilde{\rho'}_\hist-\rho'_\hist\|_{tr}=\negl(\secpar)$. 
This means that $\siml_1(\vkey,\statement)$ and $\siml_2(\vkey,\statement,\witness)$ are statistically indistinguishable from the view of a distinguisher that makes at most one query. 

Finally, noting that the output distribution of $\siml_2(\vkey,\statement,\witness)$ is exactly the same as that of $\prove(\pkey,\statement,\witness)$, the proof of \cref{lem:NIZK_zero-knowledge} is completed. 
\end{description}
\end{proof}

\section{Dual-Mode CV-NIZK with Preprocessing}\label{sec:Dual-mode}
In this section, we extend the CV-NIZK given in \cref{sec:CV-NIZK} to reduce the amount of trust in the setup at the cost of 
introducing a quantum preprocessing and 
relying on a computational assumption. 
In the construction in \cref{sec:CV-NIZK}, we assume that the trusted setup algorithm honestly generates proving and verification keys, which are correlated with each other, and sends them to the prover and verifier, respectively, without revealing them to the other party. 
Here, we give a construction of CV-NIZK with preprocessing that consists of the generation of common reference string by a trusted party and a single instance-independent quantum message from the verifier to the prover. 
We call such a model the  \CRSVP  model.
We note this is the same model as is considered in \cite{C:ColVidZha20}. 
Moreover, our construction has a nice feature called the dual-mode property, which has been considered for NIZKs for $\NP$~\cite{GrothSahai,GOS,C:PeiShi19}.
\ifnum\submission=1
\else
The dual-mode property requires that there are two computationally indistinguishable modes of generating a common reference string, one of which ensures statistical soundness (and computational zero-knowledge) while the other ensures statistical zero-knowledge (and computational soundness). 
To the best of our knowledge, ours is the first construction of a dual-mode NIZK for $\QMA$ in any kind of model. 
\fi

\subsection{Definition}
We give a formal definition of a dual-mode CV-NIZK in the \CRSVP model. 

\begin{definition}[Dual-Mode CV-NIZK in the \CRSVP Model]\label{def:dual-mode}
A dual-mode CV-NIZK for a $\QMA$ promise problem $L=(L_\yes,L_\no)$ in the \CRSVP model consists of algorithms $\Pi=(\crsgen,\preprocess,\allowbreak \prove,\verify)$ with the following syntax:
\begin{description}
\item[$\crsgen(1^\secpar,\mode)$:] This is a PPT algorithm that takes the security parameter $1^\secpar$ and a mode $\mode\in \{\binding,\hiding\}$ as input and outputs a classical common reference string $\crs$.
We note that $\crs$ can be reused  and thus this algorithm is only needed to run once by a trusted third party. 
\item[$\preprocess(\crs)$:]
This is a QPT algorithm that takes the common reference string $\crs$ as input and outputs a quantum proving key $\pkey$ and a classical  verification key $\vkey$.
We note that this algorithm is supposed to be run by the verifier as preprocessing, and $\pkey$ is supposed to be sent to the prover while $\vkey$ is supposed to be kept on verifier's side in secret. 
We also note that they can be used only once  and cannot be reused unlike $\crs$.
\item[$\prove(\crs,\pkey,\statement,\witness^{\otimes k})$:] This is a QPT algorithm that takes 
the common reference string $\crs$, 
the proving key $\pkey$, a statement $\statement$, and $k=\poly(\secpar)$ copies $\witness^{\otimes k}$ of a witness $\witness\in R_L(\statement)$ as input and outputs a classical proof $\pi$.
\item[$\verify(\crs,\vkey,\statement,\pi)$:]
This is a PPT algorithm that takes 
the common reference string $\crs$,  
the verification key $\vkey$, a statement $\statement$, and a proof $\pi$ as input and outputs $\top$ indicating acceptance or $\bot$ indicating rejection. 
\end{description}
We require $\Pi$ to satisfy the following properties for some $0<s<c<1$ such that $c-s>1/\poly(\secpar)$.
Especially, when we do not specify $c$ and $s$, they are set as $c=1-\negl(\secpar)$ and $s=\negl(\secpar)$. 

\medskip
\noindent \underline{
\textbf{$c$-Completeness.}
}
For all
$\mode\in \{\binding,\hiding\}$, $\statement\in L_\yes\cap \bit^\secpar$, and $\witness\in R_L(\statement)$, we have 
\begin{align*}
    \Pr\left[
    \verify(\crs,\vkey,\statement,\pi)=\top 
    :
    \begin{array}{c}
          \crs\sample \crsgen(1^\secpar,\mode) \\
         (\pkey,\vkey) \sample \preprocess(\crs)\\
         \pi \sample \prove(\crs,\pkey,\statement,\witness^{\otimes k})
    \end{array}
    \right]
    \geq c.
\end{align*}

\medskip
\noindent \underline{
\textbf{(Adaptive) Statistical $s$-Soundness in the Binding Mode}
}
For all unbounded-time adversary $\A$, we have 
\begin{align*}
    \Pr\left[
    \statement\in L_\no \land \verify(\crs,\vkey,\statement,\pi)=\top
    :
    \begin{array}{c}
          \crs\sample \crsgen(1^\secpar,\binding) \\
         (\pkey,\vkey) \sample \preprocess(\crs)\\
       (\statement,\pi) \sample \A(\crs,\pkey)
    \end{array}
    \right]
    \leq s.
\end{align*}

\medskip
\noindent \underline{
\textbf{(Adaptive Multi-Theorem) Statistical Zero-Knowledge in the Hiding Mode.}
}
There exists a PPT simulator $\siml_0$ and a QPT simulator $\siml_1$ such that for any unbounded-time  distinguisher $\dist$, we have 
\begin{align*}
    &\left|\Pr\left[\dist^{\ora_P(\crs,\cdot,\cdot,\cdot)}(\crs)=1:
    \begin{array}{c}
          \crs\sample \crsgen(1^\secpar,\hiding) 
    \end{array}
    \right]\right.\\     &-\left.\Pr\left[\dist^{\ora_S(\td,\cdot,\cdot,\cdot)}(\crs)=1:
    \begin{array}{c}
          (\crs,\td)\sample \siml_0(1^\secpar)
    \end{array}
    \right]\right|\leq \negl(\secpar)
\end{align*}
where 
$\dist$ can make $\poly(\secpar)$ queries, which should be of the form
$(\pkey,\statement,\witness^{\otimes k})$ 
where $\witness\in R_L(\statement)$ and $\witness^{\otimes k}$ is unentangled with $\dist$'s internal registers,\footnote{
We remark that $\pkey$ is allowed to be entangled  with $\dist$'s internal registers unlike $\witness^{\otimes k}$.
See also footnote \ref{footnote:unentangled}.}  $\ora_P(\crs,\pkey,\statement,\witness^{\otimes k})$ returns $\prove(\crs,\pkey,\statement,\witness^{\otimes k})$, and
  $\ora_S(\td,\pkey,\statement,\witness^{\otimes k})$ returns 
  $\siml_1(\td,\pkey,\statement)$.
 
 \medskip
  \noindent \underline{
\textbf{Computational Mode Indistinguishability.}
}
For any non-uniform QPT distinguisher $\dist$, we have
\ifnum\submission=1
\begin{align*}
    \left|\Pr\left[\dist(\crs_{\binding})=1\right]
    - \Pr\left[\dist(\crs_{\hiding})=1\right]
    \right|\leq \negl(\secpar)
\end{align*} 
where $\crs_{\binding}\sample \crsgen(1^\secpar,\binding)$ and $\crs_{\hiding}\sample \crsgen(1^\secpar,\hiding)$.
\else
\begin{align*}
    \left|\Pr\left[\dist(\crs)=1:\crs\sample \crsgen(1^\secpar,\binding)\right]
    - \Pr\left[\dist(\crs)=1:\crs\sample \crsgen(1^\secpar,\hiding)\right]
    \right|\leq \negl(\secpar).
\end{align*} 
\fi
\end{definition}
\ifnum\submission=1
Some remarks on the above definition are given in \cref{sec:remark_definition_dual}.
\else
\begin{remark}[On definition of  zero-knowledge property]
By considering a combination of $\crsgen$ (for a fixed $\mode$) and $\preprocess$ as a setup algorithm, (dual-mode) CV-NIZK in the \CRSVP~model can be seen as a CV-NIZK in the QSP model in a syntactical sense.    
However, it seems difficult to prove that this  satisfies (even a computational variant of) the zero-knowledge property  defined in \cref{def:CV-NIZK} due to the following reasons:
\begin{enumerate}
    \item In \cref{def:dual-mode}, $\siml_1$ is quantum, whereas a simulator is required to be classical in \cref{def:CV-NIZK}. 
    We observe that this seems unavoidable in the above model: If $\pkey$ is quantum, then a classical simulator cannot even take $\pkey$ as input. On the other hand, if $\pkey$ is classical, then that implies $L\in\mathbf{AM}$ similarly to the final paragraph of \cref{sec:CV-NIP}. 
    \item A simulator in \cref{def:dual-mode} can embed a trapdoor $\td$ behind the common reference string $\crs$  whereas a simulator in  \cref{def:CV-NIZK} just takes an honestly generated verification key $\vkey$ as input. 
    We remark that this also seems unavoidable since $\vkey$ may be maliciously generated when the verifier is malicious, in which case just taking $\vkey$ as input would be useless for the simulation. 
\end{enumerate}
On the other hand, the definition in \cref{def:dual-mode} allows a distinguisher (that plays the role of a malicious verifier)  to maliciously generate $\pkey$, which is  a stronger capability than that of a distinguisher in  \cref{def:CV-NIZK}. 
Therefore, the zero-knowledge properties in \cref{def:dual-mode} and  \cref{def:CV-NIZK} are incomparable. 
We believe that the definition of the zero-knowledge property in \cref{def:dual-mode} ensures meaningful security.
It 
roughly means that any malicious verifier cannot learn anything beyond what could be computed in quantum polynomial time by itself even if it is allowed to interact with many sessions of honest provers under maliciously generated proving keys and the reused honestly generated common reference string.
While this does not seem very meaningful when $L\in \BQP$, we can ensure a meaningful privacy of the witness when $L\in \QMA$.  
Finally we remark that our definition is essentially the same as that in \cite{C:ColVidZha20}  (except for the dual-mode property). 
\end{remark}
\begin{remark}[Comparison to NIZK in the malicious designated verifier model]
A CV-NIZK for $\QMA$ in the \CRSVP model as defined above is syntactically very similar to the NIZK for $\QMA$ in the malicious designated verifier model as introduced in \cite{C:Shmueli21}.
However, a crucial difference is that the proving key $\pkey$ is a quantum state in our case and cannot be reused whereas that is classical and can be reused for proving multiple statements in \cite{C:Shmueli21}.
On the other hand, a CV-NIZK in the \CRSVP model has two nice features  that the NIZK of \cite{C:Shmueli21} does not have: one is that verification can be done classically in the online phase and the other is the dual-mode property.
\end{remark}
\fi

Though \cref{def:dual-mode} does not explicitly require anything on soundness in the hiding mode or the zero-knowledge property in the binding mode, we can easily prove that they are satisfied in a computational sense. 
\ifnum\submission=1
See \cref{sec:dual_mode_transfer} for details.
\else
Specifically, we have the following lemma.
\begin{lemma}
If a dual-mode CV-NIZK $\Pi=(\crsgen,\preprocess,\prove,\verify)$ for a $\QMA$ promise problem $L$ satisfies statistical $s$-soundness in the binding mode, statistical zero-knowledge property in the hiding mode, and computational mode indistinguishability, then it also satisfies the following properties.

\medskip
\noindent \underline{
\textbf{(Exclusive-Adaptive) Computational $(s+\negl(\secpar))$-Soundness in the Hiding Mode}
}
For all non-uniform QPT adversaries $\A$, we have 
\begin{align*}
    \Pr\left[
  \verify(\crs,\vkey,\statement,\pi)=\top
    :
    \begin{array}{c}
          \crs\sample \crsgen(1^\secpar,\hiding) \\
         (\pkey,\vkey) \sample \preprocess(\crs)\\
       (\statement,\pi) \sample \A(\crs,\pkey)
    \end{array}
    \right]
    \leq s+\negl(\secpar).
\end{align*}
where $\A$'s output must always satisfy $\statement \in L_\no$.  


\medskip
\noindent \underline{
\textbf{(Adaptive Multi-Theorem) Computational Zero-Knowledge in the Binding Mode.}
}
There exists a PPT simulator $\siml_0$ and QPT simulator $\siml_1$ such that for any  non-uniform QPT distinguisher $\dist$, we have 
\begin{align*}
    &\left|\Pr\left[\dist^{\ora_P(\crs,\cdot,\cdot,\cdot)}(\crs)=1:
    \begin{array}{c}
          \crs\sample \crsgen(1^\secpar,\binding) 
    \end{array}
    \right]\right.\\     &-\left.\Pr\left[\dist^{\ora_S(\td,\cdot,\cdot,\cdot)}(\crs)=1:
    \begin{array}{c}
          (\crs,\td)\sample \siml_0(1^\secpar)
    \end{array}
    \right]\right|\leq \negl(\secpar)
\end{align*}
where 
$\dist$ can make $\poly(\secpar)$ queries, which should be of the form $(\pkey,\statement,\witness^{\otimes k})$ where $\witness\in R_L(\statement)$ and $\witness^{\otimes k}$ is unentangled with $\dist$'s internal registers, $\ora_P(\crs,\pkey,\statement,\witness^{\otimes k})$ returns $\prove(\crs,\pkey,\statement,\witness^{\otimes k})$, and
  $\ora_S(\td,\pkey,\statement,\witness^{\otimes k})$ returns $\siml_1(\td,\pkey,\statement)$.
\end{lemma}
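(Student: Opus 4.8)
The plan is to \emph{transfer} the two statistical guarantees across modes, using computational mode indistinguishability and treating each experiment as a single non-uniform QPT procedure. In particular, for the binding-mode zero-knowledge statement the claimed simulators $(\siml_0,\siml_1)$ will be exactly the ones furnished by the assumed statistical zero-knowledge in the hiding mode.

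\textbf{Computational soundness in the hiding mode.}
First I would fix any non-uniform QPT adversary $\A$ whose output is promised to satisfy $\statement\in L_\no$, and consider, for a mode $\mode$, the experiment that runs $(\pkey,\vkey)\sample\preprocess(\crs)$ and $(\statement,\pi)\sample\A(\crs,\pkey)$ on input $\crs$ and outputs $1$ iff $\verify(\crs,\vkey,\statement,\pi)=\top$. Since $\A$ only ever outputs no-instances, this entire experiment is implementable by a non-uniform QPT algorithm $\dist_\A$ taking $\crs$ as input: here we use that $\preprocess$ is QPT, $\A$ is QPT, and $\verify$ is PPT, and crucially that we never have to test $L_\no$-membership ourselves. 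By statistical $s$-soundness in the binding mode, $\Pr[\dist_\A(\crs)=1]\leq s$ for $\crs\sample\crsgen(1^\secpar,\binding)$, and by computational mode indistinguishability applied to $\dist_\A$, the same probability for $\crs\sample\crsgen(1^\secpar,\hiding)$ is at most $s+\negl(\secpar)$, which is the claimed bound.

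\textbf{Computational zero-knowledge in the binding mode.}
Fix an arbitrary non-uniform QPT distinguisher $\dist$ making $\poly(\secpar)$ queries of the prescribed form $(\pkey,\statement,\witness^{\otimes k})$. Consider three hybrids: $\mathsf{H}_0$ is $\dist^{\ora_P(\crs,\cdot,\cdot,\cdot)}(\crs)$ with $\crs\sample\crsgen(1^\secpar,\binding)$; $\mathsf{H}_1$ is the same with $\crs\sample\crsgen(1^\secpar,\hiding)$; $\mathsf{H}_2$ is $\dist^{\ora_S(\td,\cdot,\cdot,\cdot)}(\crs)$ with $(\crs,\td)\sample\siml_0(1^\secpar)$ using the hiding-mode simulators. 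Hybrids $\mathsf{H}_1$ and $\mathsf{H}_2$ are statistically indistinguishable by the assumed statistical zero-knowledge in the hiding mode, which holds even against unbounded distinguishers and hence against $\dist$. For $\mathsf{H}_0$ versus $\mathsf{H}_1$, I would build a non-uniform QPT distinguisher $\B$ against mode indistinguishability: on input $\crs$, $\B$ runs $\dist(\crs)$ and answers each query $(\pkey,\statement,\witness^{\otimes k})$ by computing $\prove(\crs,\pkey,\statement,\witness^{\otimes k})$ itself, which is a QPT operation since $\prove$ is QPT and the (unentangled) witness copies are supplied by $\dist$; $\B$ forwards the possibly maliciously generated $\pkey$ to $\prove$ verbatim, as the oracle $\ora_P$ does. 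Then $\B$ reproduces $\mathsf{H}_0$ (resp. $\mathsf{H}_1$) exactly when $\crs$ is binding (resp. hiding), so mode indistinguishability gives $|\Pr[\mathsf{H}_0=1]-\Pr[\mathsf{H}_1=1]|\leq\negl(\secpar)$. Chaining the two estimates yields $|\Pr[\mathsf{H}_0=1]-\Pr[\mathsf{H}_2=1]|\leq\negl(\secpar)$, which is the desired statement with simulators $(\siml_0,\siml_1)$.

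The argument is essentially routine; the one point requiring genuine care is the one flagged in the soundness part, namely that $\dist_\A$ is only QPT because $\A$ is promised to output no-instances, so no $L_\no$-membership test is performed inside the reduction. This is exactly why the lemma restricts to ``exclusive-adaptive'' adversaries: for an adversary that may output yes-instances, the corresponding experiment would have to decide whether $\statement\in L_\no$, which is not a QPT task, and the transfer via mode indistinguishability would break down.
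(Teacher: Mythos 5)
Your proof is correct and follows exactly the ``transfer across modes via mode indistinguishability'' strategy that the paper alludes to; the paper itself omits a proof, deferring to the folklore argument formalized in the cited work on dual-mode NIZKs for $\NP$, and your hybrid structure and the identification of why the exclusive-adaptive restriction is needed (so that the reduction never has to decide $L_\no$-membership) are precisely what that argument requires.
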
  
Intuitively, the above lemma holds because soundness and zero-knowledge should transfer from one mode to the other by the mode indistinguishability since otherwise we can  distinguish the two modes. 
Here, security degrades to  computational ones as the mode indistinguishability only holds against QPT distinguishers. 
We omit a formal proof since this is easy and can be proven similarly to a similar statement for dual-mode NIZKs for $\NP$, which has been folklore and formally proven recently \cite{INDOCRYPT:ArtBel20}. 

\begin{remark}
Remark that soundness in the hiding mode is 
defined in the ``exclusive style" where $\A$ should always output $\statement \in L_\no$. 
This is weaker than  soundness in the ``penalizing style" as in \cref{def:dual-mode} where $\A$ is allowed to also output 
$\statement \in L_\yes$ and we add $\statement \in L_\no$ as part of the adversary's winning condition.  
This is because the adaptive soundness in the penalizing style does not transfer well through the mode change while the adaptive soundness in the exclusive style does.
This was formally proven for NIZK for $\NP$ in the common reference string model  in \cite{INDOCRYPT:ArtBel20}, and easily extends to CV-NIZK for $\QMA$ in the \CRSVP model.
This is justified by the impossibility of  penalizing-adaptively (computational) sound and statistically zero-knowledge NIZK for $\NP$ in the classical setting (under falsifiable assumptions)  \cite{TCC:Pass13}.
We leave it open to study if a similar impossibility holds for dual-mode CV-NIZK for $\QMA$ in the \CRSVP model.
\end{remark}
\fi

Finally, we note that we can amplify the gap between the thresholds for completeness and soundness by parallel repetitions similarly to CV-NIZK in the QSP model as discussed in \cref{sec:def_CV-NIZK}.
As a result, we obtain the following lemma. 
\begin{lemma}[Gap amplification for dual-mode CV-NIZK in the \CRSVP model]\label{lem:gap_amplification_dual}
If there exists a dual-mode CV-NIZK for $L$ in the \CRSVP model that satisfies $c$-completeness and $s$-soundness, for some $0<s<c<1$ such that $c-s>1/\poly(\secpar)$, then there exists a dual-mode CV-NIZK for $L$ in the \CRSVP model (with $(1-\negl(\secpar))$-completeness and $\negl(\secpar)$-soundness).
\end{lemma}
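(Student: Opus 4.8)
The plan is to prove \cref{lem:gap_amplification_dual} by parallel repetition, closely following the proof of \cref{lem:CV-NIZK_amplification}. Starting from a base dual-mode CV-NIZK $\Pi=(\crsgen,\preprocess,\prove,\verify)$ with $c$-completeness and $s$-soundness, $c-s>1/\poly(\secpar)$, where $\prove$ consumes $k$ copies of the witness, I would define for a polynomial $N$ the $N$-fold protocol $\Pi^N$: $\crsgen^N$ runs $N$ independent copies of $\crsgen$ in the prescribed mode, $\preprocess^N$ runs $N$ independent copies of $\preprocess$, $\prove^N$ runs $N$ copies of $\prove$ (using $Nk$ witness copies, $k$ per instance), and $\verify^N$ runs $\verify$ on each of the $N$ transcripts and accepts iff the number of accepting copies exceeds $\frac{N(c+s)}{2}$. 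Completeness is immediate: the $N$ honest runs succeed independently, each with probability at least $c>\frac{c+s}{2}$, so by Hoeffding's inequality choosing $N=\Theta\!\big(\log^2\secpar/(c-s)^2\big)$ drives the completeness error below $\negl(\secpar)$. Computational mode indistinguishability of $\Pi^N$ follows from that of $\Pi$ by a standard hybrid over the $N$ independent common reference strings (the $\crs$'s remain reusable since each component is).

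For statistical zero-knowledge in the hiding mode I would take $\siml_0^N$ to run $\siml_0$ independently $N$ times, outputting $\crs'=(\crs^1,\dots,\crs^N)$ and $\td'=(\td^1,\dots,\td^N)$, and $\siml_1^N(\td',(\pkey^1,\dots,\pkey^N),\statement)$ to output $(\siml_1(\td^1,\pkey^1,\statement),\dots,\siml_1(\td^N,\pkey^N,\statement))$. Indistinguishability from $\ora_P$ is shown by a hybrid $H_i$ in which the first $i$ coordinates are produced by $(\siml_0,\siml_1)$ and the last $N-i$ by $(\crsgen(1^\secpar,\hiding),\prove)$, so that $H_0=\ora_P$ and $H_N=\ora_S$ (for $\Pi^N$); adjacent hybrids $H_{i-1},H_i$ differ only in coordinate $i$, and distinguishing them contradicts the multi-theorem statistical zero-knowledge of $\Pi$, the reduction embedding the external challenge in coordinate $i$, generating the other coordinates itself, and forwarding each of $\dist$'s polynomially many queries as one query to its own oracle. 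Crucially, $\Pi$'s zero-knowledge allows the distinguisher to choose $\pkey^i$ adversarially, which is exactly what is needed since in $\Pi^N$ the whole of $(\pkey^1,\dots,\pkey^N)$ is supplied by $\dist$. As noted in the excerpt, computational soundness in the hiding mode and computational zero-knowledge in the binding mode then follow automatically from mode indistinguishability, so they need not be re-derived for $\Pi^N$.

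The step that needs the most care is statistical soundness in the binding mode, because black-box parallel repetition does \emph{not} in general reduce the soundness error of a statistically sound (adaptively sound) non-interactive proof to negligible: for up to an $s$-fraction of setups a malicious prover could conceivably hold a proof accepted with probability close to $1$, and thresholding does not help when $s$ is only bounded away from $c$ by an inverse polynomial, and adaptivity of $\statement$ couples the coordinates so the per-setup optimal acceptance probabilities are not i.i.d.\ across copies. What rescues the argument is that soundness of our constructions holds in a strong per-copy sense, exactly as in the proof of \cref{lem:NIZK_completeness_soundness} and its binding-mode analogue in \cref{sec:Dual-mode}: the statistically binding commitment forces a malicious prover to fix a single quantum state in each copy before learning that copy's verifier choices, and since a no-instance Hamiltonian has energy at least $\beta$ on every state, conditioned on these committed states the verdicts of the $N$ copies are independent events, each of probability at most $s$, for every (adaptively chosen) $\statement\in L_\no$. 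Hoeffding's inequality applied to this conditionally independent sum then drives the probability that more than $\frac{N(c+s)}{2}$ copies accept below $\negl(\secpar)$, uniformly over the committed states and $\statement$, yielding statistical negligible soundness for $\Pi^N$. I would therefore phrase the soundness step so that this per-copy soundness of the base construction is the property actually invoked, mirroring how \cref{lem:CV-NIZK_amplification} is used for the QSP construction.
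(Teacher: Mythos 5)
Your proposal is correct and follows essentially the same route as the paper: the paper omits the proof of \cref{lem:gap_amplification_dual} entirely, saying only that it is proven ``similarly to \cref{lem:CV-NIZK_amplification}'', i.e., by parallel repetition with a threshold verifier at $N(c+s)/2$ accepting copies, Hoeffding's inequality for completeness and soundness with $N=O(\log^2\secpar/(c-s)^2)$, and a hybrid argument for zero-knowledge (to which you correctly add the routine hybrid for mode indistinguishability and the componentwise simulators $\siml_0^N,\siml_1^N$).

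One remark on your soundness paragraph, the only place where you go beyond what the paper writes. You are right to flag that a bare appeal to independence is not available and that the concentration must be grounded in the per-copy, per-state energy bound of the concrete construction. However, the phrase ``conditioned on these committed states the verdicts of the $N$ copies are independent events'' is not literally accurate: in the (virtual-protocol) soundness analysis the verifier-side registers $\regV_1,\dots,\regV_N$ may be left entangled across copies by the adversary's joint measurement on the proving keys, so even after fixing the committed Pauli keys the per-copy verdicts are in general correlated. What is true, and what suffices, is that for $\statement\in L_\no$ the conditional acceptance probability of copy $i$ given \emph{any} outcome of the other copies is at most $s$, because the conditional state on $\regV_i$ is still a quantum state and $\Tr(\rho\ham_\statement)\geq\beta$ holds pointwise over all states. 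Hence the number of accepting copies is stochastically dominated by a $\mathrm{Bin}(N,s)$ variable (equivalently, Azuma's inequality applies to the associated martingale), which yields the same exponential tail bound. With that wording corrected, your argument is complete.
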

Since this can be proven similarly to \cref{lem:CV-NIZK_amplification}, we omit a proof. 

\subsection{Building Blocks}
We introduce two cryptographic bulding blocks for our dual-mode CV-NIZK in the \CRSVP model.

\paragraph{Lossy Encryption}
\ifnum\submission=1
Intuitively, a lossy encryption scheme is  a public key encryption scheme with a special property that we can generate a \emph{lossy key} that is computationally indistinguishable from an honestly generated public key, for which there is no corresponding decryption key. 
Since this is a standard primitive, we give a definition in \cref{sec:definition_lossy}.
\else
The first building block is lossy encryption \cite{EC:BelHofYil09}.
Intuitively, a lossy encryption scheme is  a public key encryption scheme with a special property that we can generate a \emph{lossy key} that is computationally indistinguishable from an honestly generated public key, for which there is no corresponding decryption key. 
\begin{definition}[Lossy Encryption]\label{def:lossy}
A lossy encryption scheme 
over the message space $\mathcal{M}$ and the randomness space $\mathcal{R}$
consists of PPT algorithms $\Pi_\LE=(\injgen,\lossygen,\enc,\dec)$ with the following syntax.
\begin{description}
\item[$\injgen(1^\secpar)$:]
The injective key generation algorithm takes the security parameter $1^\secpar$ as input and ouputs an \emph{injective} public key $\pk$ and a secret key $\sk$.
\item[$\lossygen(1^\secpar)$:]
The lossy key generation algorithm takes the security parameter $1^\secpar$ as input and ouputs a \emph{lossy} public key $\pk$.
\item[$\enc(\pk,\mu)$:]
The encryption algorithm takes the public key $\pk$ and a message $\mu \in \mathcal{M}$ as input and outputs a ciphertext $\ct$.
This algorithm uses a randomness $R\in \mathcal{R}$.
We denote by  $\enc(\pk,\mu;R)$ to mean that we run $\enc$ on input $\pk$ and $\mu$ and randomness $R$ when we need to clarify the randomness.  
\item[$\dec(\sk,\ct)$:]
The decryption algorithm takes the secret key $\sk$ and a ciphertext $\ct$ as input and outputs a message $\mu$.
\end{description}
We require $\Pi_\LE$ to satisfy the following properties.
\end{definition}

\medskip
\noindent \underline{
\textbf{Correctness on Injective Keys}
}
For all $\mu\in \mathcal{M}$, we have 
\begin{align*}
    \Pr\left[
    \dec(\sk,\ct)=\mu:
    \begin{array}{ll}
    (\pk,\sk)\sample \injgen(1^\secpar)\\
    \ct\sample \enc(\pk,\mu)
    \end{array}
    \right]= 1.
\end{align*}

\medskip
\noindent \underline{
\textbf{Lossiness on Lossy Keys}
}
With overwhelming probability over $\pk\sample \lossygen(1^\secpar)$, for all $\mu_0,\mu_1 \in \mathcal{M}$ and all unbounded-time distinguisher $\dist$, we have  
\begin{align*}
    \left|\Pr\left[
    \dist(\ct)=1:
    \begin{array}{ll}
        \ct\sample \enc(\pk,\mu_0) 
    \end{array}
    \right]-
        \Pr\left[
    \dist(\ct)=1:
    \begin{array}{ll}
        \ct\sample \enc(\pk,\mu_1) 
    \end{array}
    \right]\right|
    \leq \negl(\secpar).
\end{align*}

\medskip
\noindent \underline{
\textbf{Computational Mode Indistinguishability}
}
For any non-uniform QPT distinguisher $\dist$, we have
\begin{align*}
    \left|\Pr\left[\dist(\pk)=1:(\pk,\sk)\sample \injgen(1^\secpar)\right]
    - \Pr\left[\dist(\pk)=1:\pk\sample \lossygen(1^\secpar)\right]
    \right|\leq \negl(\secpar).
\end{align*}

It is well-known that Regev's encryption \cite{JACM:Regev09} is lossy encryption under the LWE assumption with a negligible correctness error.
We can modify the scheme to achieve perfect correctness by a standard technique. 
Then we have the following lemma.
 
\begin{lemma}\label{lem:lossy_encryption}
If the LWE assumption holds, then there exists a lossy encryption scheme.
\end{lemma}
\fi

\paragraph{Dual-Mode Oblivious Transfer}
The second building block is a $k$-out-of-$n$ \emph{dual-mode oblivious transfer}. 
Though this is a newly introduced definition in this paper, $1$-out-of-$2$ case is already implicit in existing works on universally composable (UC-secure) \cite{Canetti20}  oblivious transfers \cite{C:PeiVaiWat08,SCN:Quach20}.
\begin{definition}[Dual-mode oblivious transfer]\label{def:OT}
A (2-round) $k$-out-of-$n$ dual-mode oblivious transfer  with a message space $\mathcal{M}$ consists of PPT algorithms $\Pi_\OT=(\crsgen, \receiver, \sender,\derive)$.
\begin{description}
\item[$\crsgen(1^\secpar,\mode)$:] This is an algorithm supposed to be run by a trusted third party that takes the security parameter $1^\secpar$ and a mode $\mode\in \{\binding,\hiding\}$ as input and outputs a common reference string $\crs$.
\item[$\receiver(\crs,J)$:] This is an algorithm supposed to be run by a receiver that takes the common reference string $\crs$ and an ordered set of $k$ indices $J\in [n]^k$ as input and outputs a first message $\ot_1$ and a receiver's state $\otst$.
\item[$\sender(\crs,\ot_1,\vecmu)$:] This is an algorithm supposed to be run by a sender that takes  the common reference string $\crs$, a first message $\ot_1$ sent from a receiver and a tuple of messages $\vecmu\in \mathcal{M}^n$ as input and outputs a second  message $\ot_2$.
\item[$\derive(\crs,\otst,\ot_2)$:] This is an algorithm supposed to be run by a receiver that takes a receiver's state $\otst$ and a second message $\ot_2$ as input and outputs a tuple of messages $\vecmu'\in \mathcal{M}^k$.
\end{description}

We require the following properties.

\medskip
\noindent \underline{
\textbf{Correctness}
}
For all $\mode\in \{\binding,\hiding\}$, $J=(j_1,...,j_k) \in [n]^k$, and $\vecmu=(\mu_1,...,\mu_n)\in \mathcal{M}^n$, we have 
\begin{align*}
\Pr\left[\derive(\crs,\otst,\ot_2)=(\mu_{j_1},...,\mu_{j_k}):
\begin{array}{ll}
    \crs\sample \crsgen(1^\secpar,\mode)   \\
  (\ot_1,\otst) \sample \receiver(\crs,J)   \\
 \ot_2 \sample \sender(\crs,\ot_1,\vecmu)
\end{array}
\right]\geq 1-\negl(\secpar).
\end{align*}

\medskip
\noindent \underline{
\textbf{Statistical Receiver's Security in the Binding Mode}
}
 Intuitively, this security requires that 
 the indices chosen by a receiver are hidden from a sender. 
 Formally, we require that 
  there is a
 PPT algorithm $\siml_\rec$  such that for any unbounded-time distinguisher $\dist$ and $J\in [n]^k$, we have
\begin{align*}
    &\left|\Pr\left[\dist(\crs,\ot_1)=1:
    \begin{array}{ll}
         \crs \sample \crsgen(1^\secpar,\binding)  \\
         (\ot_1,\otst)\sample \receiver(\crs,J)
    \end{array}
    \right]\right.\\
    &\left.- \Pr\left[\dist(\crs,\ot_1)=1:
    \begin{array}{ll}
         \crs\sample \crsgen(1^\secpar,\binding)  \\
         \ot_1\sample \siml_\rec(\crs)
    \end{array}
    \right]
    \right|\leq \negl(\secpar).
\end{align*}

 \medskip
\noindent \underline{
\textbf{Statistical Sender's Security in the Hiding Mode}
}
Intuitively, this security requires that we can extract the indices of messages which a (possibly malicious) receiver tries to learn by using a trapdoor  in the hiding mode. 
Formally, there are PPT algorithms $\siml_\CRS$ and
$\siml_\sen$ and a
deterministic classical polynomial-time algorithm $\Open_\rec$ such that the following two properties are satisfied.
\begin{itemize}
\item For any unbounded-time distinguisher $\dist$, we have 
\ifnum\submission=1
\begin{align*}
    \left|\Pr\left[\dist(\crs_{\mathsf{real}})=1
    \right]
    - \Pr\left[\dist(\crs_{\mathsf{sim}})=1
    \right]
    \right|\leq \negl(\secpar)
\end{align*}
where $\crs_{\mathsf{real}} \sample \crsgen(1^\secpar,\hiding)$ and $(\crs_{\mathsf{sim}},\td) \sample \siml_\CRS(1^\secpar)$.
\else
\begin{align*}
    \left|\Pr\left[\dist(\crs)=1:
         \crs \sample \crsgen(1^\secpar,\hiding)
    \right]
    - \Pr\left[\dist(\crs)=1:
         (\crs,\td) \sample \siml_\CRS(1^\secpar)
    \right]
    \right|\leq \negl(\secpar).
\end{align*}
\fi
\item For any unbounded-time adversary $\A=(\A_0,\A_1)$ (that plays the role of a malicious receiver) and  $\vecmu=(\mu_1,...,\mu_n)$, we have
 \begin{align*}
 &\left|\Pr\left[\A_1(\st_\A,\ot_2)=1: 
 \begin{array}{ll}
 (\crs,\td)\sample \siml_\CRS(1^\secpar)\\
 (\ot_1,\st_\A)\sample\A_0(\crs,\td)  \\
 \ot_2\sample \sender(\crs,\ot_1,\vecmu)\\
 \end{array}
 \right]\right.\\
 &\left.-
 \Pr\left[\A_1(\st_\A,\ot_2)=1: 
 \begin{array}{ll}
 (\crs,\td)\sample \siml_\CRS(1^\secpar)\\
 (\ot_1,\st_\A)\sample\A_0(\crs,\td)  \\
J := \Open_\rec(\td,\ot_1)\\
 \ot_2\sample \siml_\sen(\crs,\ot_1,J,\vecmu_J)\\
 \end{array}
 \right]
 \right|\leq \negl(\secpar)
 \end{align*}
where the output of $\Open_\rec$ always satisfies $J\in [n]^k$ and
$\vecmu_J:=(\mu_{j_1},...,\mu_{j_k})$ 
for $J=(j_1,...,j_k)$.
\end{itemize}

 \medskip
  \noindent \underline{
\textbf{Computational Mode Indistinguishability.}
}
For any non-uniform QPT distinguisher $\dist$, we have
\ifnum\submission=1
\begin{align*}
    \left|\Pr\left[\dist(\crs_{\binding})=1\right]
    - \Pr\left[\dist(\crs_{\hiding})=1\right]
    \right|\leq \negl(\secpar)
\end{align*} 
where $\crs_{\binding}\sample \crsgen(1^\secpar,\binding)$ and $\crs_{\hiding}\sample \crsgen(1^\secpar,\hiding)$.
\else
\begin{align*}
    \left|\Pr\left[\dist(\crs)=1:\crs\sample \crsgen(1^\secpar,\binding)\right]
    - \Pr\left[\dist(\crs)=1:\crs\sample \crsgen(1^\secpar,\hiding)\right]
    \right|\leq \negl(\secpar).
\end{align*}
\fi
\end{definition}

\ifnum\submission=1
A remark on the definition of dual-mode oblivious transfer is given in \cref{sec:remark_definition_OT}.
\else
\begin{remark}[On security definition of dual-mode oblivious transfer]
We remark that security of a $k$-out-of-$n$  dual-mode  oblivious transfer as defined in \cref{def:OT}
does not imply UC-security \cite{Canetti20,C:PeiVaiWat08,SCN:Quach20} or even full-simulation security  in the standard stand-alone simulation-based definition \cite{RSA:Lindell08a}.
This is because the receiver's security in \cref{def:OT} only ensures privacy of $J$ and does not prevent a malicious sender from generating $\ot_2$ so that he can manipulate the message derived on the receiver's side depending on $J$.  
The security with such a weaker receiver's security is often referred to as half-simulation security \cite{EC:CamNevshe07}.  
We define the security in this way due to the following reasons:
\begin{enumerate}
    \item This definition is sufficient for constructing a dual-mode CV-NIZK in the \CRSVP model given in \cref{sec:construction_dual-mode}
    by additionally relying on lossy encryption. 
    \item We are not aware of an efficient construction of a $k$-out-of-$n$ oblivious transfer  that satisfies full-simulation security under a post-quantum assumption (even if we ignore the dual-mode property).  
    We note that Quach \cite{SCN:Quach20} gave a construction of a $1$-out-of-$2$ oblivious transfer with full-simulation security based on LWE and we can extend it to $1$-out-of-$n$ one.\footnote{His construction further satisfies UC-security, which is stronger than full-simulation security.}
    However, we are not aware of an efficient way to convert this into $k$-out-of-$n$ one without losing the full-simulation security. 
    We note that a conversion from $1$-out-of-$n$ to $k$-out-of-$n$ oblivious transfer by a simple $k$-parallel repetition  does not work if we require the  full-simulation security since a malicious sender can send different inconsistent messages in different sessions, which should be considered as an attack against full-simulation security.
One possible way to prevent such an inconsistent message attack is to let the sender prove that the messages in all sessions are consistent by using (post-quantum) NIZK for $\NP$ in the common reference string model \cite{C:PeiShi19}.
However, such a construction is very inefficient since it uses the underlying $1$-out-of-$n$ oblivious transfer in a non-black-box manner.
On the other hand, the half-simulation security is preserved under parallel repetitions as shown in \cref{sec:OT}, and thus we can achieve this much more efficiently.  
\end{enumerate}


\end{remark}
\fi

\begin{lemma}\label{lem:OT}
If the LWE assumption holds, then there exists $k$-out-of-$n$ dual-mode oblivious transfer for arbitrary $0<k<n$ that are polynomial in $\secpar$. 
\end{lemma}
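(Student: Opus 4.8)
The plan is to build a $k$-out-of-$n$ dual-mode oblivious transfer in two stages: first construct a $1$-out-of-$n$ dual-mode OT from LWE, and then lift it to the $k$-out-of-$n$ case by a $k$-fold parallel repetition, checking carefully that the weak (half-simulation) security notions of \cref{def:OT} are preserved under this repetition. For the $1$-out-of-$n$ base case I would start from the dual-mode $1$-out-of-$2$ oblivious transfer of Quach \cite{SCN:Quach20}, whose $\crsgen$ has a binding mode and a hiding mode with exactly the syntax we want: in the binding mode the receiver's first message statistically hides the chosen index (giving statistical receiver's security with a simulator $\siml_\rec$ that outputs a message distributed like $\receiver(\crs,0)$), and in the hiding mode there is a trapdoored $\crs$ from which one can extract the index chosen by a malicious receiver and simulate $\ot_2$ from only the selected message (statistical sender's security via $\siml_\CRS$, $\siml_\sen$, $\Open_\rec$). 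Computational mode indistinguishability of the $\crs$ follows from LWE. To get from $1$-out-of-$2$ to $1$-out-of-$n$, I would use the standard tree/log-depth construction (or simply $\lceil\log n\rceil$ parallel $1$-out-of-$2$ OT instances whose bit-outputs select a path, with the sender one-time-padding each of the $n$ messages by a product of path keys); one checks that statistical receiver's security and statistical sender's security both go through, the extractor $\Open_\rec$ reading off the $\lceil \log n\rceil$ extracted bits and reconstructing $J\in[n]$, and mode indistinguishability following by a hybrid over the $\lceil\log n\rceil$ component CRSs.

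For the lift from $1$-out-of-$n$ to $k$-out-of-$n$, I would take $\crsgen$ of the $k$-out-of-$n$ scheme to output $k$ independent copies of the $1$-out-of-$n$ CRS in the same mode; $\receiver(\crs,J)$ with $J=(j_1,\dots,j_k)$ runs the $1$-out-of-$n$ receiver on $j_\ell$ in the $\ell$-th instance; $\sender(\crs,\ot_1,\vecmu)$ runs the $1$-out-of-$n$ sender with the \emph{same} message tuple $\vecmu$ in every instance; $\derive$ collects the $k$ derived messages. Correctness is immediate. Statistical receiver's security in the binding mode follows by a $k$-step hybrid, replacing each instance's receiver message by $\siml_\rec$ output one at a time; the composite simulator just runs $\siml_\rec$ in each slot. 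Statistical sender's security in the hiding mode is the delicate direction but still works: $\siml_\CRS$ outputs $k$ trapdoored CRSs, $\Open_\rec$ extracts $j_\ell := \Open_\rec^{(1\text{-}n)}(\td_\ell,\ot_{1,\ell})$ from each slot to form $J\in[n]^k$, and $\siml_\sen$ runs the $1$-out-of-$n$ $\siml_\sen$ in each slot feeding it only $\mu_{j_\ell}$; the indistinguishability of the real $\ot_2$ from the simulated one is a $k$-step hybrid over the instances — crucially this is sound \emph{because} we only demand half-simulation, so a malicious receiver's ability to use inconsistent indices across slots is harmless (we extract whatever it does, slot by slot). Computational mode indistinguishability is again a $k$-step hybrid.

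The main obstacle I expect is making the hiding-mode sender's-security hybrid for the $k$-fold repetition airtight: one must argue that replacing $\sender$ by $\siml_\sen$ in the $\ell$-th slot, \emph{while the other slots are run honestly with the full tuple $\vecmu$}, is indistinguishable to an unbounded malicious receiver $\A=(\A_0,\A_1)$, and iterate. This requires that the per-slot sender's-security game of \cref{def:OT} composes — i.e. that giving the distinguisher the honestly-generated second messages of the other $k-1$ slots (which depend on all of $\vecmu$) does not break the reduction — which holds because those messages are efficiently sampleable by the reduction given $\vecmu$ and the other (simulated or real) CRSs, so they can be hard-wired/sampled inside $\A$. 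A secondary subtlety is that $\Open_\rec$ must be deterministic polynomial time and its output must \emph{always} lie in $[n]^k$; this is inherited from the $1$-out-of-$n$ extractor's analogous guarantee (one can always default a malformed extraction to index $1$). Once these hybrids are set up, every required property of \cref{def:OT} reduces to the corresponding property of the $1$-out-of-$n$ scheme and ultimately to LWE, giving the lemma for all $0<k<n$ polynomial in $\secpar$.
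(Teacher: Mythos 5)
Your high-level decomposition (Quach's LWE-based primitive $\to$ dual-mode $1$-out-of-$n$ OT $\to$ dual-mode $k$-out-of-$n$ OT via $k$-fold parallel repetition) matches the paper, and your $k$-fold repetition argument is essentially the one the paper gives in \cref{sec:OT}: you correctly identify that the hybrid over the $k$ slots goes through precisely because receiver's security is only indistinguishability-based (half-simulation), so a malicious receiver choosing ``inconsistent'' indices across slots is not an attack, and $\Open_\rec$ simply extracts slot-by-slot. The intermediate $1$-out-of-$n$ step, however, is done by a genuinely different construction. The paper builds $\Pi_\onen$ \emph{directly} from a dual-mode encryption scheme (\`a la Peikert--Vaikuntanathan--Waters / Quach): the receiver generates $n$ key pairs $\{(\pk_i,\sk_{i,\sigma_i})\}_{i\in[n]}$ with a single decryptable branch at position $j$, and the sender uses the Brassard--Cr\'epeau--Robert XOR-chaining with random pads $r_1,\dots,r_{n-1}$ so that exactly one telescoping sum $\bigoplus_i \mu'_{i,\sigma_i}$ recovers $\mu_j$; extraction in the hiding mode runs $\findmessy$ on each $\pk_i$ separately. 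You instead propose the log-depth/path-key construction: $\lceil\log n\rceil$ black-box invocations of a $1$-out-of-$2$ dual-mode OT, extraction bit-by-bit, and one-time-padding each of the $n$ sender messages by a product of path keys. This also works under the same half-simulation notion (each leaf $\neq j$ has at least one pad key $K_{\ell,1-b_\ell}$ sitting on a messy branch, so its padded message is statistically uniform given only $\mu_j$), and it is more communication-efficient in $n$ since it uses $O(\log n)$ OT sessions rather than $n$ dual-mode keygens. Two cosmetic differences: you sample $k$ independent CRSs for the repetition whereas the paper shares a single CRS across all $k$ slots (both preserve mode indistinguishability by the same hybrid), and you must default $\Open_\rec$ to output a canonical index on malformed transcripts, which matches the paper's requirement that $\Open_\rec$ always returns an element of $[n]^k$.
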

\begin{proof}[Proof (sketch)]
First, we can see that the LWE-based UC-secure OT by Quach \cite{SCN:Quach20} can be seen as a $1$-out-of-$2$ dual-mode oblivious transfer.
This construction can be converted into $1$-out-of-$n$ dual-mode oblivious transfer by using the generic conversion for an ordinary oblivious transfer given in \cite{FOCS:BraCreRob86} observing that the conversion preserves the dual-mode property.\footnote{Alternatively, it may be possible to directly construct $1$-out-of-$n$ dual-mode oblivious transfer by appropriately modifying the construction by Quach \cite{SCN:Quach20}.} 
By $k$-parallel repetition of the $1$-out-of-$n$ dual-mode oblivious transfer, we obtain $k$-out-of-$n$ dual-mode oblivious transfer.
The full proof can be found in \cref{sec:OT}.
\end{proof}

\subsection{Construction}\label{sec:construction_dual-mode}
In this section, we construct a dual-mode CV-NIZK in the \CRSVP  model. 
As a result, we obtain the following theorem.
\begin{theorem}\label{thm:dual-mode}
If the LWE assumption holds, then there exists a dual-mode CV-NIZK in the \CRSVP  model. 
\end{theorem}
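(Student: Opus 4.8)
The plan is to lift the QSP-model construction $\Pi_\NIZK$ of \cref{sec:CV-NIZK} to the \CRSVP model in a black-box way using the two building blocks (\cref{lem:lossy_encryption,lem:OT}). The key structural changes are: (i) the verifier itself prepares, as its quantum preprocessing message, the teleportation resource $\rho_P=\bigotimes_{j=1}^N U(W_j)\ket{m_j}$ for freshly sampled $(W_j,m_j)_j$, keeping $(W_j,m_j)_j$ secret in $\vkey$; (ii) the \emph{prover} (not a trusted party) samples the one-time pad $(\whx,\whz)\sample\bit^N\times\bit^N$, so that local simulatability still protects zero-knowledge; (iii) the verifier learns exactly $\{(\whx_j,\whz_j)\}_{j\in S_V}$ for its secret set $S_V$ (with $1\le|S_V|\le5$, padded to exactly $5$ indices) via a $5$-out-of-$N$ dual-mode oblivious transfer in which the verifier is the receiver — so in $\preprocess$ the verifier also computes $(\ot_1,\otst)\sample\receiver(\crs_\OT,S_V)$, puts $\ot_1$ in $\pkey$ and keeps $\otst$ in $\vkey$; and (iv) to salvage soundness despite only indistinguishability-based receiver security, the prover commits coordinate-wise $\com_j\sample\enc(\pk_\LE,\whx_j\concat\whz_j)$, sends $(x,z,\com,\ot_2)$ as the proof with $\ot_2\sample\sender(\crs_\OT,\ot_1,(\mu_j)_j)$ carrying $\mu_j=(\whx_j,\whz_j,r_j)$, and the verifier rejects unless the openings recovered by $\derive$ for $j\in S_V$ re-encrypt to $\com_j$. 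The mode of $\crsgen$ is forwarded to $\crs_\OT$ and to $\pk_\LE$ ($\injgen$ in the $\binding$ mode, $\lossygen$ in the $\hiding$ mode). Then $c$-completeness with $c=1-\alpha/N'-\negl(\secpar)$ is immediate from \cref{lem:NIZK_completeness_soundness}, OT correctness, and the fact that the re-encryption check is deterministic and always passes; computational mode indistinguishability follows by a two-step hybrid from the mode indistinguishability of the OT and of the lossy encryption.

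For statistical soundness in the binding mode, given an unbounded $\A(\crs,\pkey)\to(\statement,\pi)$ with $\statement\in L_\no$, I would first use statistical receiver's security of the OT in the binding mode to replace $\ot_1$ in $\pkey$ by $\siml_\rec(\crs_\OT)$, which makes the entire output $(\statement,x,z,\com,\ot_2)$ statistically independent of $S_V$. Next, on the injective key, perfect correctness of the lossy encryption forces, whenever the verifier accepts, the opened value delivered by $\derive$ for every $j\in S_V$ to equal $\dec(\sk,\com_j)$, a value determined by $\com$ alone (even for an adversarially chosen $\com_j$). Hence, conditioned on $\A$'s output $(\statement,x,z,\com)$, the verifier's acceptance probability is exactly that of the $\Pi_\NIZK$-verifier on the fixed proof $(x,z)$ with the fixed pad $\whx^*_j\concat\whz^*_j:=\dec(\sk,\com_j)$, and this is precisely the situation analyzed by the ``virtual protocol $2$'' argument inside the proof of \cref{lem:NIZK_completeness_soundness}: there is an $N$-qubit state $\sigma_{\statement,x,z,\com}$ (the verifier's Bell-pair halves in the virtual picture, Pauli-corrected by $X^{x\oplus\whx^*}Z^{z\oplus\whz^*}$) with acceptance probability $1-\tfrac{1}{N'}\Tr(\sigma_{\statement,x,z,\com}\,\ham_\statement)$, which is at most $1-\beta/N'$ since $\statement\in L_\no$ (\cref{lem:five_local_Hamiltonian,lem:prob_and_energy}). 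Averaging over $\A$'s output and adding back the $\negl(\secpar)$ from the OT switch gives soundness error $1-\beta/N'+\negl(\secpar)$.

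For statistical zero-knowledge in the hiding mode I would let $\siml_0$ output $\crs=(\crs_\OT,\pk_\LE)$ with $(\crs_\OT,\td_\OT)\sample\siml_\CRS(1^\secpar)$ and $\pk_\LE\sample\lossygen(1^\secpar)$, and let $\siml_1(\td_\OT,\pkey,\statement)$ extract $J:=\Open_\rec(\td_\OT,\ot_1)$, call $\rho_J\gets\siml_\hist(\statement,J)$, sample fresh $\{(\whx_j,\whz_j,r_j)\}_{j\in J}$, set $\com_j=\enc(\pk_\LE,\whx_j\concat\whz_j;r_j)$ for $j\in J$ and $\com_j=\enc(\pk_\LE,0;r_j)$ otherwise, Bell-measure the $J$-qubits of $X^{\whx_J}Z^{\whz_J}\rho_J Z^{\whz_J}X^{\whx_J}$ against the corresponding qubits of $\rho_P$ taken from $\pkey$ to get $\{(x_j,z_j)\}_{j\in J}$, pick the remaining $(x_j,z_j)$ uniformly, and output $(x,z,\com,\ot_2)$ with $\ot_2\sample\siml_\sen(\crs_\OT,\ot_1,J,\{(\whx_j,\whz_j,r_j)\}_{j\in J})$. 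Indistinguishability from the real prover oracle is proved by a hybrid argument over the $\poly(\secpar)$ queries: switch the CRS via the $\siml_\CRS$ property of the OT; replace each $\ot_2$ via statistical sender's security in the hiding mode; replace the $j\notin J$ commitments by lossy commitments to $0$ via lossiness of the encryption on lossy keys; at this point $(\whx_j,\whz_j)_{j\notin J}$ are uniform and enter the distinguisher's view only through the pad $X^{\whx}Z^{\whz}$ on the history state, so \cref{lem:Pauli_mixing} lets us trace out $[N]\setminus J$ and replace the full teleportation of $\rho_\hist$ by the $J$-local computation above (uniform outcomes on $[N]\setminus J$); finally replace $\Tr_{[N]\setminus J}\rho_\hist$ by $\siml_\hist(\statement,J)$ using the $\negl(\secpar)$ trace-distance bound of \cref{lem:five_local_Hamiltonian}, after which the prover oracle is exactly $\siml_1$. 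Every step is statistical, so zero-knowledge is statistical as \cref{def:dual-mode} demands.

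Combining the above yields a dual-mode CV-NIZK in the \CRSVP model with an inverse-polynomial gap $\beta/N'-\alpha/N'-\negl(\secpar)$ between completeness and soundness; \cref{lem:gap_amplification_dual} then upgrades it to $(1-\negl(\secpar))$-completeness and $\negl(\secpar)$-soundness, proving \cref{thm:dual-mode}. I expect the main obstacle to be the soundness argument: one has to verify carefully that ``indistinguishability-based OT receiver security $+$ coordinate-wise (injective-mode) extractable commitment'' genuinely pins a malicious prover to a single pad $(\whx^*,\whz^*)$ and a single classical proof $(x,z)$ that are \emph{jointly} independent of $S_V$, so that the binding-mode protocol faithfully emulates virtual protocol $2$ of \cref{lem:NIZK_completeness_soundness}; the adaptive choice of $\statement$ must also be threaded through that emulation. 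The zero-knowledge and mode-indistinguishability parts are comparatively routine reductions to the stated properties of the two building blocks.
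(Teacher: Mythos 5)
Your construction and proof are essentially identical to the paper's $\Pi_\DM$ (\cref{fig:dual}) and its analysis: the same verifier-as-OT-receiver preprocessing, prover-chosen pad with coordinate-wise lossy-encryption commitments and a re-encryption check, the same reduction of binding-mode soundness to virtual protocol 2 of $\Pi_\NIZK$, and the same hybrid chain (sender-security simulation of $\ot_2$, lossiness of the off-$S_V$ ciphertexts, Pauli mixing, local simulatability) for hiding-mode zero-knowledge. The only nuance is that in the soundness hybrids one should replace $\derive_\OT$ by direct decryption of the commitments \emph{before} invoking receiver's security, so that the game no longer uses $\otst$ (which $\siml_\rec$ does not produce); this is the order the paper uses, and your argument implicitly accommodates it by observing that acceptance is determined by $\dec(\sk,\ct_j)$ alone.
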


Let $L$ be a $\QMA$ promise problem, and $\ham_\statement$, $N$, $M$, $p_{i}$, $s_{i}$, $P_i$, $\alpha$, $\beta$, and $\rho_\hist$ be as in \cref{lem:five_local_Hamiltonian} for the language $L$.  
We let $N':=3^5\sum_{i=1}^{5}{N \choose i}$ similarly to \cref{lem:NIZK_completeness_soundness}.
Let $\Pi_\LE=(\injgen_\LE,\lossygen_\LE,\enc_\LE,\dec_\LE)$ be a lossy encryption scheme over the message space $\mathcal{M}_\LE=\bit^2$ and the randomness space $\mathcal{R}_\LE$ as defined in \cref{def:lossy}.
Let $\Pi_\OT=(\crsgen_\OT,\receiver_\OT,\allowbreak\sender_\OT,\derive_\OT)$ be a $5$-out-of-$N$ dual-mode oblivious transfer over the message space $\mathcal{M}_\OT=\mathcal{M}_\LE \times \mathcal{R}_\LE$  
as defined in \cref{def:OT}.
Then our dual-mode CV-NIZK 
$\Pi_\DM=(\crsgen_\DM,\allowbreak\preprocess_\DM,\prove_\DM,\verify_\DM)$ for $L$ is described in \cref{fig:dual}.

\begin{figure}[p]
\rule[1ex]{\textwidth}{0.5pt}
\begin{description}
\item[$\crsgen_\DM(1^\secpar,\mode)$:]
The CRS generation algorithm generates $\crs_\OT\sample \crsgen_\OT(1^\secpar,\mode)$.
\begin{itemize}
    \item If $\mode=\binding$, then it generates $(\pk,\sk)\sample \injgen_\LE(1^\secpar)$.
    \item If $\mode=\hiding$, then it generates $\pk\sample \lossygen_\LE(1^\secpar)$.
\end{itemize}
Then it outputs $\crs_\DM:=(\crs_\OT,\pk)$. 
\item[$\preprocess_\DM(\crs_\DM)$:]
The preprocessing algorithm 
parses $(\crs_\OT,\pk)\la \crs_\DM$ and 
chooses
$(W_1,...,W_N)\sample \{X,Y,Z\}^{N}$, $(m_1,...,m_N)\sample \{0,1\}^{N}$,
and a uniformly random subset $S_V\subseteq [N]$ such that $1\le|S_V|\le5$. 
Let  $J=(j_1,...,j_5)\in [N]^5$ be the elements of $S_{V}$ in the ascending order where we append arbitrary indices when $|S_{V}|<5$. 
It generates $(\ot_{1},\otst)\sample \receiver_\OT(\crs_\OT,J)$ and outputs  a proving key $\pkey:=\left(\rho_{P}:=\bigotimes_{j=1}^N(U(W_j)|m_{j}\rangle),\ot_{1}\right)$ 
and a verification key $\vkey:=\left(W_{1},...,W_{N},m_{1},...,m_{N},S_V,\otst\right)$.
\item[$\prove_\DM(\crs_\DM,\pkey,\statement,\witness)$:]
The proving algorithm 
parses 
$(\crs_\OT,\pk)\la\crs_\DM$ and
$\left(\rho_{P},\ot_{1}\right)\la \pkey$,
generates $(\whx,\whz)\sample \bit^{N}\times \bit^{N}$,  
generates the history state $\rho_\hist$ for $\ham_\statement$ from $\witness$, 
and computes $\rho'_\hist:=X^{\whx} Z^{\whz} \rho_\hist Z^{\whz} X^{\whx}$.
It measures $j$-th qubits of $\rho'_\hist$ and $\rho_P$ in the Bell basis for $j\in [N]$.    
Let $x:=x_1\concat x_2\concat...\concat x_N$, and $z:=z_1\concat z_2\concat...\concat z_N$ where $(x_j,z_j)$ denotes the outcome of $j$-th measurement.
For $j\in[N]$, 
it generates 
$\ct_j:=\enc_\LE(\pk,(\whx_j,\whz_j);R_j)$ where $R_j\sample \mathcal{R}_\LE$ and $\whx_{j}$ and $\whz_{j}$ denote the $j$-th bits of $\whx$ and $\whz$, respectively. 
It 
sets $\mu_j:=((\whx_j,\whz_j),R_j)$ for $j\in [N]$ and
generates $\ot_2\sample \sender_\OT(\crs_\OT,\ot_{1},(\mu_1,...,\mu_N))$. 
It outputs a proof $\pi:=(x,z,\{\ct_j\}_{j\in[N]},\ot_{2})$.

\item[$\verify_\DM(\crs_\DM,\vkey,\statement,\pi)$:]
The verification algorithm 
parses $(\crs_\OT,\pk)\la\crs_\DM$,
$\left(W_{1},...,W_{N},m_{1},...,m_{N},S_V,\otst\right)\la \vkey$, and 
$(x,z,\{\ct_j\}_{j\in[N]},\ot_{2})\la \pi$.
It runs $\vecmu'\sample \derive_\OT(\crs_\OT,\otst,\ot_{2})$ and  parses $(((\whx'_{1},\whz'_{1}),R'_1),...,((\whx'_{5},\whz'_{5}),R'_5))\la \vecmu'$.
If $\enc_\LE(\pk,(\whx'_i,\whz'_i);R'_i)\neq \ct_{j_i}$ for some $i\in [5]$, it outputs $\bot$. Otherwise,  it recovers  $\{\whx_{j},\whz_{j}\}_{j\in S_{V}}$ by setting $(\whx_{j_i},\whz_{j_i}):=(\whx'_{i},\whz'_{i})$ for $i\in [|S_V|]$.
It chooses $i\in [M]$ according to the probability distribution defined by $\{p_{i}\}_{i\in[M]}$ (i.e., chooses $i$ with probability $p_{i}$).
Let
\begin{eqnarray*}
S_i:=\{j\in[N]~|~\mbox{$j$th Pauli operator of $P_i$ is not $I$}\}.
\end{eqnarray*}
We note that we have $1\leq |S_i|\leq 5$ by the $5$-locality of $\ham_\statement$. 
We say that $P_i$ is consistent to $(S_V,\{W_j\}_{j\in S_V})$  if and only if 
$S_i = S_V$ and
the $j$th Pauli operator of
$P_i$ is $W_j$ for all
$j\in S_i$.    
If $P_i$ is not consistent to $(S_V,\{W_j\}_{j\in S_V})$,  it outputs $\top$.
If $P_i$ is consistent to 
$(S_V,\{W_j\}_{j\in S_V})$,  
it flips a biased coin that heads with probability $1-3^{|S_i|-5}$.
If heads, it outputs $\top$.
If tails,
it defines
\begin{eqnarray*}
m_{j}':= 
\left\{
\begin{array}{cc}
m_{j}\oplus x_{j}\oplus \hat{x}_j&(W_j=Z),\\
m_{j}\oplus z_{j}\oplus \hat{z}_j&(W_j=X),\\
m_{j}\oplus x_{j}\oplus \hat{x}_j\oplus z_j\oplus \hat{z}_j&(W_j=Y)
\end{array}
\right.
\end{eqnarray*}
for $j\in S_i$, and outputs $\top$ if  $(-1)^{\bigoplus_{j\in S_i}m'_{j}}=-s_{i}$ and $\bot$ otherwise. 
\end{description} 
\rule[1ex]{\textwidth}{0.5pt}
\hspace{-10mm}
\caption{Dual-Mode CV-NIZK $\Pi_\DM$.}
\label{fig:dual}
\end{figure}

Then we prove the following lemmas.

\begin{lemma}\label{lem:DM_completeness}
$\Pi_\DM$ satisfies  $\left(1-\frac{\alpha}{N'}-\negl(\secpar)\right)$-completeness.
\end{lemma}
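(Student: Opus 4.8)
The plan is to deduce completeness of $\Pi_\DM$ from the completeness half of \cref{lem:NIZK_completeness_soundness} for $\Pi_\NIZK$, losing only the negligible correctness error of the underlying dual-mode oblivious transfer. Fix $\mode\in\{\binding,\hiding\}$, $\statement\in L_\yes$, and $\witness\in R_L(\statement)$, and consider an honest execution: $\crs_\DM=(\crs_\OT,\pk)$, $(\pkey,\vkey)\sample\preprocess_\DM(\crs_\DM)$ with $\pkey=(\rho_P,\ot_1)$, and $\pi=(x,z,\{\ct_j\}_{j\in[N]},\ot_2)\sample\prove_\DM(\crs_\DM,\pkey,\statement,\witness)$ as in \cref{fig:dual}. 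Let $J=(j_1,\dots,j_5)$ be the index tuple chosen inside $\preprocess_\DM$, and let $\mu_j=((\whx_j,\whz_j),R_j)$ for $j\in[N]$ be the sender messages passed to $\sender_\OT$ inside $\prove_\DM$. Let $\event$ be the event that $\vecmu'\sample\derive_\OT(\crs_\OT,\otst,\ot_2)$ equals $(\mu_{j_1},\dots,\mu_{j_5})$. Since $\crs_\OT\sample\crsgen_\OT(1^\secpar,\mode)$, $(\ot_1,\otst)\sample\receiver_\OT(\crs_\OT,J)$, $\ot_2\sample\sender_\OT(\crs_\OT,\ot_1,(\mu_1,\dots,\mu_N))$, and the prover's coins defining $\vecmu$ are independent of $(\crs_\OT,\ot_1,\otst)$, conditioning on any fixed value of $(J,\vecmu)$ reproduces exactly the experiment in the correctness property of \cref{def:OT}; hence $\Pr[\event]\ge 1-\negl(\secpar)$.

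Next I would show that conditioned on $\event$ the verification never aborts in the ciphertext-consistency check and recovers the pad bits correctly. On $\event$ we have $(\whx'_i,\whz'_i)=(\whx_{j_i},\whz_{j_i})$ and $R'_i=R_{j_i}$ for all $i\in[5]$; since $\prove_\DM$ sets $\ct_{j_i}=\enc_\LE(\pk,(\whx_{j_i},\whz_{j_i});R_{j_i})$ and $\enc_\LE$ is a deterministic function of its message and randomness, recomputing $\enc_\LE(\pk,(\whx'_i,\whz'_i);R'_i)$ returns $\ct_{j_i}$, so $\verify_\DM$ does not output $\bot$ at this step and $(\whx_{j_i},\whz_{j_i})$ is correctly reconstructed for $i\in[|S_V|]$. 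When $|S_V|<5$ the extra appended indices still carry honestly generated ciphertexts and honestly transmitted messages, so the check over all of $i\in[5]$ still passes on $\event$. Note that this uses nothing about $\Pi_\LE$ beyond determinism of encryption given randomness---in particular not correctness on injective keys---which is why the argument holds uniformly for both $\mode=\binding$ and $\mode=\hiding$.

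Finally, observe that once the correct $\{\whx_j,\whz_j\}_{j\in S_V}$ are in hand, the remaining steps of $\verify_\DM$---sampling $i$ according to $\{p_i\}_{i\in[M]}$, forming $S_i$, testing whether $P_i$ is consistent with $(S_V,\{W_j\}_{j\in S_V})$, the biased coin flip with heads probability $1-3^{|S_i|-5}$, and the final parity test on the $m'_j$---coincide verbatim with the corresponding steps of $\verify$ in \cref{fig:CV-NIZK}. Moreover the joint distribution of everything these steps act on, namely $(\{W_j,m_j\}_{j\in[N]},S_V,(\whx,\whz),\rho_P,x,z)$, is identical to the one arising in an honest execution of $\Pi_\NIZK$: $\{W_j\}_j$, $\{m_j\}_j$, $S_V$, and the restriction of $(\whx,\whz)$ seen by the verifier are sampled the same way by $\preprocess_\DM$ as by $\setup$; $(\whx,\whz)$ is uniform in both (chosen by the prover here, by the trusted party there); $\rho_P=\bigotimes_j U(W_j)\ket{m_j}$ is formed identically; and $(x,z)$ is the Bell-basis measurement outcome of $X^{\whx}Z^{\whz}\rho_\hist Z^{\whz}X^{\whx}$ against $\rho_P$ in both cases. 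Since $\event$ refers only to the OT transcript---which depends on the rest of the experiment only through $(J,\vecmu)$---conditioning on $\event$ does not change this joint distribution. Therefore $\Pr[\verify_\DM(\crs_\DM,\vkey,\statement,\pi)=\top\mid\event]$ equals the acceptance probability of $\verify$ in $\Pi_\NIZK$, which is at least $1-\frac{\alpha}{N'}$ by \cref{lem:NIZK_completeness_soundness}, and
\begin{align*}
\Pr[\verify_\DM(\crs_\DM,\vkey,\statement,\pi)=\top]
&\ge \Pr[\verify_\DM(\crs_\DM,\vkey,\statement,\pi)=\top\mid\event]\cdot\Pr[\event]\\
&\ge \left(1-\frac{\alpha}{N'}\right)(1-\negl(\secpar))\ge 1-\frac{\alpha}{N'}-\negl(\secpar),
\end{align*}
which is the claim. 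There is no real obstacle here; the only care needed is the bookkeeping just described---verifying that $\event$ has overwhelming probability uniformly in $\mode$, that conditioning on it does not perturb the remaining variables, and that on $\event$ the re-encryption check and the index recovery behave exactly as in the idealized verification of $\Pi_\NIZK$.
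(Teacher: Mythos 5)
Your proof is correct and takes essentially the same route as the paper: the paper's proof of this lemma is a two-sentence observation that, by the correctness of $\Pi_\OT$, an honestly generated proof passes verification with probability differing from that in $\Pi_\NIZK$ only by $\negl(\secpar)$, and then invokes \cref{lem:NIZK_completeness_soundness}. Your writeup simply fills in the bookkeeping (the conditioning on the OT-correctness event, the determinism of $\enc_\LE$ making the re-encryption check pass, and the distributional identity with an honest $\Pi_\NIZK$ execution) that the paper leaves implicit.
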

\begin{proof}
By the correctness of $\Pi_\OT$, it is easy to see that  the probability that an honestly generated proof passes the verification differs from that in $\Pi_\NIZK$ in \cref{fig:CV-NIZK} only by $\negl(\secpar)$.
Since $\Pi_\NIZK$ satisfies
$\left(1-\frac{\alpha}{N'}\right)$-completeness as shown in \cref{lem:NIZK_completeness_soundness}, $\Pi_\DM$ satisfies  $\left(1-\frac{\alpha}{N'}-\negl(\secpar)\right)$-completeness.
\end{proof}
\begin{lemma}\label{lem:DM_mode_ind}
$\Pi_\DM$ satisfies  the computational mode indistinguishability.
\end{lemma}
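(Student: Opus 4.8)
The statement follows from a routine two-step hybrid argument, using that $\crs_\DM=(\crs_\OT,\pk)$ is the concatenation of two independently generated components, each produced by a primitive ($\Pi_\OT$ and $\Pi_\LE$) that itself satisfies computational mode indistinguishability. First I would introduce an intermediate distribution $\mathsf{Hyb}$ of common reference strings in which $\crs_\OT$ is generated in the hiding mode (i.e., $\crs_\OT\sample \crsgen_\OT(1^\secpar,\hiding)$) while the public key is still an injective key (i.e., $(\pk,\sk)\sample \injgen_\LE(1^\secpar)$). By construction, $\crsgen_\DM(1^\secpar,\binding)$ differs from $\mathsf{Hyb}$ only in the mode of $\crs_\OT$, and $\mathsf{Hyb}$ differs from $\crsgen_\DM(1^\secpar,\hiding)$ only in whether $\pk$ is an injective key or a lossy key.

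For the first hop, given any non-uniform QPT distinguisher $\dist$ that tells $\crsgen_\DM(1^\secpar,\binding)$ from $\mathsf{Hyb}$ with non-negligible advantage, I would build a reduction $\B$ against the computational mode indistinguishability of $\Pi_\OT$: on input a challenge $\crs_\OT$ (drawn from $\crsgen_\OT(1^\secpar,\binding)$ or from $\crsgen_\OT(1^\secpar,\hiding)$), $\B$ samples $(\pk,\sk)\sample \injgen_\LE(1^\secpar)$ itself, discards $\sk$, feeds $(\crs_\OT,\pk)$ to $\dist$, and outputs $\dist$'s bit. When the challenge is binding this perfectly simulates $\crsgen_\DM(1^\secpar,\binding)$, and when it is hiding it perfectly simulates $\mathsf{Hyb}$, so $\B$ inherits $\dist$'s advantage, contradicting \cref{def:OT}. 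For the second hop, given a distinguisher for $\mathsf{Hyb}$ versus $\crsgen_\DM(1^\secpar,\hiding)$, I would symmetrically build a reduction $\B'$ against the computational mode indistinguishability of $\Pi_\LE$: on input a challenge $\pk$ (an injective key from $\injgen_\LE$ or a lossy key from $\lossygen_\LE$), $\B'$ samples $\crs_\OT\sample \crsgen_\OT(1^\secpar,\hiding)$, feeds $(\crs_\OT,\pk)$ to $\dist$, and outputs $\dist$'s bit; an injective $\pk$ yields $\mathsf{Hyb}$ and a lossy $\pk$ yields $\crsgen_\DM(1^\secpar,\hiding)$, contradicting \cref{def:lossy}. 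Combining the two bounds by the triangle inequality gives the claim, and both primitives are available under the LWE assumption by \cref{lem:OT} and \cref{lem:lossy_encryption}.

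There is no real obstacle here; the only things worth checking are that each reduction is efficient (it runs exactly one invocation of the opposite primitive's CRS/key-generation algorithm, which is PPT) and that it needs only the public key $\pk$ and not the secret key $\sk$, which is fine since $\crs_\DM$ contains only $\pk$. Note also that this argument uses nothing about $\preprocess_\DM$, $\prove_\DM$, or $\verify_\DM$: computational mode indistinguishability is a property of $\crsgen_\DM$ alone.
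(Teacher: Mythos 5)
Your proposal is correct and matches the paper's intended argument: the paper's proof of \cref{lem:DM_mode_ind} simply states that the claim "can be reduced to the computational mode indistinguishability of $\Pi_\OT$ and $\Pi_\LE$ in a straightforward manner," and your two-step hybrid with the intermediate distribution and the two reductions is exactly that straightforward reduction, spelled out in full.
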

\begin{proof}
This can be reduced to the computational mode indistinguishability of $\Pi_\OT$ and $\Pi_\LE$ in a straightforward manner.
\end{proof}

\begin{lemma}\label{lem:DM_soundness}
$\Pi_\DM$ satisfies  statistical $\left(1-\frac{\beta}{N'}+\negl(\secpar)\right)$-soundness in the binding mode.
\end{lemma}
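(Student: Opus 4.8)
The plan is to reduce soundness of $\Pi_\DM$ in the binding mode to the soundness argument already used for $\Pi_\NIZK$ in the proof of \cref{lem:NIZK_completeness_soundness}, via a short sequence of game hops: the first two exploit the binding mode of the two building blocks, and the remaining two are the same ``virtual protocol'' rewritings as for $\Pi_\NIZK$. Fix an unbounded $\A$ receiving $(\crs_\DM,\pkey)$ with $\crs_\DM\sample\crsgen_\DM(1^\secpar,\binding)$ and $(\pkey,\vkey)\sample\preprocess_\DM(\crs_\DM)$, outputting $(\statement,\pi)$; the goal is to bound $\Pr[\statement\in L_\no\wedge\verify_\DM(\crs_\DM,\vkey,\statement,\pi)=\top]$. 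First I would move to a game where the verifier additionally holds the decryption key $\sk$ (in the binding mode $\crsgen_\DM$ runs $(\pk,\sk)\sample\injgen_\LE(1^\secpar)$), and where $\verify_\DM$ is replaced by a verifier that skips the $\derive_\OT$ call and the ciphertext-consistency check and instead sets $(\whx_j,\whz_j):=\dec_\LE(\sk,\ct_j)$ for all $j$; this does not change $\A$'s view. By correctness on injective keys, whenever the original $\verify_\DM$ does not reject at its consistency check it must have recovered exactly $(\whx_j,\whz_j)=\dec_\LE(\sk,\ct_j)$ for $j\in S_V$, so the new verifier runs the same Hamiltonian test on the same values and rejects only less often; hence $\Pr[\A\text{ wins}]$ can only increase, and the new verifier uses neither $\otst$ nor $\ot_2$.

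Next I would replace the receiver message $\ot_1\sample\receiver_\OT(\crs_\OT,J)$ inside $\preprocess_\DM$ (where $J$ encodes $S_V$) by $\ot_1\sample\siml_\rec(\crs_\OT)$, using statistical receiver's security in the binding mode from \cref{def:OT} ($\crs_\OT$ is binding here since $\crsgen_\DM(1^\secpar,\binding)$ runs $\crsgen_\OT(1^\secpar,\binding)$). For each fixed $J$, and hence on average over $S_V$, this changes the distribution of $\A$'s input $(\crs_\OT,\pk,\rho_P,\ot_1)$ by at most $\negl(\secpar)$, so $\Pr[\A\text{ wins}]$ changes by at most $\negl(\secpar)$. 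The point of this hop is that afterwards $\pkey=(\rho_P,\ot_1)$, and therefore $\A$'s entire output $(\statement,x,z,\{\ct_j\}_j)$ and the string $(\whx,\whz)$ obtained by component-wise decryption of the $\ct_j$, are all independent of $S_V$ --- this is exactly the analogue of the fact that in $\Pi_\NIZK$ the prover is forced to ``commit'' to $(\whx,\whz)$ before $S_V$ is revealed, with the lossy-encryption ciphertexts now playing the role of that commitment.

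From here I would run the two virtual-protocol rewritings verbatim from the proof of \cref{lem:NIZK_completeness_soundness}: replace $\rho_P=\bigotimes_j U(W_j)\ket{m_j}$ by $N$ Bell pairs shared between a register $\regP$ sent to the prover and a register $\regV$ kept by the verifier, with the verifier sampling $(W_1,\dots,W_N)$ and obtaining $(m_1,\dots,m_N)$ by measuring $\regV$ after the prover has acted --- legitimate because the two parties' operations act on disjoint registers and, conditioned on the outcomes, $\regP$ collapses to $\bigotimes_j U(W_j)\ket{m_j}$ by \cref{lem:statecollapsing} --- and then, by \cref{lem:XZ_before_measurement}, have the verifier apply $X^{x\oplus\whx}Z^{z\oplus\whz}$ to $\regV$ before the $W$-basis measurement rather than XORing it into the outcomes afterwards; both hops leave $\Pr[\A\text{ wins}]$ unchanged. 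The resulting game is the virtual protocol of \cref{fig:virtual2_CV-NIZK} except that $(\whx,\whz)$ is the (prover-determined but $S_V$-independent) decryption of the $\ct_j$ and the transcript carries extra ciphertexts used only to define it. Finally I would reuse the energy computation: for each $\statement$ let $\event_\statement$ be the event that $\A$ outputs $\statement$ and $\rho'_{V,\statement}$ the state of $\regV$ just before the $W$-basis measurement, averaged over $\A$'s outputs consistent with $\event_\statement$ and after the Pauli correction; by the previous hop this is a fixed density matrix independent of $S_V$, $(W_1,\dots,W_N)$, the index $i$ and the coin. Using that for any fixed $P_i$ the probability that $P_i$ is consistent to $(S_V,\{W_j\}_{j\in S_V})$ and the coin comes up tails is $\frac1{N'}$, together with \cref{lem:prob_and_energy} and \cref{lem:five_local_Hamiltonian}, the acceptance probability conditioned on $\event_\statement$ equals $1-\frac1{N'}\Tr(\rho'_{V,\statement}\ham_\statement)\le 1-\frac{\beta}{N'}$ when $\statement\in L_\no$, so $\Pr[\A\text{ wins}]=\sum_{\statement\in L_\no}\Pr[\event_\statement]\bigl(1-\frac1{N'}\Tr(\rho'_{V,\statement}\ham_\statement)\bigr)\le 1-\frac{\beta}{N'}$ in the final game; combining the hops gives the claimed $\bigl(1-\frac{\beta}{N'}+\negl(\secpar)\bigr)$-soundness.

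The routine parts are the virtual-protocol rewritings, which are copied from \cref{lem:NIZK_completeness_soundness}. The points I expect to need care are (i) arguing that the modified verifier really accepts only more often, i.e.\ that injective-key correctness forces the OT-derived values used by the honest verifier to coincide with $\dec_\LE(\sk,\ct_{j_i})$ precisely when its consistency check passes; and (ii) making rigorous that the hop to $\siml_\rec$ decorrelates \emph{all} of $\A$'s output from $S_V$, so that $\rho'_{V,\statement}$ is genuinely $S_V$-independent and the averaging over $S_V$ proceeds exactly as for $\Pi_\NIZK$. This second point is where using only indistinguishability-based (``half-simulation'') receiver's security bites: it alone does not prevent a malicious prover acting as the OT \emph{sender} from making the receiver-derived message depend on $S_V$, and it is the lossy-encryption commitment in the binding mode that re-imposes a single $S_V$-independent $(\whx,\whz)$ --- which is why both building blocks must be in their binding mode here.
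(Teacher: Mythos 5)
Your proposal is correct and follows essentially the same route as the paper's proof: the same two initial hops (replacing the OT-derived values by direct decryption under the injective key, which only increases the winning probability by correctness of $\Pi_\LE$, and then invoking $\siml_\rec$ via statistical receiver's security once $\otst$ and $\ot_2$ are no longer used), followed by the same two virtual-protocol rewritings from \cref{lem:NIZK_completeness_soundness} and the identical energy computation. The two points you flag as needing care are exactly the ones the paper's game hops $\game_1\to\game_2$ and $\game_2\to\game_3$ address, and your treatment of them is sound.
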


\begin{lemma}\label{lem:DM_ZK}
$\Pi_\DM$ satisfies  the statistical zero-knowledge property  in the hiding mode.
\end{lemma}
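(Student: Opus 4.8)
The plan is to build the simulator by composing the zero-knowledge simulator of $\Pi_\NIZK$ from the proof of \cref{lem:NIZK_zero-knowledge} with the trapdoor extractor of $\Pi_\OT$ and the lossiness of $\Pi_\LE$, both in the hiding mode. Define $\siml_0(1^\secpar)$ to run $(\crs_\OT,\td_\OT)\sample\siml_\CRS(1^\secpar)$ and $\pk\sample\lossygen_\LE(1^\secpar)$, and output $\crs_\DM:=(\crs_\OT,\pk)$ together with trapdoor $\td:=(\td_\OT,\crs_\DM)$. On a query $(\pkey,\statement,\witness^{\otimes k})$ with $\pkey=(\rho_P,\ot_1)$, the QPT simulator $\siml_1(\td,\pkey,\statement)$ first computes $J:=\Open_\rec(\td_\OT,\ot_1)\in[N]^5$, lets $S_V\subseteq[N]$ be the set of distinct entries of $J$ (so $1\le|S_V|\le5$), and sets $\rho_{S_V}:=\siml_\hist(\statement,S_V)$. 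It then imitates the simulator $\siml$ of \cref{lem:NIZK_zero-knowledge} restricted to $S_V$: sample $\{(\whx_j,\whz_j)\}_{j\in[N]}\sample\bit^2$ and $R_j\sample\mathcal{R}_\LE$; obtain $\{(x_j,z_j)\}_{j\in S_V}$ by Bell-measuring the qubits of $\bigl(\prod_{j\in S_V}X_j^{\whx_j}Z_j^{\whz_j}\bigr)\rho_{S_V}\bigl(\prod_{j\in S_V}Z_j^{\whz_j}X_j^{\whx_j}\bigr)$ against the corresponding qubits of $\rho_P$; sample $(x_j,z_j)\sample\bit^2$ for $j\notin S_V$; set $\ct_j:=\enc_\LE(\pk,(\whx_j,\whz_j);R_j)$ for $j\in S_V$ and $\ct_j:=\enc_\LE(\pk,0^2)$ for $j\notin S_V$; set $\mu_j:=((\whx_j,\whz_j),R_j)$ for each entry $j$ of $J$, run $\ot_2\sample\siml_\sen(\crs_\OT,\ot_1,J,\vecmu_J)$, and output $\pi:=(x,z,\{\ct_j\}_{j\in[N]},\ot_2)$. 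Note $\siml_1$ is QPT since it touches only the five qubits of $\rho_P$ indexed by $S_V$, and $\siml_0$ is PPT.

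Indistinguishability is proved through hybrid experiments against an arbitrary unbounded-time distinguisher $\dist$ making $q=\poly(\secpar)$ adaptive queries, each of the form $(\pkey,\statement,\witness^{\otimes k})$ with $\pkey$ possibly maliciously generated and entangled with $\dist$'s workspace. $H_0$ is the real game ($\crs\sample\crsgen_\DM(1^\secpar,\hiding)$, oracle $\ora_P$). $H_1$ replaces $\crs_\OT$ by the output of $\siml_\CRS$ (setting $\td$ aside); $|H_1-H_0|\le\negl(\secpar)$ by the first clause of the statistical sender's security of $\Pi_\OT$. $H_2$ modifies the oracle so that, on each query, $\ot_2$ is produced as $\siml_\sen(\crs_\OT,\ot_1,\Open_\rec(\td_\OT,\ot_1),\vecmu_J)$ instead of $\sender_\OT(\crs_\OT,\ot_1,(\mu_1,\dots,\mu_N))$. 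We interpolate query by query: in the $i$-th sub-hybrid the first $i$ answers use $\siml_\sen$ and the rest use $\sender_\OT$, and $|H_2^{(i)}-H_2^{(i-1)}|\le\negl(\secpar)$ follows from the second clause of the sender's security of $\Pi_\OT$, where the unbounded malicious receiver $(\A_0,\A_1)$ of \cref{def:OT} absorbs $\dist$ together with the earlier and later query answers and $\dist$'s internal quantum register into $\st_\A$ (this is where we use that $\Pi_\OT$'s sender security allows an unbounded, maliciously and adaptively generated $\ot_1$, and that $\A_0$ can compute the challenge query's $\vecmu$ from $\witness^{\otimes k}$ and $\rho_P$ on its own). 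Summing over $i\in[q]$ gives $|H_2-H_1|\le\negl(\secpar)$; after $H_2$ the oracle uses $\{\whx_j,\whz_j\}$ outside the ciphertexts only for indices appearing in $J$, hence only for $j\in S_V$.

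In $H_3$ we replace, in each oracle answer and for each $j\notin S_V$, the ciphertext $\ct_j=\enc_\LE(\pk,(\whx_j,\whz_j);R_j)$ by $\enc_\LE(\pk,0^2)$; by lossiness of $\Pi_\LE$ on the lossy key $\pk$ and a standard hybrid over the at most $qN$ ciphertexts, $|H_3-H_2|\le\negl(\secpar)$. Now, in every oracle answer, $\{\whx_j,\whz_j\}_{j\notin S_V}$ appear only as one-time pads inside the Bell measurement of $X^{\whx}Z^{\whz}\rho_\hist Z^{\whz}X^{\whx}$ against $\rho_P$, so they are information-theoretically hidden from $\dist$. We may therefore replay, inside each oracle answer, the argument from the proof of \cref{lem:NIZK_zero-knowledge} that rewrites the honest prover into its simulator: by \cref{lem:Pauli_mixing}, $X^{\whx}Z^{\whz}\rho_\hist Z^{\whz}X^{\whx}$ as seen by $\dist$ equals $\bigl(\prod_{j\in S_V}X_j^{\whx_j}Z_j^{\whz_j}\bigr)\Tr_{[N]\setminus S_V}(\rho_\hist)\bigl(\prod_{j\in S_V}Z_j^{\whz_j}X_j^{\whx_j}\bigr)\otimes I_{[N]\setminus S_V}/2^{|[N]\setminus S_V|}$, which makes the outcomes $(x_j,z_j)$ for $j\notin S_V$ uniform and, together with $\rho_P$'s action on $\dist$'s entangled register, independent of everything else; then $\|\rho_{S_V}-\Tr_{[N]\setminus S_V}\rho_\hist\|_{tr}=\negl(\secpar)$ from \cref{lem:five_local_Hamiltonian} allows swapping $\Tr_{[N]\setminus S_V}(\rho_\hist)$ for $\rho_{S_V}=\siml_\hist(\statement,S_V)$ at a cost of $\negl(\secpar)$ per query, and the Bell measurements on the $I_{[N]\setminus S_V}/2^{|[N]\setminus S_V|}$ factor are replaced by fresh uniform bits. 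The resulting experiment is exactly the one defined by $(\siml_0,\siml_1)$; summing the $\negl(\secpar)$ terms over the $q$ queries keeps the total negligible, and chaining all hybrids yields the statistical zero-knowledge property in the hiding mode.

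I expect the main obstacle to be the query-by-query reduction of $H_2$ to the sender's security of $\Pi_\OT$: that security notion in \cref{def:OT} is a single-shot game for one stand-alone (unbounded) malicious receiver, so one must argue carefully that a multi-query distinguisher which itself produces each $\ot_1$ adaptively (possibly entangled with its workspace and with $\rho_P$) can be folded into such a receiver — the witnesses, the history-state generation, the lossy ciphertexts, the other query answers, and $\dist$'s quantum state all have to be carried through $\st_\A$, and one must check that answering the non-challenge queries never requires the challenge query's randomness. A secondary subtlety is that, unlike in \cref{lem:NIZK_zero-knowledge}, the state $\rho_P$ is now adversarial and may be entangled with $\dist$; this is handled by observing that after averaging over the hidden pads the Bell measurements on the qubits outside $S_V$ act on a maximally mixed qubit, hence leave $\dist$'s register untouched while yielding uniform outcomes. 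The remaining steps are routine given \cref{lem:NIZK_zero-knowledge}, \cref{lem:Pauli_mixing}, \cref{lem:five_local_Hamiltonian}, \cref{def:lossy}, and \cref{def:OT}.
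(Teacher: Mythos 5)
Your proof is correct and follows essentially the same route as the paper's: the same hybrid decomposition (swapping the CRS via $\siml_\CRS$, replacing $\sender_\OT$ by $\Open_\rec$ plus $\siml_\sen$ query by query, invoking lossiness of $\Pi_\LE$ outside $S_V$, then applying \cref{lem:Pauli_mixing} and the local simulatability of \cref{lem:five_local_Hamiltonian}), merely traversed from the real game toward the simulator rather than the reverse. The only differences are cosmetic — your simulator encrypts $0^2$ outside $S_V$ and samples those $(x_j,z_j)$ uniformly instead of Bell-measuring against maximally mixed qubits — and these are equivalent to the paper's choices.
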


By combining
\cref{lem:lossy_encryption,lem:OT,lem:gap_amplification_dual,lem:DM_completeness,lem:DM_mode_ind,lem:DM_soundness,lem:DM_ZK} and
\begin{align*}
\left(1-\frac{\alpha}{N'}-\negl(\secpar)\right)-\left(1-\frac{\beta}{N'}+\negl(\secpar)\right)=\frac{\beta-\alpha}{N'}-\negl(\secpar)=\frac{1}{\poly(\secpar)}, 
\end{align*}
we obtain \cref{thm:dual-mode}.

\ifnum\submission=1
In the following, we give proof sketches of \cref{lem:DM_soundness,lem:DM_ZK}.
See \cref{sec:omitted_sec_proof_dual_mode_NIZK} for full proofs.
\paragraph{Soundness in the binding mode.}
For a cheating prover, we consider a modified soundness game where the challenger extracts $\{\whx_j,\whz_j\}_{j\in S_V}$ from $\{\ct_{j}\}_{j\in S_V}$ by just decrypting them instead of deriving $\{(\whx_j,\whz_j),R_j\}_{j\in S_V}$ from $\ot_2$ and then checking the consistency to $\{\ct_{j}\}_{j\in S_V}$ as in the actual verification algorithm. 
This does not decrease adversary's winning probability since $\{\whx_j,\whz_j\}_{j\in S_V}$ derived from $\ot_2$ should be equal to decryption of $\{\ct_{j}\}_{j\in S_V}$ or otherwise the verification algorithm immediately rejects.
In this game, the challenger does not use $\st$ of $\Pi_\OT$.
Therefore, by the receiver's security of $\Pi_\OT$, adversary's winning probability changes negligibly even if we generate $\ot_1$ by the simulator $\siml_\rec$.
At this point, the challenger obtain no information about $S_V$. 
Then soundness in this game can be reduced to the soundness of $\Pi_{\NIZK}$ in \cref{fig:CV-NIZK} against augmented cheating provers with an additional capability to choose $\{\whx_j,\whz_j\}_{j\in [N]}$. 
By carefully examining the proof of the soundness of $\Pi_{\NIZK}$, one can see that the proof works against such augmented cheating provers as well.
(Note that what is essential for the soundness of $\Pi_{\NIZK}$ is that $S_V$ is hidden from the cheating prover.)
\paragraph{Zero-knowledge in the hiding mode.}
In the hiding mode, $\pk$ of $\Pi_{\LE}$ is in the lossy mode, and thus $\{\ct_{j}\}_{j\in [N]}$ can be simulated only from $\pk$ by encrypting all $0$ message. 
Moreover, by sender's security in the hiding mode of $\Pi_\OT$, $\ot_2$ can be simulated from $\{\whx_j,\whz_j\}_{j\in S_V}$ where $S_V$ is a subset such that $|S_V|=5$ extracted from $\ot_1$.
Therefore, the zero-knowledge property of $\Pi_{\DM}$ can be reduced to the zero-knowledge property of $\Pi_{\NIZK}$ in \cref{fig:CV-NIZK} against augmented malicious verifiers with an additional capability to choose $S_V$ and $\rho_P$.
By carefully examining the proof of the zero-knowledge property of $\Pi_{\NIZK}$, one can see that the proof works against such augmented malicious verifiers as well.
(Note that what is essential for the zero-knowledge property of $\Pi_{\NIZK}$ is that $\{\whx_j,\whz_j\}_{j\notin S_V}$ is hidden from the malicious verifier.)
\else
In the following, we prove \cref{lem:DM_soundness,lem:DM_ZK}.
\begin{proof}[Proof of \cref{lem:DM_soundness} (Soundness)]
For any adversary $\A$, we consider the following sequence of games between $\A$ and the challenger where we denote by $\Win_i$ the event that the challenger returns $\top$ in $\game_i$. 
\begin{description}
\item[$\game_1$:]
This game is the original soundness game in the binding game. That is, it works as follows:
\begin{enumerate}
    \item The challenger  generates $\crs_\OT\sample \crsgen_\OT(1^\secpar,\binding)$ and $(\pk,\sk)\sample \injgen_\LE(1^\secpar)$.
\item The challenger generates  $(W_1,...,W_N)\sample \{X,Y,Z\}^{N}$, $(m_1,...,m_N)\sample \{0,1\}^{N}$, and $\rho_{P}:=\bigotimes_{j=1}^N(U(W_j)|m_{j}\rangle$. 
    \item The challenger generates $S_V$ and $J=(j_1,...,j_5)$ similarly to $\preprocess_\DM$. 
    \item \label{step:receiver}
    The challenger generates  $(\ot_{1},\otst)\sample \receiver_\OT(\crs_\OT,J)$.
    \item The challenger gives $\crs_\DM$ and a proving key $\pkey:=\left(\rho_P,\ot_{1}\right)$ to $\A$, and $\A$ outputs $(\statement,\pi=(x,z,\{\ct_j\}_{j\in [N]}, \ot_2))$.
    If $\statement \in L_\yes$, the challenger outputs $\bot$ and immediately halts.
    \item \label{step:derive}
    The challenger runs $\vecmu'\sample \derive_\OT(\crs_\OT,\otst,\ot_{2})$ and parses $(((\whx'_{1},\whz'_{1}),R'_1),\allowbreak ...,((\whx'_{5},\whz'_{5}),R'_5))\la \vecmu'$.
If $\enc_\LE(\pk,(\whx'_i,\whz'_i);R'_i)\neq \ct_{j_i}$ for some $i\in [5]$, it outputs $\bot$ and immediately halts. Otherwise, it recovers $\{\whx_{j},\whz_{j}\}_{j\in S_{V}}$ by setting $(\whx_{j_i},\whz_{j_i}):=(\whx'_{i},\whz'_{i})$ for $i\in [|S_V|]$.
    \item The challenger samples $i$ and defines $S_i$ and $P_i$ similarly to $\verify_\DM$. 
    If $P_i$ is not consistent to $(S_V,\{W_j\}_{j\in S_V})$, it outputs $\top$.
If $P_i$ is consistent to 
$(S_V,\{W_j\}_{j\in S_V})$,  
it flips a biased coin that heads with probability $1-3^{|S_i|-5}$.
If heads, it outputs $\top$.
If tails,
it defines $m_{j}'$ for $j\in S_i$ similarly to $\verify_\DM$ and outputs $\top$ if  $(-1)^{\bigoplus_{j\in S_i}m'_{j}}=-s_{i}$ and $\bot$ otherwise.
\end{enumerate}
Our goal is to prove $\Pr[\Win_1]\leq 1-\frac{\beta}{N'}+\negl(\secpar)$.   
\item[$\game_2$:]
This game is identical to the previous game except that 
Step $6$ is replaced with Step $6'$ described as follows.
\begin{itemize}
    \item[$6'$.] 
      The challenger computes $(\whx_j,\whz_j)\sample  \dec_\LE(\sk,\ct_j)$ for $j\in[N]$. 
\end{itemize}
If the challenger does not output $\bot$ in Step $6$, then we have $\enc_\LE(\pk,(\whx'_i,\whz'_i);R'_i)= \ct_{j_i}$ for all $i\in [5]$. 
In this case, we have $\dec_\LE(\sk,\ct_{j_i})=(\whx'_i,\whz'_i)$ by correctness of $\Pi_\LE$.
Therefore, the values of $\{\whx_{j},\whz_{j}\}_{j\in S_{V}}$ computed in Step $6$ and $6'$ are identical conditioned on that the challenger does not output $\bot$ in Step $6$.
Noting that Step $7$ only uses the values of $(\whx_{j},\whz_{j})$ for $j\in S_{V}$, 
we have $\Pr[\Win_1]\leq \Pr[\Win_2]$.
\item[$\game_3$:]
This game is identical to the previous game except that 
Step $4$ is replaced with Step $4'$ described as follows.
\begin{itemize}
\item[$4'$] The challenger generates $\ot_1\sample \siml_\rec(\crs_\OT)$.
\end{itemize}
By statistical receiver's security in the binding mode of $\Pi_\OT$, it is clear that we have $|\Pr[\Win_3]-\Pr[\Win_2]|\leq \negl(\secpar)$. 
\item[$\game_4$:]
This game is identical to the previous game except that Step $2$ is replaced with Step $2'$  described below.
\begin{itemize}
    \item[$2'$.]The challenger generates $N$ Bell-pairs between registers $\regP$ and $\regV$ and lets $\rho_P$ and $\rho_V$ be quantum states in registers $\regP$ and $\regV$, respectively. 
    Then it chooses  $(W_1,...,W_N)\sample \{X,Y,Z\}^{N}$, and measures $j$-th qubit of $\rho_V$ in the $W_j$ basis for all $j\in [N]$, and lets $(m_1,...,m_N)$ be the measurement outcomes. 
\end{itemize}
By \cref{lem:statecollapsing}, 
the joint distributions of $(\rho_P,(W_1,...,W_N,m_1,...m_N))$ in $\game_3$ and $\game_4$ are identical, and thus 
we have $\Pr[\Win_4]=\Pr[\Win_3]$.
\item[$\game_5$:]
This game is identical to the previous game except that the measurement of $\rho_V$ in Step $2'$ is omitted and the way of generating $\{m'_j\}_{j\in S_i}$ in Step $7$ is modified as follows.
\begin{itemize}
    \item  The challenger computes $\rho'_V:=X^{x\oplus \whx}Z^{z\oplus \whz}\rho_V Z^{z\oplus \whz}X^{x\oplus \whx}$.
    For all $j\in S_i$, it 
 measures $j$-th qubit of $\rho'_V$ in $W_j$ basis, and lets $m'_j$ be the measurement outcome.
\end{itemize}

By \cref{lem:XZ_before_measurement},  this does not change the distribution of $\{m'_j\}_{j\in S_i}$.
Therefore, we have $\Pr[\Win_5]=\Pr[\Win_4]$. 

Let 
$\event_\statement$ be the event that the statement output by $\A$ is $\statement$, and
$\rho'_{V,\statement}$ be the state in $\regV$ right before the measurement in the modified Step $7$ conditioned on $\event_\statement$.
For any fixed $P_i$, the probability that $P_i$ is consistent to  $(S_V,\{W_j\}_{j\in S_V})$ and the coin tails is $\frac{1}{N'}$.
Therefore, 
by \cref{lem:prob_and_energy}, we have 
\begin{align*}
    \Pr[\Win_5|\event_\statement]=1-\frac{1}{N'}\Tr(\rho'_{V,\statement} \ham_\statement).
\end{align*}
Then  we have 
\begin{align*}
    \Pr[\Win_5]=\sum_{\statement\notin L}\Pr[\event_\statement]\left(1-\frac{1}{N'}\Tr(\rho'_{V,\statement} \ham_\statement)\right)\leq 
    \sum_{\statement\notin L}\Pr[\event_\statement]\left(1-\frac{\beta}{N'}\right)\leq 1-\frac{\beta}{N'}
\end{align*}
where
the first inequality follows from  \cref{lem:five_local_Hamiltonian}. 
\end{description}

By combining the above, we obtain $\Pr[\Win_1]\leq 1-\frac{\beta}{N'}+\negl(\secpar)$.

This completes the proof of \cref{lem:DM_soundness}.
\end{proof}
\begin{proof}[Proof of \cref{lem:DM_ZK} (Zero-Knowledge)]
Let $\siml_\CRS$, $\siml_\sen$, and $\Open_\rec$ be the corresponding algorithms for statistical sender's security in the hiding mode of $\Pi_\OT$.
The simulator $\siml=(\siml_0,\siml_1)$ for $\Pi_\DM$ is described below.
\begin{description}
\item[$\siml_0(1^\secpar)$:]
It generates $(\crs_\OT,\td_\OT)\sample \siml_\CRS(1^\secpar)$ and $\pk \sample \lossygen_\LE(1^\secpar)$  
and outputs $\crs_\DM:=(\crs_\OT,\pk)$ and $\td_\DM:=(\crs_\OT,\td_\OT,\pk)$. 
\item[$\siml_1(\td_\DM,\pkey,\statement)$:]
The simulator parses   
$(\crs_\OT,\td_\OT,\pk)\la\td_\DM$ and
$(\rho_P,\ot_1)\la \pkey$ and
does the following.
\begin{enumerate}
    \item Compute $J:=\Open_\rec(\td_\OT,\ot_1)$. 
    Let $S_V:=\{j_1,...,j_5\}\subseteq [N]$ where $J=(j_1,...,j_5)$.
    \item Generate $(\whx,\whz)\sample \bit^{N}\times \bit^{N}$, $R_j\sample \mathcal{R}_\LE$ for $j\in[N]$, 
    $\ct_j:=\enc_\LE(\pk,(\whx_j,\whz_j);R_j)$ for all $j\in[N]$,
    and 
     $\ot_2\sample \siml_\sen(\crs_\OT,\ot_{1},J,\mu_J)$ where 
    $\mu_J:=(\mu_{j_1},...,\mu_{j_5})$ and
    $\mu_{j_i}:=((\whx_{j_i},\whz_{j_i}),R_{j_i})$ for $i\in [5]$.
    
    \item Generate the classical description of the density matrix $\rho_{S_V}:=\siml_{\hist}(\statement,S_V)$
    where $\siml_{\hist}$ is as in \cref{lem:five_local_Hamiltonian}. 
    \item 
    Generate $\widetilde{\rho'}_{\hist}:=\left(\prod_{j\in S_V}X_j^{\whx_j}Z_j^{\whz_j}\right)\rho_{S_V}\left(\prod_{j\in S_V}Z_j^{\whz_j}X_j^{\whx_j}\right)\otimes \frac{I_{[N]\setminus S_V}}{2^{|[N]\setminus S_V|}}$.
    \item Measure $j$-th qubits of $\widetilde{\rho'}_{\hist}$ and $\rho_P$ in the Bell basis for $j\in[N]$, and let $(x_j,z_j)$ be the $j$-th measurement result. 
\item Output $\pi:=(x,z,\{\ct_j\}_{j\in[N]},\ot_2)$ where $x:=x_1\concat x_2\concat...\concat x_N$ and $z:=z_1\concat z_2\concat...\concat z_N$.
\end{enumerate}
\end{description}
We consider the following sequence of modified versions of $\siml_1$,  which take $\witness\in R_L(\statement)$ as an additional input.
\begin{description}
\item[$\siml^{(1)}_1(\td_\DM,\pkey,\statement,\witness)$:]
This simulator works similarly to $\siml_1$ except that it generates the history state $\rho_{\hist}$ for $\ham_\statement$ from $\witness$ instead of $\rho_{S_V}$
in Step $3$, defines  $\rho'_{\hist}:=X^{\whx} Z^{\whz} \rho_{\hist} Z^{\whz}X^{\whx}$ in Step $4$, and uses $\rho'_{\hist}$ instead of  $\widetilde{\rho}'_{\hist}$ in Step $5$. 
 \item[$\siml^{(2)}_1(\td_\DM,\pkey,\statement,\witness)$:]
This simulator works similarly to $\siml^{(1)}_1$ except that in Step $2$, it generates $\ot_2\sample \sender_\OT(\crs_\OT,\ot_{1},(\mu_1,...,\mu_N))$
instead of 
$\ot_2\sample \siml_\sen(\crs_\OT,\ot_{1},J,\vecmu_J)$ where 
    $\mu_j:=((\whx_j,\whz_j),R_j)$ for $j\in [N]$. 
 We note that $\siml^{(2)}_1$ needs not run Step 1 since it does not use $J$ in later steps and thus it does not use  $\td_\OT$.
\end{description}
Let $\ora_P(\crs_\DM,\cdot,\cdot,\cdot)$ and $\ora_S(\td_\DM,\cdot,\cdot,\cdot)$ be as in \cref{def:dual-mode} and $\ora^{(i)}_S(\td_\DM,\cdot,\cdot,\cdot)$ be the oracle that works similarly to $\ora_S(\td_\DM,\cdot,\cdot,\cdot)$ except that it uses $\siml^{(i)}_1$ instead of $\siml_1$ for $i=1,2$.

Then we prove the following claims.

\begin{myclaim}\label[myclaim]{cla:simulation_zero_to_one}
If $\Pi_\LE$ satisfies  lossiness on lossy keys, we have 
\begin{align*}
    \left|\Pr\left[\dist^{\ora_S(\td_\DM,\cdot,\cdot,\cdot)}(\crs_\DM)=1
    \right] -\Pr\left[\dist^{\ora^{(1)}_S(\td_\DM,\cdot,\cdot,\cdot)}(\crs_\DM)=1
    \right]\right|\leq \negl(\secpar)
\end{align*}
where $(\crs_\DM,\td_\DM)\sample \siml_0(1^\secpar)$
for any distinguisher $\dist$ that makes $\poly(\secpar)$ queries of the form $(\pkey=(\rho_P,\ot_1),\statement,\witness)$ for some $\witness\in R_L(\statement)$.
\end{myclaim}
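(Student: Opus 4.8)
The plan is to interpolate between the oracle $\ora_S$ (which runs $\siml_1$) and the oracle $\ora^{(1)}_S$ (which runs $\siml^{(1)}_1$) by going through two hybrid oracles, obtained by replacing, inside every simulated proof, each ciphertext $\ct_j$ with $j\notin S_V$ by an encryption of a fixed message (say $(0,0)$ with fresh randomness), where $S_V=\{j_1,\dots,j_5\}$ is the set extracted from $\ot_1$ via $\Open_\rec(\td_\OT,\ot_1)$. The point that makes this replacement essentially free is that $\crs_\DM$ output by $\siml_0$ contains a \emph{lossy} key $\pk\sample\lossygen_\LE(1^\secpar)$; hence by lossiness on lossy keys, together with a standard hybrid over the at most $N$ such ciphertexts produced per query and over the $\poly(\secpar)$ queries of $\dist$, each of the two replacements changes $\dist$'s acceptance probability by at most $\negl(\secpar)$. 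It is crucial here that lossiness holds against \emph{unbounded} distinguishers, so the reduction may sample $(\whx,\whz)$ and prepare the entire rest of the simulated transcript — including the state $\rho'_\hist=X^{\whx}Z^{\whz}\rho_\hist Z^{\whz}X^{\whx}$ (resp.\ $\widetilde{\rho'}_\hist$) fed into the Bell-basis measurement, which may depend on $\whx_j,\whz_j$ for $j\notin S_V$ — by brute force before embedding the challenge ciphertext.

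Concretely, I would let $\widetilde\siml_1$ and $\widetilde\siml^{(1)}_1$ denote $\siml_1$ and $\siml^{(1)}_1$ with the above ciphertext replacement and write $\ora_{\widetilde\siml_1},\ora_{\widetilde\siml^{(1)}_1}$ for the induced oracles, so that $\ora_S$ is statistically indistinguishable from $\ora_{\widetilde\siml_1}$ and $\ora^{(1)}_S$ from $\ora_{\widetilde\siml^{(1)}_1}$ by the previous paragraph. The gain is that, after the replacement, the bits $\{\whx_j,\whz_j\}_{j\in[N]\setminus S_V}$ (and the randomness $R_j$ for $j\notin S_V$) no longer appear anywhere in the simulated proof except inside $\rho'_\hist$ in $\widetilde\siml^{(1)}_1$ — in $\widetilde\siml_1$ they are used nowhere at all — because $\ot_2$ is produced by $\siml_\sen$ from $\mu_J$, which involves only indices in $S_V$. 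Therefore, exactly as in the proof of \cref{lem:NIZK_zero-knowledge}, averaging over $\{\whx_j,\whz_j\}_{j\notin S_V}$ and applying Pauli mixing (\cref{lem:Pauli_mixing}), the state $\rho'_\hist$ fed into the Bell-basis measurement by $\widetilde\siml^{(1)}_1$ is, from $\dist$'s view, identical to $\bigl(\prod_{j\in S_V}X_j^{\whx_j}Z_j^{\whz_j}\bigr)\,\Tr_{[N]\setminus S_V}[\rho_\hist]\,\bigl(\prod_{j\in S_V}Z_j^{\whz_j}X_j^{\whx_j}\bigr)\otimes \tfrac{I_{[N]\setminus S_V}}{2^{|[N]\setminus S_V|}}$, which by local simulatability (\cref{lem:five_local_Hamiltonian}) is within $\negl(\secpar)$ trace distance of the state $\widetilde{\rho'}_\hist$ used by $\widetilde\siml_1$. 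Since everything downstream — the Bell measurement against the (possibly entangled) $\rho_P$ and $\dist$'s final computation — is a quantum channel, monotonicity of trace distance plus a union bound over the $\poly(\secpar)$ queries gives that $\ora_{\widetilde\siml_1}$ and $\ora_{\widetilde\siml^{(1)}_1}$ are statistically indistinguishable, and chaining the three bounds proves the claim.

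The main obstacle I expect is bookkeeping rather than anything conceptual: one must set up the ciphertext-replacement hybrid so that the reduction to lossiness is never required to be efficient (it samples $\whx,\whz$ and builds $\rho'_\hist$ or $\widetilde{\rho'}_\hist$ directly), and one must verify that after the replacement no residual dependence on $\{\whx_j,\whz_j\}_{j\notin S_V}$ survives through $\ot_2$ — which holds precisely because $\siml_\sen$ receives only $\mu_J$ — so that the Pauli-mixing averaging argument is legitimate. The degenerate case $|S_V|<5$, which occurs when $\Open_\rec$ returns a tuple $J$ with repeated entries, only requires a one-line remark: $\siml_{\hist}$ is defined for all subsets of size at most $5$, and the padding indices simply belong to the complement $[N]\setminus S_V$ that gets traced out.
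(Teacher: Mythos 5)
Your proposal is correct and follows essentially the same route as the paper's proof: introduce tilde oracles that replace the ciphertexts $\ct_j$ for $j\notin S_V$ by encryptions of $(0,0)$, invoke lossiness on lossy keys to pass to them, then use Pauli mixing plus local simulatability to argue $\|\widetilde{\rho'}_\hist-\rho'_\hist\|_{tr}\leq\negl(\secpar)$ from the view of a distinguisher with no information on $\{\whx_j,\whz_j\}_{j\notin S_V}$. Your added remarks (that the lossiness reduction may be unbounded, that $\ot_2$ depends only on $\mu_J$, and the handling of $|S_V|<5$) are correct elaborations of points the paper leaves implicit.
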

\begin{proof}[Proof of \cref{cla:simulation_zero_to_one}]
Let $\widetilde{\ora}_S(\td_\DM,\cdot,\cdot,\cdot)$ and    $\widetilde{\ora}^{(1)}_S(\td_\DM,\cdot,\cdot,\cdot)$ be oracles that work similarly to $\ora_S(\td_\DM,\cdot,\cdot,\cdot)$ and $\ora^{(1)}_S(\td_\DM,\cdot,\cdot,\cdot)$ except that they generate $\ct_j:=\enc_\LE(\pk,(0,0);R_j)$ instead of $\ct_j:=\enc_\LE(\pk,(\whx_j,\whz_j);R_j)$ for $j\notin S_V$, respectively.
By lossiness on lossy keys of $\Pi_{\LE}$, $\dist$ cannot distinguish $\widetilde{\ora}_S(\td_\DM,\cdot,\cdot,\cdot)$ and    $\widetilde{\ora}^{(1)}_S(\td_\DM,\cdot,\cdot,\cdot)$ from $\ora_S(\td_\DM,\cdot,\cdot,\cdot)$ and $\ora^{(1)}_S(\td_\DM,\cdot,\cdot,\cdot)$ with non-negligible advantage, respectively, noting that no information of $\{R_j\}_{j\notin S_V}$ is given to $\dist$.
When $\dist$ is given either of $\widetilde{\ora}_S(\td_\DM,\cdot,\cdot,\cdot)$ or    $\widetilde{\ora}^{(1)}_S(\td_\DM,\cdot,\cdot,\cdot)$, it has no information on $\{\whx_j,\whz_j\}_{j\notin S_V}$.
Therefore, by \cref{lem:Pauli_mixing}, we have 
\[
\rho'_{\hist}=\left(\prod_{j\in S_V}X_j^{\whx_j}Z_j^{\whz_j}\right)\Tr_{N\setminus S_V}[\rho_{\hist}] \left(\prod_{j\in S_V}Z_j^{\whz_j}X_j^{\whx_j}\right)\otimes \frac{I_{[N]\setminus S_V}}{2^{|[N]\setminus S_V|}}
\]
from the view of $\dist$. 
By \cref{lem:five_local_Hamiltonian}, we have
$\|\rho_{S_V}-\Tr_{[N]\setminus S_V}\rho_\hist\|_{tr}\leq\negl(\secpar)$. 
Therefore, we have $\|\widetilde{\rho'}_\hist-\rho'_\hist\|_{tr}\leq \negl(\secpar)$. 
This means that it cannot distinguish $\widetilde{\ora}_S(\td_\DM,\cdot,\cdot,\cdot)$ and    $\widetilde{\ora}^{(1)}_S(\td_\DM,\cdot,\cdot,\cdot)$ with non-negligible advantage. 
By combining the above,  \cref{cla:simulation_zero_to_one} follows.
\end{proof}


\begin{myclaim}\label[myclaim]{cla:simulation_one_to_two}
If $\Pi_\OT$ satisfies  the second item of statistical sender's security in the hiding mode, we have
\begin{align*}
    \left|\Pr\left[\dist^{\ora^{(1)}_S(\td_\DM,\cdot,\cdot,\cdot)}(\crs_\DM)=1
    \right] -\Pr\left[\dist^{\ora^{(2)}_S(\td_\DM,\cdot,\cdot,\cdot)}(\crs_\DM)=1
    \right]\right|\leq \negl(\secpar)
\end{align*}
where $(\crs_\DM,\td_\DM)\sample \siml_0(1^\secpar)$
for any distinguisher $\dist$ that makes $\poly(\secpar)$ queries. 
\end{myclaim}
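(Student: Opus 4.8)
The plan is to reduce \cref{cla:simulation_one_to_two} to the second item of statistical sender's security in the hiding mode of $\Pi_\OT$ via a hybrid argument over the queries of $\dist$. First I would note that $\ora^{(1)}_S$ and $\ora^{(2)}_S$ differ \emph{only} in how the oblivious-transfer second message $\ot_2$ inside the answer $\pi=(x,z,\{\ct_j\}_{j\in[N]},\ot_2)$ is produced: $\siml^{(1)}_1$ sets $\ot_2\sample\siml_\sen(\crs_\OT,\ot_1,J,\vecmu_J)$ with $J:=\Open_\rec(\td_\OT,\ot_1)$, whereas $\siml^{(2)}_1$ sets $\ot_2\sample\sender_\OT(\crs_\OT,\ot_1,(\mu_1,\dots,\mu_N))$; everything else the two simulators produce — the history state $\rho_\hist$ obtained from $\witness$, the pads $(\whx,\whz)$, the randomness $\{R_j\}_j$, the ciphertexts $\{\ct_j\}_j$, and the Bell measurements against $\rho_P$ yielding $(x,z)$ — is generated identically and does not depend on $\ot_2$. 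Also, since $\siml_0$ runs $\crs_\OT\sample\siml_\CRS(1^\secpar)$, the common reference string present in both oracles is exactly the one under which sender's security is asserted.

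Next I would let $q=\poly(\secpar)$ bound the number of queries of $\dist$ and, for $\ell\in\{0,\dots,q\}$, define $\ora_{S,\ell}$ to answer the first $\ell$ queries as $\ora^{(2)}_S$ (using $\sender_\OT$) and the last $q-\ell$ as $\ora^{(1)}_S$ (using $\siml_\sen$), so $\ora_{S,0}=\ora^{(1)}_S$ and $\ora_{S,q}=\ora^{(2)}_S$. It then suffices to bound the gap between consecutive hybrids and sum. For a fixed $\ell$ and a fixed value $\vecmu=(\mu_1,\dots,\mu_N)$ of the $\ell$-th query's pads-and-randomness (whence $\{\ct_j\}_j$ is a deterministic function of $\vecmu$ and $\pk$), I would build an adversary $\A=(\A_0,\A_1)$ against sender's security with message vector $\vecmu$: $\A_0$ receives $(\crs_\OT,\td_\OT)\sample\siml_\CRS(1^\secpar)$, samples $\pk\sample\lossygen_\LE(1^\secpar)$, hands $\crs_\DM:=(\crs_\OT,\pk)$ to $\dist$, answers $\dist$'s first $\ell-1$ queries exactly as $\siml^{(2)}_1$ does (no trapdoor needed), and on the $\ell$-th query $(\pkey_\ell=(\rho_P^{(\ell)},\ot_1^{(\ell)}),\statement_\ell,\witness_\ell^{\otimes k})$ prepares $\rho_\hist$ from $\witness_\ell^{\otimes k}$ and $\{\ct_j\}_j$ from $\vecmu,\pk$, packs these with $\dist$'s residual state, $\pk$, $\crs_\OT$, $\td_\OT$ into $\st_\A$, and outputs $(\ot_1^{(\ell)},\st_\A)$. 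Writing $J^{(\ell)}:=\Open_\rec(\td_\OT,\ot_1^{(\ell)})$, the challenger returns $\ot_2=\sender_\OT(\crs_\OT,\ot_1^{(\ell)},\vecmu)$ in the real branch and $\ot_2=\siml_\sen(\crs_\OT,\ot_1^{(\ell)},J^{(\ell)},\vecmu_{J^{(\ell)}})$ in the simulated branch; $\A_1$ then finishes the $\ell$-th answer (Bell-measuring $\rho'_\hist:=X^{\whx}Z^{\whz}\rho_\hist Z^{\whz}X^{\whx}$ against $\rho_P^{(\ell)}$ to get $(x,z)$, outputting $\pi=(x,z,\{\ct_j\}_j,\ot_2)$), resumes $\dist$ answering its remaining queries as $\siml^{(1)}_1$ (using $\td_\OT$ for the $\Open_\rec$ calls), and outputs $\dist$'s bit. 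By construction $\dist$'s view is that of $\ora_{S,\ell}$ in the real branch and of $\ora_{S,\ell-1}$ in the simulated branch, so conditioned on the $\ell$-th pads/randomness being $\vecmu$ the per-hybrid gap equals the sender's-security advantage with message vector $\vecmu$, hence $\negl(\secpar)$; averaging over $\vecmu$ (statistical sender's security holds with a single negligible bound for all message vectors) preserves $\negl(\secpar)$, and summing over $\ell\in[q]$ gives $|\Pr[\dist^{\ora^{(1)}_S}(\crs_\DM)=1]-\Pr[\dist^{\ora^{(2)}_S}(\crs_\DM)=1]|\le q\cdot\negl(\secpar)=\negl(\secpar)$.

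The one point I would write out with care — and the main potential pitfall — is the bookkeeping in the reduction: verifying that everything $\siml^{(1)}_1,\siml^{(2)}_1$ compute for the $\ell$-th answer really splits into an $\ot_2$-independent part run by $\A_0$ and a finishing part run by $\A_1$; that feeding the \emph{full} vector $(\mu_1,\dots,\mu_N)$ to the sender-security challenger makes its real branch coincide with $\siml^{(2)}_1$ while its simulated branch, which recomputes $J^{(\ell)}$ and restricts to $\vecmu_{J^{(\ell)}}$, coincides with $\siml^{(1)}_1$; and that treating the random $\vecmu$ (equivalently the random $(\whx,\whz)$ and $\{R_j\}_j$) of the $\ell$-th query as something to average over is legitimate since $\dist$ and $\A$ are unbounded. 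No new quantum reasoning is needed beyond letting $\A$ carry the quantum registers $\rho_P^{(\ell)}$ and $\rho_\hist$ inside $\st_\A$, which is exactly why the statistical flavor of sender's security is the right tool for the unbounded distinguisher.
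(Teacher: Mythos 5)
Your proposal is correct and follows essentially the same route as the paper's proof: the same per-query hybrid oracles, the same reduction adversary $\A=(\A_0,\A_1)$ that fixes the $\ell$-th query's pads $(\whx,\whz)$ and randomness $\{R_j\}_j$, feeds the full message vector $\vecmu$ to the sender's-security challenger, and then averages over those fixed values and sums over the $\poly(\secpar)$ hybrids. The bookkeeping points you flag (the split of the $\ell$-th answer into an $\ot_2$-independent part and a finishing part, and carrying $\td_\OT$ and the quantum registers in $\st_\A$) are exactly the details the paper's reduction handles.
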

\begin{proof}[Proof of \cref{cla:simulation_one_to_two}]
Let $Q=\poly(\secpar)$ be the maximum number of $\dist$'s queries. 
For $i=0,...,Q$, let $\ora^{(1.i)}_S(\td_\DM,\cdot,\cdot,\cdot)$ be the hybrid oracle that works similarly to $\ora^{(2)}_S(\td_\DM,\cdot,\cdot,\cdot)$ for the first $i$ queries and works similarly to $\ora^{(1)}_S(\td_\DM,\cdot,\cdot,\cdot)$ for the rest. 
By a standard hybrid argument, it suffices to prove 
\begin{align}
\begin{split}
    \left|\Pr\left[\dist^{\ora^{(1.i)}_S(\td_\DM,\cdot,\cdot,\cdot)}(\crs_\DM)=1
    \right]
    -\Pr\left[\dist^{\ora^{(1.(i+1))}_S(\td_\DM,\cdot,\cdot,\cdot)}(\crs_\DM)=1
    \right]\right|\leq \negl(\secpar)
\end{split} \label{eq:i_and_iplusone}
\end{align}
where $ (\crs_\DM,\td_\DM)\sample \siml_0(1^\secpar)$ 
for all $i=0,...,Q-1$.
For proving this, for any fixed $(\whx,\whz)\in \bit^N\times \bit^N$ and $\{R_j\}_{j\in[N]}\in \mathcal{R}_\LE^N$, 
we consider the following adversary $\A=(\A_0,\A_1)$ against the second item of statistical sender's security in the hiding mode of $\Pi_\OT$.
\begin{description}
\item[$\A_0(\crs_\OT,\td_\OT)$:]
It generates $\pk\sample \lossygen(1^\secpar)$, 
gives $\crs_\DM:=(\crs_\OT,\pk)$ to $\dist$ as input and runs it until it makes $(i+1)$-th query where $\A_0$ simulates responses to
the first $i$ queries similarly to $\ora^{(2)}_S(\td_\DM,\cdot,\cdot,\cdot)$ where $\td_\DM=(\crs_\OT,\td_\OT,\pk)$.
Let $(\pkey,\statement,\witness)$ be $\dist$'s $(i+1)$-th query. 
$\A_0$ parses $(\rho_P,\ot_1)\la\pkey$ and computes the history state $\rho_\hist$ for $\ham_\statement$ from $\witness$. 
It outputs $\ot_1$ and $\st_\A:=(\rho_P,\rho_\hist)$. 
\item[$\A_1(\st_\A=(\rho_P,\rho_\hist),\ot_2)$:]
It generates $\ct_j:=\enc_\LE(\pk,(\whx_j,\whz_j);R_j)$ for all $j\in [N]$  and
 $\rho'_{\hist}:=X^{\whx} Z^{\whz} \rho_{\hist} Z^{\whz}X^{\whx}$, 
    measures $j$-th qubits of $\rho'_{\hist}$ and $\rho_P$ in the Bell basis for $j\in[N]$,
    lets $(x_j,z_j)$ be the $j$-th measurement result, 
and returns $\pi:=(x,z,\{\ct_j\}_{j\in[N]},\ot_2)$ to $\dist$
as the response of the oracle to the $(i+1)$-th query  
where $x:=x_1\concat x_2\concat...\concat x_N$ and $z:=z_1\concat z_2\concat...\concat z_N$.
$\A_1$ runs the rest of the execution of $\dist$ by simulating the oracle similarly to $\ora_S^{(1)}(\td_\DM,\cdot,\cdot,\cdot)$. 
Finally, $\A_1$ outputs whatever $\dist$ outputs. 
\end{description}
Let $\vecmu:=(((\whx_1,\whz_1),R_1),...,((\whx_N,\whz_N),R_N))$.
If $\ot_2$ is generated as $\ot_2\sample \sender(\crs_\OT,\allowbreak \ot_1,\vecmu)$, then $\A$ perfectly simulates the execution of $\dist^{\ora^{(1.i)}_S(\td_\DM,\cdot,\cdot,\cdot)}(\crs_\DM)$ conditioned on the fixed $(\whx,\whz)$ and $\{R_j\}_{j\in[N]}$.
On the other hand, if $\ot_2$ is generated as 
$J:=\Open_\rec(\td_\OT,\ot_1)$ and
$\ot_2\sample \siml_\sen(\crs_\OT,\ot_1,J,\vecmu_J)$,  then $\A$ perfectly simulates the execution of $\dist^{\ora^{(1.(i+1))}_S(\td_\DM,\cdot,\cdot,\cdot)}(\crs_\DM)$ conditioned on the fixed $(\whx,\whz)$ and $\{R_j\}_{j\in[N]}$.
Therefore, averaging over the random choice of $(\whx,\whz)$ and $\{R_j\}_{j\in[N]}$, the l.h.s. of \cref{eq:i_and_iplusone} can be upper bounded by the average of the advantage of $\A$ to distinguish the two cases, which is negligible by the assumption.
This completes the proof of \cref{cla:simulation_one_to_two}.
\end{proof}

\begin{myclaim}\label[myclaim]{cla:simulation_two_to_real}
If $\Pi_\OT$ satisfies the first item of statistical sender's security in the hiding mode, 
We have 
\begin{align*}
    &\left|\Pr\left[\dist^{\ora^{(2)}_S(\td_\DM,\cdot,\cdot,\cdot)}(\crs_\DM)=1:
    \begin{array}{c}
           (\crs_\DM,\td_\DM)\sample \siml_0(1^\secpar) 
    \end{array}
    \right]\right.\\     &-\left.\Pr\left[\dist^{\ora_P(\crs_\DM,\cdot,\cdot,\cdot)}(\crs_\DM)=1:
    \begin{array}{c}
          \crs_\DM\sample \crsgen_\DM(1^\secpar,\hiding)
    \end{array}
    \right]\right|\leq \negl(\secpar)
\end{align*}
\end{myclaim}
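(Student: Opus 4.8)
The plan is to reduce \cref{cla:simulation_two_to_real} directly to the first item of statistical sender's security in the hiding mode of $\Pi_\OT$, which states that $\crs_\OT\sample\crsgen_\OT(1^\secpar,\hiding)$ is statistically indistinguishable, even for unbounded-time distinguishers, from the $\crs_\OT$-component of $(\crs_\OT,\td_\OT)\sample\siml_\CRS(1^\secpar)$. The structural observation that drives everything is that the modified simulator $\siml^{(2)}_1$ was defined precisely so as to coincide, as a function, with the honest proving algorithm $\prove_\DM$: on input $(\td_\DM=(\crs_\OT,\td_\OT,\pk),\pkey=(\rho_P,\ot_1),\statement,\witness)$ it samples $(\whx,\whz)$ and randomness $\{R_j\}_{j\in[N]}$ exactly as $\prove_\DM$ does, builds the history state $\rho_\hist$ from $\witness$, applies the quantum one-time pad $X^{\whx}Z^{\whz}$, performs the $j$-th Bell-basis measurements against $\rho_P$, sets $\ct_j:=\enc_\LE(\pk,(\whx_j,\whz_j);R_j)$ for every $j\in[N]$, and produces $\ot_2$ via the honest algorithm $\sender_\OT(\crs_\OT,\ot_1,(\mu_1,\dots,\mu_N))$. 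In particular $\siml^{(2)}_1$ never invokes $\Open_\rec$ and never reads $\td_\OT$, so $\ora^{(2)}_S(\td_\DM,\cdot,\cdot,\cdot)$ depends on $\td_\DM$ only through $(\crs_\OT,\pk)=\crs_\DM$ and is literally the oracle $\prove_\DM(\crs_\DM,\cdot,\cdot,\cdot)$.

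Given this, the two experiments in the claim differ only in how $\crs_\OT$ is sampled: in $\dist^{\ora^{(2)}_S(\td_\DM,\cdot,\cdot,\cdot)}(\crs_\DM)$ with $(\crs_\DM,\td_\DM)\sample\siml_0(1^\secpar)$ we have $(\crs_\OT,\td_\OT)\sample\siml_\CRS(1^\secpar)$ and $\pk\sample\lossygen_\LE(1^\secpar)$, whereas in $\dist^{\ora_P(\crs_\DM,\cdot,\cdot,\cdot)}(\crs_\DM)$ with $\crs_\DM\sample\crsgen_\DM(1^\secpar,\hiding)$ we have $\crs_\OT\sample\crsgen_\OT(1^\secpar,\hiding)$ and $\pk\sample\lossygen_\LE(1^\secpar)$; in both cases every oracle answer is computed by running $\prove_\DM(\crs_\DM,\cdot,\cdot,\cdot)$. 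So I would construct an (unbounded-time) distinguisher $\B$ against the first item of statistical sender's security that receives a string $\crs_\OT$, samples $\pk\sample\lossygen_\LE(1^\secpar)$, sets $\crs_\DM:=(\crs_\OT,\pk)$, runs $\dist(\crs_\DM)$, answers every query $(\pkey,\statement,\witness)$ by running $\prove_\DM(\crs_\DM,\pkey,\statement,\witness)$, and finally outputs whatever $\dist$ outputs. When $\crs_\OT\sample\crsgen_\OT(1^\secpar,\hiding)$, $\B$ perfectly reproduces the $\ora_P$ experiment with $\crs_\DM\sample\crsgen_\DM(1^\secpar,\hiding)$; when $(\crs_\OT,\td_\OT)\sample\siml_\CRS(1^\secpar)$, it perfectly reproduces the $\ora^{(2)}_S$ experiment with $(\crs_\DM,\td_\DM)\sample\siml_0(1^\secpar)$, and the fact that $\B$ does not hold $\td_\OT$ is immaterial since $\siml^{(2)}_1$ does not use it. Hence $\B$'s distinguishing advantage equals the left-hand side of the claim, which is $\negl(\secpar)$ by the first item of statistical sender's security of $\Pi_\OT$.

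I do not expect a genuine obstacle here; the claim is essentially bookkeeping, and its entire content is recognizing that $\siml^{(2)}_1$ is trapdoor-free and identical to $\prove_\DM$, so that the whole gap between the two oracles collapses to the distribution of $\crs_\OT$. The one point to state with care is that $\B$ is unbounded-time (it internally runs the unbounded $\dist$ together with the QPT algorithm $\prove_\DM$ on each of polynomially many queries), which is permissible precisely because the relevant indistinguishability property of $\Pi_\OT$ is formulated against unbounded-time distinguishers. Combined with \cref{cla:simulation_zero_to_one} and \cref{cla:simulation_one_to_two}, this chain of hybrids then yields \cref{lem:DM_ZK}.
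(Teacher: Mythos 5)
Your proposal is correct and follows essentially the same route as the paper's proof: observe that $\siml^{(2)}_1$ is trapdoor-free and coincides exactly with $\prove_\DM$, so the two oracles are identical as functions of $\crs_\DM$, and then invoke the first item of statistical sender's security to conclude that the two ways of sampling $\crs_\DM$ (hence the two experiments) are statistically indistinguishable. The paper states this more tersely without writing out the explicit reduction $\B$, but your added care about $\B$ being unbounded-time (which is permissible since the OT property is statistical) is exactly the right point to make explicit.
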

\begin{proof}[Proof of \cref{cla:simulation_two_to_real}]
For any $(\crs_\DM,\td_\DM)\sample \siml_0(1^\secpar)$, $\pkey$, $\statement$, and $\witness$,  we have   
\[\ora^{(2)}_S(\td_\DM,\pkey,\statement,\witness)=\ora_P(\crs_\DM,\pkey,\statement,\witness)\]
observing that $\siml^{(2)}_1$ works in the exactly the same way as the honest proving algorithm.     
Moreover, we can see that the distributions of $\crs_\DM$ generated by $\siml_0(1^\secpar)$ and $\crsgen_\DM(1^\secpar,\hiding)$ are statistically indistinguishable by the first item of statistical sender's security in the hiding mode of $\Pi_\OT$. 
Therefore \cref{cla:simulation_two_to_real} follows.
\end{proof}

By combining 
\cref{cla:simulation_zero_to_one,cla:simulation_one_to_two,cla:simulation_two_to_real}, We can complete the proof of \cref{lem:DM_ZK}. 
\end{proof}
\fi
\section{CV-NIZK via Fiat-Shamir Transformation}\label{sec:Fiat-Shamir}
In this section, we construct CV-NIZK in the quantum random oracle model via the
Fiat-Shamir transformation.

\subsection{Definition}\label{sec:def_for_FS}
We give a formal definition of CV-NIZK in the \QROVP model.

\begin{definition}[CV-NIZK in the \QROVP Model]\label{def:qrovp_nizk}
A CV-NIZK for a $\QMA$ promise problem $L=(L_\yes,L_\no)$ in the \QROVP model 
w.r.t. a random oracle distribution $\ROdist$ 
consists of algorithms $\Pi=(\preprocess,\allowbreak\prove,\verify)$ with the following syntax:
\begin{description}
\item[$\preprocess(1^\secpar)$:]
This is a QPT algorithm that takes the security parameter $1^\secpar$ as input, and outputs a quantum proving key $\pkey$ and a classical  verification key $\vkey$.
We note that this algorithm is supposed to be run by the verifier as preprocessing, and $\pkey$ is supposed to be sent to the prover while $\vkey$ is supposed to be kept on verifier's side in secret. 
We also note that they can be used only once  and cannot be reused. 
\item[$\prove^{H}(\pkey,\statement,\witness^{\otimes k})$:] This is a QPT algorithm that 
is given quantum oracle access to the random oracle $H$. It
takes  
the proving key $\pkey$, a statement $\statement$, and $k=\poly(\secpar)$ copies $\witness^{\otimes k}$ of a witness $\witness\in R_L(\statement)$ as input, and outputs a classical proof $\pi$.
\item[$\verify^{H}(\vkey,\statement,\pi)$:]
This is a PPT algorithm that 
is given classical oracle access to the random oracle $H$. It  
takes 
the verification key $\vkey$, a statement $\statement$, and a proof $\pi$ as input, and outputs $\top$ indicating acceptance or $\bot$ indicating rejection. 
\end{description}
We require $\Pi$ to satisfy the following properties. 

\medskip
\noindent \underline{
\textbf{Completeness.}
}
For all
$\statement\in L_\yes\cap \bit^\secpar$, and $\witness\in R_L(\statement)$, we have 
\begin{align*}
    \Pr\left[
    \verify^H(\vkey,\statement,\pi)=\top 
    :
    \begin{array}{c}
          H\sample \ROdist \\
         (\pkey,\vkey) \sample \preprocess(1^\secpar)\\
         \pi \sample \prove^H(\pkey,\statement,\witness^{\otimes k})
    \end{array}
    \right]
    \geq 1-\negl(\secpar).
\end{align*}

\medskip
\noindent \underline{
\textbf{Adaptive Statistical Soundness.}
}
For all adversaries $\A$ that make at most $\poly(\secpar)$ quantum random oracle queries, we have 
\begin{align*}
    \Pr\left[
    \statement\in L_\no \land \verify^H(\vkey,\statement,\pi)=\top
    :
    \begin{array}{c}
          H\sample \ROdist \\
         (\pkey,\vkey) \sample \preprocess(1^\secpar)\\
       (\statement,\pi) \sample \A^H(\pkey)
    \end{array}
    \right]
    \leq \negl(\secpar).
\end{align*}

\medskip
\noindent \underline{
\textbf{Adaptive Multi-Theorem  Zero-Knowledge.}
}
For defining the zero-knowledge property in the QROM, we define the syntax of a simulator in the QROM following \cite{EC:Unruh15}. A simulator is given quantum access to the random oracle $H$ and classical access to reprogramming oracle $\reprogram$.
When the simulator queries $(x,y)$ to $\reprogram$, the random oracle $H$ is reprogrammed so that $H(x):=y$ while keeping the values on other inputs unchanged. 
Then the adaptive multi-theorem zero-knowledge property is defined as follows:

There exists  a QPT simulator $\siml$ with the above syntax such that for any QPT distinguisher $\dist$, we have 
\begin{align*}
    &\left|\Pr\left[\dist^{H,\ora_P^H(\cdot,\cdot,\cdot)}(1^\secpar)=1:
    \begin{array}{c}
          H \sample \ROdist
    \end{array}
    \right]\right.\\     &-\left.\Pr\left[\dist^{H,
    \ora_S^{H,\reprogram}(\cdot,\cdot,\cdot)}(1^\secpar)=1:
    \begin{array}{c}
          H \sample \ROdist
    \end{array}
    \right]\right|\leq \negl(\secpar)
\end{align*}
where  
$\dist$'s queries to the second oracle should be of the form
$(\pkey,\statement,\witness^{\otimes k})$ 
where $\witness\in R_L(\statement)$ and $\witness^{\otimes k}$ is unentangled with $\dist$'s internal registers, \footnote{
We remark that $\pkey$ is allowed to be entangled  with $\dist$'s internal registers unlike $\witness^{\otimes k}$.
See also footnote \ref{footnote:unentangled}.}
$\ora_P^H(\pkey,\statement,\witness^{\otimes k})$  
returns $\prove^H(\pkey,\statement,\witness^{\otimes k})$, and
$\ora_S^{H,\reprogram}(\pkey,\statement,\witness^{\otimes k})$ returns 
 $\siml^{H,\reprogram}(\pkey,\statement)$.
\end{definition}

\begin{remark}
Remark that the ``multi-theorem" zero-knowledge does not mean that a preprocessing can be reused many times. It rather means that a single random oracle can be reused as long as a fresh preprocessing is run every time. This is consistent to the definition in the \CRSVP model (\cref{def:dual-mode}) if we think of the random oracle as replacement of CRS.
\end{remark}

\subsection{Building Blocks}
We use the two cryptographic primitives, a non-interactive commitment scheme and
a $\Sigma$-protocol with quantum preprocessing, for our construction.

\ifnum\submission=1
The definition of  a non-interactive commitment scheme is given 
in \cref{sec:def_commitment}.
\else
\begin{definition}[Non-interactive commitment scheme]\label{def:commitment_QROM}
A non-interactive commitment scheme with the message space $\mathcal{M}$   
is a tuple of PPT algorithms $(\commit,\verify)$ with the following syntax:  
\begin{description}
\item[$\commit(1^\secpar, m):$]
It takes the security parameter $1^\secpar$ and a message $m\in \mathcal{M}$ as input, and outputs a commitment $\com$ and a decommitment $\decom$. 
\item[$\verify(1^\secpar, m,\com, \decom):$]
It takes the security parameter $1^\secpar$,  a message $m\in \mathcal{M}$, commitment $\com$, and decommitment $\decom$ as input, and outputs $\top$ indicating acceptance or $\bot$ indicating rejection.  
\end{description}
We require a non-interactive commitment scheme to satisfy the following properties: 

\medskip
\noindent \underline{
\textbf{Perfect Correctness.}
}
For any $\secpar\in \mathbb{N}$ and $m\in \mathcal{M}$,  we have 
\begin{align*}
    \Pr[\verify(1^\secpar,m,\com,\decom)=\top:(\com,\decom)\sample \commit(1^\secpar,m)]=1.
\end{align*}

\medskip
\noindent \underline{
\textbf{Perfect Binding.}
}
For all $\secpar \in \mathbb{N}$, there do not exist $m,m',\com,\decom,\decom'$ such that $m\neq m'$ and $\verify(1^\secpar,m,\com,\decom)=\verify(1^\secpar,m',\com,\decom')=\top$.  

\noindent \underline{
\textbf{Computational Hiding.}
}
For any QPT adversary $\A$  and messages $m_0,m_1$, we have 
\begin{align*}
    \left|
    \begin{array}{cc}
    \Pr[\A(\com)=1:
    (\com,\decom)\sample \commit(1^\secpar,m_0)]\\
    -\Pr[\A(\com)=1: 
    (\com,\decom)\sample \commit(1^\secpar,m_1)]
        \end{array}
    \right|=\negl(\secpar).
\end{align*}
\end{definition}

It is known that a non-interactive commitment scheme exists assuming the existence of injective one-way functions or perfectly correct public key encryption (or more generally key exchange protocols) \cite{EPRINT:LomSch19}. 
In the QROM, a non-interactive commitment scheme exists without any assumption since a random oracle with a sufficiently large range is injective with overwhelming probability over the choice of the random oracle and hard to invert even with quantum access to the oracle \cite{BBBV}. In our constructions and security proofs, we use a non-interactive commitment scheme in the standard model. This is for notational simplicity and also for  clarifying that the full power of random oracles is not needed for this component.  We stress that this does not mean that we assume an additional assumption for our construction of NIZK since a non-interactive commitment scheme unconditionally exists in the QROM as mentioned above and all security proofs work similarly with a non-interactive commitment scheme in the QROM.
\fi

\begin{definition}[$\Sigma$-protocol with Quantum Preprocessing]\label{def:sigma_q_prepro}
A $\Sigma$-protocol with quantum preprocessing for a $\QMA$ promise problem $L=(L_\yes,L_\no)$ 
consists of algorithms $\Pi=(\preprocess,\prove_1,\allowbreak\verify_1,\prove_2,\verify_2)$ with the following syntax:
\begin{description}
\item[$\preprocess(1^\secpar)$:]
This is a QPT algorithm that takes the security parameter $1^\secpar$ as input, and outputs a quantum proving key $\pkey$ and a classical verification key $\vkey$.
We note that this algorithm is supposed to be run by the verifier as preprocessing, and $\pkey$ is supposed to be sent to the prover while $\vkey$ is supposed to be kept on verifier's side in secret. 
We also note that they can be used only once  and cannot be reused. 
\item[$\prove_1(\pkey,\statement,\witness^{\otimes k})$:] This is a QPT algorithm that 
takes  
the proving key $\pkey$, a statement $\statement$, and $k=\poly(\secpar)$ copies $\witness^{\otimes k}$ of a witness $\witness\in R_L(\statement)$ as input, and outputs 
a classical message $\msg_1$ and a state $\st$.
\item[$\verify_1(1^\secpar)$:]
This is a PPT algorithm that  
takes 
the security parameter $1^\secpar$, and outputs a classical message $\msg_2$, which is uniformly sampled from a certain set.
\item[$\prove_2(\st,\msg_2)$:] This is a QPT algorithm that 
takes  
the state $\st$ and the message $\msg_2$ as input,
and outputs a classical message $\msg_3$.
\item[$\verify_2(\vkey,\statement,\msg_1,\msg_2,\msg_3)$:]
This is a PPT algorithm that  
takes 
the verification key $\vkey$, the statement $\statement$, and classical messages $\msg_1,\msg_2,\msg_3$ as input,
and outputs $\top$ indicating acceptance or $\bot$ indicating rejection.
\end{description}
We require $\Pi$ to satisfy the following properties. 

\medskip
\noindent \underline{
\textbf{$c$-Completeness.}
}
For all
$\statement\in L_\yes\cap \bit^\secpar$, and $\witness\in R_L(\statement)$, we have 
\begin{align*}
    \Pr\left[
    \verify_2(\vkey,\statement,\msg_1,\msg_2,\msg_3)=\top 
    :
    \begin{array}{c}
         (\pkey,\vkey) \sample \preprocess(1^\secpar)\\
         (\msg_1,\st) \sample \prove_1(\pkey,\statement,\witness^{\otimes k})\\
         \msg_2 \sample \verify_1(1^\secpar)\\
         \msg_3 \sample \prove_2(\st,\msg_2)\\
    \end{array}
    \right]
    \geq c.
\end{align*}

\medskip
\noindent \underline{
\textbf{(Adaptive Statistical) $s$-soundness.}
}
For all adversary $(\A_1,\A_2)$, we have 
\begin{align*}
    \Pr\left[
    \statement\in L_\no \land \Sigma.\verify_2(\vkey,\statement,\msg_1,\msg_2,\msg_3)=\top
    :
    \begin{array}{c}
         (\pkey,\vkey) \sample \preprocess(1^\secpar)\\
       (\statement,\st,\msg_1) \sample \A_1(\pkey)\\
       \msg_2 \sample \verify_1(1^\secpar)\\
       \msg_3 \sample \A_2(\st,\msg_2)\\
    \end{array}
    \right]
    \leq s.
\end{align*}

\medskip
\noindent \underline{
\textbf{Special  Zero-Knowledge.}}
There exists a QPT algorithm $\siml$ such that 
for any $\statement\in L_\yes$, $\witness\in R_L(\statement)$, $\msg_2$, and QPT adversary $(\A_1,\A_2)$, we have 
\begin{align*}
\left|
\begin{array}{cc}
    &\Pr\left[
    \A_2(\st_\A,\statement,\msg_1,\msg_2,\msg_3)=1
    :
    \begin{array}{c}
         (\pkey,\st_\A) \sample \A_1(1^\secpar)\\
       (\msg_1,\st) \sample \prove_1(\pkey,\statement,\witness^{\otimes k})\\
       \msg_3 \sample \prove_2(\st,\msg_2)\\ 
    \end{array}
    \right]\\
    - &\Pr\left[
    \A_2(\st_\A,\statement,\msg_1,\msg_2,\msg_3)=1
    :
    \begin{array}{c}
         (\pkey,\st_\A) \sample \A_1(1^\secpar)\\
       (\msg_1,\msg_3) \sample \siml(\pkey,\statement,\msg_2)\\
    \end{array}
    \right]
    \end{array}
    \right|
    \leq \negl(\secpar).
\end{align*}

\medskip
\noindent \underline{
\textbf{High Min-Entropy.}}
$\prove_1$ can be divided into the ``quantum part" and ``classical part" as follows: 
\begin{description}
\item[$\prove_1^Q(\pkey,\statement,\witness^{\otimes k})$:]
This is a QPT algorithm that outputs a classical string $\st'$. 
\item[$\prove_1^C(\st')$:]
This is a PPT algorithm that outputs $\msg_1$ and $\st$. 
\end{description}
Moreover, for any $\st'$ generated by $\prove_1^Q$, 
we have 
\begin{align*}
    \max_{\msg_1^*}\Pr[\prove_1^C(\st')=\msg_1^*]=\negl(\secpar).
\end{align*}  
\end{definition}

\ifnum\submission=1
There are some remarks on the definition. See \cref{sec:remark_definition_sigma}.
\else
\begin{remark}[On Soundness]
Some existing works require a $\Sigma$-protocol to satisfy special soundness, which means that one can extract a witness from two accepting transcripts whose first messages are idential and the second messages are different. This property is often useful for achieving proof of knowledge. 
We do not require special soundness since we do not consider proof of knowledge in this paper and our construction does not seem to satisfy special soundness. 
\end{remark}
\begin{remark}[On Zero-Knowledge]
Our definition of the zero-knowledge property is based on the special honest-verifier zero-knowledge often required for classical $\Sigma$-protocol without preprocessing. 
However, our definition considers a partially malicious verifier that maliciously runs the preprocessing, which is a crucial difference from the classical case. 
This is why we call this property as special zero-knowledge rather than special honest-verifier zero-knowledge.
Note that special zero-knowledge property is weaker than the standard zero-knowledge property for general interactive protocols since the standard zero-knowledge considers malicious verifiers that adaptively choose $\msg_2$ rather than fixing it. 
\end{remark}
\begin{remark}[On High Min-Entropy]
We require the high min-entropy property because this property is needed in the proof of adaptive multi-theorem zero-knowledge property of the NIZK obtained by the Fiat-Shamir transform in \cref{sec:construction_QRO}. 
The property requires two requirements: the first is about the structure of $\prove_1$ and the second is that $\msg_1$ has a high min-entropy.
The latter is needed even for Fiat-Shamir transform for $\Sigma$-protocols for $\NP$ (e.g., see \cite{EC:Unruh15}). 
On the other hand, the former is unique to our work, and we do not know if this is inherent. However, since this requirement makes the security proof of our NIZK easier and our construction of $\Sigma$-protocol with quantum preprocessing satisfies this property, we include this as a default requirement.
\end{remark}
\fi

\begin{lemma}[Gap Amplification for $\Sigma$-protocol with quantum preprocessing]\label{lem:sigma_amplification}
If there exists a $\Sigma$-protocol with quantum preprocessing for a promise problem $L$ that satisfies $c$-completeness, $s$-soundness, special zero-knowledge, and high min-entropy for some $0<s<c<1$ such that $c-s>1/\poly(\secpar)$, 
then there exists a $\Sigma$-protocol with quantum preprocessing for $L$ with $(1-\negl(\secpar))$-completeness, $\negl(\secpar)$-soundness, special zero-knowledge, and high min-entropy.
\end{lemma}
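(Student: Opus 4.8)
The plan is to take the $N$-fold parallel repetition with a threshold acceptance rule, for a suitable $N=\poly(\secpar)$. Writing $c-s\ge 1/p(\secpar)$ for a polynomial $p$, set $N:=\secpar\cdot p(\secpar)^2=\poly(\secpar)$ and define $\Pi^{N}=(\preprocess^{N},\prove_1^{N},\verify_1^{N},\prove_2^{N},\verify_2^{N})$ as follows: $\preprocess^{N}(1^\secpar)$ runs $\preprocess$ independently $N$ times to get $(\pkey_i,\vkey_i)_{i\in[N]}$ and sets $\pkey:=(\pkey_1,\dots,\pkey_N)$, $\vkey:=(\vkey_1,\dots,\vkey_N)$; $\prove_1^{N}$ runs $\prove_1$ on the $i$-th key together with a fresh block of $k$ witness copies (so $\Pi^{N}$ consumes $Nk=\poly(\secpar)$ copies), producing $(\msg_1^{(i)},\st^{(i)})_i$; $\verify_1^{N}(1^\secpar)$ samples $N$ independent challenges $\msg_2^{(i)}\sample\verify_1(1^\secpar)$; $\prove_2^{N}$ runs $\prove_2$ coordinate-wise; and $\verify_2^{N}$ accepts iff $\verify_2$ accepts at least $t:=\lceil\frac{c+s}{2}N\rceil$ of the $N$ sub-transcripts. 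This is still a three-message $\Sigma$-protocol with quantum preprocessing, so the only thing to check is the four required properties.

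Completeness and high min-entropy are the easy parts. For completeness, in an honest execution the $N$ sub-transcripts are accepted independently, each with probability at least $c$, and $c>\frac{c+s}{2}\ge t/N$ for all but finitely many $\secpar$; hence by Hoeffding's inequality the number of accepted copies is below $t$ with probability at most $e^{-\Omega(N(c-s)^2)}=e^{-\Omega(\secpar)}=\negl(\secpar)$. For high min-entropy, let $\prove_1^{N,Q}$ run $\prove_1^{Q}$ on each copy, outputting $(\st'_1,\dots,\st'_N)$, and $\prove_1^{N,C}$ run $\prove_1^{C}$ on each $\st'_i$; the outputs $\msg_1^{(i)}$ are produced independently given $(\st'_i)_i$, so $\max_{\msg_1^*}\Pr[\prove_1^{N,C}((\st'_i)_i)=\msg_1^*]=\prod_{i}\max_{\msg_1^{(i)*}}\Pr[\prove_1^{C}(\st'_i)=\msg_1^{(i)*}]=\negl(\secpar)$, since already one factor is negligible.

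For special zero-knowledge, define $\siml^{N}(\pkey,\statement,\msg_2)$ to parse $\pkey=(\pkey_i)_i$ and $\msg_2=(\msg_2^{(i)})_i$, run $\siml(\pkey_i,\statement,\msg_2^{(i)})\to(\msg_1^{(i)},\msg_3^{(i)})$ for each $i$, and output the two tuples. Indistinguishability from the real transcript follows by a standard hybrid argument over the $N$ copies, parallel to the proof of \cref{lem:CV-NIZK_amplification}: in the $i$-th step we replace the real $i$-th copy by a simulated one, and any distinguisher for this step yields a distinguisher against single-shot special zero-knowledge whose preprocessing algorithm $\A_1$ internally runs the partially malicious $\A_1^{N}$, hands out $\pkey_i$, and keeps the other $\pkey_j$ and its own state as side information, while $\A_2$ generates the remaining $N-1$ copies itself (real copies from the known $\pkey_j$, the available witness copies and the given $\msg_2^{(j)}$; simulated copies by running $\siml$), assembles the full transcript, and invokes $\A_2^{N}$. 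Each of the $N$ steps costs $\negl(\secpar)$, so the total loss is $\negl(\secpar)$.

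The soundness amplification is where I expect the real work to be. The plan is the standard argument for parallel repetition of three-message public-coin proofs: condition on the prefix $(\pkey_i)_i,(\statement,\st,\msg_1^{(1)},\dots,\msg_1^{(N)})$ produced by $\A_1^{N}$; since $\verify_2^{N}$ is a coordinate-wise count and the challenges $\msg_2^{(i)}$ are independent, no cheating second-round strategy does better than responding in each coordinate separately, so the number of accepted copies is dominated by a sum $\sum_i Y_i$ of independent Bernoulli variables, where $Y_i$ corresponds to coordinate $i$. Reducing to single-shot soundness via the embedding of the challenge preprocessing at a fixed (or uniformly random) coordinate — which is legitimate because distinct copies use independent preprocessings and are verified separately — one shows that $\E\!\left[\mathbf{1}[\statement\in L_\no]\sum_i Y_i\right]\le sN$; combined with Hoeffding's inequality for $\sum_i Y_i$ around its conditional mean and the fact that $t/N>s$ with a $\frac{1}{\poly(\secpar)}$ margin, the probability of exceeding $t$ is $e^{-\Omega(N(c-s)^2)}=\negl(\secpar)$. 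The adaptive choice of $\statement$ is absorbed, as in the soundness analyses in \cref{sec:CV-NIZK}, by summing over $\statement\in L_\no$. The delicate step — and the one I would spend most effort on — is making the reduction to single-shot soundness fully precise, in particular justifying the coordinate-wise decomposition given that the copy-wise verification keys $\vkey_i$ are secret from the prover, so that the resulting acceptance indicators can be treated as the independent Bernoulli variables that drive the concentration bound; the remaining properties ($c$-vs-$s$ gap, passing the gap amplification through, and preserving the three-message structure) are then routine.
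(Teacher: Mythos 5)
Your construction is the same as the paper's: $N$-fold parallel repetition with a threshold acceptance rule. The paper's own proof of \cref{lem:sigma_amplification} is a two-sentence assertion that this works, so your write-up is strictly more detailed. Your treatment of completeness and high min-entropy is correct, and your hybrid argument for special zero-knowledge mirrors the one the paper spells out for \cref{lem:CV-NIZK_amplification} (with the witness copies needed to simulate the untouched coordinates hardwired into the single-shot adversary as non-uniform advice, which the quantification order in the definition permits).

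The soundness step, however, does not close as you have combined the ingredients, and this is the one place where the proposal is wrong rather than merely unelaborated. Bounding $\mathbb{E}\bigl[\mathbf{1}[\statement\in L_\no]\sum_i Y_i\bigr]\le sN$ and then invoking Hoeffding ``around the conditional mean'' is insufficient: Hoeffding controls the deviation of the accepted count from $\sum_i r_i$, where $r_i$ is coordinate $i$'s acceptance probability conditioned on the shared prefix, but the $r_i$ are random variables correlated across coordinates through that prefix (and through possible entanglement among the teleported states), and a bound on their expectation yields only $\Pr\bigl[\tfrac{1}{N}\sum_i r_i>t/N\bigr]\le 2s/(c+s)$ by Markov --- a constant. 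This is exactly the standard obstruction to threshold direct-product theorems, and it shows the lemma is not a formal consequence of the abstract $s$-soundness hypothesis alone. What rescues the argument for the protocol actually constructed in the paper is that its single-shot soundness bound is \emph{pointwise}: by the analysis in the proof of \cref{lem:sigma_completeness}, the verifier's accepting POVM element $E$ for one coordinate (averaged over that coordinate's own verifier randomness) satisfies $\Tr[E\rho]\le 1-\beta/N'+\negl(\secpar)$ for \emph{every} state $\rho$, since $\Tr(\rho\ham_\statement)\ge\beta$ holds for every $\rho$ when $\statement\in L_\no$; equivalently $E\preceq (s+\negl(\secpar))I$. Because the $N$ coordinates are measured on disjoint registers with independent verifier randomness, this operator bound implies that the probability coordinate $i$ accepts, conditioned on the adversary's entire behavior \emph{and} on the outcomes of the other coordinates, is at most $s+\negl(\secpar)$; hence the accepted count is stochastically dominated by a $\mathrm{Binomial}(N,s+\negl(\secpar))$ variable and the Hoeffding tail at threshold $t$ is negligible. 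You should therefore either prove the lemma for protocols whose soundness holds in this conditional/operator sense (as $\Pi_\Sigma$'s does) or supply a coordinate-wise stochastic-domination argument of this kind; the expectation bound you propose cannot do the job on its own.
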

\begin{proof}
It is clear that the parallel repetition can amplify the completeness-soundness gap, 
and that
the high min-entropy is preserved under the parallel repetition. 
We can also show that parallel repetition preserves the special zero-knowledge property by a standard hybrid argument. 
\end{proof}

\if0
\begin{definition}[$\Xi$-protocol\cite{FOCS:BroGri20}]
A $\Xi$-protocol for a $\QMA$ promise problem $L=(L_\yes,L_\no)$ 
that consists of algorithms $\Pi=(\prove_1,\verify_1,\prove_2,\verify_2)$ is the same as the $\Sigma$-protocol with quantum preprocessing except for the following points.
\begin{enumerate}
\item
There is no quantum preprocessing $\preprocess$.
\item
The first message $\alpha$ is a quantum state.
\item
The verification algorithm $\verify_2$ is a QPT algorithm.
\end{enumerate}
\end{definition}

\begin{definition}[Special $\Xi$-protocol]
We say that a $\Xi$-protocol is special if
the verification algorithm $\verify_2$ is the following QPT algorithm.
\begin{enumerate}
\item
It first measures each qubit of $\alpha$ in a Pauli basis.
\item
The measurement results and all other obtained classical information are postprocessed in a PPT algorithm to output $\top$ or $\bot$.
\end{enumerate}
\end{definition}
\fi

\begin{theorem}\label{thm:sigma}
If a non-interactive commitment scheme exists, then there exists a $\Sigma$-protocol with quantum preprocessing for $\QMA$.
\end{theorem}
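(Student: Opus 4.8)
The plan is to adapt the teleportation technique of \cref{sec:CV-NIZK} to the $\Xi$-protocol of Broadbent and Grilo~\cite{FOCS:BroGri20}, turning its quantum first message into a classical one at the price of the quantum preprocessing. Concretely, I would let $\preprocess(1^\secpar)$ sample $(W_1,\dots,W_N)\sample\{X,Y,Z\}^N$ and $(m_1,\dots,m_N)\sample\bit^N$ and output $\pkey:=\rho_P=\bigotimes_j U(W_j)\ket{m_j}$ together with $\vkey:=(W_1,\dots,W_N,m_1,\dots,m_N)$, i.e. exactly the Pauli-basis states produced by $\setup$ in \cref{fig:CV-NIZK}. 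In $\prove_1$ the prover samples $(\whx,\whz)\sample\bit^N\times\bit^N$, forms $\rho'_\hist:=X^{\whx}Z^{\whz}\rho_\hist Z^{\whz}X^{\whx}$ from the history state, Bell-measures the $j$-th qubits of $\rho'_\hist$ and $\rho_P$ to get $(x_j,z_j)$, commits to each pair as $(\com_j,\decom_j)\sample\commit(1^\secpar,(\whx_j,\whz_j))$, and sets $\msg_1:=(x,z,\{\com_j\}_{j\in[N]})$, $\st:=(\whx,\whz,\{\decom_j\}_j)$. The challenge $\msg_2$ is a uniformly random subset $S_V\subseteq[N]$ with $1\le|S_V|\le 5$ (the role $S_V$ plays inside $\Pi_\NIZK$). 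Then $\prove_2$ outputs $\msg_3:=\{(\whx_j,\whz_j,\decom_j)\}_{j\in S_V}$, and $\verify_2$ first checks every opening against the corresponding $\com_j$ and then runs the acceptance test of $\verify$ in \cref{fig:CV-NIZK} using $(S_V,\{W_j,m_j,x_j,z_j,\whx_j,\whz_j\}_{j\in S_V})$. The commitments serve the same purpose as the lossy-encryption commitments in \cref{fig:dual}: by perfect binding the keys on $S_V$ are pinned down before $S_V$ is revealed, which is precisely what soundness needs.

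For completeness and soundness I would reuse the virtual-protocol analysis of \cref{lem:NIZK_completeness_soundness} almost verbatim. Replacing $\rho_P$ by the prover's half of $N$ Bell pairs (via \cref{lem:statecollapsing}) and then pushing the Pauli correction past the verifier's basis measurements (via \cref{lem:XZ_before_measurement}), exactly as in the passage from \cref{fig:virtual1_CV-NIZK} to \cref{fig:virtual2_CV-NIZK}, shows that an honest execution accepts with probability $1-\Tr(\rho_\hist\ham_\statement)/N'\ge 1-\alpha/N'$ for $N':=3^5\sum_{i=1}^5\binom{N}{i}$, and that for any (even unbounded) cheating $(\A_1,\A_2)$ the acceptance probability equals $1-\Tr(\sigma\ham_\statement)/N'$ for a single $N$-qubit state $\sigma$ on the verifier's register determined by $\msg_1$, the fixed $\vkey$, and the $(\whx,\whz)$ to which $\{\com_j\}_{j\in[N]}$ is perfectly bound, all fixed before $\msg_2$; hence acceptance is at most $1-\beta/N'$ when $\statement\in L_\no$ by \cref{lem:five_local_Hamiltonian}, and soundness is statistical since binding is perfect and the quantum analysis is exact. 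As $(1-\alpha/N')-(1-\beta/N')=(\beta-\alpha)/N'\ge 1/\poly(\secpar)$, \cref{lem:sigma_amplification} upgrades this to $(1-\negl(\secpar))$-completeness and $\negl(\secpar)$-soundness. The high min-entropy property follows from the split $\prove_1=(\prove_1^Q,\prove_1^C)$ in which $\prove_1^Q$ does the quantum work and outputs the classical string $\st':=(x,z,\whx,\whz)$ while $\prove_1^C$ computes the commitments: conditioned on $\st'$, the string $\msg_1$ still contains $\{\com_j\}_j$, which has super-logarithmic min-entropy after the standard modification of appending a fresh random string to each commitment (this preserves perfect binding and computational hiding), or automatically in the QROM instantiation.

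For special zero-knowledge I would mirror the simulator of \cref{lem:NIZK_zero-knowledge}. Given a possibly maliciously generated $\pkey=\rho_P$, the statement $\statement$, and the challenge $\msg_2=S_V$, the simulator $\siml$ computes $\rho_{S_V}:=\siml_{\hist}(\statement,S_V)$, samples $(\whx_j,\whz_j)\sample\bit^2$ for $j\in S_V$, Bell-measures the $|S_V|$-qubit state $(\prod_{j\in S_V}X_j^{\whx_j}Z_j^{\whz_j})\rho_{S_V}(\prod_{j\in S_V}Z_j^{\whz_j}X_j^{\whx_j})$ against the corresponding qubits of $\rho_P$ to obtain $\{x_j,z_j\}_{j\in S_V}$, samples $(x_j,z_j)\sample\bit^2$ for $j\notin S_V$, commits to $(\whx_j,\whz_j)$ for $j\in S_V$ and to $0$ for $j\notin S_V$, and outputs $\msg_1$ and $\msg_3$ accordingly; this is efficient because $|S_V|\le 5$. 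Indistinguishability is proved through the same hybrids as \cref{lem:NIZK_zero-knowledge}: one extra hybrid replaces the $j\notin S_V$ commitments by commitments to $(\whx_j,\whz_j)$ using computational hiding; then \cref{lem:Pauli_mixing} makes the honest $\rho'_\hist$ equal, from the view of a distinguisher that never sees $\{\whx_j,\whz_j\}_{j\notin S_V}$, to $(\prod_{j\in S_V}X_j^{\whx_j}Z_j^{\whz_j})\,\Tr_{[N]\setminus S_V}\rho_\hist\,(\prod_{j\in S_V}Z_j^{\whz_j}X_j^{\whx_j})\otimes I_{[N]\setminus S_V}/2^{|[N]\setminus S_V|}$, which by \cref{lem:five_local_Hamiltonian} is $\negl(\secpar)$-close in trace norm to its counterpart with $\rho_{S_V}$; and Bell-measuring a maximally mixed qubit (the $j\notin S_V$ qubits of $\rho'_\hist$) against an arbitrary qubit of $\rho_P$ yields a uniform outcome and leaves the remaining global state unchanged, so the $j\notin S_V$ coordinates agree in the two experiments. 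Gap amplification via \cref{lem:sigma_amplification} preserves all four properties, and a non-interactive commitment scheme exists by hypothesis, giving the theorem. The main obstacle is the final part of the zero-knowledge argument: verifying that the proof of \cref{lem:NIZK_zero-knowledge} still goes through when $\pkey=\rho_P$ is adversarially prepared and may be entangled with the distinguisher's register — i.e. that the simulator's Bell measurements are statistically benign outside $S_V$ and faithful inside $S_V$ — together with checking in the soundness proof that perfect binding really fixes the state $\sigma$ even though the prover learns $S_V$ before producing $\msg_3$.
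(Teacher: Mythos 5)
Your proposal is correct and follows essentially the same route as the paper: teleport the $\Xi$-protocol's quantum first message through the preprocessed Pauli-basis states, use perfectly binding commitments so that the teleportation keys are fixed before the challenge subset is revealed, and simulate via $\siml_{\hist}$ plus Pauli mixing outside the challenged qubits. The only (immaterial) difference is that the paper commits directly to the Bell-measurement outcomes $(x_j,z_j)$ and puts nothing else in $\msg_1$, whereas you add a redundant one-time pad $(\whx,\whz)$, send $(x,z)$ in the clear, and commit to the pad -- since only the XOR of the two keys matters, the two variants are equivalent and your completeness, soundness, zero-knowledge, and min-entropy arguments all go through as in the paper's proof.
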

As mentioned in \cref{sec:def_for_FS}, a non-interactive commitment scheme unconditionally exists in the QROM. Therefore, the above theorem implies the following corollary.
\begin{corollary}\label{cor:sigma}
There exists a $\Sigma$-protocol  with quantum preprocessing for $\QMA$ in the QROM.
\end{corollary}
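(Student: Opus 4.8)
The plan is to obtain \cref{cor:sigma} as an immediate instantiation of \cref{thm:sigma}, by exhibiting a non-interactive commitment scheme that exists unconditionally in the quantum random oracle model and then plugging it into the construction guaranteed by \cref{thm:sigma}. First I would recall the standard QROM commitment: for a random oracle $H$ with a sufficiently large range, let $\commit(1^\secpar,m)$ sample a fresh randomness $r\sample \bit^{\secpar}$ and output a commitment $\com := H(m\concat r)$ together with a decommitment $\decom := r$, and let $\verify(1^\secpar,m,\com,\decom)$ accept iff $H(m\concat \decom)=\com$. Perfect correctness is immediate. Perfect (statistical) binding holds because, with a long enough output length, $H$ is injective with overwhelming probability over $H\sample \ROdist$, so no two distinct inputs $m\concat r$ and $m'\concat r'$ with $m\neq m'$ can be opened from the same $\com$; this yields binding against all adversaries except on a negligible-probability event over the oracle. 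Computational hiding follows from the hardness of predicting any bit of a random oracle output under polynomially many quantum queries, which is the content of the standard search lower bound of \cite{BBBV}: a QPT distinguisher separating commitments to $m_0$ from commitments to $m_1$ with non-negligible advantage, given $\com=H(m_b\concat r)$ with $r$ uniform and secret, can be converted into an algorithm that recovers a preimage of $\com$, contradicting that bound.

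Next I would invoke \cref{thm:sigma}, which states that the existence of a non-interactive commitment scheme suffices to build a $\Sigma$-protocol with quantum preprocessing for $\QMA$. The only point needing a line of justification is that the construction and security reductions underlying \cref{thm:sigma} use the commitment scheme in a black-box manner: the prover only invokes $\commit$, the verifier only invokes $\verify$, and the proofs of $c$-completeness, $s$-soundness, special zero-knowledge, and high min-entropy refer only to the correctness, binding, and hiding interfaces of the commitment. Therefore, substituting the QROM commitment above into that construction yields a $\Sigma$-protocol with quantum preprocessing for $\QMA$ in the QROM, where the honest algorithms and the verifier now additionally have classical access to $H$ in order to evaluate $\commit$ and $\verify$, and adversaries are the natural QROM adversaries making at most $\poly(\secpar)$ quantum oracle queries; as the paper already notes in \cref{sec:def_for_FS}, this does not introduce any extra computational assumption.

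Finally I would address the mild subtlety that the hiding property used inside the special zero-knowledge reduction of \cref{thm:sigma} must now be read as hiding against QPT adversaries with quantum oracle access; since that reduction is an ordinary black-box polynomial-time reduction, it preserves the query bound and relays oracle queries faithfully, so the negligible gaps claimed in \cref{thm:sigma} remain negligible in the QROM. Likewise, the binding-based step in the soundness proof now additionally fails only on the negligible-probability event that $H$ is non-injective, which is absorbed into the overall $\negl(\secpar)$ slack. I expect no genuine obstacle: all the content sits in \cref{thm:sigma}, and the corollary is the routine observation that the required commitment exists for free in the QROM.
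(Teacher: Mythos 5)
Your proposal is correct and matches the paper's own route: the paper likewise obtains \cref{cor:sigma} by noting that a non-interactive commitment scheme (the standard $H(m\concat r)$ construction, injective with overwhelming probability and hiding by quantum query lower bounds \cite{BBBV}) exists unconditionally in the QROM and then invoking \cref{thm:sigma}, whose construction uses the commitment only in a black-box way. The extra details you supply about binding failing only on a negligible-probability oracle event and the reductions relaying oracle queries are exactly the "security proofs work similarly in the QROM" remark the paper makes in \cref{sec:def_for_FS}.
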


\ifnum\submission=1
Proof of \cref{thm:sigma} is given in \cref{sec:proof_sigma}.
\else
\begin{proof}[Proof of \cref{thm:sigma}]
Let $L=(L_\yes,L_\no)$ be a $\QMA$ promise problem, and $\ham_\statement$, $N$, $M$, $p_{i}$, $s_{i}$, $P_i$, $\alpha$, $\beta$, and $\rho_\hist$ be as in \cref{lem:five_local_Hamiltonian} for the promise problem $L$.  
We let $N':=3^5\sum_{i=1}^{5}{N \choose i}$ similarly to \cref{lem:NIZK_completeness_soundness}.
Let $\Pi_{\mathsf{comm}}=(\commit_{\mathsf{comm}},\verify_{\mathsf{comm}})$ be a non-interactive commitment scheme as defined in \cref{def:commitment_QROM}.
Then our $\Sigma$-protocol with quantum preprocessing
$\Pi_{\Sigma}=(\Sigma.\preprocess,\Sigma.\prove_1,\Sigma.\verify_1,\allowbreak\Sigma.\prove_2,\Sigma.\verify_2)$ for $L$ is described in \cref{fig:sigma}.

\begin{figure}[hpt]
\rule[1ex]{\textwidth}{0.5pt}
\begin{description}
\item[$\Sigma.\preprocess(1^\secpar)$:]
It chooses
$(W_1,...,W_N)\sample \{X,Y,Z\}^{N}$ and $(m_1,...,m_N)\sample \{0,1\}^{N}$,
and outputs a proving key $\pkey:=\rho_P:=\bigotimes_{j=1}^N(U(W_j)|m_j\rangle)$ 
and a verification key $\vkey:=(
W_1,...,W_N,m_1,...,m_N)$.

\item[$\Sigma.\prove_1(\pkey,\statement,\witness)$:]
It parses $\rho_P\la \pkey$, 
and generates the history state $\rho_\hist$ for $\ham_\statement$ from $\witness$. 
It measures $j$-th qubits of $\rho_\hist$ and $\rho_P$ in the Bell basis for $j\in [N]$.    
Let $x:=x_1\concat x_2\concat...\concat x_N$, and $z:=z_1\concat z_2\concat...\concat z_N$ where 
$(x_j,z_j)\in\{0,1\}^2$ denotes the outcome of $j$-th measurement.
It computes $\commit_{\mathsf{comm}}(1^\secpar,(x_j,z_j))\ra (\com_j,d_j)$ for each $j\in[N]$.
It outputs a classical message $\msg_1:=\{\com_j\}_{j\in[N]}$ and the state $\st$, which is its entire final state.

\item[$\Sigma.\verify_1(1^\secpar)$:]
It chooses a subset $S\subset[N]$ such that $1\le|S|\le5$ uniformly at random, and outputs $\msg_2:=S$.

\item[$\Sigma.\prove_2(\st,\msg_2)$:]
It parses $\st$ as the final entire state of $\Sigma.\prove_1$ and
$\msg_2\la S$.
It outputs
$\msg_3:=(\{d_j\}_{j\in S},\{x_j,z_j\}_{j\in S})$.   

\item[$\Sigma.\verify_2(\vkey,\statement,\msg_1,\msg_2,\msg_3)$:]
It parses
$(W_1,...,W_N,m_1,...,m_N)\la \vkey$,  
$\{\com_j\}_{j\in[N]}\la\msg_1$,
$S\la\msg_2$, and $(\{d_j\}_{j\in S},\{x_j,z_j\}_{j\in S})\la \msg_3$.
It computes $\verify_{\mathsf{comm}}(1^\secpar,(x_j,z_j),\com_j,d_j)$ for all $j\in S$. If not all outputs are $\top$, it outputs $\bot$ and aborts.
It
chooses $i\in [M]$ according to the probability distribution defined by $\{p_{i}\}_{i\in[M]}$ (i.e., chooses $i$ with probability $p_{i}$).
Let
\begin{eqnarray*}
S_i:=\{j\in[N]~|~\mbox{$j$th Pauli operator of $P_i$ is not $I$}\}.
\end{eqnarray*}
We note that we have $1\leq |S_i|\leq 5$ by the $5$-locality of $\ham_\statement$. 
We say that $P_i$ is consistent to $(S,\{W_j\}_{j\in S})$  if and only if 
$S_i = S$ and
the $j$th Pauli operator of
$P_i$ is $W_j$ for all
$j\in S_i$.    
If $P_i$ is not consistent to $(S,\{W_j\}_{j\in S})$,  it outputs $\top$.
If $P_i$ is consistent to 
$(S,\{W_j\}_{j\in S})$,  
it flips a biased coin that heads with probability $1-3^{|S_i|-5}$.
If heads, it outputs $\top$.
If tails,
it defines
\begin{eqnarray*}
m_{j}':= 
\left\{
\begin{array}{cc}
m_{j}\oplus x_{j}&(W_j=Z),\\
m_{j}\oplus z_{j}&(W_j=X),\\
m_{j}\oplus x_{j}\oplus z_j&(W_j=Y)
\end{array}
\right.
\end{eqnarray*}
for $j\in S_i$, and outputs $\top$ if  $(-1)^{\bigoplus_{j\in S_i}m'_{j}}=-s_{i}$ and $\bot$ otherwise. 
\end{description} 
\rule[1ex]{\textwidth}{0.5pt}
\hspace{-10mm}
\caption{$\Sigma$-protocol with quantum preprocessing $\Pi_\Sigma$.}
\label{fig:sigma}
\end{figure}

\begin{figure}[t]
\rule[1ex]{\textwidth}{0.5pt}
\begin{description}
\item[$\Sigma.\preprocess(1^\secpar)$:]
It generates $N$ Bell-pairs between registers $P$ and $V$. Let $\rho_P$
and $\rho_V$ be quantum states in registers $P$ and $V$, respectively. 
It outputs
a proving key $\pkey:=\rho_P$ and a verification key $\vkey:=\rho_V$.

\item[$\Sigma.\prove_1(\pkey,\statement,\witness)$:]
The same as that of $\Pi_\Sigma$.

\item[$\Sigma.\verify_1(1^\secpar)$:]
The same as that of $\Pi_\Sigma$.

\item[$\Sigma.\prove_2(\st,\msg_2)$:]
The same as that of $\Pi_\Sigma$.

\item[$\Sigma.\verify_2(\vkey,\statement,\msg_1,\msg_2,\msg_3)$:]
It parses
$\rho_V\la \vkey$,  
$\{\com_j\}_{j\in[N]}\la\msg_1$,
$S\la\msg_2$, and $(\{d_j\}_{j\in S},\{x_j,z_j\}_{j\in S})\la \msg_3$.
It computes $\verify_{\mathsf{comm}}(1^\secpar,(x_j,z_j),\com_j,d_j)$ for all $j\in S$. If not all outputs are $\top$, it outputs $\bot$ and aborts.
It chooses $(W_1,...,W_N)\sample\{X,Y,Z\}^N$.
It generates $(\prod_{j\in S}X_j^{x_j}Z_j^{z_j})\rho_V(\prod_{j\in S}X_j^{x_j}Z_j^{z_j})$, and measures its $j$th qubit in the $W_j$-basis for every $j\in [N]$.
Let $m_j\in\bit$ be the measurement result for the $j$th qubit.
It chooses $i\in [M]$ according to the probability distribution defined by $\{p_{i}\}_{i\in[M]}$ (i.e., chooses $i$ with probability $p_{i}$).
Let
\begin{eqnarray*}
S_i:=\{j\in[N]~|~\mbox{$j$th Pauli operator of $P_i$ is not $I$}\}.
\end{eqnarray*}
We note that we have $1\leq |S_i|\leq 5$ by the $5$-locality of $\ham_\statement$. 
We say that $P_i$ is consistent to $(S,\{W_j\}_{j\in S})$  if and only if 
$S_i = S$ and
the $j$th Pauli operator of
$P_i$ is $W_j$ for all
$j\in S_i$.    
If $P_i$ is not consistent to $(S,\{W_j\}_{j\in S})$,  it outputs $\top$.
If $P_i$ is consistent to 
$(S,\{W_j\}_{j\in S})$,  
it flips a biased coin that heads with probability $1-3^{|S_i|-5}$.
If heads, it outputs $\top$ and aborts.
If tails, it outputs $\top$ if $(-1)^{\bigoplus_{j\in S_i}m_{j}}=-s_{i}$ and $\bot$ otherwise. 
\end{description} 
\rule[1ex]{\textwidth}{0.5pt}
\hspace{-10mm}
\caption{The virtual protocol $\Pi_{\Sigma}'$ for $\Sigma$-protocol with quantum preprocessing $\Pi_\Sigma$.}
\label{fig:sigma-v}
\end{figure}

\begin{lemma}\label{lem:sigma_completeness}
$\Pi_{\Sigma}$ satisfies $\left(1-\frac{\alpha}{N'}\right)$-completeness and
$\left(1-\frac{\beta}{N'}+\negl(\secpar)\right)$-soundness.
\end{lemma}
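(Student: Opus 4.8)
The plan is to establish this as the $\Sigma$-protocol counterpart of \cref{lem:NIZK_completeness_soundness}, using the same virtual-protocol method. First I would argue that $\Pi_\Sigma$ and the virtual protocol $\Pi_\Sigma'$ of \cref{fig:sigma-v} yield the same verifier acceptance probability against \emph{every} (honest or malicious) prover, and that the prover's view is identical in the two games. This rests on three facts already used for $\Pi_\NIZK$: (i) in $\Pi_\Sigma'$ the setup's Bell pairs together with the verifier's $W_j$-basis measurement of its halves collapse the prover's register to $\bigotimes_j U(W_j)\ket{m_j}$ with $(W_j,m_j)$ uniform, which is exactly what $\Sigma.\preprocess$ of $\Pi_\Sigma$ hands the prover (\cref{lem:statecollapsing}); (ii) the verifier's measurement is on a register disjoint from everything the prover touches, so it may be deferred until after $\msg_3$ arrives; and (iii) applying the correction $\prod_{j\in S}X_j^{x_j}Z_j^{z_j}$ to $\rho_V$ before the $W_j$-basis measurement gives outcomes with the same distribution as first measuring and then XOR-ing, by \cref{lem:XZ_before_measurement}. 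Hence it suffices to analyse $\Pi_\Sigma'$.

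For completeness: with the honest prover, perfect correctness of the commitment scheme ensures $\verify_{\mathsf{comm}}$ never rejects, so the verifier always reaches the Hamiltonian test. By \cref{lem:teleportation} the Bell measurement of $\Sigma.\prove_1$ teleports $\rho_\hist$ into $\regV$ as $X^xZ^z\rho_\hist Z^zX^x$; the correction $\prod_{j\in S}X_j^{x_j}Z_j^{z_j}$ cancels the teleportation Pauli on the qubits of $S$ up to a global phase, and since the term $P_i$ acts trivially outside $S_i=S$, tracing out the other qubits leaves $\Tr_{[N]\setminus S}\rho_\hist$. Exactly as in the proof of \cref{lem:NIZK_completeness_soundness}, for each fixed $i$ the probability that $P_i$ is consistent to $(S,\{W_j\}_{j\in S})$ and the coin comes up tails equals $1/N'$ (the factor $3^{-|S_i|}$ for the bases matching $P_i$ and $3^{|S_i|-5}$ for the biased coin combine to $3^{-5}$, and $\Pr[S=S_i]=1/\sum_{k=1}^5\binom Nk$). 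Combining this with \cref{lem:prob_and_energy} and \cref{lem:five_local_Hamiltonian}, the rejection probability is $\frac1{N'}\Tr(\rho_\hist\ham_\statement)\leq\frac{\alpha}{N'}$, giving $\left(1-\frac{\alpha}{N'}\right)$-completeness.

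For soundness, let $(\A_1,\A_2)$ be a cheating prover in $\Pi_\Sigma'$. The crucial step is to pass to an \emph{extraction} game: an unbounded challenger reads from each $\com_j$ in $\msg_1$ the unique value $(\whx_j,\whz_j)$ to which it can be opened — well-defined by perfect binding, with $\bot$ if none exists — and runs the rest of the verification with these values, rejecting whenever some $\com_j$, $j\in S$, is $\bot$. Since any decommitment $\A_2$ produces that $\verify_{\mathsf{comm}}$ accepts must open $\com_j$ to exactly $(\whx_j,\whz_j)$, this can only increase the winning probability, and now $\msg_3$ (hence $\A_2$) is irrelevant, so the state of $\regV$ conditioned on $(\statement,\msg_1)$ is a \emph{fixed} $N$-qubit density matrix $\rho$ while $(\whx,\whz)$ is a fixed function of $\msg_1$. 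Writing $\ham_{\statement,S}:=\sum_{i:S_i=S}p_i\frac{I+s_iP_i}{2}$ (so $\sum_S\ham_{\statement,S}=\ham_\statement$) and $U_S:=\prod_{j\in S}X_j^{\whx_j}Z_j^{\whz_j}$, the same count as above and \cref{lem:prob_and_energy} give, for each openable $S$, rejection probability $\frac1{3^5}\Tr(U_S\rho U_S^\dagger\ham_{\statement,S})$ (and $1$ otherwise, which is larger since $\Tr(U_S\rho U_S^\dagger\ham_{\statement,S})\leq1$). Averaging over $S$ and using the identity
\begin{align*}
\sum_S U_S^\dagger\,\ham_{\statement,S}\,U_S=\widetilde U^\dagger\ham_\statement\widetilde U,\qquad \widetilde U:=\prod_{j\in[N]}X_j^{\whx_j}Z_j^{\whz_j},
\end{align*}
which holds because $U_S^\dagger P_iU_S=\widetilde U^\dagger P_i\widetilde U$ for $S=S_i$ (the Pauli factors of $\widetilde U$ outside $S_i$ commute with $P_i$), the rejection probability conditioned on $(\statement,\msg_1)$ is at least $\frac1{N'}\Tr(\widetilde U\rho\widetilde U^\dagger\ham_\statement)\geq\frac{\beta}{N'}$ for $\statement\in L_\no$, since $\widetilde U\rho\widetilde U^\dagger$ is a legitimate density matrix and $\ham_\statement$ has minimum eigenvalue at least $\beta$ on no-instances by \cref{lem:five_local_Hamiltonian}. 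Averaging over $(\statement,\msg_1)$ bounds the winning probability by $1-\frac\beta{N'}$; if the commitment is only statistically binding (as in the QROM instantiation behind \cref{cor:sigma}) an extra $\negl(\secpar)$ accounts for the binding error, giving the stated $\left(1-\frac{\beta}{N'}+\negl(\secpar)\right)$-soundness.

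The main obstacle is the soundness step, and specifically the observation that the per-subset corrections $U_S$ — although the verifier ever applies only one of them — are all restrictions of the \emph{single} committed Pauli $\widetilde U$; this is precisely what collapses $\sum_S\Tr(U_S\rho U_S^\dagger\ham_{\statement,S})$ to $\Tr(\widetilde U\rho\widetilde U^\dagger\ham_\statement)$ and lets it be lower-bounded by $\beta$, and it is the reason the commitment is present at all: without it a malicious prover could effectively apply inconsistent corrections on different subsets, and the analysis would only control a sign-flipped Hamiltonian whose spectrum is unconstrained. The rest is routine: justifying the hop to the extraction game and checking the partial-trace manipulations relating the correction and the trace over $[N]\setminus S$.
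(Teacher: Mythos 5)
Your proof is correct and follows essentially the same route as the paper's: reduce to the virtual protocol via \cref{lem:statecollapsing} and \cref{lem:XZ_before_measurement}, use teleportation plus \cref{lem:prob_and_energy} for completeness, and for soundness use binding to extract a single committed Pauli $\widetilde U$ fixed before $S$ is chosen, so that the test measures the energy of $\ham_\statement$ on the $S$-independent state $\widetilde U\rho\widetilde U^\dagger$. The only difference is that you make explicit the identity $\sum_S U_S^\dagger\ham_{\statement,S}U_S=\widetilde U^\dagger\ham_\statement\widetilde U$, which the paper asserts implicitly when it replaces the per-subset correction by conjugation with the full Pauli $\prod_{j\in[N]}X_j^{\hat x_j}Z_j^{\hat z_j}$.
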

\begin{proof}
Let us consider the virtual protocol $\Pi_{\Sigma}'$ given in Fig.~\ref{fig:sigma-v}.
Due to \cref{lem:XZ_before_measurement} and the fact that the measurements by the prover and the verifier are commute with each other,
the acceptance probability in $\Pi_\Sigma'$ is equal to that in $\Pi_\Sigma$.
We therefore have only to show
the $\left(1-\frac{\alpha}{N'}\right)$-completeness and
and $\left(1-\frac{\beta}{N'}+\negl(\secpar)\right)$-soundness for the virtual protocol $\Pi_\Sigma'$.

First let us show the completeness.
If the prover is honest, it is clear that the history state with Pauli byproducts,
$(\prod_{j\in[N]}X_j^{x_j}Z_j^{z_j})
\rho_\hist
(\prod_{j\in[N]}X_j^{x_j}Z_j^{z_j})$,
is teleported to the verifier, and the verifier
can correct the byproducts on $S$ (with probability one
from the prefect completeness of the commitment scheme).
From \cref{lem:prob_and_energy} and \cref{lem:five_local_Hamiltonian},
and the fact that the probability that $P_i$ is consistent to $(S,\{W_j\}_{j\in S})$ and the coin tails is $1/N'$, we obtain
the acceptance probability in $\Pi_\Sigma'$ when $\statement\in L_\yes$ to be
\begin{align*}
\Big(1-\frac{1}{N'}\Big)+\frac{1}{N'}\Big[1-\Tr(\ham_\statement \rho_\hist)\Big]\ge1-\frac{\alpha}{N'}.
\end{align*}
We have therefore shown the $\left(1-\frac{\alpha}{N'}\right)$-completeness.

Next let us show the soundness.
The malicious prover first does any POVM measurement on $\rho_P$ to get $\msg_1=\{\com_j\}_{j\in [N]}$,
and sends it to the verifier.
After receiving $S$ from the verifier, the prover does another POVM measurement on the remaining state $\st$ to get $\msg_3$, and sends it to the verifier.
The verifier therefore measures all qubits of the $N$-qubit
state $(\prod_{j\in S}X_j^{x_j}Z_j^{z_j})\rho(\prod_{j\in S}X_j^{x_j}Z_j^{z_j})$, 
where $\rho$ is the state of the register $V$ after the prover does the first POVM measurement, and
$\{x_j,z_j\}_{j\in S}$ is that in $\msg_3$.
Note that $\rho$ is independent of $S$, because the first POVM measurement is done before $S$ is given to the prover.
Due to the binding of the commitment scheme, each $\com_j$ can be opened to a unique value $(\hat{x}_j,\hat{z}_j)$ or rejected by $\verify_{\mathsf{comm}}$.
We can assume that the prover always sends correct $\msg_3$ so that all $\{\com_j\}_{j\in [S]}$ are accepted by $\verify_{\mathsf{comm}}$, because otherwise the prover is rejected.
Therefore, it is equivalent that the verifier measures the energy of $\ham_\statement$ on the $N$-qubit state 
$\hat{\rho}:=(\prod_{j\in [N]}X_j^{\hat{x}_j}Z_j^{\hat{z}_j})\rho(\prod_{j\in [N]}X_j^{\hat{x}_j}Z_j^{\hat{z}_j})$.
Because $\{\hat{x}_j,\hat{z}_j\}_{j\in [N]}$ is fixed before $S$ is chosen, $\hat{\rho}$ is independent of $S$.
Then due to \cref{lem:prob_and_energy} and \cref{lem:five_local_Hamiltonian}, the acceptance probability
in $\Pi_\Sigma'$ when $\statement\in L_\no$ is at most
\begin{align*}
\Big(1-\frac{1}{N'}\Big)+\frac{1}{N'}\Big[1-\Tr(\ham_\statement \hat{\rho})\Big]+\negl(\secpar)\le
1-\frac{\beta}{N'}+\negl(\secpar).
\end{align*}

For any $\statement$, let 
$\event_\statement$ be the event that the statement output by $\A_1$ is $\statement$. 
Then,
\begin{align*}
    \Pr[\statement\in L_\no \land \text{verifier outputs $\top$}]
    \le\sum_{\statement\in L_\no}\Pr[\event_\statement]\left(1-\frac{\beta}{N'}+\negl(\secpar)\right)\leq 
    \left(1-\frac{\beta}{N'}+\negl(\secpar)\right).
\end{align*}

We have therefore shown the $\left(1-\frac{\beta}{N'}+\negl(\secpar)\right)$-soundness.
\end{proof}

\begin{lemma}\label{lem:sigma_ZK}
$\Pi_{\Sigma}$ satisfies special zero-knowledge property.
\end{lemma}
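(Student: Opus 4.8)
The plan is to exhibit a QPT simulator $\siml$ that, given the (possibly maliciously generated) proving key $\pkey$, the statement $\statement$, and the fixed challenge $\msg_2 = S$ (recall $1\le|S|\le 5$), outputs a transcript $(\msg_1,\msg_3)$ without ever touching the witness. Two facts make this possible: in $\Pi_\Sigma$ only the $|S|$ commitments indexed by $S$ are ever opened (in $\msg_3$), and by \cref{lem:five_local_Hamiltonian} the reduced state $\Tr_{[N]\setminus S}\rho_\hist$ of the history state on any subset of at most $5$ qubits is $\negl(\secpar)$-close in trace distance to a state $\rho_S:=\siml_{\hist}(\statement,S)$ whose classical description is computable in classical polynomial time. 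So I would let $\siml(\pkey,\statement,S)$ prepare the (at most $5$-qubit) state $\rho_S$ from its description, perform for each $j\in S$ the Bell-basis measurement on the $j$-th qubit of $\pkey$ and the corresponding qubit of $\rho_S$ to obtain $(x_j,z_j)$, compute $(\com_j,\decom_j)\la\commit_{\mathsf{comm}}(1^\secpar,(x_j,z_j))$ for $j\in S$ and $(\com_j,\decom_j)\la\commit_{\mathsf{comm}}(1^\secpar,(0,0))$ for $j\notin S$, and output $\msg_1:=\{\com_j\}_{j\in[N]}$ together with $\msg_3:=(\{\decom_j\}_{j\in S},\{x_j,z_j\}_{j\in S})$. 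This is QPT since $|S|\le 5$ is constant and $\siml_{\hist}$ runs in classical polynomial time.

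To prove indistinguishability I would run a four-step hybrid argument starting from the real experiment $\mathsf{Hyb}_0$. First, $\mathsf{Hyb}_1$ changes $\prove_1$ so that $\com_j$ commits to $(0,0)$ for every $j\notin S$; since $\msg_3$ exposes $\decom_j$ and $(x_j,z_j)$ only for $j\in S$, the committed values and openings for $j\notin S$ are never revealed, so $\mathsf{Hyb}_0$ and $\mathsf{Hyb}_1$ are computationally indistinguishable by an $(N-|S|)$-step sub-hybrid over the computational hiding of $\Pi_{\mathsf{comm}}$ (each reduction runs $\A_1$, performs all $N$ Bell measurements using $\witness^{\otimes k}$, plants the hiding challenge in one coordinate $j\notin S$, and finishes by running $\A_2$; it is QPT and carries the quantum $\pkey$ and $\st_\A$ along unchanged). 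Next, $\mathsf{Hyb}_2$ observes that in $\mathsf{Hyb}_1$ the outcomes $(x_j,z_j)$ for $j\notin S$ and the post-measurement states of those Bell pairs are no longer used anywhere; since the Bell basis is a complete basis, measuring those pairs and discarding everything equals tracing them out, and an unused subsystem need not be produced at all, so $\prove_1$ may instead generate only $\Tr_{[N]\setminus S}\rho_\hist$, leave the $j\notin S$ qubits of $\pkey$ (hence $\st_\A$) untouched, and Bell-measure only the $S$ pairs — yielding the identical joint distribution of $(\st_\A,\msg_1,\msg_3)$, so $\mathsf{Hyb}_1=\mathsf{Hyb}_2$. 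Finally, $\mathsf{Hyb}_3$ swaps $\Tr_{[N]\setminus S}\rho_\hist$ for $\rho_S=\siml_{\hist}(\statement,S)$, which by \cref{lem:five_local_Hamiltonian} perturbs the overall state, hence $\A_2$'s output distribution, by at most $\negl(\secpar)$ in statistical distance; and $\mathsf{Hyb}_3$ is exactly the simulated experiment, so chaining the four bounds gives the claimed $\negl(\secpar)$ advantage.

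The step I expect to require the most care is the equality $\mathsf{Hyb}_1=\mathsf{Hyb}_2$: one must verify that the Bell measurements on the coordinates outside $S$ genuinely have no effect on $\A_2$'s view even though $\pkey$ is adversarial and may be entangled with $\st_\A$. This is precisely the fact that a measure-then-discard channel applied to a subsystem disjoint from the registers seen by the distinguisher leaves the reduced state of those registers unchanged, but it should be spelled out carefully in the quantum setting rather than waved at. Everything else is routine; in particular the argument nowhere uses that $\pkey$ was generated honestly, which is exactly why special zero-knowledge holds against a partially malicious verifier that runs $\preprocess$ adversarially.
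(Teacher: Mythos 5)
Your proposal is correct and matches the paper's own argument: the simulator is identical (Bell-measure $\rho_S=\siml_{\hist}(\statement,S)$ against the $S$-qubits of $\pkey$, commit to $(0,0)$ outside $S$), and your hybrids are exactly the paper's $\siml_\Sigma^{(3)}\to\siml_\Sigma^{(2)}\to\siml_\Sigma^{(1)}\to\siml_\Sigma$ chain traversed in the opposite direction, using the same three ingredients (computational hiding, measure-and-discard equals trace-out, and local simulatability of the history state). No gaps.
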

\begin{proof}
We construct the simulator $\siml_\Sigma$ as follows.
\begin{description}
\item[$\siml_\Sigma(\pkey,\statement,\msg_2)$:]
It parses $\msg_2=S$ and 
generates a quantum state $\rho_{S}:=\siml_{\hist}(\statement,S)$ using $\siml_{\hist}$ in \cref{lem:five_local_Hamiltonian}.  
Then it measures the corresponding qubits of $\rho_{S}$ and $\rho_P$ in the Bell basis.   
Let $\{x_j,z_j\}_{j\in S}$ be the measurement outcomes. 
It computes $\commit_{\mathsf{comm}}(1^\secpar,(x_j,z_j))\ra (\com_j,d_j)$ for each $j\in S$ and $\commit_{\mathsf{comm}}(1^\secpar,(0,0))\ra (\com_j,d_j)$ for each $j\in [N]\setminus S$. 
It outputs  $\msg_1:=\{\com_j\}_{j\in[N]}$ and $\msg_3=(\{d_j\}_{j\in S},\{x_j,z_j\}_{j\in S})$. 
\end{description}
In the following, we prove that the above simulator satisfies the requirement of the special zero-knowledge. 
For proving this, we consider the following sequence of modified versions of $\siml_\Sigma$,  which take $\witness\in R_L(\statement)$ as an additional input.
\begin{description}
\item[$\siml_\Sigma^{(1)}(\pkey,\statement,\witness,\msg_2)$:]
This simulator works similarly to $\siml_\Sigma$ except that it first generates the history state $\rho_{\hist}$ and then uses the corresponding part of $\rho_{\hist}$ instead of $\rho_S$. 
Note that this simulator can generate the history state since it takes $\witness$ as input.
\item[$\siml_\Sigma^{(2)}(\pkey,\statement,\witness,\msg_2)$:]
This simulator works similarly to $\siml_\Sigma^{(1)}$ except that it measures $j$-th qubits of $\rho_{\hist}$ and $\rho_P$ for all $j\in [N]$ (rather than only for $j\in S$) and gets the measurement outcomes $\{x_j,z_j\}_{j\in [N]}$.
Note that this simulator generates the commitments in the same way as $\siml_\Sigma^{(1)}$. 
\item[$\siml_\Sigma^{(3)}(\pkey,\statement,\witness,\msg_2)$:]
This simulator works similarly to $\siml_\Sigma^{(2)}$ except that it generates\\ 
$\commit_{\mathsf{comm}}(1^\secpar,(x_j,z_j))\ra (\com_j,d_j)$ for all $j\in [N]$.
\end{description}
Let $\A=(\A_1,\A_2)$ be a QPT adversary. 
For notational simplicity, we let $\siml_\Sigma^{(0)}:=\siml_\Sigma$,  
\begin{align*}
    p_0:=\Pr\left[
    \A_2(\st_\A,\statement,\msg_1,\msg_2,\msg_3)=1
    :
    \begin{array}{c}
         (\pkey,\st_\A) \sample \A_1(1^\secpar)\\
       (\msg_1,\msg_3) \sample \siml_\Sigma(\pkey,\statement,\msg_2)\\
    \end{array}
    \right],
\end{align*}
\begin{align*}
    p_i:=\Pr\left[
    \A_2(\st_\A,\statement,\msg_1,\msg_2,\msg_3)=1
    :
    \begin{array}{c}
         (\pkey,\st_\A) \sample \A_1(1^\secpar)\\
       (\msg_1,\msg_3) \sample \siml_\Sigma^{(i)}(\pkey,\statement,\witness,\msg_2)\\
    \end{array}
    \right]
\end{align*}
for $i=1,2,3$, and 
\begin{align*}
p_{\mathsf{real}}:=\Pr\left[
    \A_2(\st_\A,\statement,\msg_1,\msg_2,\msg_3)=1
    :
    \begin{array}{c}
         (\pkey,\st_\A) \sample \A_1(1^\secpar)\\
       (\msg_1,\st) \sample \prove_1(\pkey,\statement,\witness)\\
       \msg_3 \sample \prove_2(\st,\msg_2)\\ 
    \end{array}
    \right].
\end{align*}
What we have to prove is $|p_\mathsf{real}-p_0|=\negl(\secpar)$. 
We prove this by the following claims.
\begin{myclaim}
$|p_0-p_1|\leq \negl(\secpar)$.
\end{myclaim}
\begin{proof}
By \cref{lem:five_local_Hamiltonian}, $\|\rho_S-\Tr_{[N]\setminus S}\rho_\hist\|_{tr}=\negl(\secpar)$. The claim  immediately follows from this. 
\end{proof}
\begin{myclaim}
$p_1=p_2$.
\end{myclaim}
\begin{proof}
This immediately follows from the fact that
the measurement results corresponding to $j\in [N]\setminus S$ are not used, which is equivalent to
tracing out all qubits of $\rho_\hist$ in $[N]\setminus S$.
\end{proof}
\begin{myclaim}
$|p_2-p_3|\leq \negl(\secpar)$.
\end{myclaim}
\begin{proof}
This follows from a straightforward reduction to the computational hiding property of the commitment scheme.  
\end{proof}
\begin{myclaim}
$p_3=p_{\mathsf{real}}$.
\end{myclaim}
\begin{proof}
This claim clearly holds since $\siml_\Sigma^{(3)}$ generates $\msg_1$ and $\msg_3$ in exactly the same way as by the real proving algorithm.   
\end{proof}
By combining the above claims, we have $|p_{\mathsf{real}}-p_0|\leq \negl(\secpar)$. 
This completes the proof of \cref{lem:sigma_ZK}. 
\end{proof}

\begin{lemma}\label{lem:sigma_entropy}
$\Pi_{\Sigma}$ satisfies high min-entropy property.
\end{lemma}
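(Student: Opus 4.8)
The plan is to verify the two ingredients that \cref{def:sigma_q_prepro} demands: exhibit the split of $\Sigma.\prove_1$ into a quantum part $\prove_1^Q$ and a classical part $\prove_1^C$, and then show that the $\msg_1$-output of $\prove_1^C$ occurs with only negligible maximum probability for every frozen $\st'$. First I would set $\prove_1^Q(\pkey,\statement,\witness^{\otimes k})$ to be the stage of $\Sigma.\prove_1$ that parses $\rho_P\la\pkey$, prepares the history state $\rho_\hist$ from the witness copies, performs the $N$ Bell-basis measurements on the pairs of qubits of $\rho_\hist$ and $\rho_P$, discards every leftover quantum register, and outputs the classical string $\st':=(\{x_j,z_j\}_{j\in[N]})$ of outcomes; discarding the garbage is harmless because $\Sigma.\prove_2$ reads only the classical data $(\{d_j\}_{j\in S},\{x_j,z_j\}_{j\in S})$, so the final state of $\Sigma.\prove_1$ may be taken to be classical. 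Then I would set $\prove_1^C(\st')$ to be the PPT stage that for each $j\in[N]$ runs $(\com_j,d_j)\sample\commit_{\mathsf{comm}}(1^\secpar,(x_j,z_j))$ with independent fresh randomness and outputs $\msg_1:=\{\com_j\}_{j\in[N]}$ together with $\st:=(\{x_j,z_j\}_{j\in[N]},\{d_j\}_{j\in[N]})$. This reproduces $\Sigma.\prove_1$ exactly, gives $\Sigma.\prove_2$ everything it uses, and has the shape required by \cref{def:sigma_q_prepro}.

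The heart of the argument is a min-entropy claim about $\Pi_{\mathsf{comm}}$: because its message space $\{0,1\}^2$ has more than one element, for every $m\in\{0,1\}^2$ one has $f_m(\secpar):=\max_{\com^*}\Pr[(\com,d)\sample\commit_{\mathsf{comm}}(1^\secpar,m):\com=\com^*]=\negl(\secpar)$. I would prove this by contradiction: if $f_m(\secpar)\ge 1/p(\secpar)$ infinitely often for some polynomial $p$, fix any $m'\ne m$ and consider the hiding distinguisher that, given a challenge commitment $\com$, samples $(\com',d')\sample\commit_{\mathsf{comm}}(1^\secpar,m)$ itself and outputs $1$ iff $\com=\com'$; it accepts with probability $\sum_{\com^*}\Pr[\com=\com^*]^2\ge f_m(\secpar)^2\ge 1/p(\secpar)^2$ when the challenge commits to $m$, and with probability $0$ when it commits to $m'$, since a value $\com^*$ lying in both supports would by perfect correctness be openable to both $m$ and $m'$, violating perfect binding. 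This non-negligible advantage of a uniform PPT (hence QPT) distinguisher contradicts computational hiding. An alternative that bypasses the claim altogether: appending an independent uniform $\secpar$-bit string to every commitment preserves correctness, binding, and hiding while guaranteeing min-entropy at least $\secpar$, so high min-entropy of the commitments may be assumed without loss of generality.

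Finally, since $\{0,1\}^2$ is finite, $\nu(\secpar):=\max_{m\in\{0,1\}^2}f_m(\secpar)$ is negligible, and for any $\st'=(\{x_j,z_j\}_{j\in[N]})$ in the support of $\prove_1^Q$ (indeed for any such tuple at all) the $N$ commitments inside $\prove_1^C(\st')$ are independent, so the probability that $\prove_1^C(\st')$ outputs any fixed $\msg_1^*=\{\com_j^*\}_{j\in[N]}$ as its first component equals $\prod_{j\in[N]}\Pr[\com_j=\com_j^*]\le\prod_{j\in[N]}f_{(x_j,z_j)}(\secpar)\le\nu(\secpar)$, using $N\ge1$. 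Maximising over $\msg_1^*$ yields $\max_{\msg_1^*}\Pr[\prove_1^C(\st')=\msg_1^*]=\negl(\secpar)$, which is exactly the high min-entropy property. I expect the commitment min-entropy claim to be the only step needing genuine care; the decomposition and the product bound are routine bookkeeping.
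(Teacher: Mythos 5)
Your proposal is correct and follows essentially the same route as the paper: the same decomposition of $\Sigma.\prove_1$ into the Bell-measurement stage and the commitment stage, followed by the observation that computational hiding forces each commitment to have negligible maximum probability. The paper merely asserts this last claim in one line, whereas you supply the standard collision-distinguisher argument (using perfect correctness and perfect binding to make the supports of commitments to distinct messages disjoint); that is a correct and welcome filling-in of the detail, not a different approach.
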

\begin{proof}
We define $\Sigma.\prove_1^Q$ to be the part of $\Sigma.\prove_1$ that generates $\{x_j,z_j\}_{j\in [N]}$ by the Bell basis measurements and $\Sigma.\prove_1^C$ to be the rest of $\Sigma.\prove_1$. 
By the computational hiding property of the commitment,  a commitment does not take a fixed value with non-negligible probability. 
Then it is clear that $\msg_1=\{\com_j\}_{j\in[N]}$ does not take a fixed value with non-negligible probability. 
\end{proof}
\end{proof}
\fi

\subsection{Construction}\label{sec:construction_QRO}
In this section, we construct a CV-NIZK in the \QROVP  model. 
As a result, we obtain the following theorem.
\begin{theorem}\label{thm:fiat-shamir}
There exists a CV-NIZK for $\QMA$ in the \QROVP  model. 
\end{theorem}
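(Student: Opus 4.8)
The plan is to obtain \cref{thm:fiat-shamir} by applying the Fiat--Shamir transformation to the $\Sigma$-protocol with quantum preprocessing whose existence in the QROM is guaranteed by \cref{cor:sigma} (after gap amplification via \cref{lem:sigma_amplification}, so that we may assume $(1-\negl(\secpar))$-completeness and $\negl(\secpar)$-soundness). Concretely, I would define $\preprocess$ for the NIZK to be $\Sigma.\preprocess$; the prover $\prove^{H}$ runs $\Sigma.\prove_1$ to get $(\msg_1,\st)$, computes the challenge $\msg_2 := H(\statement,\msg_1)$ by querying the random oracle, runs $\msg_3\sample \Sigma.\prove_2(\st,\msg_2)$, and outputs $\pi := (\msg_1,\msg_3)$; the verifier $\verify^{H}$ recomputes $\msg_2 := H(\statement,\msg_1)$ and accepts iff $\Sigma.\verify_2(\vkey,\statement,\msg_1,\msg_2,\msg_3)=\top$. (Here the random oracle distribution $\ROdist$ is the uniform distribution on functions mapping into the challenge space of $\Sigma.\verify_1$, which is a fixed polynomial-size set.) Note that, crucially, $\pkey$ and $\vkey$ are generated once by the verifier and used once, exactly matching the \QROVP model of \cref{def:qrovp_nizk}; only the random oracle is reused across theorems.

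I would then verify the three required properties. \textbf{Completeness} is immediate: the honest prover's transcript $(\msg_1,H(\statement,\msg_1),\msg_3)$ is distributed exactly as an honest $\Sigma$-protocol transcript with a uniformly random challenge (since $\msg_1$ has high min-entropy, the query point $(\statement,\msg_1)$ is fresh and $H(\statement,\msg_1)$ is uniform), so the $\Sigma$-protocol's completeness carries over up to $\negl(\secpar)$. \textbf{Adaptive statistical soundness} is the part requiring real care: here I would invoke the measure-and-reprogram / online-extractability technique of Don--Fehm--Majenz--Schaffner \cite{C:DonFehMaj20} (cited in the paper) to convert a cheating QROM prover for the Fiat--Shamir NIZK into a cheating $(\A_1,\A_2)$ for the interactive $\Sigma$-protocol that makes the same number of queries and succeeds with probability polynomially related to the NIZK attacker's success probability. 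One subtlety to flag explicitly: the existing Fiat--Shamir-in-the-QROM results are stated for $\Sigma$-protocols \emph{without} preprocessing, so I would remark (as the paper already does in the technical overview) that, by inspection, the proof of \cite{C:DonFehMaj20} relativizes to the setting where both parties additionally hold the fixed preprocessed state $(\pkey,\vkey)$: the extractor treats $\pkey$ as part of $\A_1$'s (quantum) input and $\vkey$ as part of the external verification predicate, and nothing in the measure-and-reprogram argument interacts with these. Since $\Sigma$-soundness is statistical ($\negl$) against arbitrary polynomial-query adversaries, the resulting NIZK soundness is statistical too.

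For \textbf{adaptive multi-theorem zero-knowledge}, I would construct the simulator $\siml^{H,\reprogram}(\pkey,\statement)$ following the standard programmable-QROM paradigm (as in Unruh \cite{EC:Unruh15}): sample a uniform challenge $\msg_2\sample \Sigma.\verify_1(1^\secpar)$, run the $\Sigma$-protocol's special-zero-knowledge simulator $\siml_\Sigma(\pkey,\statement,\msg_2)$ to get $(\msg_1,\msg_3)$, reprogram $H$ at $(\statement,\msg_1)$ to $\msg_2$ via the $\reprogram$ oracle, and output $\pi := (\msg_1,\msg_3)$. The hybrid argument then has the usual structure: first replace honest proofs by $\Sigma$-simulated proofs at the honestly-derived challenge (using special zero-knowledge of $\Sigma$; note special ZK here must handle the \emph{maliciously generated} $\pkey$ the distinguisher supplies, which is exactly why \cref{def:sigma_q_prepro} defines special ZK against a partially malicious verifier $(\A_1,\A_2)$), then argue that reprogramming $H$ at the fresh high-min-entropy point $\msg_1$ is undetectable by a polynomial-query quantum distinguisher (a one-sided reprogramming / adaptive reprogramming lemma, e.g. from \cite{EC:Unruh15} or \cite{C:DonFehMaj20}), handling each of the polynomially many queries to the proving oracle by a standard hybrid. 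The high-min-entropy property is used precisely to ensure the reprogrammed points are distinct from each other and from the distinguisher's prior queries except with negligible probability. The main obstacle, as already signposted, is making the ``by inspection'' claims about \cite{C:DonFehMaj20} and \cite{EC:Unruh15} rigorous in the presence of quantum preprocessing and maliciously chosen $\pkey$ --- in particular checking that the adaptive reprogramming bound and the online extractor both remain valid when the adversary holds an arbitrary auxiliary quantum state correlated with $\pkey$; once that is granted, the rest is a routine assembly of the cited lemmas.
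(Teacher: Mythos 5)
Your proposal is correct and follows essentially the same route as the paper: the identical Fiat--Shamir construction, soundness via the measure-and-reprogram theorem of \cite{C:DonFehMaj20} applied with $\pkey$ as the adversary's quantum input and $\vkey$ folded into the verification predicate, and zero-knowledge via the programmable-QROM simulator with hybrids relying on special zero-knowledge and the high min-entropy of $\msg_1$. The only cosmetic difference is that for the reprogramming step the paper invokes the multi-query adaptive reprogramming lemma of \cite{GHHM20} rather than the one-shot variants you cite, which is exactly the tool that makes the multi-theorem hybrid argument go through cleanly.
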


Let $L=(L_\yes,L_\no)$ be a $\QMA$ promise problem, $H$ be a random oracle, and 
$\Pi_{\Sigma}=(\Sigma.\preprocess,\allowbreak\Sigma.\prove_1,\Sigma.\verify_1,\Sigma.\prove_2,\Sigma.\verify_2)$ be a $\Sigma$-protocol with quantum preprocessing
(with $(1-\negl(\secpar))$-completeness and $\negl(\secpar)$-soundness).
Then our CV-NIZK in the \QROVP model
$\Pi_{\mathsf{QRO}}=(\preprocess_{\mathsf{QRO}},\prove_{\mathsf{QRO}},\verify_{\mathsf{QRO}})$ for $L$ is described in \cref{fig:CV-NIZK-FiatShamir}.

\begin{figure}[t]
\rule[1ex]{\textwidth}{0.5pt}
\begin{description}
\item[$\preprocess_{\mathsf{QRO}}(1^\secpar)$:]
It runs $\Sigma.\preprocess(1^\secpar)\ra (\Sigma.\vkey,\Sigma.\pkey)$, and outputs
$\vkey:=\Sigma.\vkey$
and $\pkey:=\Sigma.\pkey$.

\item[$\prove^H_{\mathsf{QRO}}(\pkey,\statement,\witness^{\otimes k})$:]
It parses $\Sigma.\pkey\la \pkey$, 
and runs $\Sigma.\prove_1(\pkey,\statement,\witness^{\otimes k})\ra (\msg_1,\st)$.
It computes $\msg_2:=H(\statement,\msg_1)$.
It runs $\Sigma.\prove_2(\st,\msg_2)\ra \msg_3$.
It outputs $\pi:=(\msg_1,\msg_3)$.

\item[$\verify^H_{\mathsf{QRO}}(\vkey,\statement,\pi)$:]
It
parses 
$\Sigma.\vkey\la \vkey$ and 
$(\msg_1,\msg_3)\la \pi$.   
It computes $\Sigma.\verify_2(\vkey,\statement,\msg_1,H(\statement,\msg_1),\msg_3)$.
If the output is $\bot$, it outputs $\bot$.
If the output is $\top$, it outputs $\top$.
\end{description} 
\rule[1ex]{\textwidth}{0.5pt}
\hspace{-10mm}
\caption{CV-NIZK 
in the \QROVP model
$\Pi_{\mathsf{QRO}}$.}
\label{fig:CV-NIZK-FiatShamir}
\end{figure}

\begin{lemma}\label{lem:QRO_completeness_soundness}
$\Pi_{\mathsf{QRO}}$ satisfies $(1-\negl(\secpar))$-completeness and
adaptive $\negl(\secpar)$-soundness.
\end{lemma}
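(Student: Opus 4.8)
The plan is to derive both properties from the corresponding properties of the underlying $\Sigma$-protocol with quantum preprocessing $\Pi_\Sigma$, which by \cref{lem:sigma_amplification} we may assume satisfies $(1-\negl(\secpar))$-completeness and $\negl(\secpar)$-soundness, together with high min-entropy and special zero-knowledge. Completeness is essentially immediate: an honest execution of $\Pi_{\mathsf{QRO}}$ is exactly an honest execution of $\Pi_\Sigma$ in which the challenge $\msg_2$ is set to $H(\statement,\msg_1)$ instead of being sampled by $\Sigma.\verify_1$. Since $H$ is a random oracle and $\msg_1$ is (at the point $H$ is queried) a fixed classical string, $H(\statement,\msg_1)$ is distributed uniformly over the challenge space exactly as $\Sigma.\verify_1(1^\secpar)$ would sample it, and it is used consistently by both $\prove_{\mathsf{QRO}}$ and $\verify_{\mathsf{QRO}}$ (both call $H(\statement,\msg_1)$). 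Hence the acceptance probability equals the completeness probability of $\Pi_\Sigma$, which is $1-\negl(\secpar)$.

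For soundness, I would invoke a known Fiat-Shamir-in-the-QROM reduction; concretely the measure-and-reprogram technique of Don, Fehr, Majenz, and Schaffner \cite{C:DFMS19} (also \cite{C:DonFehMaj20}), which the excerpt already signals is applicable. The statement there is: given an adversary $\A^H$ making at most $q$ quantum queries to $H$ and outputting $(\statement,\pi=(\msg_1,\msg_3))$ that is accepted by $\verify_{\mathsf{QRO}}$, one can construct a two-stage adversary $(\A_1,\A_2)$ against the interactive $\Sigma$-protocol such that $\A_1$ outputs $(\statement,\msg_1)$ and some state, then receives a uniformly random challenge $\msg_2$, and $\A_2$ outputs $\msg_3$ accepted by $\Sigma.\verify_2(\vkey,\statement,\msg_1,\msg_2,\msg_3)$, with success probability at least (roughly) $\frac{1}{O(q^2)}$ times the original success probability (minus a $\poly(q)/|\text{challenge space}|$ loss, which is negligible here since we first amplify so that the challenge space is super-polynomial, or we note the reduction's additive loss is negligible anyway). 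One subtlety I would flag: the preprocessing $\preprocess_{\mathsf{QRO}}$ is run first and $\pkey$ is handed to the cheating prover, so the reduction must carry the proving key $\pkey$ through as side information; this is harmless because the measure-and-reprogram reduction is oblivious to the distribution of the adversary's auxiliary input, so we simply run $\preprocess$, give $\pkey$ to $\A$, and apply the reduction treating $\pkey$ as part of $\A$'s description. Combining, if $\A$ broke soundness of $\Pi_{\mathsf{QRO}}$ with non-negligible probability, then $(\A_1,\A_2)$ would break $s$-soundness of $\Pi_\Sigma$ with non-negligible probability, contradicting $s=\negl(\secpar)$; since $\A$'s output always satisfies $\statement\in L_\no$ and the reduction preserves $\statement$, this is exactly the soundness statement we need.

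The main obstacle is making the soundness reduction fully rigorous in our slightly nonstandard setting: the $\Sigma$-protocol here has a quantum preprocessing phase and a quantum prover, whereas the cited Fiat-Shamir QROM results are usually phrased for classical interactive proofs. I would need to check that the measure-and-reprogram argument is genuinely agnostic to whether the prover's internal computation is quantum — it is, since the reduction only manipulates the (classical) oracle queries and the (classical) output $(\statement,\msg_1,\msg_3)$, treating all of $\A$'s internal workings as a black box — and that the high min-entropy property of $\msg_1$ is not actually needed for soundness (it is only needed for the zero-knowledge proof); for soundness the only structural fact used is that $\verify_{\mathsf{QRO}}$ recomputes the challenge as $H(\statement,\msg_1)$ deterministically from the proof. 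So the write-up amounts to citing \cite{C:DFMS19,C:DonFehMaj20}, stating the extracted two-stage adversary, and noting that the quantum preprocessing and quantum prover cause no difficulty because they are swallowed into the adversary's description and state.
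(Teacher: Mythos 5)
Your proposal is correct and follows essentially the same route as the paper: completeness is immediate since $H(\statement,\msg_1)$ plays the role of a uniformly sampled challenge used consistently by prover and verifier, and soundness is obtained by applying the measure-and-reprogram lemma of Don--Fehr--Majenz (\cite{C:DonFehMaj20}, stated in the paper as \cref{lem:measure_and_reprogram}) for each fixed $(\pkey^*,\vkey^*)$ with $\rho:=\pkey^*$, $x:=(\statement,\msg_1)$, $y:=\msg_2$, $z:=\msg_3$, which yields exactly the $(2q+1)^2$ multiplicative loss you describe and reduces to the $\negl(\secpar)$-soundness of $\Pi_\Sigma$. Your observation that the quantum preprocessing key is simply absorbed as the adversary's quantum input, and that high min-entropy plays no role in soundness, matches the paper's treatment; the only cosmetic difference is that the paper sums over $\statement^*\in L_\no$ inside the reduction rather than assuming the adversary always outputs a no-instance.
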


\ifnum\submission=1
Its proof is given in \cref{sec:proof_QRO_completeness_soundness}.
\else
\begin{proof}[Proof of \cref{lem:QRO_completeness_soundness}]

The completeness is clear.
For proving soundness, we rely on the following lemma shown in \cite{C:DonFehMaj20}.
\begin{lemma}[{\cite[Theorem 2]{C:DonFehMaj20}}]\label{lem:measure_and_reprogram}
Let $X$ and $Y$ be non-empty sets and $\A$ be an
arbitrary oracle quantum algorithm that takes as input a quantum state $\rho$, makes $q$ queries to a uniformly random $H:X\rightarrow Y$, and outputs some $x\in X$ and a (possibly quantum) output $z$. 
There exist black-box quantum algorithms $\mathcal{S}_1^\A$ and $\mathcal{S}_2^\A$ such that for any quantum input $\rho$, $x^*\in X$, and any predicate $V$:
\begin{align*}
&\Pr_{H}\left[x=x^*\land V(x,H(x),z):(x,z)\la\A^{H}(\rho)\right]\\
&\leq 
(2q+1)^2\Pr_{y}\left[x=x^*\land V(x,y,z):
\begin{array}{c}
(x,\st)\la \mathcal{S}_1^\A(\rho)\\
z \la \mathcal{S}_2^\A(\st,y)
\end{array}
\right]
\end{align*}
Furthermore, $\mathcal{S}_1^\A$ and $\mathcal{S}_2^\A$ run in time polynomial in $q$, $\log|X|$, and $\log|Y|$. 
\end{lemma}
Based on the above lemma, we prove the soundness of $\Pi_\Sigma$ as follows:
\begin{align*}
&\Pr_{H,(\pkey,\vkey)}\left[\statement\in L_\no \land \verify_{\mathsf{QRO}}^H(\vkey,\statement,\pi)=\top:(\statement,\pi)\la\A^H(\pkey)\right]\\
&=\Pr_{H,(\pkey,\vkey)}\left[
\begin{array}{c}
\statement\in L_\no\\
\land\\
\Sigma.\verify_2(\vkey,\statement,\msg_1,H(\msg_1),\msg_3)=\top
\end{array}
:
(\statement,(\msg_1,\msg_3)) 
\la\A^H(\pkey)
\right]\\
&=\mathbb{E}_{(\pkey^*,\vkey^*)}\Pr_{H}\left[
\begin{array}{c}
\statement\in L_\no\\
\land\\
\Sigma.\verify_2(\vkey^*,\statement,\msg_1,H(\msg_1),\msg_3)=\top
\end{array}
:
(\statement,(\msg_1,\msg_3))
\la\A^H(\pkey^*)
\right]\\
&=\mathbb{E}_{(\pkey^*,\vkey^*)}\sum_{\statement^*\in L_\no, \msg_1^*}\\
&\Pr_{H}\left[
\begin{array}{c}
(\statement,\msg_1)=(\statement^*,\msg_1^*)\\
\land\\
\Sigma.\verify_2(\vkey^*,\statement,\msg_1,H(\msg_1),\msg_3)=\top
\end{array}
:
(\statement,(\msg_1,\msg_3)) 
\la\A^H(\pkey^*)
\right]\\
&\le (2q+1)^2\mathbb{E}_{(\pkey^*,\vkey^*)} 
\sum_{\statement^*\in L_\no,\msg_1^*}\\
& \Pr_{\msg_2}\left[
 \begin{array}{c}
  (\statement,\msg_1)=(\statement^*,\msg_1^*)\\
    \land\\
    \Sigma.\verify_2(\vkey^*,\statement,\msg_1,\msg_2,\msg_3)=\top
    \end{array}
    :
    \begin{array}{c}
       (\statement,\msg_1,\st) \sample \mathcal{S}_1^{\A}(\pkey^*)\\
       \msg_3 \sample \mathcal{S}_2^{\A}(\st,\msg_2)\\
    \end{array}
    \right] \\
&= (2q+1)^2
\Pr_{\msg_2,(\pkey,\vkey)}\left[
 \begin{array}{c}
  \statement\in L_\no\\
    \land\\
    \Sigma.\verify_2(\vkey,\statement,\msg_1,\msg_2,\msg_3)=\top
    \end{array}
    :
    \begin{array}{c}
       (\statement,\msg_1,\st) \sample \mathcal{S}_1^{\A}(\pkey)\\
       \msg_3 \sample \mathcal{S}_2^{\A}(\st,\msg_2)\\
    \end{array}
    \right] \\
&\le (2q+1)^2\negl(\secpar)\\
&=\negl(\secpar)
\end{align*}
where the first inequality is obtained by applying \cref{lem:measure_and_reprogram} for each fixed $(\pkey^*,\vkey^*)$ with 
$\rho:=\pkey^*$, $x:=(\statement,\msg_1)$, $y:=\msg_2$,  $z:=\msg_3$, and $V((\statement,\msg_1),\msg_2, \msg_3):=(\Sigma.\verify_2(\vkey^*,\statement,\msg_1,\msg_2,\allowbreak\msg_3)\overset{?}{=}\top)$ and the second inequality follows from the soundness of $\Pi_{\Sigma}$.  

\if0
\begin{align*}
&\Pr_{H,(\pkey,\vkey)}\left[\statement\in L_\no \land \verify_{\mathsf{QRO}}^H(\vkey,\statement,\pi)=\top:(\statement,\pi)\la\A^H(\pkey)\right]\\
&=\mathbb{E}_{(\pkey^*,\vkey^*)} \Pr_H\left[\statement\in L_\no \land \verify_{\mathsf{QRO}}^H(\vkey^*,\statement,\pi)=\top:(\statement,\pi)\la\A^H(\pkey^*)\middle|(\vkey,\pkey)=(\vkey^*,\pkey^*)\right]\\
&=\mathbb{E}_{(\pkey^*,\vkey^*)} \sum_{\statement_\circ\in L_\no}\Pr_H\left[\statement=\statement_\circ \land \verify_{\mathsf{QRO}}^H(\vkey^*,\statement,\pi)=\top:(\statement,\pi)\la\A^H(\pkey^*)\middle|(\vkey,\pkey)=(\vkey^*,\pkey^*)\right]\\
&\le (2q+1)^2\mathbb{E}_{(\pkey^*,\vkey^*)} 
\sum_{\statement_\circ\in L_\no}\\
& \Pr\left[
 \begin{array}{c}
    \statement=\statement_\circ\\
    \land\\
    \Sigma.\verify_2(\vkey^*,\statement,\msg_1,\msg_2,\msg_3)=\top
    \end{array}
    :
    \begin{array}{c}
       (\statement,\msg_1,\st) \sample \mathcal{S}_1^{\A}(\pkey^*)\\
       \msg_2 \sample \verify_1(1^\secpar)\\
       \msg_3 \sample \mathcal{S}_2^{\A}(\st,\msg_2)\\
    \end{array}
    \middle|(\vkey,\pkey)=(\vkey^*,\pkey^*)
    \right] \\
&\le (2q+1)^2\negl(\secpar)\\
&=\negl(\secpar)
\end{align*}
for certain black-box algorithms $\mathcal{S}_1^{\A}$ and $\mathcal{S}_2^{\A}$.
\fi

\if0
The soundness is shown from the following lemma (adapted to our notations):
\begin{lemma}[Fiat-Shamir transformation for generalized $\Sigma$-protocol \cite{TCC:ACGH20}]
Suppose that
there exists a QPT prover $\A$ that makes at most $q$ quantum random oracle queries such that
\begin{align*}
\Pr_{(\pkey,\vkey),H}\left[
\Sigma.\verify_2(\vkey,\statement,\msg_1,H(\statement,\msg_1),\msg_3)=\top
:
(\msg_1,\msg_3)\la \A^H(\statement,\pkey) \right]=\epsilon.  
\end{align*}
Then, there exists a QPT prover $\mathcal{B}$ that first outputs $\msg_1$
and then outputs $\msg_3$ after receiving $\Theta$ such that
\begin{align*}
&\Pr_{(\pkey,\vkey),\Theta}\left[
\Sigma.\verify_2(\vkey,\statement,\msg_1,\Theta,\msg_3)=\top
:
(\msg_1,\msg_3)\la\langle \mathcal{B},\Theta\rangle
\right]\\
&\ge
\frac{\epsilon}{2(2q+1)(2q+3)}
-\frac{1}{(2q+1)|\mathcal{M}_1|},
\end{align*}
where $\mathcal{M}_1$ is the message space for $\msg_1$.
\end{lemma}
Note that the above lemma is different from the original one in \cite{TCC:ACGH20} in the
following point: In their case, the proving key $\pkey$ is classical and is input to the random oracle $H$,
while in our case it is quantum and is not input to $H$.
This difference does not change the proof of the lemma.
\mor{kokomade OK. Dakedo koredato adaptive denai}
\fi

\end{proof}
\fi

\begin{lemma}\label{lem:QRO_ZK}
$\Pi_{\mathsf{QRO}}$ satisfies adaptive multi-theorem zero-knowledge property.
\end{lemma}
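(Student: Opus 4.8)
The plan is to derive the adaptive multi-theorem zero-knowledge property of $\Pi_{\mathsf{QRO}}$ from the special zero-knowledge and high min-entropy properties of the underlying $\Sigma$-protocol with quantum preprocessing $\Pi_{\Sigma}$, following the standard Fiat-Shamir-in-the-QROM simulation strategy of Unruh \cite{EC:Unruh15} adapted to our preprocessing setting. The simulator $\siml$, on input $(\pkey,\statement)$ and with access to $H$ and $\reprogram$, first samples a challenge $\msg_2 \sample \Sigma.\verify_1(1^\secpar)$, then runs the $\Sigma$-protocol simulator $(\msg_1,\msg_3) \sample \siml_\Sigma(\pkey,\statement,\msg_2)$ from \cref{thm:sigma}, then queries $\reprogram$ on input $((\statement,\msg_1),\msg_2)$ so that $H(\statement,\msg_1)=\msg_2$, and finally outputs $\pi := (\msg_1,\msg_3)$. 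Because $\siml$ reprograms $H$ at the point $(\statement,\msg_1)$ after the fact, a verifier running $\verify_{\mathsf{QRO}}^H$ on the simulated proof will see exactly the transcript that $\siml_\Sigma$ produced, which by special zero-knowledge is indistinguishable from an honest transcript.

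The argument proceeds by a hybrid over $\dist$'s (polynomially many) queries to the proving/simulation oracle. First I would switch, one query at a time, from the real prover oracle $\ora_P^H$ to the oracle that runs $\siml_\Sigma^{(3)}$ (or rather the honest $\prove_1,\prove_2$ but with $H$ reprogrammed at $(\statement,\msg_1)$ to a freshly sampled challenge $\msg_2$ instead of using $H(\statement,\msg_1)$). The key technical fact here is the adaptive reprogramming / O2H-style lemma (as used in \cite{EC:Unruh15}, and in the form convenient for simulators with a $\reprogram$ oracle): reprogramming $H$ at an input of high min-entropy is undetectable to a bounded-query adversary up to a loss of $O(q \cdot 2^{-\omega(\log\secpar)})$ per reprogramming, where the min-entropy bound comes precisely from the high min-entropy property of $\Pi_{\Sigma}$ — the first message $\msg_1$ (hence the pair $(\statement,\msg_1)$) cannot be predicted except with negligible probability even given the classical state $\st'$ output by $\prove_1^Q$, so $\dist$'s prior random oracle queries hit the reprogrammed point only with negligible probability. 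Once $H$ is reprogrammed to a fresh independent $\msg_2$ at each such point, the transcript generated is $(\msg_1,\st)\sample\prove_1(\pkey,\statement,\witness^{\otimes k})$, $\msg_3\sample\prove_2(\st,\msg_2)$, which is exactly the "real" side of the special zero-knowledge game; I then invoke special zero-knowledge to replace this with $(\msg_1,\msg_3)\sample\siml_\Sigma(\pkey,\statement,\msg_2)$, again one query at a time via a standard hybrid, using that $\dist$'s internal state (including the maliciously generated $\pkey$, which plays the role of $(\pkey,\st_\A)$ in the special-ZK definition) is the distinguisher. Collecting the hybrids and noting that $\dist$ makes $\poly(\secpar)$ oracle queries and $\poly(\secpar)$ random-oracle queries, the total advantage is $\negl(\secpar)$.

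The main obstacle I expect is handling the reprogramming cleanly in the presence of $\dist$'s direct quantum access to $H$ together with the fact that $\pkey$ is adversarially chosen and may be entangled with $\dist$'s internal registers. Concretely, when I reprogram $H$ at $(\statement,\msg_1)$ inside the oracle response, I must argue that (i) the point $(\statement,\msg_1)$ was never meaningfully queried before — this is where high min-entropy is essential, but one must be careful that $\msg_1$'s unpredictability is with respect to the adversary's view, which includes $\pkey$; the structural split $\prove_1 = \prove_1^C \circ \prove_1^Q$ in the high min-entropy definition is exactly what lets us condition on the classical intermediate state $\st'$ (which $\dist$ could in principle know) and still get unpredictability of $\msg_1$ from the internal randomness of $\prove_1^C$ — and (ii) reprogramming does not disturb $\dist$'s superposition queries beyond a negligible amplitude, which is the content of the adaptive reprogramming lemma. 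A secondary subtlety is the order of reprogramming versus the verifier's own recomputation of $H(\statement,\msg_1)$ during a later $\verify_{\mathsf{QRO}}$ call by $\dist$: since $\dist$ here is a distinguisher and not separately running $\verify$, this does not arise, but if one wanted a stronger notion one would need the reprogrammed value to persist consistently, which it does by definition of $\reprogram$. I would present the reprogramming step by citing the relevant lemma from \cite{EC:Unruh15} (or \cite{C:DonFehMaj20}) rather than reproving it, and spend the detailed writing on the two hybrid arguments and the bookkeeping of the min-entropy conditioning.
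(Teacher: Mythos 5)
Your proposal is correct and follows essentially the same route as the paper: the identical simulator (sample $\msg_2$, run $\siml_\Sigma$, reprogram $H$ at $(\statement,\msg_1)$), followed by the same two hybrid steps — one justified by the special zero-knowledge of $\Pi_\Sigma$ and one by an adaptive-reprogramming lemma whose applicability rests on the high min-entropy property and its $\prove_1^Q/\prove_1^C$ split, exactly as you identify. The only cosmetic difference is that the paper invokes the tight adaptive reprogramming lemma of Grilo--H\"ovelmanns--H\"ulsing--Majenz rather than the Unruh/Don--Fehr--Majenz variants you mention, and runs the hybrids from the simulated side toward the real side rather than the reverse.
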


\ifnum\submission=1
Its proof is given in \cref{sec:proof_QRO_ZK}.
\else
\begin{proof}[Proof of \cref{lem:QRO_ZK}]
For proving the zero-knowledge property, we use the following lemma.

\begin{lemma}[Adaptive Reprogramming  \cite{GHHM20}]\label{lem:adaptive_reprogramming} 
Let $X_1,X_2,X',Y$ be some finite sets. 
For an algorithm $\A$, we consider the following experiment for $b\in \bit$:
\begin{description}
\item[$\mathsf{Exp}_b^{\A}$:]
The experiment first uniformly chooses a function $H:X_1\times X_2\rightarrow Y$, which may be updated during the execution of the experiment. 
$\A$ can make the following two types of queries:
\begin{description}
\item[Random Oracle Query:] 
When $\A$ queries $(x_1,x_2)\in X_1\times X_2$, the oracle returns $H(x)$. 
This oracle can be accessed quantumly (i.e., upon a query $\sum_{x_1,x_2,y}\ket{x_1,x_2}\ket{y}$, the oracle returns $\sum_{x_1,x_2,y}\ket{x_1,x_2}\ket{y\oplus H(x_1,x_2)}$).  
\item[Reprogramming Query:]
A reprogramming query should consist of $x_1\in X_1$ and 
a description of a probabilistic distribution $D$ over $X_2\times X'$. On input $(x_1,D)$, the oracle works as follows. First, the oracle takes $(x_2,x')\sample D$ and $y\sample Y$. 
Then it does either of the following depending of the value of $b$. 
\begin{enumerate}
    \item If $b=0$, it does nothing. 
\item If $b=1$, it reprograms $H$ so that $H(x_1,x_2)=y$. Note that  the reprogrammed $H$ is used for answering random queries hereafter. 
\end{enumerate}
Finally, it returns $(x_2,x')$ to $\A$.
Note that this algorithm is only classically accessed.
\end{description}
After making an arbitrary number of queries to the above oracles, $\A$ finally outputs a bit $b'$, which is treated as the output of the experiment.
\end{description}
Suppose that $\A$ makes at most $q_H$ random oracle queries and at most $q_R$ reprogramming queries and let $p_{max}:=\max_{D,x_2^*}\Pr[x_2=x_2^*:(x_2,x')\sample D]$ where the maximum is taken over all $D$ queried by $\A$ as part of a reprogramming query and $x_2^*\in X_2$. 
Then we have 
\begin{align*}
    \left|\Pr[\mathsf{Exp}_0^{\A}=1]-\Pr[\mathsf{Exp}_1^{\A}=1]\right|\leq \frac{3 q_R}{2}\sqrt{q_{H}p_{max}}
\end{align*}
\end{lemma}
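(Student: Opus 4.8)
The plan is to follow the tight adaptive reprogramming argument of Grilo, Hövelmanns, Hülsing, and Majenz~\cite{GHHM20}. First I would reduce the $q_R$-fold statement to a single reprogramming instruction by a standard hybrid argument: for $0\le i\le q_R$, let $\mathsf{Exp}^{(i)}$ be the experiment in which the first $i$ reprogramming queries are answered as in $\mathsf{Exp}_1$ (i.e.\ $H$ is actually reprogrammed) while the remaining ones are answered as in $\mathsf{Exp}_0$ (i.e.\ $H$ is left untouched but $(x_2,x')$ is still returned). Then $\mathsf{Exp}^{(0)}=\mathsf{Exp}_0$ and $\mathsf{Exp}^{(q_R)}=\mathsf{Exp}_1$, so by the triangle inequality it suffices to establish
\begin{align*}
\left|\Pr[\mathsf{Exp}^{(i-1)}=1]-\Pr[\mathsf{Exp}^{(i)}=1]\right|\le \frac{3}{2}\sqrt{q_H\,p_{max}}
\end{align*}
for every $i\in[q_R]$ and sum over $i$.

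For a fixed $i$, the only difference between $\mathsf{Exp}^{(i-1)}$ and $\mathsf{Exp}^{(i)}$ is whether, at the $i$-th reprogramming query with input $(x_1,D)$, the oracle overwrites $H(x_1,x_2):=y$ for the freshly sampled $(x_2,x')\sample D$ and $y\sample Y$. The key observation is that in $\mathsf{Exp}^{(i-1)}$ the value $H(x_1,x_2)$ is itself a uniformly random element of $Y$ that can be lazily sampled, and it is independent of the adversary's view until $x_2$ is revealed, which happens only as part of the answer to this very reprogramming query. Hence I would introduce an auxiliary experiment $\widetilde{\mathsf{Exp}}^{(i-1)}$ that behaves exactly like $\mathsf{Exp}^{(i-1)}$ except that, at the $i$-th reprogramming query, it additionally \emph{resamples} $H(x_1,x_2)$ to a fresh uniform value. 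On one hand $\widetilde{\mathsf{Exp}}^{(i-1)}$ is perfectly indistinguishable from $\mathsf{Exp}^{(i)}$, since overwriting $H(x_1,x_2)$ by the unrevealed fresh $y$ and resampling it to a fresh uniform value induce identical distributions (recall $y$ is never returned to $\A$). On the other hand, $\widetilde{\mathsf{Exp}}^{(i-1)}$ differs from $\mathsf{Exp}^{(i-1)}$ only by this mid-stream resampling of a single oracle point, so the whole per-step advantage equals the probability with which $\A$ can detect such a resampling.

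It then remains to bound that detection probability. Random-oracle queries made \emph{after} the $i$-th reprogramming query cannot help: in both $\mathsf{Exp}^{(i-1)}$ and $\widetilde{\mathsf{Exp}}^{(i-1)}$ the value stored at $(x_1,x_2)$ from that moment on is an independent uniform element of $Y$ on which $\A$ had only negligible amplitude before, so these queries return identically distributed answers. Thus detection is possible only through the amplitude that $\A$'s \emph{earlier} random-oracle queries place on the input $(x_1,x_2)$; but before the $i$-th reprogramming query $\A$ has no information about $x_2$, so for any fixed query state the expected amplitude on $(x_1,x_2)$ over $x_2\sample D$ is at most $p_{max}$, and summing over the at most $q_H$ queries the expected total query weight on $(x_1,x_2)$ is at most $q_H\,p_{max}$. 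Converting this expected query weight into a distinguishing bound via the one-way-to-hiding / semi-classical-oracle inequality, in the tight variant that yields the constant $3/2$, gives $\frac{3}{2}\sqrt{q_H\,p_{max}}$, which is the per-step bound; summing over the $q_R$ hybrids finishes the proof.

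I expect the main obstacle to be making the ``post-reprogramming queries are useless'' step fully rigorous: this is exactly where the \emph{adaptivity} bites, because $\A$ does learn $x_2$ (hence the reprogrammed location) during the experiment, just too late to exploit it, and a careful lazy-sampling / deferred-measurement bookkeeping is needed to justify that this late knowledge is worthless. A secondary technical point is obtaining the sharp constant $3/2$ rather than the looser $2$, which requires invoking the tight form of the one-way-to-hiding lemma and tracking the accounting of query weights precisely as in~\cite{GHHM20}.
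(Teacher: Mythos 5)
This lemma is not proved in the paper at all: it is imported verbatim from Grilo--H\"ovelmanns--H\"ulsing--Majenz, with only the remark that it is a special case of their Theorem~1 (with the roles of $X_1$ and $X_2$ swapped and a slight generalization of their Proposition~2). So there is no internal proof to compare against; your proposal is essentially an attempt to reconstruct the cited proof, and its architecture does match that work: a hybrid over the $q_R$ reprogramming queries, the observation that reprogramming to a fresh unrevealed $y$ is the same as resampling the point $(x_1,x_2)$, and a one-shot bound on detecting a single mid-stream resampling in terms of the adversary's query weight on a point whose $x_2$-coordinate it cannot predict better than $p_{max}$.

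The genuine gap is in the step where the bound $\frac{3}{2}\sqrt{q_H\,p_{max}}$ per hybrid is supposed to come from. You appeal to ``the one-way-to-hiding / semi-classical-oracle inequality, in the tight variant that yields the constant $3/2$,'' but no such off-the-shelf variant exists: the standard O2H and semi-classical O2H lemmas give a constant $2$, i.e.\ a per-step bound of $2\sqrt{q_H\,p_{max}}$, and the constant $3/2$ in the cited theorem is obtained by a direct norm computation on the joint adversary--oracle state in Zhandry's superposition/compressed-oracle representation, not by invoking O2H. Relatedly, your claim that post-reprogramming random-oracle queries ``cannot help'' is exactly the place where that formalism is needed: in $\mathsf{Exp}^{(i-1)}$ the point is \emph{not} resampled, so after $x_2$ is revealed the adversary can query $(x_1,x_2)$ and correlate the answer with amplitude it placed there earlier; the lazy-sampling intuition you use to dismiss this is not valid quantumly, and you yourself flag it as the unfinished part. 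As written, your argument would at best yield $2q_R\sqrt{q_H\,p_{max}}$ (still negligible, and sufficient for the paper's application in the Fiat--Shamir zero-knowledge proof), but it does not establish the stated inequality with the constant $\frac{3q_R}{2}$ without carrying out the superposition-oracle analysis of the cited work.
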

\begin{remark}
The above lemma is a special case of \cite[Theorem 1]{GHHM20}.
We note that the roles of $X_1$ and $X_2$ are swapped from the original one, but this is just for convenience in later use and does not make any essential difference. We also note that a similar special case is stated in  \cite[Proposition 2]{GHHM20}, but the above lemma is slightly more general than that since their proposition assumes that $\A$ uses the same $D$ for all reprogramming queries.  
\end{remark}

\begin{proof}
Let $\siml_\Sigma$ be the simulator for $\Pi_\Sigma$. 
We construct a simulator $\siml_{\mathsf{QRO}}$ for $\Pi_{\mathsf{QRO}}$ as follows where $\mathcal{C}$ is the set from which $\msg_2$ is uniformly chosen.
\begin{description}
\item[$\siml_{\mathsf{QRO}}^{H,\reprogram}(\pkey,\statement)$:]
It randomly chooses $\msg_2\sample \mathcal{C}$, 
generates $(\msg_1,\msg_3)\sample \siml_\Sigma(\pkey,\statement,\msg_2)$, 
queries $((\statement,\msg_1),\msg_2)$ to $\reprogram$, which reprograms $H$ so that $H(\statement,\msg_1)=\msg_2$,  
and outputs $(\msg_1,\msg_3)$.   
\end{description}
In the following, we prove that the above simulator satisfies the requirement for adaptive multi-theorem zero-knowledge. 
For proving this, 
we consider the following sequence of modified versions of $\siml_{\mathsf{QRO}}$,  which take $k$ copies of a witness $\witness\in R_L(\statement)$ as an additional input.
\begin{description}
\item[${\siml_{\mathsf{QRO}}^{(1)}}^{H,\reprogram}(\pkey,\statement,\witness^{\otimes k})$:]
This simulator uses the real proving algorithm instead of the simulator to generate $\msg_1$ and $\msg_3$.
That is, it generates  
$(\msg_1,\st)\sample \Sigma.\prove_1(\pkey,\statement,\witness^{\otimes k})$, randomly chooses $\msg_2\sample \mathcal{C}$, 
generates 
$\msg_3\sample \Sigma.\prove_2(\st,\msg_2)$, 
queries $((\statement,\msg_1),\msg_2)$ to $\reprogram$, which reprograms $H$ so that $H(x,\msg_1)=\msg_2$,  
and outputs $(\msg_1,\msg_3)$.   
\item[${\siml_{\mathsf{QRO}}^{(2)}}^{H,\reprogram}(\pkey,\statement,\witness^{\otimes k})$:]
This simulator derives $\msg_2$ by querying to the random oracle instead of randomly choosing $\msg_2$ and then reprogramming the random oracle to be consistent. 
That is,  it generates  
$(\msg_1,\st)\sample \Sigma.\prove_1(\pkey,\statement,\witness^{\otimes k})$, sets $\msg_2:= H(\statement,\msg_1)$, 
generates 
$\msg_3\sample \Sigma.\prove_2(\st,\msg_2)$,  
and outputs $(\msg_1,\msg_3)$. 
Note that this simulator no longer makes a query to $\reprogram$. 
\end{description}
Let $\dist$ be a QPT distinguisher. 
For notational simplicity, let $\ora_{S(0)}:=\ora_{S}$, 
$\ora_{S(i)}$ be the oracle that works similarly to $\ora_{S}$ except that $\siml_{\mathsf{QRO}}^{(i)}$ is used instead of $\siml_{\mathsf{QRO}}$ for $i=1,2$, 
\begin{align*}
    p_i:= \Pr\left[\dist^{H,
    \ora_{S(i)}^{H,\reprogram}(\cdot,\cdot,\cdot)}(1^\secpar)=1:
    \begin{array}{c}
          H \sample \ROdist
    \end{array}
    \right]
\end{align*}
for $i=0,1,2$, and 
\begin{align*}
    p_{\mathsf{real}}:=\Pr\left[\dist^{H,
    \ora_{P}^{H}(\cdot,\cdot,\cdot)}(1^\secpar)=1:
    \begin{array}{c}
          H \sample \ROdist
    \end{array}
    \right].
\end{align*}
What we have to prove is $|p_{\mathsf{real}}-p_0|=\negl(\secpar)$. 
We prove this by the following claims.
\begin{myclaim}\label{cla:FS_simulation_zero_to_one}
$|p_0-p_1|\leq \poly(\secpar)$.
\end{myclaim}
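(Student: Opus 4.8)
The plan is to establish $|p_0-p_1|=\negl(\secpar)$ — the non-trivial bound evidently intended by the claim — through a hybrid argument over $\dist$'s queries to its second oracle, reducing each hybrid step to the special zero-knowledge property of $\Pi_{\Sigma}$. The starting point is the observation that $\siml_{\mathsf{QRO}}$ and $\siml_{\mathsf{QRO}}^{(1)}$ agree on a query $(\pkey,\statement,\witness^{\otimes k})$ except in how the pair $(\msg_1,\msg_3)$ is produced: the former invokes $\siml_\Sigma(\pkey,\statement,\msg_2)$, whereas the latter runs $\Sigma.\prove_1(\pkey,\statement,\witness^{\otimes k})$ and then $\Sigma.\prove_2(\st,\msg_2)$; in both cases $\msg_2$ is first sampled uniformly from the challenge space $\mathcal{C}$, then $H$ is reprogrammed via $\reprogram$ so that $H(\statement,\msg_1)=\msg_2$, and $(\msg_1,\msg_3)$ is returned. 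Thus the two answering procedures differ by exactly one invocation of the honest $\Sigma$-protocol prover versus its simulator, on a fixed (uniformly chosen) challenge $\msg_2$ and a proving key $\pkey$ chosen by $\dist$ — which is precisely the setting governed by the special zero-knowledge of $\Pi_{\Sigma}$, which in \cref{def:sigma_q_prepro} explicitly allows $\pkey$ to be generated adversarially by $\A_1$.

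Concretely, let $Q=\poly(\secpar)$ bound the number of $\dist$'s queries to its second oracle, and for $j\in\{0,\dots,Q\}$ let $\mathsf{Hyb}_j$ be the experiment in which the first $j$ of those queries are answered by $\ora_{S(1)}$ (using $\siml_{\mathsf{QRO}}^{(1)}$, i.e.\ the honest prover) and the remaining ones by $\ora_{S(0)}=\ora_S$ (using $\siml_{\mathsf{QRO}}$). Then $\mathsf{Hyb}_0$ is the experiment defining $p_0$ and $\mathsf{Hyb}_Q$ is the one defining $p_1$, so it suffices to bound $|\Pr[\mathsf{Hyb}_{j-1}=1]-\Pr[\mathsf{Hyb}_j=1]|$ for every $j$. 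For a fixed $\theta\in\mathcal{C}$ I would construct an adversary $\A=(\A_1,\A_2)$ against the special zero-knowledge of $\Pi_{\Sigma}$ as follows. $\A_1(1^\secpar)$ runs $\dist$, simulating $\dist$'s quantum random oracle by a standard efficient technique (compressed oracles, or $2q$-wise independent functions) and performing all reprogrammings requested while answering $\dist$'s first $j-1$ queries via $\siml_{\mathsf{QRO}}$; when $\dist$ issues its $j$-th query $(\pkey,\statement,\witness^{\otimes k})$, $\A_1$ outputs this $\pkey$ together with a state $\st_\A$ recording $\dist$'s current state, the simulated oracle, $\statement$, and $\witness^{\otimes k}$. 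Given $(\st_\A,\statement,\msg_1,\theta,\msg_3)$, $\A_2$ reprograms the simulated $H$ at $((\statement,\msg_1),\theta)$, returns $(\msg_1,\msg_3)$ to $\dist$ as the answer to the $j$-th query, answers the remaining $Q-j$ queries using $\siml_{\mathsf{QRO}}^{(1)}$ (feasible since each of those queries carries its own witness), and outputs $\dist$'s final bit. When the challenger uses the honest $\Sigma$-prover, $\A$ reproduces $\mathsf{Hyb}_j$ conditioned on the $j$-th challenge being $\theta$; when it uses $\siml_\Sigma$, it reproduces $\mathsf{Hyb}_{j-1}$ under the same conditioning. Since the special zero-knowledge advantage is $\negl(\secpar)$ for every $\theta$, averaging over $\theta\sample\mathcal{C}$ gives $|\Pr[\mathsf{Hyb}_{j-1}=1]-\Pr[\mathsf{Hyb}_j=1]|\le\negl(\secpar)$, and summing over $j$ yields $|p_0-p_1|\le Q\cdot\negl(\secpar)=\negl(\secpar)$.

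I do not anticipate a genuine obstacle here; only two routine points require care. First, the reduction must faithfully emulate for $\dist$ both the quantum random oracle and the list of reprogrammings performed while answering the earlier simulated queries — standard in the QROM, and the reprogramming oracle is accessed only classically so it poses no extra difficulty. Second, one must keep in mind that the special zero-knowledge of $\Pi_{\Sigma}$ is invoked with the adversarially chosen $\pkey$ forwarded from $\dist$'s query rather than with an honestly generated key; this is exactly what the formulation in \cref{def:sigma_q_prepro} permits. The high-min-entropy property of $\Pi_{\Sigma}$ and the adaptive reprogramming lemma are not used in this claim — they enter only in the subsequent hybrid steps bridging $p_1$, $p_2$, and $p_{\mathsf{real}}$.
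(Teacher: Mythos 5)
Your proposal is correct and follows exactly the route the paper intends: the paper's proof of this claim is the single sentence ``a straightforward reduction to the special zero-knowledge property of $\Pi_\Sigma$ and a standard hybrid argument,'' which is precisely the query-by-query hybrid and per-challenge reduction you spell out (and you rightly read the claimed bound $\poly(\secpar)$ as a typo for $\negl(\secpar)$). No discrepancies.
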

\begin{proof}
This claim can be proven by a straightforward reduction to the special zero-knowledge property of $\Pi_{\Sigma}$ and a standard hybrid argument. 
\end{proof}
\begin{myclaim}\label{cla:FS_simulation_one_to_two}
$|p_1-p_2|\leq \poly(\secpar)$.
\end{myclaim}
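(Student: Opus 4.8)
The plan is to reduce Claim~\ref{cla:FS_simulation_one_to_two} to the Adaptive Reprogramming lemma (\cref{lem:adaptive_reprogramming}). The only difference between $\siml_{\mathsf{QRO}}^{(1)}$ and $\siml_{\mathsf{QRO}}^{(2)}$ is how the challenge $\msg_2$ is produced and whether the random oracle is modified: $\siml_{\mathsf{QRO}}^{(1)}$ samples a fresh uniform $\msg_2\sample\mathcal{C}$, runs $\Sigma.\prove_2$ on it, and reprograms $H$ at $(\statement,\msg_1)$ to equal $\msg_2$, whereas $\siml_{\mathsf{QRO}}^{(2)}$ reads $\msg_2:=H(\statement,\msg_1)$ off the unmodified oracle and runs $\Sigma.\prove_2$ on it. Since $\msg_1$ has high min-entropy, the point $(\statement,\msg_1)$ is with overwhelming probability one that $\dist$ has not queried, so the original oracle value there is statistically close to a fresh uniform value; this is the informal reason the reprogramming is undetectable, and \cref{lem:adaptive_reprogramming} makes it quantitative.

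First I would set up the reduction: an adversary $\A$ for the experiment $\mathsf{Exp}_b^{\A}$ of \cref{lem:adaptive_reprogramming} with $X_1$ the statement space, $X_2$ the message space of $\msg_1$, $Y=\mathcal{C}$, and $X'$ the space of the prover's intermediate state $\st$. $\A$ runs $\dist$ internally, answering each of $\dist$'s (quantum) random-oracle queries by forwarding it to $\A$'s own oracle. On a proving query $(\pkey,\statement,\witness^{\otimes k})$, $\A$ runs the quantum part $\Sigma.\prove_1^{Q}(\pkey,\statement,\witness^{\otimes k})\to\st'$, lets $D$ be the distribution of the pair $(\msg_1,\st)$ produced by the classical part $\Sigma.\prove_1^{C}(\st')$, issues the reprogramming query $(\statement,D)$, receives $(\msg_1,\st)$, queries its oracle on $(\statement,\msg_1)$ to obtain $\msg_2$, computes $\msg_3\sample\Sigma.\prove_2(\st,\msg_2)$, and returns $(\msg_1,\msg_3)$ to $\dist$. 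Finally $\A$ outputs $\dist$'s output. The key check is that $b=1$ exactly reproduces the view under $\ora_{S(1)}$ (the reprogramming sets $H(\statement,\msg_1)$ to the fresh uniform $y$, which is then read back as $\msg_2$), while $b=0$ exactly reproduces the view under $\ora_{S(2)}$ (the oracle is untouched and $\msg_2=H(\statement,\msg_1)$ is its original value); hence $\Pr[\mathsf{Exp}_1^{\A}=1]=p_1$ and $\Pr[\mathsf{Exp}_0^{\A}=1]=p_2$.

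Next I would bound the parameters in \cref{lem:adaptive_reprogramming}. Both the number of random-oracle queries $q_H$ and the number of reprogramming queries $q_R$ of $\A$ are $\poly(\secpar)$, since $\dist$ is QPT and $\A$ adds only one oracle query and one reprogramming query per proving query. Every distribution $D$ that $\A$ queries is the output distribution of $\Sigma.\prove_1^{C}(\st')$ for some $\st'$ in the support of $\Sigma.\prove_1^{Q}$, and the high min-entropy property of $\Pi_{\Sigma}$ (\cref{def:sigma_q_prepro}) states precisely that the $\msg_1$-marginal of any such distribution assigns probability at most $\negl(\secpar)$ to any fixed value; hence $p_{max}=\negl(\secpar)$. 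Therefore \cref{lem:adaptive_reprogramming} gives $|p_1-p_2|\le\frac{3q_R}{2}\sqrt{q_H\, p_{max}}=\negl(\secpar)$, which is the claim (here the bound $\poly(\secpar)$ stated in the claim should of course be read as $\negl(\secpar)$).

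I expect the only delicate point to be matching the hybrids to the exact form of the reprogramming experiment: one must hand \cref{lem:adaptive_reprogramming} the \emph{joint} distribution $D$ of $(\msg_1,\st)$, so that $\Sigma.\prove_2$ later receives the correctly correlated state, while relying on the lemma's guarantee that the (re)programmed value $y$ is drawn independently of $(\msg_1,\st)$, exactly as a freshly sampled $\msg_2$ should be. This is also why \cref{def:sigma_q_prepro} factors $\prove_1$ as $\prove_1^{C}\circ\prove_1^{Q}$ and states high min-entropy for every $\st'$ in the support of $\prove_1^{Q}$: it lets $\A$ expose $\st'$ and build $D$ before it touches the oracle, and guarantees the min-entropy bound holds for the $D$ arising in each proving query.
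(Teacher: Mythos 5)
Your proposal is correct and follows essentially the same route as the paper: the paper's proof is exactly a reduction to \cref{lem:adaptive_reprogramming} with $\statement,\msg_1,\st,\msg_2$ playing the roles of $x_1,x_2,x',y$ and the output distribution of $\Sigma.\prove_1^C(\st')$ playing the role of $D$, with $p_{max}=\negl(\secpar)$ supplied by the high min-entropy property. You have merely spelled out the reduction (including the detail that $\A$ reads back $\msg_2$ via an oracle query after reprogramming) that the paper leaves as ``straightforward,'' and you correctly note that the claimed bound $\poly(\secpar)$ is a typo for $\negl(\secpar)$.
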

\begin{proof}
This claim can be proven by a straightforward reduction to \cref{lem:adaptive_reprogramming} where $\statement$, $\msg_1$, $\st$, and $\msg_2$ play the roles of $x_1$, $x_2$, $x'$, and $y$, respectively, 
and the output distribution of $\Sigma.\prove_1^C(\st')$ where $\st'\sample \Sigma.\prove_1^Q(\pkey,\statement,\witness^{\otimes k})$ plays 
the role of the distribution $D$. (See \cref{def:sigma_q_prepro} for the definitions of $\Sigma.\prove_1^C$ and $\Sigma.\prove_1^Q$).
Since the number of $\dist$'s queries is $\poly(\secpar)$ and $\msg_1$ sampled by  $\Sigma.\prove_1^C(\st')$ does not take any fixed value with non-negligible probability as required by the high min-entropy property of $\Pi_\Sigma$, $p_{max}$ in \cref{lem:adaptive_reprogramming} is negligible.
Then \cref{lem:adaptive_reprogramming} 
directly gives the above claim. 
\end{proof}
\begin{myclaim}\label{cla:FS_simulation_two_to_real}
$p_2=p_{\mathsf{real}}$.
\end{myclaim}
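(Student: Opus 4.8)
The plan is to observe that this claim holds essentially by inspection, since $\siml_{\mathsf{QRO}}^{(2)}$ was defined precisely so as to replicate the behaviour of the honest proving oracle. Concretely, I would unfold both sides. In the experiment defining $p_{\mathsf{real}}$, each query $(\pkey,\statement,\witness^{\otimes k})$ that $\dist$ makes to its second oracle is answered by running $\prove_{\mathsf{QRO}}^H(\pkey,\statement,\witness^{\otimes k})$, which parses $\Sigma.\pkey\la\pkey$, computes $(\msg_1,\st)\sample \Sigma.\prove_1(\pkey,\statement,\witness^{\otimes k})$, sets $\msg_2:=H(\statement,\msg_1)$, computes $\msg_3\sample \Sigma.\prove_2(\st,\msg_2)$, and returns $(\msg_1,\msg_3)$. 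In the experiment defining $p_2$, the same query is answered by $\siml_{\mathsf{QRO}}^{(2)}{}^{H,\reprogram}(\pkey,\statement,\witness^{\otimes k})$, which performs exactly the same sequence of steps, reads $\msg_2$ from $H$ in exactly the same way, and in particular makes no query to $\reprogram$.

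Hence the two oracles $\ora_P^H(\cdot,\cdot,\cdot)$ and $\ora_{S(2)}^{H,\reprogram}(\cdot,\cdot,\cdot)$ implement the same randomized map on every input, and — crucially — the random oracle $H$ that $\dist$ queries directly is never reprogrammed in either experiment, so it remains distributed as $H\sample\ROdist$ throughout the entire interaction. Therefore the joint view of $\dist$, consisting of its direct random-oracle answers together with the responses to its second-oracle queries, is identically distributed in the two experiments, which yields $p_2=p_{\mathsf{real}}$.

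There is no real obstacle in this step; the only points worth stating explicitly are that the definition of $\siml_{\mathsf{QRO}}^{(2)}$ indeed drops the $\reprogram$ call (so that the direct oracle $H$ is untouched) and that it derives $\msg_2$ by the same query $H(\statement,\msg_1)$ as $\prove_{\mathsf{QRO}}$. Combining this equality with Claims~\ref{cla:FS_simulation_zero_to_one} and~\ref{cla:FS_simulation_one_to_two} via the triangle inequality gives $|p_{\mathsf{real}}-p_0|=\negl(\secpar)$, completing the proof of \cref{lem:QRO_ZK}.
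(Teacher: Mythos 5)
Your proposal is correct and matches the paper's argument: the paper's proof simply states that the claim is clear because $\siml_{\mathsf{QRO}}^{(2)}$ works in the same way as the real proving algorithm $\prove_{\mathsf{QRO}}$, and your write-up is just a more explicit unfolding of that observation (including the helpful remark that neither experiment ever reprograms $H$, so the oracle seen by $\dist$ stays distributed as $H\sample\ROdist$).
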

\begin{proof}
This is clear since $\siml_{\mathsf{QRO}}^{(2)}$ works similarly to the real proving algorithm $\prove_{\mathsf{QRO}}$. 
\end{proof}
By combining the above claims, we obtain $|p_{\mathsf{real}}-p_0|\leq \negl(\secpar)$. This completes the proof of \cref{lem:QRO_ZK}. 
\end{proof}
\end{proof}
\fi

\paragraph{Shared Bell-pair model.}
Remark that the verifier of  $\Pi_{\mathsf{QRO}}$ just sends a state $\rho_P:=\bigotimes_{j=1}^N(U(W_j)|m_j\rangle)$ for
$(W_1,...,W_N)\sample \{X,Y,Z\}^{N}$ and $(m_1,...,m_N)\sample \{0,1\}^{N}$ while keeping $(
W_1,...,W_N,m_1,...,m_N)$ as a verification key. 
This step can be done in a non-interactive way if $N$ Bell-pairs are a priori shared between the prover and verifier.
That is, the verifier can measure his halves of Bell pairs in a randomly chosen bases $(W_1,...,W_N)$ to get measurement outcomes $(m_1,...,m_N)$. 
Apparently, this does not harm either of soundness or zero-knowledge since 
the protocol is the same as $\Pi_{\mathsf{QRO}}$ from the view of the prover and 
the malicious verifier's power is just weaker than that in $\Pi_{\mathsf{QRO}}$ in the sense that it cannot control the quantum state to be sent to the prover.
Thus, we obtain the following theorem.
\begin{theorem}\label{thm:fiat-shamir_Bell_pair}
There exists a CV-NIZK for $\QMA$ in the QRO + shared Bell pair model. 
\end{theorem}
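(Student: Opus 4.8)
The plan is to obtain \cref{thm:fiat-shamir_Bell_pair} as a direct corollary of \cref{thm:fiat-shamir} together with the lemmas \cref{lem:QRO_completeness_soundness} and \cref{lem:QRO_ZK} underlying it, by noticing that the sole quantum message from the verifier to the prover in $\Pi_{\mathsf{QRO}}$ --- the state $\rho_P=\bigotimes_{j=1}^N(U(W_j)\ket{m_j})$ with $(W_1,\dots,W_N)\sample\{X,Y,Z\}^{N}$ and $(m_1,\dots,m_N)\sample\{0,1\}^{N}$ --- can be realized non-interactively from $N$ pre-shared Bell pairs. First I would define the protocol $\Pi_{\mathsf{Bell}}$ in the QRO + shared Bell pair model: a trusted party distributes $N$ Bell pairs, handing one half of each to the prover as its proving key and the complementary halves to the verifier; the verifier measures its $j$-th half in a uniformly random basis $W_j$, records the outcome $m_j$, sets $\vkey:=(W_1,\dots,W_N,m_1,\dots,m_N)$, and from then on runs exactly $\prove_{\mathsf{QRO}}$ and $\verify_{\mathsf{QRO}}$ of \cref{fig:CV-NIZK-FiatShamir}. (The precise definition of CV-NIZK in this model mirrors \cref{def:qrovp_nizk}, with $\preprocess$ replaced by the trusted distribution of Bell pairs followed by the verifier's local measurement.)

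For completeness I would invoke the State Collapsing lemma \cref{lem:statecollapsing}: conditioned on the verifier obtaining $(m_1,\dots,m_N)$ from measuring its halves in bases $(W_1,\dots,W_N)$, the prover's halves collapse to $\bigotimes_{j=1}^N(U(W_j)\ket{m_j})=\rho_P$. Hence the joint distribution of the prover's proving key and the verification key in $\Pi_{\mathsf{Bell}}$ is identical to the one in $\Pi_{\mathsf{QRO}}$, so $(1-\negl(\secpar))$-completeness is inherited from \cref{lem:QRO_completeness_soundness}. For soundness I would observe that any cheating prover acts on its Bell-pair halves while the verifier's basis measurement acts on the disjoint complementary registers; since operations on disjoint registers commute, one may push the verifier's measurement to the very start of the interaction without altering the acceptance probability, and what remains is literally $\Pi_{\mathsf{QRO}}$ against an arbitrary cheating prover. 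Adaptive $\negl(\secpar)$-soundness therefore transfers verbatim from \cref{lem:QRO_completeness_soundness}.

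For zero-knowledge, the key point is that a malicious verifier in $\Pi_{\mathsf{Bell}}$ is \emph{strictly weaker} than one in $\Pi_{\mathsf{QRO}}$: the distinguisher of \cref{def:qrovp_nizk} may feed the honest prover an arbitrary proving key $\pkey$, entangled with its own registers, whereas in $\Pi_{\mathsf{Bell}}$ the prover always receives genuine Bell-pair halves produced by the trusted party, and the verifier only holds --- and may act arbitrarily on --- the complementary halves. I would make this precise by a reduction: given a distinguisher $\dist'$ against zero-knowledge of $\Pi_{\mathsf{Bell}}$, build a distinguisher $\dist$ against zero-knowledge of $\Pi_{\mathsf{QRO}}$ that, for each query of $\dist'$, internally prepares $N$ fresh Bell pairs, keeps one half of each entangled with its state (this is exactly the data a $\Pi_{\mathsf{Bell}}$-verifier would possess), and forwards the other halves as the proving key $\pkey$; the simulator $\siml_{\mathsf{Bell}}$ for $\Pi_{\mathsf{Bell}}$ is then simply $\siml_{\mathsf{QRO}}$ run on the prover's halves supplied by the trusted party. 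Under this correspondence the two oracles seen by $\dist'$ coincide with the honest-prover and simulator oracles of $\Pi_{\mathsf{QRO}}$, so adaptive multi-theorem zero-knowledge follows from \cref{lem:QRO_ZK}.

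Essentially all of the content reduces to \cref{lem:statecollapsing}, the commutation of measurements on disjoint registers, and \cref{thm:fiat-shamir}; no new quantum or cryptographic argument is needed. The only place requiring care --- and the mild obstacle --- is making the zero-knowledge reduction faithful at the level of oracle indistinguishability, i.e.\ checking that the ``entanglement with the distinguisher's internal registers'' clause in \cref{def:qrovp_nizk} is precisely what licenses passing Bell-pair halves as $\pkey$, so that one can reuse $\siml_{\mathsf{QRO}}$ as a black box rather than re-running the hybrid argument of \cref{lem:QRO_ZK}.
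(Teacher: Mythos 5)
Your proposal is correct and takes essentially the same route as the paper: the paper's argument (given in the ``Shared Bell-pair model'' paragraph preceding the theorem) is exactly that the verifier's preprocessing message of $\Pi_{\mathsf{QRO}}$ can be realized by measuring halves of pre-shared Bell pairs (so completeness and soundness carry over via \cref{lem:statecollapsing} and commutation of measurements on disjoint registers), and that zero-knowledge follows because the malicious verifier in the Bell-pair model is strictly weaker, being unable to control the state handed to the prover. Your write-up merely spells out in more detail the reduction the paper leaves implicit.
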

See \cref{app:bell} for the formal definition of CV-NIZK for $\QMA$ in the QRO + shared Bell pair model.


\appendix
\ifnum\submission=0
\else
\newpage
 	\setcounter{page}{1}
 	{
	\noindent
 	\begin{center}
	{\Large SUPPLEMENTAL MATERIALS}
	\end{center}
 	}
	\setcounter{tocdepth}{2}
\fi
\ifnum\submission=0
\ifnum\submission=1
\subsection{More Explanation on \cref{lem:five_local_Hamiltonian}}\label{sec:BG20} 
\else
\section{More Explanation on \cref{lem:five_local_Hamiltonian}}\label{sec:BG20} 
\fi
Here, we explain how to obtain  \cref{lem:five_local_Hamiltonian} based on \cite{FOCS:BroGri20}. 
Let $L=(L_\yes,L_\no)$ be any $\QMA$ promise problem,
and $V=U_T...U_1$ be its verification circuit, where each $U_i$ is an elementary gate taken from
a universal gate set.
For $\statement \in L_\yes$, there exists a witness state
$|\psi\rangle$ such that $V$ accepts with probability
exponentially close to 1, whereas
for $\statement\in L_\no$, any state makes $V$
accept with probability exponentially small.

As is explained in \cite{FOCS:BroGri20}, we consider the
encoded version of the verification circuit
$V'$ with a certain quantum error
correcting code. 
The circuit $V'$ consists of
gates from the universal gate set 
$\{CNOT,T,H,X,Z\}$.
From the standard circuit-to-Hamiltonian
construction technique,
we can construct a local Hamiltonian
$H_\statement:=\sum_i H_i$ corresponding to
$V'$.
If there is a witness state $|\psi\rangle$
that makes $V'$ accept with
probability $1-\negl(|\statement|)$, then the history state
\begin{eqnarray*}
\frac{1}{\sqrt{T+1}}\sum_{t\in[T+1]}|0^{T-t}1^t\rangle\otimes U_t...U_1(Enc(|\psi\rangle)\otimes|0^A\rangle)
\end{eqnarray*}
has exponentially small energy.
Due to the local simulatability,
there is an efficient deterministic algorithm
that outputs the classical description of a state that is close to the reduced
density matrix of the history state on at most five
qubits~\cite{FOCS:BroGri20,FOCS:GriSloYue19}.
If every quantum state $|\psi\rangle$ makes $V'$ reject with probability at least $\epsilon$,
then the groundenergy of $H$ is at least $\Omega(\frac{\epsilon}{T^3})$.

Let $\ham_\statement=\sum_{i=1}^M c_i P_i$ be the local Hamiltonian,
where $M=\poly(|\statement|)$, $c_i$ is real, and $P_i$ is a tensor product of
Pauli operators $(I,X,Y,Z)$.
In the standard circuit-to-Hamiltonian construction,
each $P_i$
is a tensor product of at most five non-trivial Pauli operators
$(X,Y,Z)$.
As is shown in \cite{MNS}, this Hamiltonian can be changed to the form of
$\sum_{i=1}^M p_i \frac{I+s_iP_i}{2}$ with
$M=\poly(|\statement|)$, $s_i\in\{+1,-1\}$, $p_i>0$, $\sum_{i=1}^Mp_i=1$, and $P_i$ is a tensor
product of Pauli operators $(I,X,Y,Z)$ with at most five non-trivial Pauli operators $(X,Y,Z)$.
In fact, define the normalized Hamiltonian
\begin{eqnarray*}
\ham_\statement':=\frac{1}{2}\Big(I+\frac{\ham_\statement}{\sum_{i=1}^M |c_i|}\Big)=\sum_{i=1}^M \frac{|c_i|}{\sum_{i=1}^M|c_i|} \frac{I+sign(c_i) P_i}{2},
\end{eqnarray*}
and we have only to take $p_i:=\frac{|c_i|}{\sum_{i=1}^M|c_i|}$
and $s_i:=sign(c_i)$.

\section{More details for the proof of \cref{lem:NIZK_completeness_soundness}}\label{sec:alternative_proof_NIZK}
Here we give more details of the completeness and the soundness of the virtual protocol 2.
In the virtual protocol 2,
$i\in[M]$ is chosen after $S_V$ and $(W_1,...,W_N)$ are chosen, but we can assume that $i$
is chosen before $S_V$ and $(W_1,...,W_N)$ are chosen, because they are independent.
When $P_i$ is not consistent to $(S_V,\{W_j\}_{j\in S_V})$ or
the coin heads, the measurement result on $\rho_V'$ is not used.
The probability that such cases happen is
\begin{eqnarray*}
&&\sum_{i=1}^Mp_i \Big({\rm Pr}[\mbox{not consistent}|i]+{\rm Pr}[\mbox{consistent}|i](1-3^{|S_i|-5})\Big)\\
&=&
\sum_{i=1}^M p_i \Big(
\frac{3^N\sum_{j=1}^5{N\choose j}-3^{N-|S_i|}}{3^N\sum_{j=1}^5{N\choose j}}
+\frac{3^{N-|S_i|}}{3^N\sum_{j=1}^5{N\choose j}}
(1-3^{|S_i|-5})\Big)\\
&=&\sum_{i=1}^M p_i \Big(
1-\frac{1}{3^5\sum_{j=1}^5{N\choose j}}\Big)\\
&=&
1-\frac{1}{3^5\sum_{j=1}^5{N\choose j}}\\
&=&1-\frac{1}{N'}.
\end{eqnarray*}
The probability that it is consistent and the coin tails is therefore
$\frac{1}{N'}$.
In this case, the measurement result on $\rho_V'$ is used.
The probability that the measurement result satisfies $(-1)^{\bigoplus_{j\in S_i}m_j'}=-s_i$
is from Lemma~\ref{lem:prob_and_energy},
\begin{eqnarray*}
\sum_{i=1}^Mp_i \Tr\Big[\Big(I-\frac{I+s_iP_i}{2}\Big)\rho_V'\Big]
=
1-\Tr(\ham_\statement \rho_V').
\end{eqnarray*}
The total acceptance probability is therefore
\begin{eqnarray*}
1-\frac{1}{N'}
+
\frac{1}{N'}\Big[
1-\Tr(\ham_\statement \rho_V')\Big]
=1-\frac{\Tr(\ham_\statement \rho_V')}{N'}.
\end{eqnarray*}

\if0
Therefore,
\begin{align*}
&&    \Pr\left[
    \verify(\vkey,\statement,\pi)=\top 
    :(\pkey,\vkey)\sample \setup(1^\secpar),\pi \sample \prove(\pkey,\statement,\witness)\right]\\
    &=&
    \sum_{i=1}^M p_i\frac{1}{3^N}\sum_W\frac{1}{\sum_{j=1}^5{N\choose j}}
    \sum_{S_V}\sum_m\langle m|V(W)^\dagger \rho_\hist V(W)|m\rangle
\end{align*}

We next show the soundness of the virtual protocol 2.
Let us define
$H^h:= \prod_{j=1}^N H_j^{h_j}$
and $|m\rangle :=\bigotimes_{j=1}^N|m_j\rangle$.
Let
$\{\Lambda_{x,z,\statement}\}_{x,z,\statement}$ be the POVM that the adversary $\A$ does
on $k_P$.
Then,
\begin{eqnarray*}
&&    \Pr\left[
    \statement \notin L \land \verify(\vkey,\statement,\pi)=\top 
    :(\pkey,\vkey)\sample \setup(1^\secpar),(\statement,\pi) \sample \A(\pkey)\right]\\
&\le&\Big(1-\frac{1}{N'}\Big)\\
&&+\frac{1}{N'}
\frac{1}{2^N}\sum_{m\in\{0,1\}^N}
\frac{1}{2^{2N}}\sum_{\hat{x},\hat{z}}
\sum_{x,z}
\sum_{\statement\notin L}
\sum_{i} p_i^\statement
\langle m|H^h \Lambda_{x,z,\statement}(\hat{x},\hat{z}) H^h|m\rangle
\frac{1-s_i^\statement(-1)^{\bigoplus_{j\in S_i} m_j'}}{2}\\
&=&
\left(1-\frac{1}{N'}\right)
+
\frac{1}{N'}
\frac{1}{2^{3N}}\sum_{m,\hat{x},\hat{z}}
\sum_{x,z}
\sum_{\statement\notin L}
\sum_i p_i^\statement\\
&&\times\langle m|H^h \Lambda_{x,z,\statement}(\hat{x},\hat{z}) H^h
H^h X^{x\oplus \hat{x}}Z^{z\oplus\hat{z}} H^h
\frac{I-s_i^\statement\prod_{j\in S_i}Z_j}{2}
H^h Z^{z\oplus\hat{z}}X^{x\oplus \hat{x}} H^h
|m\rangle\\
&=&
\left(1-\frac{1}{N'}\right)
+
\frac{1}{N'}
\frac{1}{2^{3N}}\sum_{\hat{x},\hat{z}}
\sum_{x,z}
\sum_{\statement\notin L}
\sum_i p_i^\statement\\
&&\times\mbox{Tr}\Big[H^h \Lambda_{x,z,\statement}(\hat{x},\hat{z}) H^h
H^h X^{x\oplus \hat{x}}Z^{z\oplus\hat{z}} H^h
\frac{I-s_i^\statement\prod_{j\in S_i}Z_j}{2}
H^h Z^{z\oplus\hat{z}}X^{x\oplus \hat{x}} H^h
\Big]\\
&=&
\left(1-\frac{1}{N'}\right)
+
\frac{1}{N'}
\frac{1}{2^{3N}}\sum_{\hat{x},\hat{z}}
\sum_{x,z}\sum_{\statement\notin L}\\
&&\times\mbox{Tr}\Big[
Z^{z\oplus\hat{z}}X^{x\oplus \hat{x}} 
\Lambda_{x,z,\statement}(\hat{x},\hat{z}) 
X^{x\oplus \hat{x}}Z^{z\oplus\hat{z}} 
\sum_ip_i^\statement
H^h
\frac{I-s_i^\statement\prod_{j\in S_i}Z_j}{2}
H^h 
\Big]\\
&=&\Big(1-\frac{1}{N'}\Big)
+\frac{1}{N'}\mbox{Tr}[\sigma (I-\ham_\statement) ]\\
&\le&1-
\frac{1}{N'}\mbox{Tr}\Big[\frac{\sigma}{\Tr\sigma} \ham_\statement \Big]\\
&\le&1-\frac{\beta}{N'},
\end{eqnarray*}
where $\sigma:=\frac{1}{2^{3N}}\sum_{x,z,\hat{x},\hat{z}}
\sum_{\statement\notin L}
Z^{z\oplus \hat{z}}X^{x\oplus \hat{x}}
\Lambda_{x,z,\statement}(\hat{x},\hat{z})X^{x\oplus\hat{x}}Z^{z\oplus\hat{z}}$. Note that 
$\frac{\sigma}{\Tr\sigma}$ is a quantum state for any
POVM $\{\Lambda_{x,z,\statement}(\hat{x},\hat{z})\}_{x,z,\statement}$.

Next we show the completeness.
The POVM corresponding to $\prove$
is
\begin{eqnarray*}
\{\Lambda_{x,z}(\hat{x},\hat{z})=\frac{1}{2^N}Z^{z\oplus\hat{z}}X^{x\oplus\hat{x}} 
\rho_\hist X^{x\oplus\hat{x}}Z^{z\oplus\hat{z}}\}_{x,z}. 
\end{eqnarray*}
Note that this is a POVM, because $\Lambda_{x,z}(\hat{x},\hat{z})\ge0$, and
\begin{eqnarray*}
\sum_{x,z}\Lambda_{x,z}(\hat{x},\hat{z})
=2^N\times\frac{1}{2^{2N}}\sum_{x,z}Z^zX^x Z^{\hat{z}}X^{\hat{x}}\rho_\hist X^{\hat{x}}Z^{\hat{z}} X^xZ^z
=2^N\frac{I}{2^N}
=I.
\end{eqnarray*}
The reason why such $\{\Lambda_{x,z}(\hat{x},\hat{z})\}_{x,z}$ is the POVM of $\prove$ is as follows.
The $\prove$ algorithm first prepares $Z^{\hat{z}}X^{\hat{x}}\rho_\hist X^{\hat{x}}Z^{\hat{z}} \otimes H^h|m\rangle\langle m|H^h$,
and then measures $j$th qubit of the history state and the $j$th qubit of
$H^h|m\rangle$ in the Bell basis for all $j=1,2,...,N$.
Then,
\begin{eqnarray*}
&&\Big(\bigotimes_{j=1}^N\langle \phi_{x_j,z_j}|\Big) 
\Big(Z^{\hat{z}}X^{\hat{x}}\rho_\hist X^{\hat{x}}Z^{\hat{z}}\otimes
H^h|m\rangle\langle m|H^h\Big)
\Big(\bigotimes_{j=1}^N|\phi_{x_j,z_j}\rangle\Big) \\
&=&\mbox{Tr}
\Big[
\frac{1}{2^N}Z^zX^x Z^{\hat{z}}X^{\hat{x}}\rho_\hist X^{\hat{x}}Z^{\hat{z}}
X^xZ^z
\times H^h|m\rangle\langle m|H^h
\Big].
\end{eqnarray*}

Hence
\begin{eqnarray*}
&&    \Pr\left[
    \verify(\vkey,\statement,\pi)=\top 
    :(\pkey,\vkey)\sample \setup(1^\secpar),\pi \sample \prove(\pkey,\statement,\witness^{\otimes k})\right]\\
&=&
\Big(1-\frac{1}{N'}\Big)
+\frac{1}{N'}\frac{1}{2^{3N}}
\sum_{x,z,\hat{x},\hat{z}}
\mbox{Tr}\Big[ Z^{z\oplus\hat{z}}X^{x\oplus\hat{x}}
\Big(\frac{1}{2^N}Z^{z\oplus\hat{z}}X^{x\oplus\hat{x}} 
\rho_\hist X^{x\oplus\hat{x}}Z^{z^\oplus\hat{z}}\Big)X^{x\oplus \hat{x}}Z^{z\oplus\hat{z}}
(I-\ham_\statement)\Big]\\
&=&1-
\frac{1}{N'}\mbox{Tr}\Big[ \rho_\hist \ham_\statement\Big]\\
&\ge&1-\frac{\alpha}{N'}.
\end{eqnarray*}
\fi
\section{Alternative Simpler Construction of CV-NIZK in the QSP Model.}\label{sec:alternative_NIZK}
Here, we give an alternative construction of a CV-NIZK in the QSP model, which is slightly simpler than the construction given in \cref{sec:CV-NIZK}.

Our construction of a CV-NIZK for a $\QMA$ promise problem $L$ is given in \cref{fig:CV-NIZK_prime}
where $\ham_\statement$, $N$, $M$, $p_{i}$, $s_{i}$, $P_i$, $\alpha$, $\beta$, and $\rho_\hist$ are as in \cref{lem:five_local_Hamiltonian} for $L$.

\begin{figure}[h]
\rule[1ex]{\textwidth}{0.5pt}
\begin{description}
\item[$\setup(1^\secpar)$:]
The setup algorithm chooses
$(W_1,...,W_N,m_1,...,m_N)\sample \{X,Y,Z\}^N\times\{0,1\}^N$ and a uniformly random subset $S_V\subseteq [N]$ such that $1\leq |S_V|\leq 5$, 
and outputs a proving key $\pkey:=\bigotimes_{j=1}^N(U(W_j)|m_j\rangle)$ 
and a verification key $\vkey:=(S_V,\{W_j,m_j\}_{j\in S_V})$.
\item[$\prove(\pkey,\statement,\witness)$:]
The proving algorithm generates the history state $\rho_\hist$ for $\ham_\statement$ from $\witness$
and measures $j$-th qubits of $\rho_\hist$ and $\pkey$ in the Bell basis for $j\in [N]$.    
Let $x:=x_1\concat x_2\concat...\concat x_N$, and $z:=z_1\concat z_2\concat...\concat z_N$ where $(x_j,z_j)$ denotes the outcome of $j$-th measurement.
It outputs a proof $\pi:=(x,z)$.

\item[$\verify(\vkey,\statement,\pi)$:]
The verification algorithm 
parses 
$(S_V,\{W_j,m_j\}_{j\in S_V})\la \vkey$ and 
$(x,z)\la \pi$,   
chooses $i\in [M]$ according to the probability distribution defined by $\{p_{i}\}_{i\in[M]}$ (i.e., chooses $i$ with probability $p_{i}$).
Let 
\begin{eqnarray*}
S_i:=\{j\in[N]~|~\mbox{$j$th Pauli operator of $P_i$ is not $I$}\}.
\end{eqnarray*}
We note that we have $1\leq |S_i|\leq 5$ by the $5$-locality of $\ham_\statement$. 
We say that $P_i$ is consistent to $(S_V,\{W_j\}_{j\in S_V})$  if and only if 
$S_i = S_V$ and
the $j$th Pauli operator of
$P_i$ is $W_j$ for all
$j\in S_i$.
If $P_i$ is not consistent to $(S_V,\{W_j\}_{j\in S_V})$,  it outputs $\top$.
If $P_i$ is consistent to $(S_V,\{W_j\}_{j\in S_V})$,  it 
flips a biased coin that heads with probability $1-3^{|S_i|-5}$.
If heads, it outputs $\top$.
If tails, it defines
\begin{eqnarray*}
m_{j}':= 
\left\{
\begin{array}{cc}
m_{j}\oplus x_{j}&(W_j=Z),\\
m_{j}\oplus z_{j}&(W_j=X),\\
m_{j}\oplus x_{j}\oplus z_j&(W_j=Y)
\end{array}
\right.
\end{eqnarray*}
for $j\in S_i$, and outputs $\top$ if  
$(-1)^{\bigoplus_{j\in S_i}m'_{j}}=-s_{i}$ and $\bot$ otherwise. 
\end{description} 
\rule[1ex]{\textwidth}{0.5pt}
\hspace{-10mm}
\caption{CV-NIZK in the QSP model $\Pi'_\NIZK$.}
\label{fig:CV-NIZK_prime}
\end{figure}

We have the following lemmas.
\begin{lemma}[Completeness and Soundness]\label{lem:NIZK_completeness_soundness_prime}
$\Pi'_\NIZK$ satisfies $(1-\frac{\alpha}{N'})$-completeness and  $(1-\frac{\beta}{N'})$-soundness where $N':=3^5\sum_{i=1}^{5}{N \choose i}$.  
\end{lemma}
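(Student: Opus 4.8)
The plan is to follow the proof of \cref{lem:NIZK_completeness_soundness} almost verbatim; the argument is in fact slightly simpler for $\Pi'_\NIZK$ because there is no $(\whx,\whz)$ one-time pad (that ingredient was only needed for zero-knowledge in \cref{sec:CV-NIZK}). As there, I would pass to two ``virtual protocols'' that leave the verifier's acceptance probability unchanged for every (possibly malicious) prover, and then analyze the second one directly via \cref{lem:prob_and_energy} and \cref{lem:teleportation}.

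First I would replace $\setup$ by a virtual setup that generates $N$ Bell pairs between registers $\regP$ and $\regV$, hands $\regP$ to the prover and keeps $\regV$, and has the verifier obtain $(m_1,\dots,m_N)$ later by measuring the $j$-th qubit of $\regV$ in the $W_j$-basis. By \cref{lem:statecollapsing}, conditioned on the verifier's outcomes the prover's register is exactly $\bigotimes_j (U(W_j)\ket{m_j})$, so the prover's view is identical to that in $\Pi'_\NIZK$; and since the prover's Bell-basis measurement on $\regP$ commutes with the verifier's single-qubit measurements on $\regV$, the verifier may equally well perform its measurement before the prover acts. This is virtual protocol $1$. Next, mimicking \cref{fig:virtual2_CV-NIZK} (but with no $(\whx,\whz)$), I would let the verifier instead apply the Pauli correction $X^{x}Z^{z}$ to $\regV$, set $\rho'_V := X^{x}Z^{z}\rho_V Z^{z}X^{x}$, and then measure $\rho'_V$ in the bases $(W_1,\dots,W_N)$; by \cref{lem:XZ_before_measurement} this does not change the distribution of $(m'_1,\dots,m'_N)$, so the acceptance probability is unchanged. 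This is virtual protocol $2$, and it suffices to prove the two bounds for it.

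In virtual protocol $2$, by quantum teleportation (\cref{lem:teleportation}) the state of $\regV$ after the Pauli correction is exactly the state the prover teleported; in particular for an honest prover it is the history state $\rho_\hist$ (the stray phase from $X^{x}Z^{z}X^{x}Z^{z}$ cancels). The combinatorial heart of the argument is that, for each fixed $i\in[M]$ (and $i$ may be taken to be chosen before $S_V$ and $(W_j)$, since these are independent), the probability that $P_i$ is consistent to $(S_V,\{W_j\}_{j\in S_V})$ is $3^{-|S_i|}/\bigl(\sum_{t=1}^{5}\binom{N}{t}\bigr)$, because one needs $S_V=S_i$ and the $|S_i|$ relevant bases $W_j$ to agree with $P_i$; multiplying by the tails probability $3^{|S_i|-5}$ shows that the event ``$P_i$ consistent and coin tails'' has probability exactly $1/\bigl(3^{5}\sum_{t=1}^{5}\binom{N}{t}\bigr)=1/N'$, independently of $|S_i|$. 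Conditioned on that event the verifier outputs $\top$ with probability $1-\Tr\bigl(\rho'_V\,\tfrac{I+s_iP_i}{2}\bigr)$ by \cref{lem:prob_and_energy}, and it outputs $\top$ in every other case; averaging over $i$ with weights $p_i$ and using $\ham_\statement=\sum_i p_i\tfrac{I+s_iP_i}{2}$ gives an overall acceptance probability of $1-\tfrac{1}{N'}\Tr(\rho'_V\ham_\statement)$.

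Completeness is then immediate: for an honest prover $\rho'_V=\rho_\hist$ with $\Tr(\rho_\hist\ham_\statement)\le\alpha$ by \cref{lem:five_local_Hamiltonian}, so acceptance is $\ge 1-\tfrac{\alpha}{N'}$. For soundness, given an unbounded adversary $\A$ that takes $\pkey$ and outputs $(\statement,\pi)$, I would condition on the event $\event_\statement$ that the output statement equals $\statement$, let $\rho'_{V,\statement}$ be the corresponding pre-measurement state of $\regV$ (which is well-defined since $\A$ commits to $\pi$ before $i$ is chosen), and note $\Pr[\A\text{ wins}\mid\event_\statement]=1-\tfrac{1}{N'}\Tr(\rho'_{V,\statement}\ham_\statement)\le 1-\tfrac{\beta}{N'}$ for $\statement\in L_\no$ by \cref{lem:five_local_Hamiltonian}; summing over $\statement\in L_\no$ weighted by $\Pr[\event_\statement]$ yields $(1-\tfrac{\beta}{N'})$-soundness. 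The only steps needing genuine care are the two reductions to virtual protocols (the commutation of the prover's and verifier's measurements, and the invocation of \cref{lem:XZ_before_measurement}) and the bookkeeping in the consistency-and-tails probability; everything else is direct substitution into \cref{lem:prob_and_energy,lem:teleportation,lem:five_local_Hamiltonian}.
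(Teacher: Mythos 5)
Your proposal is correct and is exactly the argument the paper intends: the paper proves \cref{lem:NIZK_completeness_soundness_prime} by noting it is "similar to" \cref{lem:NIZK_completeness_soundness}, and you have carried out that adaptation faithfully — the two virtual-protocol reductions, the $1/N'$ consistency-and-tails computation, and the reduction of the acceptance probability to $1-\tfrac{1}{N'}\Tr(\rho'_V\ham_\statement)$ all match the paper's proof (and its expanded version in \cref{sec:alternative_proof_NIZK}), with the correct observation that dropping $(\whx,\whz)$ only simplifies matters. No gaps.
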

\begin{lemma}[Zero-Knowledge]\label{lem:NIZK_zero-knowledge_prime}
$\Pi'_\NIZK$ satisfies the zero-knowledge property. 
\end{lemma}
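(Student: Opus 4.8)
The plan is to mirror the proof of \cref{lem:NIZK_zero-knowledge} for $\Pi_\NIZK$ almost verbatim; the only new point is that in $\Pi'_\NIZK$ the qubits of the history state indexed outside $S_V$ are hidden from the verifier not by a quantum one-time pad but because the corresponding resource states $U(W_j)\ket{m_j}$ are unknown to it, which produces the same masking effect.

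First I would exhibit the PPT simulator $\siml(\vkey,\statement)$: parse $(S_V,\{W_j,m_j\}_{j\in S_V})\la\vkey$; compute the classical description of the density matrix $\rho_{S_V}:=\siml_{\hist}(\statement,S_V)$ using \cref{lem:five_local_Hamiltonian}; sample $\{x_j,z_j\}_{j\in S_V}$ from the distribution of the Bell-basis measurement outcomes of the corresponding qubit pairs of $\rho_{S_V}$ and $\bigotimes_{j\in S_V}(U(W_j)\ket{m_j})$, which is a computation on at most ten qubits and hence classically efficient since $|S_V|\le 5$; pick $(x_j,z_j)\sample\bit^2$ for $j\in[N]\setminus S_V$; and output $\pi:=(x,z)$.

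Then I would run the hybrid argument. Hybrid $\siml_1$ performs the measurement ``in full'': it forms $\widetilde{\rho}_\hist:=\rho_{S_V}\otimes\frac{I_{[N]\setminus S_V}}{2^{|[N]\setminus S_V|}}$ and Bell-measures, for each $j\in[N]$, the $j$-th qubit of $\widetilde{\rho}_\hist$ against the $j$-th qubit of a state equal to $U(W_j)\ket{m_j}$ on $S_V$ and arbitrary elsewhere; since Bell-measuring the maximally mixed qubit yields a uniform, independent outcome pair, $\siml_1$ has exactly the same output distribution as $\siml$. Hybrid $\siml_2$ takes $\witness$, generates the true history state $\rho_\hist$, samples $(W_j,m_j)\sample\{X,Y,Z\}\times\bit$ for $j\in[N]\setminus S_V$, sets $\rho_P:=\bigotimes_{j=1}^N(U(W_j)\ket{m_j})$, and Bell-measures all $N$ qubit pairs of $\rho_\hist$ and $\rho_P$. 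The key step is that a distinguisher making a single query never sees $\{W_j,m_j\}_{j\notin S_V}$, so we may average over them; since $\frac{1}{6}\sum_{W\in\{X,Y,Z\},\,m\in\bit}U(W)\ket{m}\bra{m}U(W)^\dagger=\frac{I}{2}$, from the distinguisher's viewpoint the Bell measurements on the indices $j\notin S_V$ become Bell measurements of $\rho_\hist$'s qubits against maximally mixed qubits, which implement the same channel as tracing out those qubits (and emitting a uniform outcome). Hence, from the distinguisher's viewpoint, $\siml_2$ effectively measures $\Tr_{[N]\setminus S_V}\rho_\hist$ against $\bigotimes_{j\in S_V}(U(W_j)\ket{m_j})$, and since $\|\rho_{S_V}-\Tr_{[N]\setminus S_V}\rho_\hist\|_{tr}=\negl(\secpar)$ by \cref{lem:five_local_Hamiltonian}, $\siml_1$ and $\siml_2$ are statistically indistinguishable to a one-query distinguisher. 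Finally, $\siml_2$'s output is distributed exactly as $\prove(\pkey,\statement,\witness)$, since $\siml_2$ merely reconstructs a correctly distributed $\pkey$ (the $\vkey$-values on $S_V$, fresh values elsewhere) and then runs the honest prover; combining the three steps yields the lemma.

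The main obstacle will be the $\siml_1$-to-$\siml_2$ step: making rigorous that Bell-measuring a qubit against the maximally mixed state equals the partial-trace channel, and that averaging over the hidden pairs $(W_j,m_j)_{j\notin S_V}$ is legitimate, i.e., that the distinguisher's success probability can be written as an expectation over those hidden values so that linearity permits replacing the resource qubits by $I/2$ before analysing. Everything else is bookkeeping identical to the proof of \cref{lem:NIZK_zero-knowledge}. Completeness and soundness of $\Pi'_\NIZK$ (\cref{lem:NIZK_completeness_soundness_prime}) are established separately by the virtual-protocol technique of \cref{lem:NIZK_completeness_soundness} and are not needed here.
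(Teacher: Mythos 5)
Your proof is correct and follows essentially the same three-hybrid structure ($\siml \to \siml_1 \to \siml_2 \to \prove$) as the paper's proof of \cref{lem:NIZK_zero-knowledge}, which is exactly the ``similar'' argument the paper alludes to for $\Pi'_\NIZK$. The one necessary modification --- replacing the \cref{lem:Pauli_mixing} step over the hidden one-time pad by averaging over the hidden $(W_j,m_j)_{j\notin S_V}$ via $\frac{1}{6}\sum_{W,m}U(W)\ket{m}\bra{m}U(W)^\dagger=\frac{I}{2}$, together with the observation that a Bell measurement against a maximally mixed qubit realizes the partial-trace channel with a uniform outcome --- is correctly identified and correctly executed.
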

They can be proven similarly to \cref{lem:NIZK_completeness_soundness,lem:NIZK_zero-knowledge}, respectively. 
\section{CV-NIP in the QSP model}\label{sec:CV-NIP}
We call a CV-NIZK in the QSP model a CV-NIP (classically-verifiable non-interactive proof) in the QSP model if the
zero-knowledge is not satisfied.
Here we give a construction of an information-theoretically sound
CV-NIP for $\QMA$ in the QSP model. 
Specifically, we prove the following theorem.
\begin{theorem}\label{thm:CV-NIP}
There exists a CV-NIP for $\QMA$ in the QSP model (without any computational assumption).
\end{theorem}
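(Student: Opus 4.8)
The statement follows immediately from \cref{thm:CV-NIZK}: a CV-NIZK in the QSP model is in particular a CV-NIP in the QSP model, since a CV-NIP is defined by dropping the zero-knowledge requirement. Nonetheless I would record a \emph{direct} and noticeably simpler construction, obtained from $\Pi_\NIZK$ of \cref{fig:CV-NIZK} (equivalently $\Pi'_\NIZK$ of \cref{fig:CV-NIZK_prime}) by discarding the ingredients that were present only for zero-knowledge: once witness privacy is no longer required, neither the quantum one-time pad $(\whx,\whz)$ nor the hidden subset $S_V$ in the verification key is needed, and the verifier may simply be told all the bases.

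Concretely, the plan is as follows. In $\setup(1^\secpar)$ the trusted party samples $(W_1,\dots,W_N)\sample\{X,Y,Z\}^N$ and $(m_1,\dots,m_N)\sample\{0,1\}^N$, and outputs $\pkey:=\bigotimes_{j=1}^N(U(W_j)\ket{m_j})$ together with $\vkey:=(W_1,\dots,W_N,m_1,\dots,m_N)$. The prover builds the history state $\rho_\hist$ for $\ham_\statement$ from $\witness$ (\cref{lem:five_local_Hamiltonian}), Bell-measures the $j$-th qubit of $\rho_\hist$ against the $j$-th qubit of $\pkey$ for every $j\in[N]$, and sends the outcomes $\pi:=(x,z)$. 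The verifier picks $i\in[M]$ with probability $p_i$, lets $S_i$ be the support of $P_i$ (so $1\le|S_i|\le5$), and \emph{accepts outright} unless the fixed bases $\{W_j\}_{j\in S_i}$ match the Pauli types occurring in $P_i$ on $S_i$; if they match it flips a coin that is heads with probability $1-3^{\,|S_i|-5}$ and accepts on heads, and on tails it performs the energy check, forming the corrected bits $m'_j$ by the same XOR rule as in \cref{fig:CV-NIZK} (but without the $\whx,\whz$ terms) and accepting iff $(-1)^{\bigoplus_{j\in S_i}m'_j}=-s_i$.

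Completeness and soundness are then proven by the virtual-protocol argument used for \cref{lem:NIZK_completeness_soundness}: replace the pre-measured proving key by a genuine Bell pair whose verifier half is measured only after the prover has committed to $(x,z)$ — legitimate by \cref{lem:statecollapsing} since the prover's and verifier's measurements commute — and push the Pauli correction past the measurement by \cref{lem:XZ_before_measurement}. One then sees that the verifier effectively measures the energy of $\ham_\statement$ on a state $\sigma$ that is fixed \emph{before} $i$ is chosen: the teleported history state when the prover is honest and $\statement\in L_\yes$, and some adversary-dependent state otherwise (\cref{lem:teleportation}). The biased coin is calibrated so that the event ``the bases agree on $S_i$ and the coin comes up tails'' has probability exactly $3^{-5}$ for \emph{every} $i$; together with \cref{lem:prob_and_energy} this yields acceptance probability $1-3^{-5}\,\Tr(\sigma\ham_\statement)$, hence $\left(1-\frac{\alpha}{3^5}\right)$-completeness and $\left(1-\frac{\beta}{3^5}\right)$-soundness by \cref{lem:five_local_Hamiltonian}. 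Since $\frac{\beta-\alpha}{3^5}=1/\poly(\secpar)$, the gap is amplified to $\big(1-\negl(\secpar),\ \negl(\secpar)\big)$ by simple parallel repetition — the CV-NIP analogue of \cref{lem:CV-NIZK_amplification}, which is even easier here as there is no zero-knowledge clause to transfer. I do not expect a real obstacle; the only point needing care is the calibration of the coin (so that the completeness–soundness gap is a fixed inverse polynomial, independent of which $P_i$ is tested), and the observation that — unlike in the NIZK — no random subset $S_V$ is needed, because soundness relies only on the prover committing to a single state before learning $i$, which is automatic.
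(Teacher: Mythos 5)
Your proposal is correct, and your opening observation is exactly the paper's own framing: the theorem is subsumed by \cref{thm:CV-NIZK}, and the only reason a separate proof is given is that a simpler direct construction exists. However, your direct construction differs from the paper's. The paper does not de-zero-knowledge-ify $\Pi_\NIZK$; instead it switches to a different $\QMA$-complete Hamiltonian, the 2-local $\{ZZ,XX\}$ form of \cref{lem:two_local_Hamiltonian}, in which every term is $\frac{p_{j_1,j_2}}{2}\bigl(\frac{I+s_{j_1,j_2}X_{j_1}X_{j_2}}{2}+\frac{I+s_{j_1,j_2}Z_{j_1}Z_{j_2}}{2}\bigr)$. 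Because each term carries both an $XX$ and a $ZZ$ part with the same weight and sign, the setup can use a \emph{single} global basis bit $h$ (proving key $\bigotimes_j H^h\ket{m_j}$), and whichever basis was chosen always matches one of the two parts; the consistency check and the calibrated coin disappear entirely, giving $(1-\alpha)$-completeness versus $(1-\beta)$-soundness with no $3^{-5}$ dilution. Your construction keeps the 5-local Hamiltonian, the per-qubit random Pauli bases, and the biased coin; your calibration is right (without $S_V$ the bases match on $S_i$ with probability $3^{-|S_i|}$, and tails occurs with probability $3^{|S_i|-5}$, so the test fires with probability exactly $3^{-5}$ for every $i$), yielding the weaker but still inverse-polynomial gap $\frac{\beta-\alpha}{3^5}$, which parallel repetition handles. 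The trade-off: the paper's route is tighter and structurally simpler but needs the separate $\QMA$-completeness result of Cubitt and Montanaro (and, as the paper remarks, local simulatability is not known for that Hamiltonian, which is why it cannot be used for the NIZK); your route needs no new Hamiltonian result and makes the NIP visibly a special case of the NIZK machinery. Both completeness/soundness arguments are the same virtual-protocol argument via \cref{lem:statecollapsing}, \cref{lem:XZ_before_measurement}, \cref{lem:teleportation}, and \cref{lem:prob_and_energy}.
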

We note that this theorem is subsumed by \cref{thm:CV-NIZK}. 
Nonetheless, we give a proof of the theorem because the CV-NIP given here is much simpler.

Its proof is based on the fact that the \emph{2-local $\{ZZ,XX\}$-local Hamiltonian problem} is $\QMA$-complete.
That is, we have the following lemma.
\begin{lemma}[$\QMA$-completeness of 2-local $\{ZZ,XX\}$-Hamiltonian problem \cite{SIAM:CubMon16}]\label{lem:two_local_Hamiltonian}
For any $\QMA$ promise problem $L=(L_\yes,L_\no)$, there is a classical polynomial-time computable deterministic function that maps $\statement\in \bit^*$ to an $N$-qubit Hamiltonian $\ham_\statement$ of the form  
\begin{eqnarray*}
\ham_\statement=\sum_{j_1<j_2}
\frac{p_{j_1,j_2}}{2}\Big(\frac{I+s_{j_1,j_2} X_{j_1} X_{j_2}}{2} 
+\frac{I+s_{j_1,j_2} Z_{j_1} Z_{j_2}}{2} 
\Big)
\end{eqnarray*}
where $N=\poly(|\statement|)$, $p_{j_1,j_2}>0$, $\sum_{j_1<j_2} p_{j_1,j_2}=1$, and $s_{j_1,j_2}\in\{+1,-1\}$, 
and satisfies the following:
There are $0<\alpha<\beta<1$ such that $\beta-\alpha = 1/\poly(|\statement|)$ and
\begin{itemize}
    \item if $\statement \in L_\yes$, then there exists an $N$-qubit state $\rho$ such that $\Tr(\rho \ham_\statement)\leq \alpha$, and
    \item if $\statement \in L_\no$, then for any $N$-qubit state $\rho$, we have $\Tr(\rho \ham_\statement)\geq \beta$.
\end{itemize}
Moreover, for any $\statement \in L_\yes$, 
we can convert any witness $\witness\in R_L(\statement)$ into a state $\rho_{\hist}$, called the history state, such that $\Tr(\rho_\hist \ham_\statement)\leq \alpha$ in quantum polynomial time. 
\end{lemma}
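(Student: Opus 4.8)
The plan is to derive this essentially from the complexity classification of local Hamiltonian problems due to Cubitt and Montanaro~\cite{SIAM:CubMon16}, adding two cosmetic-but-necessary ingredients: (i) bringing the Hamiltonian into the precise normalized form displayed in the statement (with $\sum_{j_1<j_2}p_{j_1,j_2}=1$), and (ii) the claim that a low-energy history state can be produced in quantum polynomial time from a witness. Unlike \cref{lem:five_local_Hamiltonian}, here we do not ask for local simulatability, so a bare $\QMA$-hardness reduction suffices.

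\emph{The reduction.} First I would invoke \cref{lem:five_local_Hamiltonian} (discarding the local-simulatability clause) — or Kitaev's circuit-to-Hamiltonian construction directly — to map, via a classical polynomial-time function, any $\QMA$ promise problem $L$ to a $k$-local Hamiltonian $\ham^{(0)}_\statement$ with an inverse-polynomial promise gap, such that for $\statement\in L_\yes$ a low-energy history state $\rho^{(0)}_\hist$ is obtainable from any $\witness\in R_L(\statement)$ in quantum polynomial time. Then I would apply the standard chain of perturbative gadgets (locality-reduction gadgets followed by interaction-restriction gadgets, as packaged in~\cite{SIAM:CubMon16}) to simulate $\ham^{(0)}_\statement$ in the low-energy subspace of a Hamiltonian whose only two-body interactions are weighted copies of $X\otimes X + Z\otimes Z$; here one uses that conjugating by a product of single-qubit rotations turns $X\otimes X+Z\otimes Z$ into the XY interaction $X\otimes X+Y\otimes Y$ (preserving the spectrum), which lies in the $\QMA$-complete case of the classification. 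Each gadget layer multiplies the qubit count by a polynomial and divides the promise gap by at most a polynomial, and only constantly many layers are used, so the output Hamiltonian $\ham'_\statement$ still has an inverse-polynomial promise gap.

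\emph{Normalization.} Writing $\ham'_\statement = c_0 I + \sum_{j_1<j_2} c_{j_1,j_2}\bigl(X_{j_1}X_{j_2}+Z_{j_1}Z_{j_2}\bigr)$ with $c_{j_1,j_2}\in\mathbb{R}$ (and $c_{j_1,j_2}=0$ for absent pairs), I would set $\Lambda := \sum_{j_1<j_2}|c_{j_1,j_2}|$ and define $\ham_\statement := \tfrac12\bigl(I + (\ham'_\statement - c_0 I)/(2\Lambda)\bigr)$, mimicking the shift-and-rescale trick of~\cite{MNS} recalled in \cref{sec:BG20}. Using the identity $\tfrac{I+sX_{j_1}X_{j_2}}{2}+\tfrac{I+sZ_{j_1}Z_{j_2}}{2} = I + \tfrac{s}{2}(X_{j_1}X_{j_2}+Z_{j_1}Z_{j_2})$, one checks directly that $\ham_\statement$ has the asserted form with $p_{j_1,j_2} = |c_{j_1,j_2}|/\Lambda$ (so that $\sum_{j_1<j_2}p_{j_1,j_2}=1$) and $s_{j_1,j_2} = \operatorname{sign}(c_{j_1,j_2})$. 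Since $\|\ham'_\statement - c_0 I\|\le \sum_{j_1<j_2}|c_{j_1,j_2}|\cdot\|X_{j_1}X_{j_2}+Z_{j_1}Z_{j_2}\| = 2\Lambda$, the eigenvalues of $\ham_\statement$ lie in $[0,1]$, and the affine map $\lambda\mapsto\tfrac12(1+(\lambda-c_0)/(2\Lambda))$ rescales the promise gap only by the polynomial $1/(4\Lambda)$; so after a trivial further rescaling we obtain $0<\alpha<\beta<1$ with $\beta-\alpha=1/\poly(|\statement|)$, as required.

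\emph{History state and main obstacle.} For the witness-conversion clause I would use that, in the gadget construction, the low-energy subspace of $\ham'_\statement$ (equivalently $\ham_\statement$) is spanned by states $|\phi\rangle_{\mathrm{logical}}\otimes|g\rangle_{\mathrm{mediator}}$, where $|g\rangle$ is a fixed, polynomial-size, explicitly describable product over gadgets of the gadget ancillas' ground states, each on $O(1)$ qubits, and for such states the energy under $\ham'_\statement$ matches that of $|\phi\rangle$ under $\ham^{(0)}_\statement$ up to $1/\poly$ error. Hence, given $\witness\in R_L(\statement)$ with $\statement\in L_\yes$, one prepares $\rho^{(0)}_\hist$ in quantum polynomial time, tensors on $|g\rangle$, and applies the affine relabeling to $\ham_\statement$, obtaining $\rho_\hist$ with $\Tr(\rho_\hist\,\ham_\statement)\le\alpha$; the $L_\no$ bound $\Tr(\rho\,\ham_\statement)\ge\beta$ for all $\rho$ is exactly the soundness of the hardness reduction. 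The step needing the most care is the bookkeeping: verifying that the promise gap stays $1/\poly$ through the entire gadget chain, and that the mediator ground states introduced by each gadget are genuinely efficiently preparable (true for the standard gadgets, being low-weight states on constantly many qubits). Absent the explicit normalized form and the efficient-prover clause, the lemma would follow immediately from~\cite{SIAM:CubMon16}; the plan is precisely about making those two additions rigorous.
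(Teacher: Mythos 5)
The paper itself does not prove this lemma: it is imported wholesale from \cite{SIAM:CubMon16}, with the normalized form and the efficient witness-conversion clause simply asserted (only the analogous normalization for the $5$-local case is sketched in \cref{sec:BG20}). So you are reconstructing a derivation the paper leaves implicit. Your normalization step is correct and is exactly the \cite{MNS}-style shift-and-rescale used in \cref{sec:BG20} for \cref{lem:five_local_Hamiltonian}, and the reduction skeleton (start from a circuit-to-Hamiltonian instance, apply the gadget chain of \cite{SIAM:CubMon16}, use the local basis change relating $XX+ZZ$ to the XY interaction, track the gap through constantly many layers) is the right outline for the promise-gap part of the statement.

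The genuine gap is in your argument for the ``moreover'' clause. You claim that the undressed encoded state $|\phi\rangle_{\mathrm{logical}}\otimes|g\rangle_{\mathrm{mediator}}$ has energy under $\ham'_\statement$ within $1/\poly$ of $\langle\phi|\ham^{(0)}_\statement|\phi\rangle$. For the second-order mediator gadgets used in \cite{SIAM:CubMon16} this is false: the intended two-body interaction appears only at second order of perturbation theory, i.e.\ only in the \emph{dressed} low-lying eigenstates, while the coupling term $\propto\sqrt{\Delta}\,(\cdot)\otimes X_w$ has zero expectation in the mediator state $|0\rangle_w$. Hence the undressed state's energy contains none of the simulated interaction but does pick up the first-order compensation terms the gadget adds to cancel second-order junk; the resulting discrepancy is $\Theta(1)$ per simulated term, so $\poly(|\statement|)$ in total, which completely swamps the $1/\poly$ promise gap (and renormalizing by $\Lambda$ rescales the discrepancy and the gap identically, so it does not help). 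Nor can you pass from trace-distance closeness of the dressed ground state to the undressed product state to closeness in energy, since $\|\ham'_\statement\|$ grows with the gadget strengths $\Delta=\poly(|\statement|)$. So ``tensoring on $|g\rangle$'' does not produce a state of energy at most $\alpha$, and making the efficient-prover clause rigorous requires a genuinely different ingredient — e.g.\ preparing an approximation of the dressed state by implementing the Schrieffer--Wolff rotation to sufficient order, or invoking a simulation framework in the style of Cubitt--Montanaro--Piddock in which the encoding comes with an explicit guarantee on the energy of encoded states. This is not the ``bookkeeping'' you defer; it is the main missing step, and it is precisely the part that the paper's bare citation of \cite{SIAM:CubMon16} also glosses over.
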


\begin{remark}
It might be possible to prove $\QMA$-completeness of 2-local $\{ZZ,XX\}$-Hamiltonian problem with local simulatability by combining the techniques of \cite{FOCS:BroGri20,FOCS:GriSloYue19} and \cite{SIAM:CubMon16}. 
However, this is not clear, and indeed, this is mentioned as an open problem in \cite{FOCS:BroGri20}.    
Therefore we consider the  $5$-local Hamiltonian problem whenever we need local simulatability. 
\end{remark}

Our construction of a CV-NIP for a $\QMA$ promise problem $L$ is given in \cref{fig:CV-NIP}
where $\ham_\statement$, $N$, $p_{j_1,j_2}$, $s_{j_1,j_2}$, $\alpha$, $\beta$, and $\rho_\hist$ are as in \cref{lem:two_local_Hamiltonian} for $L$.
We remark that the proving algorithm uses only one witness, and thus we have $k=1$ in \cref{def:CV-NIZK} for this protocol. 
Multiple copies of the witness are needed only when we do the gap amplification (\cref{lem:CV-NIZK_amplification}).
A similar remark applies to all protocols proposed in this paper.

\begin{figure}[t]
\rule[1ex]{\textwidth}{0.5pt}
\begin{description}
\item[$\setup(1^\secpar)$:]
The setup algorithm chooses
$(h,m_1,...,m_N)\sample \{0,1\}^{N+1}$,
and outputs a proving key $\pkey:=\bigotimes_{j=1}^N(H^h|m_j\rangle)$ 
and a verification key $\vkey:=(h,m_1,...,m_N)$.
\item[$\prove(\pkey,\statement,\witness)$:]
The proving algorithm generates the history state $\rho_\hist$ for $\ham_\statement$ from $\witness$ and measures $j$-th qubits of $\rho_\hist$ and $\pkey$ in the Bell basis for $j\in [N]$.    
Let $x:=x_1\concat x_2\concat...\concat x_N$, and $z:=z_1\concat z_2\concat...\concat z_N$ 
where $(x_j,z_j)\in\{0,1\}^2$ denotes the outcome of $j$-th measurement.
It outputs a proof $\pi:=(x,z)$.

\item[$\verify(\vkey,\statement,\pi)$:]
The verification algorithm 
parses 
$(h,m_1,...,m_N)\la \vkey$ and 
$(x,z)\la \pi$, 
chooses $(j_1,j_2)\in [N]^2$ according to the probability distribution defined by $\{p_{j_1,j_2}\}_{j_1<j_2}$ (i.e., chooses $(j_1,j_2)$ with probability $p_{j_1,j_2}$), defines
$m_{j_b}':= m_{j_b}\oplus (hz_{j_b}\oplus (1-h)x_{j_b})$ for $b\in \{1,2\}$, and outputs $\top$ if  $(-1)^{m'_{j_1}\oplus m'_{j_2}}=-s_{j_1,j_2}$ and $\bot$ otherwise. 
\end{description} 
\rule[1ex]{\textwidth}{0.5pt}
\hspace{-10mm}
\caption{CV-NIP $\Pi_\NIP$.}
\label{fig:CV-NIP}
\end{figure}

\begin{figure}[t]
\rule[1ex]{\textwidth}{0.5pt}
\begin{description}
\item[$\setup_{\virone}(1^\secpar)$:]
The setup algorithm generates $N$ Bell-pairs between registers $\regP$ and $\regV$ and lets $\pkey$ and $\vkey$ be quantum states in registers $\regP$ and $\regV$, respectively. 
Then it outputs $(\pkey,\vkey)$.
\item[$\prove_{\virone}(\pkey,\statement,\witness)$:]
This is the same  as $\prove(\pkey,\statement,\witness)$ in \cref{fig:CV-NIP}.
\item[$\verify_{\virone}(\vkey,\statement,\pi)$:]
The verification algorithm chooses $h\sample \bit$, and measures each qubit of $\vkey$ in basis $\{H^h\ket{0},H^h\ket{1}\}$, and lets $(m_1,...,m_N)\in\{0,1\}^N$ be the measurement outcomes. 
The rest of this algorithm is the same as $\verify(\vkey,\statement,\pi)$ given in \cref{fig:CV-NIP}.
\end{description} 
\rule[1ex]{\textwidth}{0.5pt}
\hspace{-10mm}
\caption{The virtual protocol 1 for $\Pi_{\NIP}$}
\label{fig:virtual1_CV-NIP}
\end{figure}

\begin{figure}[t]
\rule[1ex]{\textwidth}{0.5pt}
\begin{description}
\item[$\setup_{\virtwo}(1^\secpar)$:]
This is the same as $\setup_{\virone}(1^\secpar)$ in \cref{fig:virtual1_CV-NIP}.
\item[$\prove_{\virtwo}(\pkey,\statement,\witness)$:]
This is the same  as $\prove(\pkey,\statement,\witness)$ in \cref{fig:CV-NIP}.
\item[$\verify_{\virtwo}(\vkey,\statement,\pi)$:]
The verification algorithm 
parses 
$(x,z)\la \pi$, computes $\vkey':=X^xZ^z\vkey Z^zX^x$, 
chooses $h\sample \bit$, measures each qubit of $\vkey'$ in basis $\{H^h\ket{0},H^h\ket{1}\}$, and lets $(m'_1,...,m'_N)$ be the measurement outcomes. 
It chooses $(j_1,j_2)\in [N]^2$ according to the probability distribution defined by $\{p_{j_1,j_2}\}_{j_1<j_2}$ (i.e., chooses $(j_1,j_2)$ with probability $p_{j_1,j_2}$) and outputs $\top$ if  $(-1)^{m'_{j_1}\oplus m'_{j_2}}=-s_{j_1,j_2}$ and $\bot$ otherwise. 
\end{description} 
\rule[1ex]{\textwidth}{0.5pt}
\hspace{-10mm}
\caption{The virtual protocol 2 for $\Pi_{\NIP}$}
\label{fig:virtual2_CV-NIP}
\end{figure}
We prove the following lemma.
\begin{lemma}[Completeness and Soundness]\label{lem:NIP_completeness_soundness}
$\Pi_\NIP$ satisfies $(1-\alpha)$-completeness and 
 $(1-\beta)$-soundness.
\end{lemma}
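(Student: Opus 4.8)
The plan is to prove \cref{lem:NIP_completeness_soundness} by the same ``virtual protocol'' strategy used for $\Pi_\NIZK$ in \cref{lem:NIZK_completeness_soundness}, exploiting that the prover's Bell-basis measurements and the verifier's Pauli-basis measurements commute. First I would introduce the virtual protocol 1 (\cref{fig:virtual1_CV-NIP}), in which the setup generates $N$ Bell pairs and hands one half of each to the prover and the other half to the verifier, and the verifier later picks $h\sample\bit$ and measures its halves in the $\{H^h\ket 0,H^h\ket 1\}$ basis to obtain $(m_1,\dots,m_N)$. By \cref{lem:statecollapsing}, conditioned on the verifier's outcomes the prover's register collapses to $\bigotimes_{j=1}^N(U(H^h)\ket{m_j})=\bigotimes_{j=1}^N(H^h\ket{m_j})$, which is exactly the proving key distribution of $\Pi_\NIP$; and since the verifier's measurement commutes with the prover's measurement, it may be performed at the outset without changing any acceptance probability, for honest or malicious provers alike. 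Hence virtual protocol 1 has the same completeness and soundness as $\Pi_\NIP$.

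Next I would pass to virtual protocol 2 (\cref{fig:virtual2_CV-NIP}): instead of XOR-ing $hz_{j_b}\oplus(1-h)x_{j_b}$ into the outcome after measurement, the verifier applies the Pauli correction $X^xZ^z$ to its register $\vkey$ before measuring in the $\{H^h\ket0,H^h\ket1\}$ basis. For $h=0$ the measurement basis is the $Z$-basis, for $h=1$ it is the $X$-basis, so by \cref{lem:XZ_before_measurement} (specialised to $W_j=Z$ or $W_j=X$) the distribution of $(m'_1,\dots,m'_N)$ is unchanged, and thus virtual protocol 2 again has the same completeness and soundness. So it suffices to analyze virtual protocol 2.

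For completeness: when $\statement\in L_\yes$ and $\pi$ is honestly generated, quantum teleportation (\cref{lem:teleportation}) guarantees that the corrected register $\vkey'=X^xZ^z\vkey Z^zX^x$ is in the history state $\rho_\hist$ with $\Tr(\rho_\hist\ham_\statement)\le\alpha$. The Hamiltonian $\ham_\statement$ from \cref{lem:two_local_Hamiltonian} is a convex combination over pairs $(j_1,j_2)$ of the two projectors $\frac{I+s_{j_1,j_2}X_{j_1}X_{j_2}}{2}$ and $\frac{I+s_{j_1,j_2}Z_{j_1}Z_{j_2}}{2}$, each with weight $p_{j_1,j_2}/2$; choosing $h$ uniformly picks the $X$-type term when $h=1$ and the $Z$-type term when $h=0$, so the verifier is effectively sampling a term $\frac{I+sP}{2}$ of $\ham_\statement$ according to its weight and testing it. By \cref{lem:prob_and_energy} (with $S_X,S_Y,S_Z$ being $\{j_1,j_2\}$ in the $X$ or $Z$ slot as appropriate, and the rest empty), the probability that $(-1)^{m'_{j_1}\oplus m'_{j_2}}=-s_{j_1,j_2}$ conditioned on the chosen term $\frac{I+sP}{2}$ is $1-\Tr(\rho_\hist\frac{I+sP}{2})$; averaging over the choice of term gives acceptance probability $1-\Tr(\rho_\hist\ham_\statement)\ge1-\alpha$.

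For soundness: let $\A$ be any (unbounded) adversary given $\pkey$ and outputting $(\statement,\pi)$. Fix $\statement$ and let $\rho_{V,\statement}$ denote the state of register $\regV$ just before the verifier's correction-and-measurement step, conditioned on $\A$ outputting this $\statement$; note $\rho_{V,\statement}$ is determined before the verifier samples $(j_1,j_2)$ and $h$ (the prover cannot influence it after seeing them, since in virtual protocol 2 the verifier holds $\regV$ throughout and applies only $X^xZ^z$ with $(x,z)$ from $\pi$). The same computation as for completeness, via \cref{lem:prob_and_energy}, gives $\Pr[\verify=\top\mid\text{statement }\statement]=1-\Tr(X^xZ^z\rho_{V,\statement}Z^zX^x\,\ham_\statement)$, and $X^xZ^z\rho_{V,\statement}Z^zX^x$ is a valid $N$-qubit state, so by \cref{lem:two_local_Hamiltonian} this is $\le1-\beta$ whenever $\statement\in L_\no$. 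Summing over $\statement\in L_\no$ weighted by $\Pr[\text{statement }\statement]$ yields $\Pr[\statement\in L_\no\wedge\verify=\top]\le1-\beta$. The main obstacle, though it is a mild one, is the bookkeeping in the two virtual-protocol reductions — in particular making precise that the verifier's measurement can be moved to the beginning and that the Pauli correction can be commuted past the measurement even against a malicious prover; this is handled cleanly by \cref{lem:statecollapsing} and \cref{lem:XZ_before_measurement} together with the fact that these operations act on disjoint registers or commute, exactly as in the proof of \cref{lem:NIZK_completeness_soundness}.
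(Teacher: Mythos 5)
Your proposal is correct and follows essentially the same route as the paper's own proof: the same two virtual protocols (deferring the verifier's basis measurement via \cref{lem:statecollapsing}, then commuting the Pauli correction past the measurement via \cref{lem:XZ_before_measurement}), followed by the same application of \cref{lem:teleportation}, \cref{lem:prob_and_energy}, and \cref{lem:two_local_Hamiltonian}, with the key observation for soundness that the teleported state is fixed independently of the verifier's choice of $(j_1,j_2,h)$. No gaps.
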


Since $(1-\alpha)-(1-\beta)=\beta-\alpha\geq 1/\poly(\secpar)$, 
by combining \cref{lem:CV-NIZK_amplification} and \cref{lem:NIP_completeness_soundness}, \cref{thm:CV-NIP} follows.

In the following, we give a proof of \cref{lem:NIP_completeness_soundness}.
\begin{proof}[Proof of \cref{lem:NIP_completeness_soundness}]
We prove this lemma by considering virtual protocols that do not change completeness and soundness.
An alternative direct proof is given later. 
First, we consider the virtual protocol 1 described in \cref{fig:virtual1_CV-NIP}.
The difference  from the original protocol is that the setup algorithm generates $N$ Bell pairs and gives each halves to the prover and verifier, and the verifier obtains $(m_1,...,m_n)$ by measuring his halves in either standard or Hadamard basis. 

Because verifier's measurement and the prover's measurement
commute with each other, 
in the virtual protocol 1, 
verifier's acceptance probability does not change even if the verifier chooses $h$ and measures 
$k_V$ (i.e., the $\regV$ register of the $N$ Bell-pairs) in the corresponding basis to obtain outcomes $(m_1,...,m_N)$ before $k_P$ (i.e, the $\regP$ register of the $N$ Bell-pairs) is given to the prover.
Moreover, conditioned on the above measurement outcomes, the state in $\regP$ collapses to $\bigotimes_{j=1}^N(H^{h}|m_j\rangle)$. (See Lemma~\ref{lem:statecollapsing}.) 
Therefore, the virtual protocol 1 is exactly the same as the original protocol from the prover's view, and verifier's acceptance probability of the virtual protocol 1
is the same as that of the original protocol $\Pi_\NIP$ for any possibly malicious prover.

Next, we further modify the protocol to define the virtual protocol 2 described in \cref{fig:virtual2_CV-NIP}.
The difference from the virtual protocol 1 is that instead of setting $m_{j}':= m_{j}\oplus (hz_{j}+(1-h)x_{j})$, the verification algorithm applies a corresponding Pauli operator to $(x,z)$ on $\vkey$, and then measures it to obtain $m_{j}'$. 
Since $X$ and $Z$ before the measurement has the effect of flipping the measurement outcome for $Z$ and $X$ basis measurements, respectively, this does not change the distribution of $(m_1',...,m_N')$.
(See Lemma~\ref{lem:XZ_before_measurement}.)
Therefore, verifier's acceptance probability of the virtual protocol 2
is the same as that of the virtual protocol 1 for any possibly malicious prover.

Therefore, it suffices to prove $(1-\alpha)$-completeness and 
$(1-\beta)$-soundness for the virtual protocol $2$. 
When $\statement\in L_\yes$ and $\pi$ is honestly generated, then $k'_V$  is the history state $\rho_\hist$, which satisfies $\Tr(\rho_\hist \ham_\statement)\leq \alpha$, by the correctness of quantum teleportation (Lemma \ref{lem:teleportation}). 
Therefore, 
by \cref{lem:prob_and_energy} and \cref{lem:two_local_Hamiltonian}, 
verifier's acceptance probability is $1-\Tr(\rho_\hist \ham_\statement)\geq 1-\alpha$. 

Let $\A$ be an adaptive adversary against soundness of virtual protocol $2$.
That is, $\A$ is given $\pkey$ and outputs $(\statement,\pi)$.
We say that $\A$ wins if $\statement \in L_\no$ and $\verify(\vkey,\statement,\pi)=\top$.
For any $\statement$, let 
$\event_\statement$ be the event that the statement output by $\A$ is $\statement$, and
$k'_{V,\statement}$ be the state in $\regV$ right before the measurement by $\verify$ conditioned on $\event_\statement$.
Similarly to the analysis for the completeness, 
by \cref{lem:prob_and_energy} and \cref{lem:two_local_Hamiltonian}, we have 
\begin{align*}
    \Pr[\A\text{~wins}]=\sum_{\statement\in L_\no}\Pr[\event_\statement]\left(1-\Tr(k'_{V,\statement} \ham_\statement)\right)\leq 
    \sum_{\statement\in L_\no}\Pr[\event_\statement]\left(1-\beta\right)\leq 1-\beta.
\end{align*}

\end{proof}

\begin{proof}[Another proof of \cref{lem:NIP_completeness_soundness}]
We first show the soundness.
Let us define
$H^h:= \prod_{j=1}^N H_j^h$
and $|m\rangle :=\bigotimes_{j=1}^N|m_j\rangle$.
Let
$\{\Lambda_{x,z,\statement}\}_{x,z,\statement}$ be the POVM that the adversary $\A$ does
on $k_P$.
Then,
\begin{eqnarray*}
&&    \Pr\left[
    \statement \in L_\no \land \verify(\vkey,\statement,\pi)=\top 
    :(\pkey,\vkey)\sample \setup(1^\secpar),(\statement,\pi) \sample \A(\pkey)\right]\\
&=&\frac{1}{2}\sum_{h\in\{0,1\}}
\frac{1}{2^N}\sum_{m\in\{0,1\}^N}
\sum_{x,z}
\sum_{\statement\notin L}
\langle m|H^h \Lambda_{x,z,\statement} H^h|m\rangle
\sum_{j_1,j_2} p_{j_1,j_2}^\statement
\frac{1-s_{j_1,j_2}^\statement(-1)^{m_{j_1}'\oplus m_{j_2}'}}{2}\\
&=&\frac{1}{2}\sum_{h\in\{0,1\}}
\frac{1}{2^N}\sum_{m\in\{0,1\}^N}
\sum_{x,z}
\sum_{\statement\notin L}
\sum_{j_1,j_2} p_{j_1,j_2}^\statement
\langle m|H^h \Lambda_{x,z,\statement} H^h
H^h X^xZ^z H^h
\frac{I-s^\statement_{j_1,j_2}Z_{j_1} Z_{j_2}}{2}
H^h Z^zX^x H^h
|m\rangle\\
&=&\frac{1}{2}\sum_{h\in\{0,1\}}
\frac{1}{2^N}
\sum_{x,z}
\sum_{\statement\notin L}
\sum_{j_1,j_2} p_{j_1,j_2}^\statement
\mbox{Tr}\Big[
H^h \Lambda_{x,z,\statement} H^h
H^h X^xZ^z H^h
\frac{I-s_{j_1,j_2}^\statement Z_{j_1} Z_{j_2}}{2}
H^h Z^zX^x H^h\Big]\\
&=&
\frac{1}{2^N}
\sum_{x,z}
\sum_{\statement\notin L}
\mbox{Tr}\Big[
Z^zX^x
\Lambda_{x,z,\statement} X^xZ^z 
(I-\ham_\statement)
\Big]\\
&=&
\mbox{Tr}[\sigma (I- \ham_\statement) ]\\
&\le&
\mbox{Tr}\Big[\frac{\sigma}{\Tr\sigma} (I- \ham_\statement) \Big]\\
&=&
1-\mbox{Tr}\Big[\frac{\sigma}{\Tr\sigma} \ham_\statement \Big]\\
&\le&
1-\beta,
\end{eqnarray*}
where $\sigma:=\frac{1}{2^N}\sum_{x,z}\sum_{\statement\notin L}
Z^zX^x\Lambda_{x,z,\statement}X^xZ^z$. Note that $\frac{\sigma}{\Tr\sigma}$ is a quantum state for any
POVM $\{\Lambda_{x,z,\statement}\}_{x,z,\statement}$. 

Next we show the completeness. The POVM corresponding to $\prove$ is
$\{\Lambda_{x,z}=\frac{1}{2^N}Z^zX^x \rho_\hist X^xZ^z\}_{x,z}$. 
Note that this is a POVM, because $\Lambda_{x,z}\ge0$, and
\begin{eqnarray*}
\sum_{x,z}\Lambda_{x,z}&=&2^N\times\frac{1}{2^{2N}}\sum_{x,z}Z^zX^x \rho_\hist X^xZ^z
=2^N\frac{I}{2^N}=I.
\end{eqnarray*}
The reason why such $\{\Lambda_{x,z}\}_{x,z}$ is the POVM done by $\prove$ algorithm is as follows.
The $\prove$ algorithm first prepares $\rho_\hist\otimes H^h|m\rangle\langle m|H^h$,
and then measures $j$th qubit of the history state and the $j$th qubit of
$H^h|m\rangle$ in the Bell basis for all $j=1,2,...,N$.
Then,
\begin{eqnarray*}
&&\Big(\bigotimes_{j=1}^N\langle \phi_{x_j,z_j}|\Big) 
\Big(\rho_\hist\otimes
H^h|m\rangle\langle m|H^h\Big)
\Big(\bigotimes_{j=1}^N|\phi_{x_j,z_j}\rangle\Big) \\
&=&\mbox{Tr}
\Big[
\frac{1}{2^N}Z^zX^x \rho_\hist X^xZ^z
\times H^h|m\rangle\langle m|H^h
\Big].
\end{eqnarray*}
Hence
\begin{eqnarray*}
&&    \Pr\left[
    \verify(\vkey,\statement,\pi)=\top 
    :(\pkey,\vkey)\sample \setup(1^\secpar),\pi \sample \prove(\pkey,\statement,\witness^{\otimes k})\right]\\
&=&
\frac{1}{2^N}
\sum_{x,z}
\mbox{Tr}\Big[ Z^zX^x\Big(\frac{1}{2^N}Z^zX^x \rho_\hist X^xZ^z\Big)X^xZ^z
(I-\ham_\statement)\Big]\\
&=&
\mbox{Tr}\Big[ \rho_\hist (I-\ham_\statement)\Big]\\
&=&1-
\mbox{Tr}\Big[ \rho_\hist \ham_\statement\Big]\\
&\ge&1-\alpha.
\end{eqnarray*}
\end{proof}

\paragraph{Impossibility of classical setup.}
In our protocol,
the setup algorithm sends a quantum proving key to the prover.
Can it be classical?
It is easy to see that such a protocol can exist only for languages in $\mathbf{AM}$.\footnote{A similar observation is also made in \cite{C:PasShe05}.}
In fact, assume that we have a CV-NIP for $L$ in the SP model where the proving key is classical. Then, we can construct a
2-round interactive proof for $L$ where
the verifier  runs the setup by itself and sends the proving key to the prover, and then the prover replies as in the original protocol.
Since $\mathbf{IP}(2)=\mathbf{AM}$, the above implies $L\in \mathbf{AM}$. 
Since it is believed that $\BQP$ is not contained in $\mathbf{AM}$~\cite{STOC:RazTal19}, it is highly unlikely that there is a CV-NIP even for $\BQP$ in the SP model with classical setup.

\section{Construction of Dual-Mode $k$-out-of-$n$ Oblivious Transfer}\label{sec:OT}
In this section, we prove \cref{lem:OT}.  
That is, we give a construction of a dual-mode $k$-out-of-$n$ oblivious transfer defined in \cref{def:OT} based on the LWE assumption. 

\subsection{Building Block}
We introduce dual-mode encryption that is used as a building block for our construction.
We refer to \cite{C:PeiVaiWat08} for the intuition of this primitive.
\begin{definition}[Dual-Mode Encryption \cite{C:PeiVaiWat08,SCN:Quach20}\footnote{This definition is based on the definition in \cite{SCN:Quach20}, which has several minor differences from that in \cite{C:PeiVaiWat08}.}]
A dual-mode encryption scheme over the message space $\mathcal{M}$ consists of PPT algorithms $\Pi_{\DEnc}=(\setup,\keygen,\enc,\dec,\findmessy,\trapkeygen)$ with the following syntax.
\begin{description}
\item[$\setup(1^\secpar,\mode)$:]
The setup algorithm takes the security parameter $1^\secpar$ and a mode $\mode\in\{\messymode,\decmode\}$ as input, and outputs a common refernece string $\crs$ and a trapdoor $\td_\mode$.
\item[$\keygen(\crs,\sigma)$:]
The key generation algorithm takes the common reference string $\crs$ and a branch value $\sigma\in \bit$ as input, and outputs a public key $\pk$ and a secret key $\sk$.
\item[$\enc(\crs,\pk,b,\mu)$:]
The encryption algorithm takes the common reference string $\crs$, a public key $\pk$, a branch value $b\in \bit$, and a message $\mu\in\mathcal{M}$ as input, and outputs a ciphertext $\ct$.
\item[$\dec(\crs,\sk,\ct)$:]
The decryption algorithm takes the common reference string $\crs$, a secret key $\sk$, and a ciphertext $\ct$ as input, and outputs a message $\mu\in\mathcal{M}$
\item[$\findmessy(\crs,\td_\messymode,\pk)$:]
The messy branch finding algorithm takes the common reference string $\crs$, trapdoor $\td_\messymode$ in the messy mode, and a public key $\pk$ as input, and outputs a branch value $b\in \bit$.
\item[$\trapkeygen(\crs,\td_\decmode)$:]
The trapdoor key generation algorithm takes the common reference string $\crs$ and a trapdoor $\td_\decmode$ in the decryption mode as input, and outputs a public key $\pk_0$ and two secret keys $\sk_0$ and $\sk_1$ that correspond to branches $0$ and $1$, respectively.
\end{description}
We require $\Pi_\DEnc$ to satisfy the following properties.

\medskip
\noindent \underline{
\textbf{Correctness for Decryptable Branch}
}
For all $\mode\in \{\messymode,\decmode\}$, $\sigma\in \bit$, and $\mu\in\mathcal{M}$,
we have 
\begin{align*}
    \Pr\left[
    \dec(\crs,\sk_\sigma,\ct,\mu)=\mu
    :
    \begin{array}{ll}
    (\crs,\td_\mode)\sample \setup(1^\secpar,\mode)\\
    (\pk,\sk_\sigma)\sample \keygen(\crs,\sigma)\\
    \ct\sample \enc(\crs,\pk,\sigma,\mu)
    \end{array}
    \right]\geq 1-\negl(\secpar).
\end{align*}

\medskip
\noindent \underline{
\textbf{Statistical Security in the Messy Mode}
}
With overwhelming probability over $(\crs,\td_\messymode)\sample \setup(1^\secpar,\messymode)$, 
for all possibly malformed $\pk$, all messages $\mu_0,\mu_1\in \bit^{\ell}$, and all unbounded-time distinguisher $\dist$,  we have 
\begin{align*}
    &\left|\Pr\left[
    \dist(\ct)=1:
    \begin{array}{ll}
    b\sample \findmessy(\crs,\td_\messymode,\pk)\\
        \ct\sample \enc(\crs,\pk,b,\mu_0) 
    \end{array}
    \right]\right.\\
    &\left.-
        \Pr\left[
    \dist(\ct)=1:
    \begin{array}{ll}
    b\sample \findmessy(\crs,\td_\messymode,\pk)\\
        \ct\sample \enc(\crs,\pk,b,\mu_1) 
    \end{array}
    \right]\right|
    \leq \negl(\secpar).
\end{align*}

\medskip
\noindent \underline{
\textbf{Statistical Security in the Decryption Mode}
}
With overwhelming probability over $(\crs,\td_\decmode)\sample \setup(1^\secpar,\decmode)$, 
for all $\sigma\in \bit$ and all unbounded-time distinguisher $\dist$,  we have 
\begin{align*}
    &\left|\Pr\left[
    \dist(\pk,\sk_\sigma)=1:
    \begin{array}{ll}
    (\pk,\sk_\sigma)\sample \keygen(\crs,\sigma)
    \end{array}
    \right]\right.\\
    &\left.-
        \Pr\left[
    \dist(\pk,\sk_\sigma)=1:
    \begin{array}{ll}
     (\pk,\sk_0,\sk_1)\sample \trapkeygen(\crs,\td_\decmode)
    \end{array}
    \right]\right|
    \leq \negl(\secpar).
\end{align*}

\medskip
\noindent \underline{
\textbf{Computational Mode Indistinguishability}
}
For any non-uniform QPT distinguisher $\dist$, we have
\begin{align*}
    &\left|\Pr\left[\dist(\crs)=1:(\crs,\td_\messymode)\sample \crsgen(1^\secpar,\messymode)\right]\right.\\
    &\left.- \Pr\left[\dist(\crs)=1:(\crs,\td_\decmode)\sample \crsgen(1^\secpar,\decmode)\right]
    \right|\leq \negl(\secpar).
\end{align*}
\end{definition}

Quach \cite{SCN:Quach20} gave a construction of a dual-mode encryption scheme based on the LWE assumption.
\begin{lemma}[\cite{SCN:Quach20}]\label{lem:dual_mode_encryption}
If the LWE assumption holds, then there exists a dual-mode encryption scheme. 
\end{lemma}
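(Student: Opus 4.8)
The plan is to instantiate the dual-mode cryptosystem of Peikert, Vaikuntanathan, and Waters \cite{C:PeiVaiWat08} in its LWE-based form due to Quach \cite{SCN:Quach20}; below I only sketch the construction and indicate why LWE suffices. I would start from dual-Regev encryption: for a uniform $\mathbf{A}\in\mathbb{Z}_q^{n\times m}$ with $m$ large enough, a ``public key'' is a syndrome $\mathbf{u}\in\mathbb{Z}_q^n$, and a ciphertext of a bit $\mu$ is $(\mathbf{A}^\top\mathbf{s}+\mathbf{e},\ \mathbf{u}^\top\mathbf{s}+e'+\mu\lfloor q/2\rfloor)$. There are two opposite regimes: if $\mathbf{u}=\mathbf{A}\mathbf{r}$ for a short $\mathbf{r}$ (a \emph{decryptable} syndrome) one decrypts with $\mathbf{r}$; if $\mathbf{u}$ is uniform and independent of $\mathbf{A}$ (a \emph{messy} syndrome) the ciphertext statistically hides $\mu$ by smoothing of Gaussians / the leftover hash lemma. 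Decisional LWE is exactly the statement that these two regimes are computationally indistinguishable.

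Next I would build the dual-mode structure by letting the CRS fix two branches. Concretely, take $\crs=(\mathbf{A},\mathbf{w}_0,\mathbf{w}_1)$, let the public key for branch $\sigma\in\{0,1\}$ be a syndrome $\mathbf{u}$ for which $\mathbf{u}-\mathbf{w}_\sigma$ is decryptable, and let encryption on branch $b$ use the syndrome $\mathbf{u}-\mathbf{w}_b$, so the decryptable branch is $b=\sigma$. In the \emph{decryption} mode, $\setup$ samples $\mathbf{A}$ together with a lattice trapdoor and makes both $\mathbf{w}_0,\mathbf{w}_1$ ``close to the lattice''; the trapdoor lets $\trapkeygen$ output one syndrome $\mathbf{u}$ with short preimages of both $\mathbf{u}-\mathbf{w}_0$ and $\mathbf{u}-\mathbf{w}_1$, giving secret keys for both branches, which yields statistical decryption-mode security because honest and trapdoor public keys are both near-uniform syndromes with appropriately distributed preimages. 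In the \emph{messy} mode, $\setup$ instead arranges that $\mathbf{w}_1-\mathbf{w}_0$ is \emph{far} from $\Lambda(\mathbf{A})$ and keeps a trapdoor witnessing this; then for any (even malformed) $\mathbf{u}$ at most one of $\mathbf{u}-\mathbf{w}_0,\mathbf{u}-\mathbf{w}_1$ can be decryptable, so at least one branch is messy, and $\findmessy$ uses the trapdoor to test which of the two syndromes is far from the lattice and outputs that branch.

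With this in hand I would verify the four required properties: correctness on the decryptable branch is dual-Regev decryption correctness with negligible error; statistical messy-mode security is the smoothing argument applied to the branch returned by $\findmessy$; statistical decryption-mode security is the near-uniformity of $(\pk,\sk_\sigma)$ produced by $\keygen$ versus $\trapkeygen$; and computational mode indistinguishability reduces to decisional LWE, which is precisely what lets one switch $\mathbf{w}_1-\mathbf{w}_0$ between ``close to the lattice'' and ``uniform''. The hard part will be the messy-mode requirement that \emph{every} adversarially generated public key have a messy branch: one must argue that the geometry fixed in the CRS makes ``both branches decryptable'' impossible, and that the trapdoor-based $\findmessy$ statistically identifies a messy branch for all such keys; the remaining work is routine bookkeeping over the LWE parameters (choosing $q$, $m$, and noise rates so that smoothing, trapdoor sampling, and decryption correctness hold simultaneously). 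Full details are in \cite{C:PeiVaiWat08,SCN:Quach20}.
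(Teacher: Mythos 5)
The paper offers no proof of this lemma at all: it is imported verbatim from Quach's work, and the only accompanying text is a remark explaining why the earlier Peikert--Vaikuntanathan--Waters (PVW) instantiation does \emph{not} suffice. Your proposal is therefore not competing with an argument in the paper; it is a sketch of the cited construction, and as such it captures the right blueprint (dual-Regev branches shifted by CRS vectors $\mathbf{w}_0,\mathbf{w}_1$; messy mode when the shift is far from the lattice; decryption mode when a trapdoor yields preimages for both branches; mode indistinguishability from decisional LWE).

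The one substantive caution: as written, your decryption-mode argument (``honest and trapdoor public keys are both near-uniform syndromes with appropriately distributed preimages'') is essentially PVW's analysis, and the paper's remark immediately after the lemma states explicitly that PVW's LWE scheme fails the definition being used here on exactly this point --- its decryption-mode security is only computational, and its CRS is reusable only a bounded number of times. The whole content of attributing the lemma to Quach rather than to PVW is the fix for this: Quach works with a super-polynomial modulus-to-noise ratio and uses noise flooding so that the joint distribution of the public key and the revealed secret key under $\keygen$ is \emph{statistically} close to that under $\trapkeygen$, for an unboundedly reusable $\crs$. Your sketch should identify this as the non-routine step rather than folding it into ``routine bookkeeping over the LWE parameters''; otherwise a reader could reasonably conclude that the plain PVW scheme already proves the lemma, which the paper takes pains to deny. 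You correctly flag the messy-mode ``every malformed key has a messy branch'' argument as delicate, and deferring the remaining details to the cited works is consistent with how the paper itself treats this statement.
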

\begin{remark}
Peikert, Vaikuntanathan, and Waters \cite{C:PeiVaiWat08} gave a construction of a relaxed variant of dual-mode encrytption scheme based on the LWE assumption.
Their construction is more efficient than that of Quach \cite{SCN:Quach20} since they only rely on LWE with polynomial size modulus whereas Quach's construction relies on  LWE with super-polynomial modulus. 
However, their scheme does not suffice for our purpose due to the following two reasons.
\begin{enumerate}
    \item The security in the decryption mode holds only against computationally bounded adversaries.
    \item $\crs$ can be reused only for bounded number of times. 
\end{enumerate}
\end{remark}

\subsection{$1$-out-of-$n$ Oblivious Transfer}
In this section, we construct a dual-mode $1$-out-of-$n$ oblivious transfer based on dual-mode encryption.
That is, we prove the following lemma.
\begin{lemma}\label{lem:one_n_OT}
If there exists a dual-mode encryption scheme, then there exists a dual-mode $1$-out-of-$n$ oblivious transfer.
\end{lemma}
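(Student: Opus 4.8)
The plan is to give the natural "encrypt under $n$ public keys, $n-1$ of which are messy" construction and verify it satisfies all four requirements of \cref{def:OT}. Concretely, I would set $\crsgen_\OT(1^\secpar,\binding)$ to run $\setup(1^\secpar,\messymode)$ and output $\crs:=\crs_\DEnc$ (discarding or keeping $\td_\messymode$ as needed for the simulator $\siml_\rec$), and $\crsgen_\OT(1^\secpar,\hiding)$ to run $\setup(1^\secpar,\decmode)$ and output $\crs:=\crs_\DEnc$ while $\siml_\CRS$ additionally outputs $\td_\decmode$ as its trapdoor. For $\receiver_\OT(\crs,j)$ with $j\in[n]$: generate $(\pk_j,\sk_j)\sample\keygen(\crs,1)$ for the chosen index and $(\pk_i,\cdot)\sample\keygen(\crs,0)$ for every $i\ne j$ — wait, more simply use a single fixed branch convention: generate $(\pk_i,\sk_i)\sample\keygen(\crs,0)$ for all $i$ and let the receiver remember which index $j$ it actually wants together with its secret key $\sk_j$. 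Then $\sender_\OT(\crs,\ot_1=(\pk_1,\dots,\pk_n),\vecmu)$ encrypts $\ct_i:=\enc(\crs,\pk_i,0,\mu_i)$ under branch $0$ for each $i$, and $\derive_\OT$ decrypts $\ct_j$ with $\sk_j$. (The branch bookkeeping needs to be chosen so that exactly the $j$-th ciphertext lands on a decryptable branch; I would fix this by having the receiver generate $\pk_i$ on branch $b_i$ where $b_i=1$ iff $i=j$, keep $\sk_j$, and have the sender encrypt $\ct_i$ under branch $1$ for all $i$. Correctness for the decryptable branch of $\Pi_\DEnc$ then gives $\derive_\OT(\crs,\otst,\ot_2)=\mu_j$ up to negligible error.)

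The verification of the four properties then proceeds step by step. \emph{Correctness} is immediate from correctness of $\Pi_\DEnc$ for the decryptable branch, summing the negligible error. \emph{Statistical receiver's security in the binding mode}: in the messy mode the public keys $\pk_i$ are generated by $\keygen$ and, by statistical security in the messy mode together with the mode's structure, the first message reveals no statistical information about which branch each key was generated on; I would define $\siml_\rec(\crs)$ to generate all $\pk_i$ on branch $0$ and argue indistinguishability from the real $\ot_1$. Strictly, receiver's security here follows because a messy-mode CRS makes keys on both branches statistically close in distribution (this is exactly what lets $\findmessy$ work on arbitrary keys) — I would cite the messy-mode statistical security property and, if the definition of $\Pi_\DEnc$ as stated does not literally give key-indistinguishability, derive it from the messy-mode guarantees. \emph{Statistical sender's security in the hiding mode}: here $\crs\sample\setup(1^\secpar,\decmode)$, $\siml_\CRS$ outputs $(\crs,\td_\decmode)$, and $\Open_\rec(\td_\decmode,\ot_1)$ is defined by running, for each $i$, the appropriate trapdoor procedure to determine which single branch $\pk_i$ decrypts on — by statistical security in the decryption mode and the structure of $\trapkeygen$, for every (even malformed) tuple of keys there is a well-defined decryptable branch, and I would argue that for at most one index $i$ does branch $1$ decrypt (or, if more, $\Open_\rec$ picks one canonically); then $\siml_\sen(\crs,\ot_1,j,\mu_j)$ encrypts $\mu_j$ under $\pk_j$ on branch $1$ and encrypts $0$ (arbitrary) under every other $\pk_i$ on the \emph{messy} branch, and statistical security in the messy branch shows this is indistinguishable from the honest $\sender_\OT$. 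Finally \emph{computational mode indistinguishability} of $\Pi_\OT$ reduces directly to that of $\Pi_\DEnc$.

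The main obstacle I expect is the sender's-security argument in the hiding (decryption) mode: I need that for an arbitrary, possibly maliciously crafted first message $\ot_1=(\pk_1,\dots,\pk_n)$, the trapdoor $\td_\decmode$ lets me extract a single index $j$ such that the honest $\sender_\OT$'s output leaks nothing beyond $\mu_j$. The subtlety is that a malicious receiver might craft keys so that several $\pk_i$ are simultaneously on the decryptable (non-messy) branch $1$ — in the decryption mode of $\Pi_\DEnc$ there is no $\findmessy$, so I cannot force all but one key to be messy. Handling this requires a small argument: in the decryption mode, by the statistical security property there, every $\pk_i$ (honest or malformed) is such that decryption on at least one branch is well-defined, and I would define $\Open_\rec$ to output the smallest index $i$ for which branch $1$ is decryptable (outputting $1$ if none), relying on the fact that the sender encrypts the non-chosen messages on the messy branch in $\siml_\sen$ — for indices where branch $1$ happens to be decryptable for the malicious receiver this is fine as long as $\siml_\sen$ is given those $\mu$'s, which is where $k=1$ and the precise quantifier order in \cref{def:OT} matter. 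I would track this carefully, possibly strengthening the intermediate claim to "for all but one index the honestly-encrypted ciphertext is statistically independent of $\mu_i$", which is what the later application in \cref{sec:construction_dual-mode} actually needs, and which matches how Quach's $1$-out-of-$2$ construction is analyzed. The remaining steps ($1$-out-of-$2$ to $1$-out-of-$n$ via \cite{FOCS:BraCreRob86}, and $k$-parallel repetition preserving half-simulation security) are routine and I would defer them to the subsequent subsections as the excerpt already indicates.
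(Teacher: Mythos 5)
Your construction assigns the two modes of the dual-mode encryption scheme backwards, and this is fatal to both security arguments. The paper sets $\mode':=\decmode$ when $\mode=\binding$ and $\mode':=\messymode$ when $\mode=\hiding$; you do the opposite. Receiver's security (needed in the binding mode) comes from the \emph{decryption-mode} guarantee: there, $\keygen(\crs,\sigma)$ produces a public key statistically close to one output by $\trapkeygen(\crs,\td_\decmode)$, which is independent of $\sigma$, so the first message statistically hides the chosen index. The messy mode gives no such key-indistinguishability --- indeed $\findmessy(\crs,\td_\messymode,\pk)$ recovers the messy branch, which for an honestly generated key reveals $\sigma$ --- and your fallback plan to ``derive it from the messy-mode guarantees'' cannot work, since statistical security in the messy mode is a statement about ciphertexts on the messy branch, not about the distribution of keys. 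Symmetrically, sender's security (needed in the hiding mode) requires the \emph{messy} mode: the simulator uses $\findmessy$ to locate, for every possibly malformed key, a branch on which encryption statistically loses the plaintext. In the decryption mode this is impossible; the obstacle you flag at the end is not a subtlety to be patched but a genuine attack: the adversary $\A_0$ in the sender's-security game is handed the trapdoor, can run $\trapkeygen$ to obtain keys decryptable on \emph{both} branches for every index, and then decrypts everything, so no choice of $\Open_\rec$ or $\siml_\sen$ survives.

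A secondary gap: your base construction (encrypt $\mu_i$ on a single branch of $\pk_i$ for each $i$) does not yield $1$-out-of-$n$ sender's security even with the correct mode assignment, since a malicious receiver can arrange for branch $1$ to be the non-messy branch of all $n$ keys and learn all $n$ messages. The paper bakes the Brassard--Cr\'epeau--Robert conversion into the construction itself: the sender samples pads $r_1,\dots,r_{n-1}$ and encrypts $\mu'_{i,0}:=\mu_i\oplus r_{i-1}$ on branch $0$ and $\mu'_{i,1}:=r_i\oplus r_{i-1}$ on branch $1$, so that recovering $\mu_j$ requires decrypting a consistent prefix of branches and at most one message can ever be unmasked when each key has at least one messy branch. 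You mention the BCR conversion only as a deferred ``routine'' step, but it is load-bearing and must be part of the construction whose four properties you verify.
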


Let $\Pi_{\DEnc}=(\setup,\keygen,\enc,\dec,\findmessy,\trapkeygen)$ be a dual-mode encryption scheme over the message space $\mathcal{M}=\bit^{\ell}$. 
Then our construction of  a dual-mode $1$-out-of-$n$ oblivious transfer 
$\OT_{\onen}=(\crsgen_{\onen}, \receiver_{\onen},\allowbreak  \sender_{\onen},\derive_{\onen})$ over the message space $\mathcal{M}$ is given in \cref{fig:one-n-OT}.
This can be seen as a protocol obtained by applying the conversion of \cite{FOCS:BraCreRob86} to the dual-mode $1$-out-of-$2$ oblivious transfer of \cite{SCN:Quach20}. 

\begin{figure}[t]
\rule[1ex]{\textwidth}{0.5pt}
\begin{description}
\item[$\crsgen_{\onen}(1^\secpar,\mode)$:]
Let $\mode':=\decmode$ if $\mode=\binding$ and  $\mode':=\messymode$ if $\mode=\hiding$.
Then it  generates $(\crs,\td_{\mode'})\sample \setup(1^\secpar,\mode')$ and outputs $\crs$. 
\item[$\receiver_{\onen}(\crs,j)$:]
It generates $(\pk_i,\sk_{i,\sigma_i})\sample \keygen(\crs,\sigma_i)$ for all $i\in [N]$ where $\sigma_j:=1$ and $\sigma_i:=0$ for all $i\in [n]\setminus \{j\}$. 
It outputs $\ot_1:=\{\pk_i\}_{i\in [n]}$ and $\otst:=\left(j,\{\sigma_i,\sk_{i,\sigma_i}\}_{i\in [n]}\right)$.
\item[$\sender_{\onen}(\crs,\ot_1,\vecmu)$:]
It parses $\{\pk_i\}_{i\in [n]}\la \ot_1$ and $(\mu_1,...,\mu_n)\la \vecmu$, 
generates 
$(r_1,...,r_{N-1})\sample \bit^{\ell\times (N-1)}$,
sets $\mu'_{i,0}:=\mu_i \oplus r_{i-1}$ and 
$\mu'_{i,1}:=r_{i}\oplus r_{i-1}$ for all $i\in [n]$ where $r_0$ is defined to be $0^{\ell}$.
Then it generates
$\ct_{i,b}\sample\enc(\pk_i,b,\mu'_{i,b})$ for all $i\in[n]$ and $b\in\bit$, 
and outputs $\ot_2:=\{\ct_{i,b}\}_{i\in[n],b\in \bit}$. 

\item[$\derive_{\onen}(\otst,\ot_2)$:]
It parses $\left(j,\{\sigma_i,\sk_{i,\sigma_i}\}_{i\in [n]}\right)\la \otst$ and  $\{\ct_{i,b}\}_{i\in[n],b\in \bit}\la \ot_2$, 
computes $\mu'_{i,\sigma_i}\sample \dec(\sk_{i,\sigma_i},\ct_{i,\sigma_i})$ for all $i\in [j]$ and outputs $\mu_j:=\bigoplus_{i=1}^{j}\mu'_{i,\sigma_i}$.
\end{description} 
\rule[1ex]{\textwidth}{0.5pt}
\hspace{-10mm}
\caption{Our $1$-out-of-$n$ oblivious transfer $\Pi_{\onen}$}
\label{fig:one-n-OT}
\end{figure}

Then we prove the following lemmas.
\begin{lemma}\label{lem:onen_correcness}
$\Pi_{\onen}$ satisfies  correctness
\end{lemma}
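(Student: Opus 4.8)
The plan is to reduce correctness of $\Pi_{\onen}$ entirely to the ``correctness for decryptable branch'' property of the underlying dual-mode encryption scheme $\Pi_{\DEnc}$, followed by a short telescoping XOR computation. Fix a mode $\mode\in\{\binding,\hiding\}$, a choice index $j\in[n]$, and a message vector $\vecmu=(\mu_1,\dots,\mu_n)\in\mathcal M^{n}$, and consider an honest execution of $\crsgen_{\onen}$, $\receiver_{\onen}(\crs,j)$, $\sender_{\onen}(\crs,\ot_1,\vecmu)$, $\derive_{\onen}(\otst,\ot_2)$ as in \cref{fig:one-n-OT}. Since $\crsgen_{\onen}$ runs $\setup$ in the $\decmode$ mode when $\mode=\binding$ and in the $\messymode$ mode when $\mode=\hiding$, and correctness for decryptable branch is guaranteed in \emph{both} of those modes, it suffices to carry out the argument uniformly in $\mode$.

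First I would observe that for each $i\in[j]$ the receiver holds $(\pk_i,\sk_{i,\sigma_i})\sample\keygen(\crs,\sigma_i)$ and opens the ciphertext $\ct_{i,\sigma_i}=\enc(\crs,\pk_i,\sigma_i,\mu'_{i,\sigma_i})$, so the branch used to encrypt matches the branch used to generate the key pair; this is exactly the decryptable-branch situation. For any fixed value of the sender's pads $r_1,\dots,r_{n-1}$, the joint distribution of $(\crs,\pk_i,\sk_{i,\sigma_i},\ct_{i,\sigma_i})$ coincides with the experiment in the correctness-for-decryptable-branch property (with branch $\sigma_i$ and message $\mu'_{i,\sigma_i}$), hence $\dec(\crs,\sk_{i,\sigma_i},\ct_{i,\sigma_i})=\mu'_{i,\sigma_i}$ except with probability $\negl(\secpar)$. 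As $\derive_{\onen}$ performs at most $n=\poly(\secpar)$ such decryptions, a union bound shows that with overwhelming probability the receiver recovers all of $\{\mu'_{i,\sigma_i}\}_{i\in[j]}$ and therefore outputs $\bigoplus_{i=1}^{j}\mu'_{i,\sigma_i}$.

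It then remains to check that this XOR equals $\mu_j$. Here I would just substitute the definitions of $\mu'_{i,b}$ from $\sender_{\onen}$ together with the receiver's choices $\sigma_i=0$ for $i\neq j$ and $\sigma_j=1$, and the boundary convention $r_0=0^{\ell}$: the values $\mu'_{i,b}$ are arranged precisely so that the partial XOR $\bigoplus_{i=1}^{j}\mu'_{i,\sigma_i}$ telescopes — every random mask $r_i$ occurs an even number of times and cancels in pairs across consecutive summands, the boundary mask $r_0=0^{\ell}$ disappears, and only $\mu_j$ survives. Combined with the previous paragraph this yields $\Pr[\derive_{\onen}(\crs,\otst,\ot_2)=\mu_j]\ge 1-\negl(\secpar)$ for all $\mode$, $j$, and $\vecmu$, which is exactly the $k=1$ case of the correctness requirement in \cref{def:OT}.

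The argument is essentially bookkeeping, so there is no genuine obstacle; the only points that need care are (i) that the decryption guarantee is per-instance and must be lifted to all $j\le n$ instances via a union bound, after noting that each instance is distributed as in the decryptable-branch experiment even though all $n$ instances share a single $\crs$ (and that the pads $r_i$, on which $\mu'_{i,\sigma_i}$ depends, are harmless since correctness holds for every message), and (ii) the index accounting in the telescoping sum — in particular, it is the convention $r_0=0^{\ell}$ that makes opening only the first $j$ ciphertexts (rather than all $n$) sufficient. I would also remark that the mode of $\Pi_{\DEnc}$ plays no role in correctness and enters only the security proofs of the subsequent lemmas.
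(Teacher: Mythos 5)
Your overall approach coincides with the paper's: the paper disposes of this lemma in one line ("this easily follows from correctness of $\Pi_{\DEnc}$"), and your proposal is simply that one line carried out in detail — per-instance appeal to correctness for the decryptable branch, a union bound over the $j\le n$ decryptions, and the XOR bookkeeping. The first two parts are fine.

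However, the telescoping step as you state it is false for the construction as literally written in \cref{fig:one-n-OT}. With the figure's assignment $\sigma_i=0$ for $i\neq j$ and $\sigma_j=1$, together with $\mu'_{i,0}=\mu_i\oplus r_{i-1}$ and $\mu'_{i,1}=r_i\oplus r_{i-1}$, the receiver's output is
\begin{align*}
\bigoplus_{i=1}^{j}\mu'_{i,\sigma_i}=\Bigl(\bigoplus_{i=1}^{j-1}(\mu_i\oplus r_{i-1})\Bigr)\oplus(r_j\oplus r_{j-1}),
\end{align*}
which does not cancel to $\mu_j$; e.g.\ for $j=1$ it equals $r_1$, a uniformly random string. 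The cancellation you describe requires the opposite branch assignment ($\sigma_i=1$ for $i<j$ and $\sigma_j=0$), under which $\bigoplus_{i=1}^{j-1}(r_i\oplus r_{i-1})\oplus(\mu_j\oplus r_{j-1})=r_{j-1}\oplus r_0\oplus\mu_j\oplus r_{j-1}=\mu_j$. This is almost certainly a typo in the figure (the roles of the two branches, or equivalently of $\mu'_{i,0}$ and $\mu'_{i,1}$, are swapped), but since you claim to "just substitute the definitions" with the figure's $\sigma$'s and assert that the masks cancel in pairs, the computation you assert was not actually checked — had it been, the mismatch would have surfaced. The fix is to note and correct the branch assignment before running the telescoping argument.
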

\begin{proof}
This easily follows from correctnes of $\Pi_{\DEnc}$. 
\end{proof}
\begin{lemma}\label{lem:onen_mode_ind}
$\Pi_{\onen}$ satisfies  the computational mode indistinguishability.
\end{lemma}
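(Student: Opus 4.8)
The plan is to reduce directly to the computational mode indistinguishability of the underlying dual-mode encryption scheme $\Pi_\DEnc$ (\cref{def:dual-mode} analogue for $\DEnc$), invoking no other property. First I would unfold the definition of $\crsgen_{\onen}$: on input $(1^\secpar,\binding)$ it sets $\mode'=\decmode$, runs $(\crs,\td_\decmode)\sample\setup(1^\secpar,\decmode)$, and outputs $\crs$; on input $(1^\secpar,\hiding)$ it sets $\mode'=\messymode$, runs $(\crs,\td_\messymode)\sample\setup(1^\secpar,\messymode)$, and outputs $\crs$. Hence the distribution of a common reference string produced by $\crsgen_{\onen}$ in the binding (resp. hiding) mode is \emph{identical} to the marginal distribution of the $\crs$ component output by $\setup$ in the decryption (resp. messy) mode of $\Pi_\DEnc$.

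Next, given an arbitrary non-uniform QPT distinguisher $\dist$ that distinguishes $\crsgen_{\onen}(1^\secpar,\binding)$ from $\crsgen_{\onen}(1^\secpar,\hiding)$ with advantage $\epsilon$, I would construct a distinguisher $\dist'$ against the computational mode indistinguishability of $\Pi_\DEnc$ that simply forwards the challenge $\crs$ (received from the $\Pi_\DEnc$ challenger, which is either $(\crs,\td_\messymode)\sample\setup(1^\secpar,\messymode)$ or $(\crs,\td_\decmode)\sample\setup(1^\secpar,\decmode)$, with $\td$ not given to $\dist'$) to $\dist$ and outputs whatever $\dist$ outputs. By the distribution equality established above, $\dist'$ perfectly simulates $\dist$'s view in both modes, so $\dist'$ has advantage exactly $\epsilon$; since $\Pi_\DEnc$ exists under LWE by \cref{lem:dual_mode_encryption} and in particular satisfies computational mode indistinguishability, $\epsilon=\negl(\secpar)$, which is the claim.

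There is essentially no obstacle: the reduction is the identity on the adversary, and the only point requiring care is the (purely syntactic) bookkeeping that the mode relabeling $\binding\mapsto\decmode$, $\hiding\mapsto\messymode$ performed inside $\crsgen_{\onen}$ lines up with the two modes of $\setup$, which is immediate from the construction in \cref{fig:one-n-OT}. I would therefore keep the proof to two or three sentences, of the form ``This immediately follows from the computational mode indistinguishability of $\Pi_\DEnc$ since $\crsgen_{\onen}(1^\secpar,\binding)$ (resp. $\crsgen_{\onen}(1^\secpar,\hiding)$) outputs the $\crs$-component of $\setup(1^\secpar,\decmode)$ (resp. $\setup(1^\secpar,\messymode)$).''
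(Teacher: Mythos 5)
Your proposal is correct and matches the paper's argument, which simply states that the lemma "can be reduced to the computational mode indistinguishability of $\Pi_{\DEnc}$ in a straightforward manner"; you have merely spelled out the (identity) reduction and the mode relabeling $\binding\mapsto\decmode$, $\hiding\mapsto\messymode$ explicitly. No further comment is needed.
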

\begin{proof}
This can be reduced to the computational mode indistinguishability of $\Pi_{\DEnc}$ in a straightforward manner.
\end{proof}

\begin{lemma}\label{lem:onen_receiver_security}
$\Pi_{\onen}$ satisfies  statistical receiver's security in the binding mode.
\end{lemma}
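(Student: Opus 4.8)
We need to prove that $\Pi_{\onen}$ (the $1$-out-of-$n$ oblivious transfer of \cref{fig:one-n-OT}) satisfies statistical receiver's security in the binding mode.

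\textbf{Plan.} In the binding mode, $\crsgen_{\onen}$ runs $\setup(1^\secpar,\decmode)$, so we are in the decryption mode of the underlying dual-mode encryption scheme $\Pi_{\DEnc}$. The first receiver message is $\ot_1 = \{\pk_i\}_{i \in [n]}$, where each $\pk_i$ is generated via $\keygen(\crs,\sigma_i)$ with $\sigma_j = 1$ for the chosen index $j$ and $\sigma_i = 0$ otherwise. The key observation is that in the decryption mode, \emph{statistical security in the decryption mode} of $\Pi_{\DEnc}$ says that a public key $\pk$ generated by $\keygen(\crs,\sigma)$ is statistically indistinguishable (even together with $\sk_\sigma$) from the public key $\pk_0$ output by $\trapkeygen(\crs,\td_\decmode)$, and crucially this holds for \emph{both} branch values $\sigma \in \bit$. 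Hence the branch value $\sigma$ used to generate $\pk$ is statistically hidden: a public key for branch $0$ and a public key for branch $1$ are both statistically close to the same distribution (that of $\pk_0$ from $\trapkeygen$).

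\textbf{Key steps.} First I would define the simulator $\siml_\rec$: on input $\crs$ (generated in the binding mode, i.e. via $\setup(1^\secpar,\decmode)$ so that implicitly a trapdoor $\td_\decmode$ exists but is not needed by $\siml_\rec$ — actually $\siml_\rec$ should just run $\keygen(\crs,0)$ for every $i$), it outputs $\ot_1 := \{\pk_i\}_{i\in[n]}$ where $(\pk_i,\sk_{i,0}) \sample \keygen(\crs,0)$ for all $i \in [n]$. So the simulated first message is just the honest receiver's first message with the ``dummy'' choice $j$ that makes all branches $0$. Second, I would argue indistinguishability via a hybrid argument over the $n$ coordinates: starting from the real $\receiver_{\onen}(\crs,j)$ output (which uses $\sigma_j=1$, all other $\sigma_i=0$) and moving to the simulated output (all $\sigma_i=0$), the only coordinate that differs is $i=j$. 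So really there is a \emph{single} hybrid step, changing $\pk_j$ from a branch-$1$ key to a branch-$0$ key. Third, to bound this single step, I would invoke statistical security in the decryption mode twice: with overwhelming probability over $(\crs,\td_\decmode) \sample \setup(1^\secpar,\decmode)$, the distribution of $\pk$ from $\keygen(\crs,1)$ is within $\negl(\secpar)$ statistical distance of the distribution of $\pk_0$ from $\trapkeygen(\crs,\td_\decmode)$, and likewise $\keygen(\crs,0)$ is within $\negl(\secpar)$ of $\trapkeygen(\crs,\td_\decmode)$; by the triangle inequality, $\keygen(\crs,0)$ and $\keygen(\crs,1)$ produce statistically close public keys. (We only need the public key, not the secret key, though the property as stated gives even the joint distribution with $\sk_\sigma$.) Since the other coordinates $i \neq j$ are generated identically in the real and simulated executions, the whole first messages are statistically close, and $\crs$ is identically distributed in both cases. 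This gives the required bound for every distinguisher $\dist$ and every $J = j \in [n]$.

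\textbf{Main obstacle.} There is no serious obstacle here — the proof is essentially a direct reduction to the statistical security in the decryption mode of $\Pi_{\DEnc}$. The only point requiring a little care is that the statistical security property of $\Pi_{\DEnc}$ is quantified ``with overwhelming probability over $(\crs,\td_\decmode) \sample \setup(1^\secpar,\decmode)$'', so one must combine the two applications (branch $0$ and branch $1$) on the same good event of $\crs$, which holds with overwhelming probability by a union bound, and then absorb the small failure probability into the final $\negl(\secpar)$ bound. A secondary bookkeeping point is that $\siml_\rec$ in \cref{def:OT} takes only $\crs$ as input (no trapdoor), which is fine since our $\siml_\rec$ just runs $\keygen(\crs,0)$ $n$ times and needs nothing else.
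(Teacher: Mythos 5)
Your proposal is correct and matches the paper's proof essentially exactly: the same simulator $\siml_\rec$ (running $\keygen(\crs,0)$ for all $i\in[n]$), and the same reduction to statistical security in the decryption mode via the observation that both branch-$0$ and branch-$1$ public keys are statistically close to the branch-independent $\trapkeygen$ distribution. Your extra care about combining the two applications on the same overwhelming-probability event over $\crs$ is just an explicit spelling-out of the paper's ``standard hybrid argument.''
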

\begin{proof}
We construct $\siml_\rec$  as follows.
    \begin{description}
\item[$\siml_\rec(\crs)$:]
It generates $(\pk_i,\sk_{i,0})\sample \keygen(\crs,0)$ for all $i\in[n]$, and outputs $\ot_1:=\{\pk_i\}_{i\in[n]}$.
\end{description}
By statistical security in the decryption mode of $\Pi_{\DEnc}$, 
with overwhelming probability over $(\crs,\td_\decmode)\sample \setup(1^\secpar,\decmode)$, 
the distribution of $\pk$ generated as $(\pk,\sk_\sigma)\sample \keygen(\crs,\sigma)$ for any fixed $\sigma\in \bit$ is statistically close to that generated as $(\pk,\sk_0,\sk_1)\sample \trapkeygen(\crs,\td_\decmode)$, which does not depend on $\sigma$.
Therefore, the distributions of $\pk_i$ generated by $\siml_\rec(\crs)$ and $\receiver(\crs,j)$ are statistically close for any $j\in[n]$. 
Then statistical receiver's security in the binding mode of $\Pi_{\onen}$ follows by a standard hybrid argument. 
\end{proof}

\begin{lemma}\label{lem:onen_sender_security}
$\Pi_{\onen}$ satisfies  the statistical sender's security  in the hiding mode.
\end{lemma}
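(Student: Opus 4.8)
The plan is to exhibit the three algorithms $\siml_\CRS$, $\Open_\rec$, $\siml_\sen$ required in \cref{def:OT} and then verify the two sub-properties of statistical sender's security in the hiding mode. Since $\crsgen_{\onen}(1^\secpar,\hiding)$ just runs $\setup(1^\secpar,\messymode)$ of the underlying dual-mode encryption scheme $\Pi_\DEnc$ and discards the trapdoor, I would let $\siml_\CRS(1^\secpar)$ run $(\crs,\td_\messymode)\sample\setup(1^\secpar,\messymode)$ and output $(\crs,\td:=\td_\messymode)$; the first required property, indistinguishability of the simulated $\crs$ from the honest one, then holds with zero statistical distance because the marginal on $\crs$ is literally the same. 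The extractor $\Open_\rec(\td_\messymode,\ot_1)$ parses $\ot_1=\{\pk_i\}_{i\in[n]}$, computes the messy branch $b_i:=\findmessy(\crs,\td_\messymode,\pk_i)$ of each public key (treating $\findmessy$ as deterministic, which it is in the \cite{SCN:Quach20} instantiation, or else fixing its coins), and outputs the single index $j^*$ equal to the smallest $i$ at which the ciphertext carrying the pad-refresh component $r_i\oplus r_{i-1}$ of the chain in \cref{fig:one-n-OT} falls under the messy branch $b_i$, defaulting to $j^*:=1$ if there is no such $i$; intuitively, $j^*$ is the first position at which the adversary's view of the sender's one-time-pad chain is "cut". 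Finally, $\siml_\sen(\crs,\ot_1,j^*,\mu_{j^*})$ runs the honest $\sender_{\onen}$ algorithm verbatim, except that it substitutes $0^\ell$ for every $\mu_j$ with $j\neq j^*$ while keeping the genuine $\mu_{j^*}$.

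For the second property, I would argue in two stages. First, invoke statistical security in the messy mode of $\Pi_\DEnc$ once per index $i\in[n]$ to replace the ciphertext $\ct_{i,b_i}$ sitting under the messy branch $b_i$ by an encryption of $0^\ell$; since this security holds against unbounded distinguishers that may depend on $\crs$, $\td_\messymode$ and the (arbitrarily malformed) $\pk_i$'s, the substitution applies verbatim inside both the real and the simulated experiment — in particular the fact that $\A_0$ receives $\td_\messymode$ causes no problem — at a total cost of $2n\cdot\negl(\secpar)$. In the resulting "stripped" experiments the adversary's view depends on the messages only through the plaintexts surviving in non-messy branches, and inspection of the chain shows these to be exactly the values $\mu_j\oplus r_{j-1}$ for the indices $j$ with the message-carrying branch non-messy, together with $\mu$-independent chain links $r_i\oplus r_{i-1}$ for the other indices. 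The second stage is then a purely information-theoretic claim: conditioned on the visible chain links, the vector of surviving values $\bigl(\mu_j\oplus r_{j-1}\bigr)_{j}$ is distributed identically whether the true messages or their zeroed versions are used. This is because the chain link at the cut index $j^*$ lies in a messy branch and was stripped, hence never enters the adversary's view, so the fresh uniform value it hid propagates as a one-time pad into every $r_{j-1}$ with $j>j^*$; ordering the affected indices increasingly and unrolling the chain, the map from the hidden chain links to those surviving values with $j\neq j^*$ is triangular, hence a bijection, so that this vector is uniform regardless of the messages, while the single value $\mu_{j^*}\oplus r_{j^*-1}$ is unchanged because $\siml_\sen$ keeps $\mu_{j^*}$ (and all chain links below $j^*$ are visible, by minimality of $j^*$). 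Summing the errors of the two stages yields the claimed statistical indistinguishability of the real and simulated $\ot_2$.

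The step I expect to be the genuine obstacle is the second stage. It is tempting to handle one index $j\neq j^*$ at a time — replace $\mu_j$ by $0^\ell$ and re-randomise the pads — but the surviving values $\mu_j\oplus r_{j-1}$ for different $j>j^*$ all share the same hidden "root" randomness coming from the cut at $j^*$, so a one-index-at-a-time hybrid is not obviously sound; one really has to establish the joint bijectivity of the triangular linear system, which in turn demands carefully identifying which chain-link ciphertexts are visible (precisely those whose messy branch is the message-carrying one) and which are hidden (those at indices with the pad-refresh ciphertext messy, including $j^*$). Checking that this system is triangular, and that $\Open_\rec$'s choice of $j^*$ as the \emph{first} such index is exactly what makes the root pad cover all the problematic surviving values, is the delicate bookkeeping at the heart of the proof; I would also want to treat separately the degenerate case where no index has its pad-refresh ciphertext messy, for which the stripped experiments carry no message information at all beyond $\mu_{j^*}=\mu_1$. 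A secondary, routine point is to justify treating $\findmessy$ as deterministic so that $\Open_\rec$ is a deterministic classical polynomial-time algorithm, as \cref{def:OT} requires.
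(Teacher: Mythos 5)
Your proposal is correct and follows essentially the same route as the paper's proof: the same $\siml_\CRS$ and $\Open_\rec$ (minimal index whose messy branch is $1$), a $\siml_\sen$ that runs the honest sender on dummy messages for the unopened indices (the paper uses uniform dummies where you use $0^\ell$, which is immaterial), and the same two-stage hybrid that first strips all messy-branch ciphertexts via the messy-mode security of $\Pi_\DEnc$ and then argues information-theoretically about the pad chain. The only difference is that the paper dismisses the second stage as "an easy information theoretical argument," whereas you correctly identify it as the delicate step and supply the triangular-bijection bookkeeping (and the degenerate case) that the paper omits.
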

\begin{proof}
We construct 
$\siml_\CRS$, 
$\Open_\rec$, and $\siml_\sen$ as follows.
\begin{description}
\item[$\siml_\CRS(1^\secpar)$:]
It generates $(\crs,\td_\messymode)\sample \setup(1^\secpar,\messymode)$ and outputs $\crs$ and $\td:=\td_\messymode$.
\item[$\Open_\rec(\td,\ot_1)$:]
It parses 
$\td_\messymode \la \td$ and $\{\pk_i\}_{i\in[n]}\la \ot_1$, 
computes $\sigma_i\sample \findmessy(\td_\messymode,\pk_i)$ for all $i\in[n]$, 
and outputs the minimal $j\in[n]$ such that 
$\sigma_j=1$.
\item[$\siml_\sen(\crs,\ot_1,j,\mu_j)$:]
It 
generates $\mu_i\sample \mathcal{M}$ for $i\in [n]\setminus \{j\}$, and outputs $\ot_2\sample \sender_{\onen}(\crs,\ot_1,(\mu_1,...,\mu_n))$.
\end{description}
The first item of statistical sender's security in the hiding mode is clear because $\siml_\CRS(1^\secpar)$ generates $\crs$ in exactly the same manner as $\crsgen(1^\secpar,\hiding)$.
In the following, we prove the second item is also satisfied. 
For any 
unbounded-time adversary $\A=(\A_0,\A_1)$ and 
fixed $\vecmu=(\mu_1,...,\mu_n)$, we consider the following sequence of games between $\A$ and the challenger.
We denote by $\event_i$ the event that $\A_1$ returns $1$ in $\game_i$.
\begin{description}
\item[$\game_1$:]
This game works as follows.
\begin{enumerate}
    \item  The challenger generates $(\crs,\td_\messymode)\sample \setup(1^\secpar,\messymode)$ and sets $\td:=\td_\messymode$.
    \item $\A_0$ takes $(\crs,\td)$ as input and
    outputs $\ot_1=\{\pk_i\}_{i\in[n]}$ and $\st_\A$.
    \item The challenger computes $j:=\Open_\rec(\td,\ot_1)$. That is, it computes $\sigma_i\sample \findmessy(\td_\messymode,\pk_i)$ for all $i\in[n]$ and 
let $j$ be the minimal value such that 
$\sigma_j=1$. 
    \item The challenger 
    sets $\widetilde{\mu}_j:=\mu_j$, 
    generates $\widetilde{\mu}_i\sample \mathcal{M}$ for $i\in [n]\setminus \{j\}$ and $(r_1,...,r_{N-1})\sample \bit^{\ell\times (N-1)}$, and
    sets $\mu'_{i,0}:=\widetilde{\mu}_i \oplus r_{i-1}$ and 
$\mu'_{i,1}:=r_{i}\oplus r_{i-1}$ for all $i\in [n]$ where $r_0$ is defined to be $0^{\ell}$.
Then it generates
$\ct_{i,b}:=\enc(\pk_i,b,\mu'_{i,b})$ for all $i\in[n]$ and $b\in\bit$ and sets $\ot_2:=\{\ct_{i,b}\}_{i\in[n],b\in \bit}$.
\item $\A_1$ takes $\st_\A$ and $\ot_2$ as input and outputs a bit $\beta$. 
\end{enumerate}
\item[$\game_2$:]
This game is identical to the previous game except that $\mu'_{i,\sigma_i}$ is replaced with $0^{\ell}$ for all $i\in[n]$. 

By the statistical security in the messy mode of $\Pi_{\DEnc}$, it is easy to see that we have $|\Pr[\event_2]-\Pr[\event_1]|\leq \negl(\secpar)$. 

\item[$\game_3$:]
This game is identical to the previous game except that $\mu'_{i,0}$ is replaced with an independently and uniformly random element of $\mathcal{M}$ for all $i> j$.
We note that this game does not use $\{\widetilde{\mu}_i\}_{i\neq j}$ at all.

By an easy information theoretical argument, we can see that the distribution of $\{\mu'_{i,b}\}_{i\in[n],b\in\bit}$ does not change from the previous game, and thus we have  $\Pr[\event_3]=\Pr[\event_2]$. 

\item[$\game_4$:]
This game is identical to the $\game_1$ except that the challenger uses $\vecmu$ instead of $\widetilde{\vecmu}$.

By considering similar game hops to those from $\game_1$ to $\game_3$ in the reversed order, by the statistical security in the messy mode of $\Pi_{\DEnc}$,  we have $|\Pr[\event_4]-\Pr[\event_3]|\leq \negl(\secpar)$. 
\end{description}
Combining the above, we have $|\Pr[\event_4]-\Pr[\event_1]|\leq \negl(\secpar)$.
This is exactly the second item of statistical sender's security in the hiding mode. 
\end{proof}

By combining
\cref{lem:onen_correcness,lem:onen_mode_ind,lem:onen_receiver_security,lem:onen_sender_security}, we obtain \cref{lem:one_n_OT}.

\subsection{$k$-out-of-$n$ Oblivious Transfer}
In this section, we construct a dual-mode $k$-out-of-$n$ oblivious transfer based on dual-mode $1$-out-of-$n$ oblivious transfer by $k$ parallel repetitions.
That is, we prove the following lemma.
\begin{lemma}\label{lem:k_n_OT}
If there exists a dual-mode $1$-out-of-$n$ oblivious transfer, then there exists a dual-mode $k$-out-of-$n$ oblivious transfer.
\end{lemma}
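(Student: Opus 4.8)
\textbf{Proof plan for \cref{lem:k_n_OT}.}
The plan is to construct a dual-mode $k$-out-of-$n$ oblivious transfer $\Pi_{\kn}=(\crsgen_{\kn},\receiver_{\kn},\sender_{\kn},\derive_{\kn})$ from a dual-mode $1$-out-of-$n$ oblivious transfer $\Pi_{\onen}=(\crsgen_{\onen},\receiver_{\onen},\sender_{\onen},\derive_{\onen})$ by running $k$ parallel and independent copies of $\Pi_{\onen}$, all sharing a \emph{single} common reference string. Concretely, $\crsgen_{\kn}(1^\secpar,\mode)$ just runs $\crsgen_{\onen}(1^\secpar,\mode)$ once; on input $J=(j_1,\dots,j_k)\in[n]^k$, the receiver runs $(\ot_1^{(\ell)},\otst^{(\ell)})\sample\receiver_{\onen}(\crs,j_\ell)$ for $\ell\in[k]$ and sends $(\ot_1^{(1)},\dots,\ot_1^{(k)})$; on input $\vecmu=(\mu_1,\dots,\mu_n)$, the sender runs $\ot_2^{(\ell)}\sample\sender_{\onen}(\crs,\ot_1^{(\ell)},\vecmu)$ for each $\ell$ (using the \emph{same} $\vecmu$ in every session) and returns $(\ot_2^{(1)},\dots,\ot_2^{(k)})$; and $\derive_{\kn}$ runs $\derive_{\onen}$ in each session to recover $(\mu_{j_1},\dots,\mu_{j_k})$. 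First I would verify correctness, which is immediate from correctness of $\Pi_{\onen}$ applied in each of the $k$ sessions together with a union bound over the $k=\poly(\secpar)$ negligible error terms. Computational mode indistinguishability is inherited verbatim since the CRS is generated exactly by $\crsgen_{\onen}$.

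Next I would handle statistical receiver's security in the binding mode. The simulator $\siml_\rec^{\kn}(\crs)$ runs $\siml_\rec^{\onen}(\crs)$ independently $k$ times and concatenates the outputs. A standard hybrid argument over the $k$ sessions, swapping one honest $\receiver_{\onen}(\crs,j_\ell)$ first message for a $\siml_\rec^{\onen}(\crs)$ output at a time, reduces the distinguishing advantage to $k$ times the advantage against receiver's security of $\Pi_{\onen}$, which is negligible. Here it is important — and this is exactly the point emphasized in the paper's overview — that the receiver's security is only the indistinguishability-based (half-simulation) notion, so it \emph{is} preserved under parallel repetition; no consistency check across sessions is needed on the receiver side.

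For statistical sender's security in the hiding mode, I would set $\siml_\CRS^{\kn}:=\siml_\CRS^{\onen}$ (so $\td:=\td$ of $\Pi_{\onen}$); define $\Open_\rec^{\kn}(\td,(\ot_1^{(1)},\dots,\ot_1^{(k)}))$ to output $J=(j_1,\dots,j_k)$ where $j_\ell:=\Open_\rec^{\onen}(\td,\ot_1^{(\ell)})$; and define $\siml_\sen^{\kn}(\crs,(\ot_1^{(\ell)})_\ell,J,\vecmu_J)$ to run, for each $\ell\in[k]$, $\ot_2^{(\ell)}\sample\siml_\sen^{\onen}(\crs,\ot_1^{(\ell)},j_\ell,\mu_{j_\ell})$. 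The first item (CRS indistinguishability) is immediate. For the second item, given an adversary $\A=(\A_0,\A_1)$ against $\Pi_{\kn}$ and a fixed $\vecmu$, I would use a hybrid argument over $\ell=0,1,\dots,k$, where hybrid $H_\ell$ generates $\ot_2^{(1)},\dots,\ot_2^{(\ell)}$ via $\siml_\sen^{\onen}$ and $\ot_2^{(\ell+1)},\dots,\ot_2^{(k)}$ via the honest $\sender_{\onen}$; consecutive hybrids differ only in one session, and the reduction to the second item of sender's security of $\Pi_{\onen}$ runs $\A$ internally, forwarding the $\ell$-th session's $\ot_1^{(\ell)}$ to the external challenger and simulating the remaining $k-1$ sessions itself (it knows $\crs$, $\td$, and $\vecmu$, and can compute $j_m=\Open_\rec^{\onen}(\td,\ot_1^{(m)})$ for $m<\ell$). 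Crucially $H_0$ equals the real game and $H_k$ equals the simulated game, so the total advantage is at most $k\cdot\negl(\secpar)=\negl(\secpar)$.

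I do not expect a serious obstacle here; the only subtlety — which is precisely why this repetition works for \emph{this} definition but would fail for full-simulation security — is that a malicious sender could send inconsistent $\vecmu$'s across the $k$ sessions, but the definition of \cref{def:OT} places no requirement preventing this (the receiver's guarantee is only privacy of $J$, not a binding of the sender to one message vector), so nothing needs to be done about it. The mild bookkeeping point to get right is that $\Open_\rec^{\onen}$ is only guaranteed to output a valid index in $[n]$ with the stated indistinguishability, so $\Open_\rec^{\kn}$ outputs a valid $J\in[n]^k$ as required; this is inherited directly. Finally, combining \cref{lem:one_n_OT} (equivalently \cref{lem:dual_mode_encryption} via \cref{lem:SCN:Quach20}'s construction) with \cref{lem:k_n_OT} and observing $k,n=\poly(\secpar)$ yields \cref{lem:OT}.
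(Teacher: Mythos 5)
Your proposal is correct and follows essentially the same route as the paper: the identical parallel-repetition construction with a shared CRS, correctness and mode indistinguishability inherited directly, receiver's security via a per-session hybrid against $\siml_{\rec,\onen}$, and sender's security via $\siml_\CRS:=\siml_{\CRS,\onen}$, component-wise $\Open_\rec$, and a session-by-session hybrid reducing to the second item of sender's security of $\Pi_{\onen}$. Your observation that half-simulation (indistinguishability-based) receiver's security is what makes the repetition go through is exactly the point the paper makes, so nothing further is needed.
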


By combining \cref{lem:dual_mode_encryption,lem:one_n_OT,lem:k_n_OT}, we obtain \cref{lem:OT}.

What is left is to prove \cref{lem:k_n_OT}.
Let $\Pi_{\onen}=(\crsgen_{\onen}, \receiver_{\onen}, \sender_{\onen},\allowbreak \derive_{\onen})$ be a dual-mode $1$-out-of-$n$ oblivious transfer over the message space $\mathcal{M}$.
Then our dual-mode $k$-out-of-$n$ oblivious transfer 
$\Pi_{\kn}=(\crsgen_{\kn}, \receiver_{\kn},\allowbreak  \sender_{\kn},\derive_{\kn})$ is described in \cref{fig:k-n-OT}.

\begin{figure}[t]
\rule[1ex]{\textwidth}{0.5pt}
\begin{description}
\item[$\crsgen_{\kn}(1^\secpar,\mode)$:]
It generates $\crs\sample \crsgen_{\onen}(1^\secpar,\mode)$ and outputs $\crs$.
\item[$\receiver_{\kn}(\crs,J)$:]
It parses 
$(j_1,...,j_k)\la J$, 
generates $(\ot_{1,i},\otst_i)\sample \receiver_{\onen}(\crs,j_i)$ for all $i\in[k]$, 
and outputs 
$\ot_1:=\{\ot_{1,i}\}_{i\in [k]}$ and $\otst:=\{\otst_{i}\}_{i\in [k]}$. 
\item[$\sender_{\kn}(\crs,\ot_1,\vecmu)$:]
It parses 
$\{\ot_{1,i}\}_{i\in [k]}\la \ot_1$, 
generates 
$\ot_{2,i}\sample \sender_{\onen}(\crs,\ot_{1,i},\vecmu)$
for all $i\in [k]$, and 
outputs $\ot_2:=\{\ot_{2,i}\}_{i\in[k]}$. 

\item[$\derive_{\kn}(\crs,\otst,\ot_2)$:]
It parses 
$\{\otst_{i}\}_{i\in [k]}\la \otst$, 
computes $\mu_{j_i}\sample \derive_{\onen}(\crs,\otst_i,\ot_{2,i})$ for $i\in[k]$, 
and outputs $(\mu_{j_1},...,\mu_{j_k})$.
\end{description} 
\rule[1ex]{\textwidth}{0.5pt}
\hspace{-10mm}
\caption{Our $k$-out-of-$n$ oblivious transfer $\Pi_{\kn}$}
\label{fig:k-n-OT}
\end{figure}

Then we prove the following lemmas.
\begin{lemma}\label{lem:kn_correcness}
$\Pi_{\kn}$ satisfies  correctness.
\end{lemma}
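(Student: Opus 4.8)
\textbf{Proof plan for \cref{lem:kn_correcness}.}
The plan is to reduce correctness of the $k$-out-of-$n$ construction directly to correctness of the underlying $1$-out-of-$n$ scheme $\Pi_{\onen}$, using a union bound over the $k$ parallel sessions. First I would fix an arbitrary $\mode\in\{\binding,\hiding\}$, an ordered set of indices $J=(j_1,\dots,j_k)\in[n]^k$, and a message tuple $\vecmu=(\mu_1,\dots,\mu_n)\in\mathcal{M}^n$. Unfolding the definitions in \cref{fig:k-n-OT}, an honest execution of $\Pi_{\kn}$ on these inputs runs $k$ independent honest executions of $\Pi_{\onen}$: the $i$-th execution uses $\crs\sample\crsgen_{\onen}(1^\secpar,\mode)$ (the same $\crs$ across all sessions), $(\ot_{1,i},\otst_i)\sample\receiver_{\onen}(\crs,j_i)$, $\ot_{2,i}\sample\sender_{\onen}(\crs,\ot_{1,i},\vecmu)$, and $\derive_{\onen}(\crs,\otst_i,\ot_{2,i})$. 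The output of $\derive_{\kn}$ is the tuple $(\mu_{j_1},\dots,\mu_{j_k})$ exactly when every session $i\in[k]$ yields $\derive_{\onen}(\crs,\otst_i,\ot_{2,i})=\mu_{j_i}$.

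The key observation is that the crs is reusable: in the definition of dual-mode $1$-out-of-$n$ oblivious transfer (via \cref{def:OT} specialized to $k=1$), correctness is stated for a single honestly generated $\crs$, and nothing prevents that same $\crs$ from being used to run many receiver/sender instances. So for each fixed $i$, the correctness guarantee of $\Pi_{\onen}$ gives
\[
\Pr\left[\derive_{\onen}(\crs,\otst_i,\ot_{2,i})\neq\mu_{j_i}\right]\leq\negl(\secpar),
\]
where the probability is over $\crs\sample\crsgen_{\onen}(1^\secpar,\mode)$ and the internal randomness of $\receiver_{\onen}$ and $\sender_{\onen}$ in session $i$ (with the other sessions' randomness irrelevant). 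I would then apply a union bound over the $k=\poly(\secpar)$ sessions to conclude that the probability that \emph{some} session fails is at most $k\cdot\negl(\secpar)=\negl(\secpar)$, hence $\derive_{\kn}$ outputs $(\mu_{j_1},\dots,\mu_{j_k})$ with overwhelming probability, which is exactly the required correctness property.

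There is no real obstacle here — this is a routine union-bound argument — but the one point worth stating carefully is that the correctness bound of $\Pi_{\onen}$ holds for an arbitrarily chosen fixed $J$-index in each session and for the fixed (reused) $\crs$, so that summing the $k$ failure probabilities is legitimate even though all sessions share $\crs$. (If one wanted to be pedantic, one could note that the shared-$\crs$ correctness follows because correctness in \cref{def:OT} is an ``average-over-$\crs$'' statement that holds for every admissible index/message pair; fixing $\crs$ first and then running multiple independent sessions only averages the per-session failure probabilities, so the union bound goes through.) This completes the proof of \cref{lem:kn_correcness}.
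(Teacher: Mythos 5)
Your proof is correct and takes essentially the same route as the paper, which simply states that correctness "can be reduced to correctness of $\Pi_{\onen}$ in a straightforward manner"; your union bound over the $k$ parallel sessions, together with the observation that each session's marginal distribution matches the single-session correctness experiment even though the $\crs$ is shared, is exactly the intended reduction spelled out in full.
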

\begin{proof}
This can be reduced to correctness of $\Pi_{\onen}$ in a straightforward manner.
\end{proof}
\begin{lemma}\label{lem:kn_mode_ind}
$\Pi_{\kn}$ satisfies  the computational mode indistinguishability.
\end{lemma}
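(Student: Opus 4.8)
The statement to prove is \cref{lem:kn_mode_ind}: if $\Pi_{\onen}$ is a dual-mode $1$-out-of-$n$ oblivious transfer, then the derived $k$-parallel construction $\Pi_{\kn}$ (from \cref{fig:k-n-OT}) satisfies computational mode indistinguishability.

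The plan is to reduce directly to the computational mode indistinguishability of the underlying scheme $\Pi_{\onen}$, using a trivial observation: the $\crsgen_{\kn}$ algorithm in \cref{fig:k-n-OT} literally just outputs $\crs\sample \crsgen_{\onen}(1^\secpar,\mode)$ without any modification. So the distribution of $\crs$ produced by $\crsgen_{\kn}(1^\secpar,\binding)$ is identical to that produced by $\crsgen_{\onen}(1^\secpar,\binding)$, and likewise for $\hiding$. Hence any non-uniform QPT distinguisher $\dist$ against the mode indistinguishability of $\Pi_{\kn}$ is already, verbatim, a distinguisher against the mode indistinguishability of $\Pi_{\onen}$ with exactly the same advantage.

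Concretely, I would write: given a non-uniform QPT $\dist$, observe
\begin{align*}
\left|\Pr[\dist(\crs)=1:\crs\sample \crsgen_{\kn}(1^\secpar,\binding)]
- \Pr[\dist(\crs)=1:\crs\sample \crsgen_{\kn}(1^\secpar,\hiding)]\right|\\
= \left|\Pr[\dist(\crs)=1:\crs\sample \crsgen_{\onen}(1^\secpar,\binding)]
- \Pr[\dist(\crs)=1:\crs\sample \crsgen_{\onen}(1^\secpar,\hiding)]\right|
\leq \negl(\secpar),
\end{align*}
where the equality holds because $\crsgen_{\kn}(1^\secpar,\mode)$ simply forwards the output of $\crsgen_{\onen}(1^\secpar,\mode)$, and the inequality is the computational mode indistinguishability of $\Pi_{\onen}$. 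This completes the proof.

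There is essentially no obstacle here; this is the easiest of the lemmas in this block. The only thing to be slightly careful about is to confirm from \cref{fig:k-n-OT} that $\crsgen_{\kn}$ does not augment the CRS with any extra components (it does not — unlike the $1$-out-of-$n$ construction which internally calls $\setup$ and discards the trapdoor, here there is literally nothing added), so no hybrid argument over the $k$ repetitions is needed for this particular property. (By contrast, correctness, receiver's security, and sender's security of $\Pi_{\kn}$ do require hybrid arguments over the $k$ parallel sessions, but mode indistinguishability does not, since the mode only affects CRS generation, which is shared across all sessions and passed through unchanged.)
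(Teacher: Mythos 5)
Your proof is correct and takes exactly the same route as the paper, which simply notes that the claim reduces to the mode indistinguishability of $\Pi_{\onen}$ in a straightforward manner; your observation that $\crsgen_{\kn}$ just forwards the output of $\crsgen_{\onen}$, so the distinguisher carries over verbatim with identical advantage, is the full content of that reduction.
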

\begin{proof}
This can be reduced to the computational mode indistinguishability of $\Pi_{\onen}$ in a straightforward manner.
\end{proof}

\begin{lemma}\label{lem:kn_receiver_security}
$\Pi_{\kn}$ satisfies  statistical receiver's security in the binding mode.
\end{lemma}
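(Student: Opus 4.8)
\textbf{Proof proposal for \cref{lem:kn_receiver_security}.}

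The plan is to reduce statistical receiver's security of $\Pi_{\kn}$ in the binding mode to statistical receiver's security of $\Pi_{\onen}$ in the binding mode via a standard hybrid argument over the $k$ parallel sessions. Recall that the receiver's first message of $\Pi_{\kn}$ is simply the $k$-tuple $\ot_1 = \{\ot_{1,i}\}_{i\in[k]}$ where each $\ot_{1,i}$ is generated by $\receiver_{\onen}(\crs,j_i)$ independently, using the same common reference string $\crs$. Since $\Pi_{\onen}$ comes with a PPT simulator $\siml_\rec^{\onen}$ by assumption (statistical receiver's security of $\Pi_{\onen}$), the natural simulator for $\Pi_{\kn}$ is the algorithm $\siml_\rec^{\kn}(\crs)$ that runs $\siml_\rec^{\onen}(\crs)$ independently $k$ times and outputs the $k$-tuple of first messages it obtains.

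First I would fix an arbitrary $J = (j_1,\dots,j_k) \in [n]^k$ and define, for $\ell = 0,1,\dots,k$, a hybrid distribution $\mathsf{H}_\ell$ which generates $\crs \sample \crsgen_{\onen}(1^\secpar,\binding)$, then produces $\ot_{1,i}\sample\siml_\rec^{\onen}(\crs)$ for $i \le \ell$ and $\ot_{1,i}\sample\receiver_{\onen}(\crs,j_i)$ (taking the first component of the output) for $i > \ell$, and outputs $(\crs, \{\ot_{1,i}\}_{i\in[k]})$. Then $\mathsf{H}_0$ is exactly the real distribution $(\crs,\ot_1)$ with $(\ot_1,\otst)\sample\receiver_{\kn}(\crs,J)$, and $\mathsf{H}_k$ is exactly the simulated distribution $(\crs,\ot_1)$ with $\ot_1\sample\siml_\rec^{\kn}(\crs)$. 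The key step is to bound $|\Pr[\dist(\mathsf{H}_{\ell-1})=1] - \Pr[\dist(\mathsf{H}_\ell)=1]|$ for each $\ell \in [k]$ by $\negl(\secpar)$: given a distinguisher that separates these two hybrids, build a distinguisher $\dist'$ against statistical receiver's security of $\Pi_{\onen}$ for index $j_\ell$ that, on input $(\crs,\ot_1^*)$, samples the remaining sessions itself (using $\siml_\rec^{\onen}(\crs)$ for $i < \ell$ and $\receiver_{\onen}(\crs,j_i)$ for $i > \ell$, all of which only need $\crs$), plugs $\ot_1^*$ into session $\ell$, and invokes $\dist$. This is well-defined since all sessions share the single $\crs$ and are otherwise independent, and $\dist'$ is unbounded-time just like $\dist$, so the reduction is sound against statistical adversaries. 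Summing over $\ell$ via the triangle inequality gives $|\Pr[\dist(\mathsf{H}_0)=1] - \Pr[\dist(\mathsf{H}_k)=1]| \le k\cdot\negl(\secpar) = \negl(\secpar)$ since $k = \poly(\secpar)$, which is exactly the claimed statistical receiver's security of $\Pi_{\kn}$.

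I do not expect any real obstacle here; the only mild point of care is to make sure the intermediate hybrid distinguisher $\dist'$ can simulate the non-challenge sessions using only $\crs$ (and not the trapdoor $\td_\mode$), which is the case because both $\receiver_{\onen}$ and $\siml_\rec^{\onen}$ take only $\crs$ as input. One should also note that the receiver's state $\otst$ is never given to the distinguisher in the security experiment, so there is no issue with the simulator not producing a matching state. This parallels the gap-amplification-style hybrid arguments already used in the paper (e.g.\ the proof of \cref{lem:CV-NIZK_amplification}), and the analogous statements for sender's security and the remaining properties of $\Pi_{\kn}$ go through the same way.
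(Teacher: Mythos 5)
Your proposal is correct and matches the paper's proof: the paper defines $\siml_{\rec,\kn}$ exactly as the $k$-fold independent run of $\siml_{\rec,\onen}$ on the shared $\crs$ and then invokes "a straightforward hybrid argument," which is precisely the hybrid you spell out. Your additional observations (that each hybrid session needs only $\crs$, and that $\otst$ is never revealed) are the right points of care but do not change the argument.
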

\begin{proof}
Let $\siml_{\rec,\onen}$  be the corresponding algorithm for statistical receiver's security in the binding mode of $\Pi_{\onen}$.
Then  We construct $\siml_{\rec,\kn}$ for $\Pi_{\kn}$ as follows.
\begin{description}
\item[$\siml_{\rec,\kn}(\crs)$:]
It parses $(\crs,\pk)\la\crs$, 
computes $\ot_{1,i}\sample \siml_{\rec,\onen}(\crs)$ for all $i\in[k]$,
and outputs $\ot_1:=\{\ot_{1,i}\}_{i\in[k]}$.
\end{description}
Statistical receiver's security in the binding mode of $\Pi_{\kn}$ follows from that of $\Pi_{\onen}$ by a straightforward hybrid argument. 
\end{proof}

\begin{lemma}\label{lem:kn_sender_security}
Let 
$\Pi_{\kn}$ satisfies  the statistical sender's security  in the hiding mode.
\end{lemma}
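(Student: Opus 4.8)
The plan is to obtain the three algorithms witnessing statistical sender's security in the hiding mode for $\Pi_{\kn}$ by running, session by session, the corresponding algorithms of $\Pi_{\onen}$, and then to reduce the two required indistinguishability statements to those of $\Pi_{\onen}$ --- the first directly, the second through a hybrid argument over the $k$ sessions. Let $\siml_{\CRS,\onen}$, $\Open_{\rec,\onen}$, $\siml_{\sen,\onen}$ be the algorithms guaranteed by statistical sender's security in the hiding mode of $\Pi_{\onen}$ (\cref{def:OT}, \cref{lem:one_n_OT}). I would define $\siml_{\CRS,\kn}(1^\secpar)$ to simply run $(\crs,\td)\sample \siml_{\CRS,\onen}(1^\secpar)$ and output $(\crs,\td)$; $\Open_{\rec,\kn}(\td,\ot_1)$ to parse $\{\ot_{1,i}\}_{i\in[k]}\la \ot_1$, set $j_i:=\Open_{\rec,\onen}(\td,\ot_{1,i})$ for every $i\in[k]$, and output $J:=(j_1,\dots,j_k)$; and $\siml_{\sen,\kn}(\crs,\ot_1,J,\vecmu_J)$ to parse $\{\ot_{1,i}\}_{i\in[k]}\la\ot_1$, $(j_1,\dots,j_k)\la J$, $(\mu_{j_1},\dots,\mu_{j_k})\la\vecmu_J$, compute $\ot_{2,i}\sample \siml_{\sen,\onen}(\crs,\ot_{1,i},j_i,\mu_{j_i})$ for each $i\in[k]$, and output $\{\ot_{2,i}\}_{i\in[k]}$. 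Note $J\in[n]^k$ because each $\Open_{\rec,\onen}$ outputs an element of $[n]$, as required.

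The first item is immediate: $\crsgen_{\kn}(1^\secpar,\hiding)$ and $\siml_{\CRS,\kn}(1^\secpar)$ generate $\crs$ in exactly the same way as $\crsgen_{\onen}(1^\secpar,\hiding)$ and $\siml_{\CRS,\onen}(1^\secpar)$ do, so statistical indistinguishability of the CRS transfers verbatim from $\Pi_{\onen}$.

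For the second item, fix an unbounded-time adversary $\A=(\A_0,\A_1)$ and a message tuple $\vecmu=(\mu_1,\dots,\mu_n)$. I would introduce hybrids $\game_0,\dots,\game_k$: in $\game_\ell$ the challenger samples $(\crs,\td)\sample \siml_{\CRS,\kn}(1^\secpar)$, gets $(\ot_1=\{\ot_{1,i}\}_{i\in[k]},\st_\A)\sample \A_0(\crs,\td)$, then for $i\le\ell$ sets $j_i:=\Open_{\rec,\onen}(\td,\ot_{1,i})$ and $\ot_{2,i}\sample \siml_{\sen,\onen}(\crs,\ot_{1,i},j_i,\mu_{j_i})$, while for $i>\ell$ it sets $\ot_{2,i}\sample\sender_{\onen}(\crs,\ot_{1,i},\vecmu)$, and finally runs $\A_1(\st_\A,\{\ot_{2,i}\}_{i\in[k]})$. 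By construction $\game_0$ is the real experiment and $\game_k$ is the simulated experiment from \cref{def:OT}. Letting $\event_\ell$ be the event that $\A_1$ outputs $1$ in $\game_\ell$, it suffices to show $|\Pr[\event_{\ell-1}]-\Pr[\event_\ell]|\le\negl(\secpar)$ for each $\ell\in[k]$; since $k=\poly(\secpar)$, summing then gives the claim. Each such step follows from the second item for $\Pi_{\onen}$ via the reduction $\B=(\B_0,\B_1)$ that (i) has $\B_0(\crs,\td)$ run $\A_0(\crs,\td)$, output the single first message $\ot_{1,\ell}$, and keep $(\crs,\td,\st_\A,\{\ot_{1,i}\}_{i\ne\ell})$ in its state, and (ii) has $\B_1$, on challenge $\ot_{2,\ell}$, fill in $\ot_{2,i}$ for $i<\ell$ using $\siml_{\sen,\onen}$ (computing the indices with $\Open_{\rec,\onen}$ from the stored $\td$) and for $i>\ell$ using $\sender_{\onen}(\crs,\cdot,\vecmu)$, assemble $\ot_2=\{\ot_{2,i}\}_{i\in[k]}$, run $\A_1(\st_\A,\ot_2)$, and echo its output. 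When $\ot_{2,\ell}\sample\sender_{\onen}(\crs,\ot_{1,\ell},\vecmu)$ the view of $\A$ is exactly that in $\game_{\ell-1}$, and when $\ot_{2,\ell}$ is produced via $j_\ell:=\Open_{\rec,\onen}(\td,\ot_{1,\ell})$, $\ot_{2,\ell}\sample\siml_{\sen,\onen}(\crs,\ot_{1,\ell},j_\ell,\mu_{j_\ell})$ it is exactly that in $\game_\ell$; so the advantage of $\B$ equals $|\Pr[\event_{\ell-1}]-\Pr[\event_\ell]|$, which is negligible by sender's security of $\Pi_{\onen}$.

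I do not expect a genuine obstacle here: because sender's security is quantified over arbitrary (unbounded) malicious receivers, relocating the ``live'' session while hard-wiring simulated or real transcripts into all other sessions is a standard and clean reduction --- in contrast with full-simulation receiver's security, which parallel repetition does not preserve owing to inconsistent-message attacks by a malicious sender. The only points demanding care are bookkeeping: threading $\td$ and $\crs$ through the reduction's state so $\B_1$ can invoke $\Open_{\rec,\onen}$ and $\siml_{\sen,\onen}$; verifying that $\Open_{\rec,\kn}$ defined componentwise produces exactly the $J$ and $\vecmu_J$ that $\siml_{\sen,\kn}$ consumes, so that $\game_k$ literally coincides with the simulated experiment; and recalling that $\vecmu$ is fixed throughout the game and hence may be treated as a constant hard-coded into $\B$.
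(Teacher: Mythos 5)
Your proof is correct and takes essentially the same route as the paper: the three algorithms are built componentwise from those of $\Pi_{\onen}$, the first item transfers directly, and the second item is handled by a hybrid over the $k$ sessions (which the paper dismisses as ``a standard hybrid argument'' but you spell out, with the correct bookkeeping of $\td$ in the reduction's state). The only cosmetic difference is that the paper defines $\siml_{\sen,\kn}$ by sampling random messages for the unchosen indices and running the honest $k$-out-of-$n$ sender, whereas you invoke $\siml_{\sen,\onen}$ session by session --- but the latter is exactly the game the paper's own proof analyzes, so the two are interchangeable up to a negligible statistical distance.
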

\begin{proof}
Let $\siml_{\CRS,\onen}$, $\Open_{\rec,\onen}$, and $\siml_{\sen,\onen}$ be the corresponding algorithms for statistical sender's security in the hiding mode of $\Pi_{\onen}$.
Then  We construct $\siml_{\CRS,\kn}$, $\Open_{\rec,\kn}$, and $\siml_{\sen,\kn}$  
for $\Pi_{\kn}$ as follows.
\begin{description}
\item[$\siml_{\CRS,\kn}(1^\secpar)$:] 
This is exactly the same as $\siml_{\CRS,\onen}(1^\secpar)$.
\item[$\Open_\rec(\td,\ot_1)$:]
It parses 
$\{\ot_{1,i}\}_{i\in[k]}\la \ot_1$, 
computes $j_i:= \Open(\td,\ot_{1,i})$ for all $i\in[k]$, and outputs $J=(j_1,...,j_k)$. 
\item[$\siml_\sen(\crs,\ot_1,J,\vecmu_J)$:]
It parses $(j_1,...,j_k)\la J$ and $(\mu_{j_1},...,\mu_{j_k})\la \vecmu_J$, 
generates $\mu_i\sample \mathcal{M}$ for $i\in [n]\setminus \{j_1,...,j_k\}$, and outputs $\ot_2\sample \sender_{\kn}(\crs,\ot_1,(\mu_1,...,\mu_n))$.
\end{description}
The first item of statistical sender's security in the hiding mode of $\Pi_{\kn}$ immediately follows from that of $\Pi_{\onen}$.
In the following, we prove the second item. 
For any 
unbounded-time adversary $\A=(\A_0,\A_1)$ and 
fixed $\mu=(\mu_1,...,\mu_n)$, we consider the following sequence of games between $\A$ and the challenger.
We denote by $\event_i$ the event that $\A_1$ returns $1$ in $\game_i$.
\begin{description}
\item[$\game_1$:]
This game works as follows.
\begin{enumerate}
    \item  The challenger generates $(\crs,\td)\sample \siml_{\CRS,\kn}(1^\secpar)$.
    \item $\A_0$ takes $(\crs,\td)$ as input and
    outputs $\ot_1=\{\ot_{1,i}\}_{i\in[k]}$ and $\st_\A$.
    \item The challenger computes $J:=\Open_{\rec,\kn}(\td,\ot_1)$. That is, it computes $j_i:=\Open_{\rec,\onen}(\td,\ot_{1,i})$ for all $i\in[k]$ and 
lets $J:=(j_1,...,j_k)$. 
    \item 
    The challenger 
    generates 
    $\ot_{2,i}\sample \siml_{\sen,\onen}(\crs,\ot_{1,i},j_i,\mu_{j_i})$ for $i\in [k]$ and
    sets $\ot_2:=\{\ot_{2,i}\}_{i\in[k]}$.
\item $\A_1$ takes $\st_\A$ and $\ot_2$ as input and outputs a bit $\beta$. 
\end{enumerate}
\item[$\game_2$:]
This game is identical to the previous game except that $\ot_{2,i}$ is generated as 
$\ot_{2,i}\sample \sender(\crs,\ot_{1,i},\vecmu)$ for $i\in [k]$.

By the second item of statistical sender's security in the hiding mode of $\Pi_{\onen}$, we have $|\Pr[\event_2]-\Pr[\event_1]|\leq \negl(\secpar)$ by a standard hybrid argument. 
This is exactly the second item of statistical sender's security in the hiding mode. 




\end{description}
\end{proof}

By combining
\cref{lem:kn_correcness,lem:kn_mode_ind,lem:kn_receiver_security,lem:kn_sender_security}, we obtain \cref{lem:k_n_OT}.
\section{QRO + Shared Bell pair model}\label{app:bell}

\begin{definition}[CV-NIZK in the QRO + Shared Bell pair Model]\label{def:bell_nizk}
A CV-NIZK for a $\QMA$ promise problem $L=(L_\yes,L_\no)$ in the QRO + shared Bell pair model 
w.r.t. a random oracle distribution $\ROdist$ 
consists of algorithms $\Pi=(\setup,\allowbreak\prove,\verify)$ with the following syntax:
\begin{description}
\item[$\setup(1^\secpar)$:]
This algorithm generates $\poly(\secpar)$ Bell pairs (a state $\frac{1}{\sqrt{2}}\left(\ket{0}\ket{0}+\ket{1}\ket{1}\right)$) and sends the first and second halves to the prover and verifier as proving key $\pkey$ and verification key $\vkey$, respectively.  
\item[$\prove^{H}(\pkey,\statement,\witness^{\otimes k})$:] This is a QPT algorithm that 
is given quantum oracle access to the random oracle $H$. It
takes  
the proving key $\pkey$, a statement $\statement$, and $k=\poly(\secpar)$ copies $\witness^{\otimes k}$ of a witness $\witness\in R_L(\statement)$ as input, and outputs a classical proof $\pi$.
\item[$\verify^{H}(\vkey,\statement,\pi)$:]
This is a QPT algorithm that 
is given quantum oracle access to the random oracle $H$.
It takes 
the verification key $\vkey$, a statement $\statement$, and a proof $\pi$ as input, and outputs $\top$ indicating acceptance or $\bot$ indicating rejection. 
\end{description}
We require $\Pi$ to satisfy the following properties. 

\medskip
\noindent \underline{
\textbf{Completeness.}
}
For all
$\statement\in L_\yes\cap \bit^\secpar$, and $\witness\in R_L(\statement)$, we have 
\begin{align*}
    \Pr\left[
    \verify^H(\vkey,\statement,\pi)=\top 
    :
    \begin{array}{c}
          H\sample \ROdist \\
         (\pkey,\vkey) \sample \setup(1^\secpar)\\
         \pi \sample \prove^H(\pkey,\statement,\witness^{\otimes k})
    \end{array}
    \right]
    \geq 1-\negl(\secpar).
\end{align*}

\medskip
\noindent \underline{
\textbf{Adaptive Statistical Soundness.}
}
For all adversaries $\A$ that make at most $\poly(\secpar)$ quantum random oracle queries, we have 
\begin{align*}
    \Pr\left[
    \statement\in L_\no \land \verify^H(\vkey,\statement,\pi)=\top
    :
    \begin{array}{c}
          H\sample \ROdist \\
         (\pkey,\vkey) \sample \setup(1^\secpar)\\
       (\statement,\pi) \sample \A^H(\pkey)
    \end{array}
    \right]
    \leq \negl(\secpar).
\end{align*}

\medskip
\noindent \underline{
\textbf{Adaptive Multi-Theorem Zero-Knowledge.}
}
For defining the zero-knowledge property in the QROM, we define the syntax of a simulator in the QROM following \cite{EC:Unruh15}. A simulator is given quantum access to the random oracle $H$ and classical access to reprogramming oracle $\reprogram$.
When the simulator queries $(x,y)$ to $\reprogram$, the random oracle $H$ is reprogrammed so that $H(x):=y$ while keeping the values on other inputs unchanged. 
Then the adaptive multi-theorem zero-knowledge property is defined as follows:

There exists  a QPT simulator $\siml$ with the above syntax such that for any QPT distinguisher $\dist$, we have 
\begin{align*}
    &\left|\Pr\left[\dist^{H,\ora_P^H(\cdot,\cdot)}(1^\secpar)=1:
    \begin{array}{c}
          H \sample \ROdist
    \end{array}
    \right]\right.\\     &-\left.\Pr\left[\dist^{H,
    \ora_S^{H,\reprogram}(\cdot,\cdot)}(1^\secpar)=1:
    \begin{array}{c}
          H \sample \ROdist
    \end{array}
    \right]\right|\leq \negl(\secpar)
\end{align*}
where  
$\dist$'s queries to the second oracle should be of the form
$(\statement,\witness^{\otimes k})$ 
where $\witness\in R_L(\statement)$ and $\witness^{\otimes k}$ is unentangled with $\dist$'s internal registers, 
$\ora_P^H(\statement,\witness^{\otimes k})$ generates $(\pkey,\vkey)\sample \setup(1^\secpar)$ and 
returns $(\vkey,\prove^H(\pkey,\statement,\witness^{\otimes k}))$, and
$\ora_S^{H,\reprogram}(\statement,\witness^{\otimes k})$ 
generates 
 $(\pkey,\vkey)\sample \setup(1^\secpar)$ and returns 
 $\siml^{H,\reprogram}(\pkey,\statement)$.
\end{definition}
\begin{remark}
The difference from the zero-knowledge property in the \QROVP model is that the malicious verifier is not allowed to maliciously generate $\pkey$. This is because the setup is supposed to be run by a trusted third party in this model.  
\end{remark}

\else
\section{Differences from Previous Versions}\label{sec:differences}
Previous versions of this paper was submitted to CRYPTO 2020 and Asiacrypt 2020 before. Compared to those versions, the current version includes an additional result on Fiat-Shamir. The other results are basically unchanged. (The previous versions also included a construction of not necessarily zero-knowledge CV-NIP in the QSP model, but we omit it from the main claims in this version since that is subsumed by CV-NIZK in the QSP model. The construction is given in \cref{sec:CV-NIP} in the current version.)

We believe that the additional result on Fiat-Shamir makes this submission much stronger for the following reasons:
\begin{enumerate}
    \item This resolves the open problem explicitly raised by Broadbent and Grilo \cite{FOCS:BroGri20}, which is to construct NIZK for $\QMA$ in the shared Bell pair model via Fiat-Shamir transform. 
    \item Fiat-Shamir is a very important and well-studied paradigm in cryptography. To the best of our knowledge, we are the first to demonstrate its applicability to quantum protocols (rather than post-quantum protocols). 
    \item The result demonstrates further usefulness of classicalizing quantum protocols via quantum teleportation, which is the basic idea underlying our results.
\end{enumerate}

We would appreciate it if the reviewers who were negative in previous conferences re-evaluate the submission taking the above points into account. 
\section{Omitted Related Works}\label{sec:omitted_related_works}
\paragraph{More related works on quantum NIZKs.}
Kobayashi \cite{Kobayashi03} studied (statistically sound and zero-knowledge) NIZKs in a model where the prover and verifier share Bell pairs, and gave a complete problem in this setting.
It is unlikely that the complete problem contains (even a subclass of) $\NP$ \cite{MW18} and thus even a NIZK for all $\NP$ languages is unlikely to exist in this model.
Note that if we consider the prover and verifier sharing Bell pairs in advance like this model,
the verifier's preprocessing message 
of our protocols (and the protocol of~\cite{C:ColVidZha20}) 
becomes
classical.
Chailloux et al. \cite{TCC:CCKV08} showed that there exists a (statistically sound and zero-knowledge) NIZK for all languages in $\mathbf{QSZK}$ in the help model where a trusted party generates a pure state \emph{depending on the statement to be proven} and gives copies of the state to both prover and verifier.

\paragraph{Interactive zero-knowledge for $\QMA$.}
There are several works of interactive zero-knowledge proofs/arguments for $\QMA$.
The advantage of these constructions compared to non-interactive ones is that they do not require any trusted setup. 
Broadbent, Ji, Song, and Watrous \cite{BJSW20} gave the  first construction of a zero-knowledge proof for $\QMA$.
Broadbent and Grilo \cite{FOCS:BroGri20} gave an alternative simpler construction.
Bitansky and Shmueli \cite{STOC:BitShm20} gave the first constant round zero-knowledge argument for $\QMA$ with negligible soundness error. 
Brakerski and Yuen \cite{BraYue} gave a construction of $3$-round \emph{delayed-input} zero-knowledge proof for $\QMA$ where  the prover needs to know the statement and witness only for generating its last message. 
By considering the first two rounds as preprocessing, we can view this construction as a NIZK in a certain kind of preprocessing model. 
However, their protocol has a constant soundness error, and it seems difficult to prove the zero-knowledge property for the parallel repetition version of it.
\mor{Chardouvelis and Malavolta [iacr:2021/918] recently constructed 4-round statistical zero-knowledge arguments for $\QMA$ and
2-round zero-knowledge for $\QMA$ in the timing model.}
\section{Omitted Contents in \cref{sec:preliminary}}
\subsection{Notations}\label{sec:notations}
\paragraph{Notations.}
We use $\secpar$ to denote the security parameter throughout the paper. 
For a positive integer $N$, $[N]$ means the set $\{1,2,...,N\}$.
For a probabilistic classical  or quantum algorithm $\A$, we denote by $y\sample \A(x)$ to mean $\A$ runs on input $x$ and outputs $y$. 
For a finite set $S$ of classical strings, $x\sample S$ means that $x$ is uniformly randomly chosen from $S$.
For a classical string $x$, $x_i$ denotes the $i$-th bit of $x$. 
For classical strings $x$ and $y$, $x\concat y$ denotes the concatenation of $x$ and $y$. 
We write $\poly$ to mean an unspecified polynomial and $\negl$ to mean an unspecified negligible function. 
We use PPT to stand for (classical) probabilistic polynomial time and QPT to stand for quantum polynomial time.
When we say that an algorithm is non-uniform QPT, it is expressed as a family of polynomial size quantum circuits with quantum advice. 
\subsection{Proof of \cref{lem:prob_and_energy}}\label{sec:proof_prob_and_energy}
\begin{proof}[Proof of \cref{lem:prob_and_energy}]
Let us define 
$V:=
(\prod_{j\in S_X}V_j(X))
(\prod_{j\in S_Y}V_j(Y))
(\prod_{j\in S_Z}V_j(Z))$, 
and $|m\rangle:=\bigotimes_{j=1}^N|m_j\rangle$.
Then,
\begin{eqnarray*}
\Pr\left[(-1)^{\bigoplus_{j\in S_X \cup S_Y\cup S_Z}m_j}=-s\right]
&=&\sum_{m\in\{0,1\}^N}\langle m|V^\dagger\rho V|m\rangle
\frac{1-s(-1)^{\bigoplus_{j\in S_X\cup S_Y\cup S_Z}m_j}}{2}\\
&=&\sum_{m\in\{0,1\}^N}\langle m|V^\dagger\rho V
\frac{I-s\prod_{j\in S_X\cup S_Y\cup S_Z}Z_j}{2}
|m\rangle\\
&=&\Tr\Big[V^\dagger\rho V
\frac{I-s\prod_{j\in S_X\cup S_Y\cup S_Z}Z_j}{2}\Big]\\
&=&\Tr\Big[\rho V
\frac{I-s\prod_{j\in S_X\cup S_Y\cup S_Z}Z_j}{2}V^\dagger\Big]\\
&=&\Tr\Big[\rho (I-\ham)\Big]\\
&=&1-\Tr(\rho \ham).
\end{eqnarray*}
\end{proof}
\subsection{Omitted Basic Lemmas}\label{sec:basic_lemmas}
Here, we review basic lemmas that are used in our security proofs.

\begin{lemma}[Pauli Mixing]\label{lem:Pauli_mixing}
Let $\rho$ be an arbitrary quantum state over registers $\regA$ and $\regB$, and let $N$ be the number of qubits in $\regA$. 
Then we have 
\[
\frac{1}{2^{2N}}
\sum_{x\in \bit^{N},z_\in \bit^{N}}
\left(X^xZ^z \otimes I_\regB\right)\rho \left(Z^zX^x \otimes I_\regB\right)=\frac{1}{2^{N}} I_\regA \otimes \Tr_{\regA}(\rho).
\]
\end{lemma}

This is well-known, and one can find a proof in e.g., \cite{FOCS:Mahadev18b}.

\begin{lemma}[Quantum Teleportation]\label{lem:teleportation}
Suppose that we have $N$ Bell pairs between registers $\regA$ and $\regB$, i.e., $\frac{1}{2^{N/2}}\sum_{s\in \bit^{N}}\ket{s}_\regA\otimes \ket{s}_\regB$, and let $\rho$ be an arbitrary $N$-qubit quantum state in register $\regC$. 
Suppose that we measure $j$-th qubits of $\regC$ and $\regA$ in the Bell basis for all $j\in [N]$, and let $((x_1,z_1),...,(x_N,z_N))$ be the measurement outcomes. 
Let $x:=x_1\concat x_2 \concat...\concat x_N$ and $z:=z_1\concat z_2 \concat...\concat z_N$. 
Then the $(x,z)$ is uniformly distributed over $\bit^N \times \bit^N$.
Moreover, conditioned on the measurement outcome $(x,z)$, the resulting state in $\regB$ is
$X^{x}Z^{z}\rho Z^{z}X^{x}$.
\end{lemma}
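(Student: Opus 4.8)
The plan is to reduce everything to the single-qubit case and then tensor, treating the mixed state $\rho$ by bilinearity in its computational-basis matrix entries. First I would record the elementary ``transpose trick'' identity for one qubit: writing $\ket{\Phi^+}:=\frac{1}{\sqrt2}(\ket{0}\otimes\ket{0}+\ket{1}\otimes\ket{1})$, one has $\langle\Phi^+|_{\regC\regA}\bigl(\ket{\chi}_\regC\otimes\ket{\Phi^+}_{\regA\regB}\bigr)=\tfrac12\,\ket{\chi}_\regB$ for every single-qubit vector $\ket{\chi}$, which is immediate upon expanding both Bell states in the computational basis and contracting the $\regC\regA$ indices. Since $\ket{\phi_{x,z}}_{\regC\regA}=(X^xZ^z\otimes I)_{\regC\regA}\ket{\Phi^+}_{\regC\regA}$ with $X^xZ^z$ acting on the $\regC$-qubit, and $(X^xZ^z)^\dagger=Z^zX^x$, applying this with $\ket{\chi}=Z^zX^x\ket{\psi}$ gives $\langle\phi_{x,z}|_{\regC\regA}\bigl(\ket{\psi}_\regC\otimes\ket{\Phi^+}_{\regA\regB}\bigr)=\tfrac12\,(Z^zX^x\ket{\psi})_\regB$.

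Next I would tensor over the $N$ qubit-pairs. The resource state $\frac{1}{2^{N/2}}\sum_{s\in\bit^N}\ket{s}_\regA\otimes\ket{s}_\regB$ is $\ket{\Phi^+}^{\otimes N}$ after regrouping registers, the qubit-wise Bell measurement on $\regC\regA$ is the product POVM $\{\ket{\phi_{x,z}}\langle\phi_{x,z}|_{\regC\regA}\}_{(x,z)\in\bit^N\times\bit^N}$ with $\ket{\phi_{x,z}}=\bigotimes_j\ket{\phi_{x_j,z_j}}$, so the single-qubit identity gives $\langle\phi_{x,z}|_{\regC\regA}\bigl(\ket{a}_\regC\otimes\ket{\Phi^+}^{\otimes N}_{\regA\regB}\bigr)=\tfrac{1}{2^N}(Z^zX^x\ket{a})_\regB$ for every $\ket{a}$ with $a\in\bit^N$, where $X^x,Z^z$ now denote the $N$-qubit Paulis. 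Writing $\rho=\sum_{a,b}\rho_{ab}\ket{a}\langle b|$ and using bilinearity, the (unnormalized) residual state of $\regB$ conditioned on outcome $(x,z)$ is $\langle\phi_{x,z}|_{\regC\regA}\bigl(\rho_\regC\otimes\ket{\Phi^+}^{\otimes N}\langle\Phi^+|^{\otimes N}_{\regA\regB}\bigr)|\phi_{x,z}\rangle_{\regC\regA}=\tfrac{1}{4^N}\,Z^zX^x\,\rho\,X^xZ^z$, whose trace equals $\tfrac{1}{4^N}\Tr(\rho)=4^{-N}$. Hence every outcome $(x,z)$ occurs with probability exactly $4^{-N}$, i.e.\ $(x,z)$ is uniform on $\bit^N\times\bit^N$, and the normalized post-measurement state in $\regB$ is $Z^zX^x\rho X^xZ^z$, which equals $X^xZ^z\rho Z^zX^x$ since $Z^zX^x$ and $X^xZ^z$ differ only by the global scalar $(-1)^{x\cdot z}$ (from $Z_jX_j=-X_jZ_j$).

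The only real bookkeeping point — not a genuine obstacle — is keeping the byproduct Pauli consistent ($X^xZ^z$ versus $Z^zX^x$): these agree as density matrices/channels because they differ by a phase, so the statement as written is recovered. Everything else is a routine expansion, and indeed the lemma is classical (see, e.g., \cite{NC00}); there is no hard step.
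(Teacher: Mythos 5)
Your proposal is correct and complete. Note that the paper does not actually prove this lemma at all --- it simply remarks that the statement is well known and points to \cite{NC00} --- so there is no in-paper argument to compare against; what you give is the standard textbook derivation, and it is the right one. The two points worth checking both hold: the single-qubit contraction identity $\langle\phi_{x,z}|_{\regC\regA}\bigl(\ket{\psi}_\regC\otimes\ket{\Phi^+}_{\regA\regB}\bigr)=\tfrac12 (Z^zX^x\ket{\psi})_\regB$ is consistent with the paper's convention $\ket{\phi_{x,z}}=(X^xZ^z\otimes I)\ket{\Phi^+}$ (since $(X^xZ^z)^\dagger=Z^zX^x$), and the discrepancy between the byproduct $Z^zX^x\rho X^xZ^z$ you obtain and the $X^xZ^z\rho Z^zX^x$ in the statement is indeed only the global phase $(-1)^{x\cdot z}$ on each side, so the two operators are literally equal. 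The trace computation giving probability $4^{-N}$ for every outcome, hence uniformity of $(x,z)$, is also correct.
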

This is also well-known, and one can find a proof in e.g., \cite{NC00}.

\subsection{$\QMA$}\label{sec:appendix_QMA}
\begin{definition}[$\QMA$]\label{def:QMA}
We say that a promise problem $L=(L_\yes,L_\no)$ is in $\QMA$ if there is 
a polynomial $\ell$ and a QPT algorithm $V$ such that the following is satisfied: 
\begin{itemize}
    \item For any $\statement \in L_\yes$, there exists a quantum state $\witness$ of $\ell(|\statement|)$-qubit (called a witness) such that we have $\Pr[V(\statement,\witness)=1]\geq 2/3$.
    \item For any $\statement\in L_\no$ and any quantum state $\witness$ of $\ell(|\statement|)$-qubit, we have $\Pr[V(\statement,\witness)=1]\leq 1/3$.
\end{itemize}
\end{definition}

\subsection{Proof of \cref{lem:CV-NIZK_amplification}}\label{sec:proof_amplification}
\begin{proof}[Proof of \cref{lem:CV-NIZK_amplification}]
Let $\Pi=(\setup,\prove,\verify)$ be a CV-NIZK for $L$ in the QSP model that satisfies $c$-completeness, $s$-soundness, and the zero-knowledge property for  some $0<s<c<1$ such that $c-s>1/\poly(\secpar)$.
Let $k$ be the number of copies of a witness $\prove$ takes as input.  
For any polynomial $N=\poly(\secpar)$, $\Pi^{N}=(\setup^{N},\prove^{N},\verify^{N})$ be the $N$-parallel version of $\Pi$. That is, $\setup^{N}$ and $\prove^{N}$ run $\setup$ and $\prove$ $N$ times parallelly and outputs tuples consisting of outputs of each execution, respectively where $\prove^{N}$ takes $Nk$ copies of the witness as input.
$\verify^{N}$ takes $N$-tuple of the verification key and proof, runs $\verify$ to verify each of them separately, and outputs $\top$ if the number of executions of $\verify$ that outputs $\top$ is larger than $\frac{N(\alpha+\beta)}{2}$.
By Hoeffding's inequality, it is easy to see that we can take $N=O\left(\frac{\log^{2} \secpar}{(\alpha-\beta)^2}\right)$ so that $\Pi^{N}$ satisfies $(1-\negl(\secpar))$-completeness and $\negl(\secpar)$-soundness. 

What is left is to prove that $\Pi^{N}$ satisfies the zero-knowledge property.
This can be reduced to the zero-knowledge property of $\Pi$ by a standard hybrid argument.
More precisely, for each $i\in \{0,...,N\}$, let $\ora_i$ be the oracle that works as follows where $\pkey'$ and $\vkey'$ denote the proving and verification keys of $\Pi^{N}$, respectively. 
\begin{description}
\item[$\ora_i(\pkey'=(\pkey^1,...,\pkey^N),\vkey'=(\vkey^1,...,\vkey^N),\statement,\witness^{\otimes Nk})$:]
It works as follows:
\begin{itemize}
    \item For $1\leq j\leq i$, it computes  $\pi_j\sample \siml(\vkey^j,\statement)$.
    \item For $i<j\leq N$,  it computes $\pi_j\sample \prove(\pkey^j,\statement,\witness^{\otimes k})$ where it uses the $(k(j-1)+1)$-th to $kj$-th copies of $\witness$.
    \item Output $\pi:=(\pi_1,...,\pi_N)$.
\end{itemize}
\end{description}
Clearly, we have $\ora_0(\pkey',\vkey',\cdot,\cdot)=\ora_P(\pkey',\cdot,\cdot)$  
and $\ora_N(\pkey',\vkey',\cdot,\cdot)=\ora_S(\vkey',\cdot,\cdot)$.\footnote{$\ora_P(\pkey',\cdot,\cdot)$ and $\ora_S(\vkey',\cdot,\cdot)$ mean the corresponding oracles for $\Pi^{N}$.
} 
Therefore, it suffices to prove that no  distinguisher can distinguish $\ora_i(\pkey',\vkey',\cdot,\cdot)$ 
and 
$\ora_{i+1}(\pkey',\vkey',\cdot,\cdot)$
for any $i\in \{0,1,...,N-1\}$. 
For the sake of contradiction, suppose that there exists a distinguisher $\dist'$ that 
distinguishes $\ora_i(\pkey',\vkey',\cdot,\cdot)$ and $\ora_{i+1}(\pkey',\vkey',\cdot,\cdot)$ with a non-negligible advantage by making one query of the form $(\statement,\witness^{\otimes Nk})$.
Then we construct a distinguisher $\dist$ that breaks the zero-knowledge property of $\Pi$ as follows:
\begin{itemize}
    \item[$\dist^{\ora}(\vkey)$:]
    $\dist$ takes $\vkey$ as input and is given a single oracle access to $\ora$, which is either 
    $\ora_P(\pkey,\cdot,\cdot)$
    or  $\ora_S(\vkey,\cdot,\cdot)$ where $\pkey$ is the proving key corresponding to $\vkey$.\footnote{ $\ora_P(\pkey,\cdot,\cdot)$ and $\ora_S(\vkey,\cdot,\cdot)$ mean the corresponding oracles for $\Pi$ by abuse of notation.
}  (Remark that $\dist$ is not given $\pkey$.)
    It sets $\vkey^{i+1}:=\vkey$ (which implicitly defines $\pkey^{i+1}:=\pkey$) and generates $(\pkey^j,\vkey^j)\sample \setup(1^\secpar)$ for all $j\in [N]\setminus \{i+1\}$.
    It sets $\vkey':=(\vkey^1,...,\vkey^N)$ and runs $\dist'^{\ora'}(\vkey')$ where when $\dist'$ makes a query $(\statement,\witness^{\otimes Nk})$ to $\ora'$, $\dist$ simulates the oracle $\ora'$ for $\dist'$ as follows:
    \begin{itemize}
          \item For $1\leq j\leq i$, $\dist$ computes  $\pi_j\sample \siml(\vkey^j,\statement)$.
    \item For $j=i+1$, $\dist$ queries $(\statement, \witness^{\otimes k})$ to the external oracle $\ora$ where it uses the $(ki+1)$-th to $k(i+1)$-th copies of $\witness$ as part of its query, and lets $\pi_{i+1}$ be the oracle's response.
    \item For $i+1<j\leq N$,  it computes $\pi_j\sample \prove(\pkey^j,\statement,\witness^{\otimes k})$ where it uses the $(k(j-1)+1)$-th to $kj$-th copies of $\witness$.
    We note that this can be simulated by $\dist$ since it knows $\pkey^j$  for $j\neq i+1$. 
    \item $\dist$ returns $\pi':=(\pi_1,...,\pi_N)$ to $\dist'$ as a response from the oracle $\ora'$.
    \end{itemize}
    Finally, when $\dist'$ outputs $b$, $\dist$ also outputs $b$.
\end{itemize}
We can see that the oracle $\ora'$ simualted by $\dist$ works similarly to $\ora_i(\pkey',\vkey',\cdot,\cdot)$ 
when $\ora$ is $\ora_P(\pkey,\cdot,\cdot)$
and works similarly to $\ora_{i+1}(\pkey',\vkey',\cdot,\cdot)$ when $\ora$ is $\ora_S(\vkey,\cdot,\cdot)$ where $\pkey'=(\pkey^1,...,\pkey^N)$.
Therefore, by the assumption that $\dist'$ distinguishes $\ora_i(\pkey',\vkey',\cdot,\cdot)$ and $\ora_{i+1}(\pkey',\vkey',\cdot,\cdot)$ with a non-negligible advantage, $\dist$ distinguishes $\ora_P(\pkey,\cdot,\cdot)$ and $\ora_S(\vkey,\cdot,\cdot)$ with a non-negligible advantage.
However, this contradicts the zero-knowledge property of $\Pi$.
Therefore, such $\dist'$ does not exist, which completes the proof of \cref{lem:CV-NIZK_amplification}. 
\end{proof}

\section{Omitted Contents in \cref{sec:Dual-mode}}\label{sec:omitted_dual}
\subsection{Lossy Encryption}\label{sec:definition_lossy}
We give a definition of lossy encryption.

\begin{definition}[Lossy Encryption]\label{def:lossy}
A lossy encryption scheme 
over the message space $\mathcal{M}$ and the randomness space $\mathcal{R}$
consists of PPT algorithms $\Pi_\LE=(\injgen,\lossygen,\enc,\dec)$ with the following syntax.
\begin{description}
\item[$\injgen(1^\secpar)$:]
The injective key generation algorithm takes the security parameter $1^\secpar$ as input and ouputs an \emph{injective} public key $\pk$ and a secret key $\sk$.
\item[$\lossygen(1^\secpar)$:]
The lossy key generation algorithm takes the security parameter $1^\secpar$ as input and ouputs a \emph{lossy} public key $\pk$.
\item[$\enc(\pk,\mu)$:]
The encryption algorithm takes the public key $\pk$ and a message $\mu \in \mathcal{M}$ as input and outputs a ciphertext $\ct$.
This algorithm uses a randomness $R\in \mathcal{R}$.
We denote by  $\enc(\pk,\mu;R)$ to mean that we run $\enc$ on input $\pk$ and $\mu$ and randomness $R$ when we need to clarify the randomness.  
\item[$\dec(\sk,\ct)$:]
The decryption algorithm takes the secret key $\sk$ and a ciphertext $\ct$ as input and outputs a message $\mu$.
\end{description}
We require $\Pi_\LE$ to satisfy the following properties.
\end{definition}

\medskip
\noindent \underline{
\textbf{Correctness on Injective Keys}
}
For all $\mu\in \mathcal{M}$, we have 
\begin{align*}
    \Pr\left[
    \dec(\sk,\ct)=\mu:
    \begin{array}{ll}
    (\pk,\sk)\sample \injgen(1^\secpar)\\
    \ct\sample \enc(\pk,\mu)
    \end{array}
    \right]= 1.
\end{align*}

\medskip
\noindent \underline{
\textbf{Lossiness on Lossy Keys}
}
With overwhelming probability over $\pk\sample \lossygen(1^\secpar)$, for all $\mu_0,\mu_1 \in \mathcal{M}$ and all unbounded-time distinguisher $\dist$, we have  
\begin{align*}
    \left|\Pr\left[
    \dist(\ct)=1:
    \begin{array}{ll}
        \ct\sample \enc(\pk,\mu_0) 
    \end{array}
    \right]-
        \Pr\left[
    \dist(\ct)=1:
    \begin{array}{ll}
        \ct\sample \enc(\pk,\mu_1) 
    \end{array}
    \right]\right|
    \leq \negl(\secpar).
\end{align*}

\medskip
\noindent \underline{
\textbf{Computational Mode Indistinguishability}
}
For any non-uniform QPT distinguisher $\dist$, we have
\begin{align*}
    \left|\Pr\left[\dist(\pk_{\mathsf{inj}})=1\right]
    - \Pr\left[\dist(\pk_{\mathsf{lossy}})=1\right]
    \right|\leq \negl(\secpar)
\end{align*}
where $(\pk_{\mathsf{inj}},\sk)\sample \injgen(1^\secpar)$ and 
$\pk_{\mathsf{lossy}}\sample \lossygen(1^\secpar)$.

It is well-known that Regev's encryption \cite{JACM:Regev09} is lossy encryption under the LWE assumption with a negligible correctness error.
We can modify the scheme to achieve perfect correctness by a standard technique.   
Then we have the following lemma.
 
\begin{lemma}\label{lem:lossy_encryption}
If the LWE assumption holds, then there exists a lossy encryption scheme.
\end{lemma}

\subsection{Remarks on Definition of Dual-Mode CV-NIZK in the \CRSVP model.}\label{sec:remark_definition_dual}
Here we give remarks on the definition of dual-mode CV-NIZK in the \CRSVP model (\cref{def:dual-mode}).
\begin{remark}[On definition of  zero-knowledge property]
By considering a combination of $\crsgen$ (for a fixed $\mode$) and $\preprocess$ as a setup algorithm, (dual-mode) CV-NIZK in the \CRSVP~model can be seen as a CV-NIZK in the QSP model in a syntactical sense.    
However, it seems difficult to prove that this  satisfies (even a computational variant of) the zero-knowledge property  defined in \cref{def:CV-NIZK} due to the following reasons:
\begin{enumerate}
    \item In \cref{def:dual-mode}, $\siml_1$ is quantum, whereas a simulator is required to be classical in \cref{def:CV-NIZK}. 
    We observe that this seems unavoidable in the above model: If $\pkey$ is quantum, then a classical simulator cannot even take $\pkey$ as input. On the other hand, if $\pkey$ is classical, then that implies $L\in\mathbf{AM}$ similarly to the final paragraph of \cref{sec:CV-NIP}. 
    \item A simulator in \cref{def:dual-mode} can embed a trapdoor $\td$ behind the common reference string $\crs$  whereas a simulator in  \cref{def:CV-NIZK} just takes an honestly generated verification key $\vkey$ as input. 
    We remark that this also seems unavoidable since $\vkey$ may be maliciously generated when the verifier is malicious, in which case just taking $\vkey$ as input would be useless for the simulation. 
\end{enumerate}
On the other hand, the definition in \cref{def:dual-mode} allows a distinguisher (that plays the role of a malicious verifier)  to maliciously generate $\pkey$, which is  a stronger capability than that of a distinguisher in  \cref{def:CV-NIZK}. 
Therefore, the zero-knowledge properties in \cref{def:dual-mode} and  \cref{def:CV-NIZK} are incomparable. 
We believe that the definition of the zero-knowledge property in \cref{def:dual-mode} ensures meaningful security.
It 
roughly means that any malicious verifier cannot learn anything beyond what could be computed in quantum polynomial time by itself even if it is allowed to interact with many sessions of honest provers under maliciously generated proving keys and the reused honestly generated common reference string.
While this does not seem very meaningful when $L\in \BQP$, we can ensure a meaningful privacy of the witness when $L\in \QMA$.  
Finally we remark that our definition is essentially the same as that in \cite{C:ColVidZha20}  (except for the dual-mode property).  
\end{remark}
\begin{remark}[Comparison to NIZK in the malicious designated verifier model]
A CV-NIZK for $\QMA$ in the \CRSVP model as defined above is syntactically very similar to the NIZK for $\QMA$ in the malicious designated verifier model as introduced in \cite{Shmueli20}.
However, a crucial difference is that the proving key $\pkey$ is a quantum state in our case and cannot be reused whereas that is classical and can be reused for proving multiple statements in \cite{Shmueli20}.
On the other hand, a CV-NIZK in the \CRSVP model has two nice features  that the NIZK of \cite{Shmueli20} does not have: one is that verification can be done classically in the online phase and the other is the dual-mode property.
\end{remark}

\subsection{Soundness in Hiding Mode and Zero-Knowledge in Binding Mode}\label{sec:dual_mode_transfer}
We prove that a dual-mode CV-NIZK satisfies computational soundness in the hiding mode and computational zero-knowledge in the binding mode.  
\begin{lemma}
If a dual-mode CV-NIZK $\Pi=(\crsgen,\preprocess,\prove,\verify)$ for a $\QMA$ promise problem
$L=(L_\yes,L_\no)$ in the \CRSVP model satisfies statistical $s$-soundness in the binding mode, statistical zero-knowledge property in the hiding mode, and computational mode indistinguishability, then it also satisfies the following properties.

\medskip
\noindent \underline{
\textbf{(Exclusive-Adaptive) Computational $(s+\negl(\secpar))$-Soundness in the Hiding Mode}
}
For all non-uniform QPT adversaries $\A$, we have 
\begin{align*}
    \Pr\left[
  \verify(\crs,\vkey,\statement,\pi)=\top
    :
    \begin{array}{c}
          \crs\sample \crsgen(1^\secpar,\hiding) \\
         (\pkey,\vkey) \sample \preprocess(\crs)\\
       (\statement,\pi) \sample \A(\crs,\pkey)
    \end{array}
    \right]
    \leq s+\negl(\secpar).
\end{align*}
where $\A$'s output must always satisfy $\statement \in L_\no$.  


\medskip
\noindent \underline{
\textbf{(Adaptive Multi-Theorem) Computational Zero-Knowledge in the Binding Mode}
}
There exists a PPT simulator $\siml_0$ and QPT simulator $\siml_1$ such that for any  non-uniform QPT distinguisher $\dist$, we have 
\begin{align*}
    &\left|\Pr\left[\dist^{\ora_P(\crs,\cdot,\cdot,\cdot)}(\crs)=1:
    \begin{array}{c}
          \crs\sample \crsgen(1^\secpar,\binding) 
    \end{array}
    \right]\right.\\     &-\left.\Pr\left[\dist^{\ora_S(\td,\cdot,\cdot,\cdot)}(\crs)=1:
    \begin{array}{c}
          (\crs,\td)\sample \siml_0(1^\secpar)
    \end{array}
    \right]\right|\leq \negl(\secpar)
\end{align*}
where 
$\dist$ can make $\poly(\secpar)$ queries, which should be of the form $(\pkey,\statement,\witness^{\otimes k})$ where $\witness\in R_L(\statement)$ and $\witness^{\otimes k}$ is unentangled with $\dist$'s internal registers, $\ora_P(\crs,\pkey,\statement,\witness^{\otimes k})$ returns $\prove(\crs,\pkey,\statement,\witness^{\otimes k})$, and
  $\ora_S(\td,\pkey,\statement,\witness^{\otimes k})$ returns $\siml_1(\td,\pkey,\statement)$.
\end{lemma}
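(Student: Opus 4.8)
The plan is to show that both properties \emph{transfer} from one mode to the other using computational mode indistinguishability, at the unavoidable cost of downgrading unbounded-time security to QPT security. The key point that makes this work is that every reduction below — one that runs $\preprocess$, $\prove$, $\verify$, or the given adversary/distinguisher and then feeds the output into a mode-indistinguishability challenger — is QPT whenever the adversary/distinguisher is QPT and makes $\poly(\secpar)$ oracle queries, since $\preprocess$ and $\prove$ are QPT, $\verify$ is PPT, and each simulated oracle answer costs one QPT computation.

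\textbf{Soundness in the hiding mode.} Suppose for contradiction that some non-uniform QPT $\A$ whose output always satisfies $\statement \in L_\no$ wins the hiding-mode soundness game with probability at least $s+\epsilon$ for non-negligible $\epsilon$. Build a non-uniform QPT mode distinguisher $\B$: on input $\crs$, it runs $(\pkey,\vkey)\sample\preprocess(\crs)$, then $(\statement,\pi)\sample\A(\crs,\pkey)$, and outputs $1$ iff $\verify(\crs,\vkey,\statement,\pi)=\top$. When $\crs\sample\crsgen(1^\secpar,\hiding)$ we have $\Pr[\B=1]\ge s+\epsilon$. When $\crs\sample\crsgen(1^\secpar,\binding)$, observe that $\B$ together with its internal $\A$ is a legitimate unbounded-time adversary against statistical soundness in the binding mode, and because $\A$ always outputs $\statement\in L_\no$, the event ``$\B$ outputs $1$'' coincides with the winning event ``$\statement\in L_\no \wedge \verify=\top$''; hence $\Pr[\B=1]\le s$. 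Thus $\B$ has advantage at least $\epsilon$, contradicting mode indistinguishability.

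\textbf{Zero-knowledge in the binding mode.} Take $(\siml_0,\siml_1)$ to be exactly the simulators guaranteed by statistical zero-knowledge in the hiding mode. Fix a non-uniform QPT distinguisher $\dist$ making $\poly(\secpar)$ queries, and let $p_i$ denote the probability that $\dist$ outputs $1$ in game $G_i$, where $G_1$ is the real binding-mode game ($\crs\sample\crsgen(1^\secpar,\binding)$, oracle $\ora_P(\crs,\cdot,\cdot,\cdot)$), $G_2$ is the same game with $\crs\sample\crsgen(1^\secpar,\hiding)$ instead, and $G_3$ is the simulated game ($(\crs,\td)\sample\siml_0(1^\secpar)$, oracle $\ora_S(\td,\cdot,\cdot,\cdot)$ using $\siml_1$). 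The reduction that receives $\crs$, forwards it to $\dist$, and answers each of $\dist$'s queries itself by running $\prove$ is a non-uniform QPT mode distinguisher, so $|p_1-p_2|=\negl(\secpar)$ by computational mode indistinguishability; and $|p_2-p_3|=\negl(\secpar)$ holds directly by the hiding-mode zero-knowledge property, which applies in particular to the QPT (hence unbounded-time) distinguisher $\dist$. Combining, $|p_1-p_3|=\negl(\secpar)$, which is exactly adaptive multi-theorem computational zero-knowledge in the binding mode. Note that only computational security is obtained because mode indistinguishability is used, and it only holds against QPT distinguishers.

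\textbf{Main obstacle.} The argument is essentially routine; the one genuinely subtle point — and the reason the lemma is stated the way it is — is why soundness in the hiding mode is phrased in the ``exclusive style'' (the adversary is promised to output $\statement\in L_\no$) rather than the ``penalizing style'' of \cref{def:dual-mode}. In the penalizing style, the winning event contains the clause $\statement\in L_\no$, so the mode distinguisher $\B$ constructed above would need to decide $L_\no$-membership to compute its output bit faithfully, which it cannot do in polynomial time; in the exclusive style this clause holds by promise and $\B$ only needs the efficiently checkable condition $\verify=\top$. This mismatch — together with the impossibility of penalizing-adaptively computationally sound, statistically zero-knowledge NIZK for $\NP$ under falsifiable assumptions~\cite{TCC:Pass13} — shows that no stronger transfer statement should be expected. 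A fully analogous situation was worked out for dual-mode NIZK for $\NP$ in~\cite{INDOCRYPT:ArtBel20}, and the sketch above is the straightforward adaptation to CV-NIZK for $\QMA$ in the \CRSVP model.
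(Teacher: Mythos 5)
Your proposal is correct and follows exactly the route the paper has in mind: the paper omits the formal proof, stating only that soundness and zero-knowledge transfer between modes via computational mode indistinguishability (with the degradation to computational security) as in the classical dual-mode setting of Arte--Bellare, and your two reductions plus the three-game hybrid are precisely the standard instantiation of that sketch. You also correctly identify the one subtle point — that the exclusive-style formulation is what makes the soundness-transferring mode distinguisher efficiently computable — which the paper itself flags in the remark following the lemma.
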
  
Intuitively, the above lemma holds because soundness and zero-knowledge should transfer from one mode to the other by the mode indistinguishability since otherwise we can  distinguish the two modes. 
Here, security degrades to  computational ones as the mode indistinguishability only holds against QPT distinguishers. 
We omit a formal proof since this is easy and can be proven similarly to a similar statement for dual-mode NIZKs for $\NP$, which has been folklore and formally proven recently \cite{INDOCRYPT:ArtBel20}. 

\begin{remark}
Remark that soundness in the hiding mode is 
defined in the ``exclusive style" where $\A$ should always output $\statement \in L_\no$. 
This is weaker than  soundness in the ``penalizing style" as in \cref{def:dual-mode} where $\A$ is allowed to also output $\statement \in L_\yes$ and we add $\statement \in L_\no$ as part of the adversary's winning condition.  
This is because the adaptive soundness in the penalizing style does not transfer well through the mode change while the adaptive soundness in the exclusive style does.
This was formally proven for NIZK for $\NP$ in the common reference string model  in \cite{INDOCRYPT:ArtBel20}, and easily extends to CV-NIZK for $\QMA$ in the \CRSVP model.
This is justified by the impossibility of  penalizing-adaptively (computational) sound and statistically zero-knowledge NIZK for $\NP$ in the classical setting (under falsifiable assumptions)  \cite{TCC:Pass13}.
We leave it open to study if a similar impossibility holds for dual-mode CV-NIZK for $\QMA$ in the \CRSVP model.
\end{remark}

\subsection{Remark on Security Definition of Dual-Mode Oblivous Transfer}\label{sec:remark_definition_OT}
We give a remark on the definition of a dual-mode oblivious transfer (\cref{def:OT}). 
\begin{remark}[On security definition of dual-mode oblivious transfer]
We remark that security of a $k$-out-of-$n$  dual-mode  oblivious transfer as defined in \cref{def:OT}
does not imply UC-security \cite{Canetti20,C:PeiVaiWat08,SCN:Quach20} or even full-simulation security  in the standard stand-alone simulation-based definition \cite{RSA:Lindell08a}.
This is because the receiver's security in \cref{def:OT} only ensures privacy of $J$ and does not prevent a malicious sender from generating $\ot_2$ so that he can manipulate the message derived on the receiver's side depending on $J$.  
The security with such a weaker receiver's security is often referred to as half-simulation security \cite{EC:CamNevshe07}.  
We define the security in this way due to the following reasons:
\begin{enumerate}
    \item This definition is sufficient for constructing a dual-mode CV-NIZK in the \CRSVP model given in \cref{sec:construction_dual-mode}
    by additionally relying on lossy encryption. 
    \item We are not aware of an efficient construction of a $k$-out-of-$n$ oblivious transfer  that satisfies full-simulation security under a post-quantum assumption (even if we ignore the dual-mode property).  
    We note that Quach \cite{SCN:Quach20} gave a construction of a $1$-out-of-$2$ oblivious transfer with full-simulation security based on LWE and we can extend it to $1$-out-of-$n$ one.\footnote{His construction further satisfies UC-security, which is stronger than full-simulation security.}
    However, we are not aware of an efficient way to convert this into $k$-out-of-$n$ one without losing the full-simulation security. 
    We note that a conversion from $1$-out-of-$n$ to $k$-out-of-$n$ oblivious transfer by a simple $k$-parallel repetition  does not work if we require the  full-simulation security since a malicious sender can send different inconsistent messages in different sessions, which should be considered as an attack against full-simulation security.
One possible way to prevent such an inconsistent message attack is to let the sender prove that the messages in all sessions are consistent by using (post-quantum) NIZK for $\NP$ in the common reference string model \cite{C:PeiShi19}.
However, such a construction is very inefficient since it uses the underlying $1$-out-of-$n$ oblivious transfer in a non-black-box manner.
On the other hand, the half-simulation security is preserved under parallel repetitions as shown in \cref{sec:OT}, and thus we can achieve this much more efficiently.  
\end{enumerate}
\end{remark}

\subsection{Omitted Security Proofs}\label{sec:omitted_sec_proof_dual_mode_NIZK}
Here, we prove \cref{lem:DM_soundness,lem:DM_ZK}, i.e., prove soundness and the zero-knowledge property of $\Pi_\DM$ described in \cref{fig:dual}.

\section{Omitted Contents in \cref{sec:Fiat-Shamir}}
\subsection{Definition of non-interactive commitment scheme}\label{sec:def_commitment}

\subsection{Omitted remarks}\label{sec:remark_definition_sigma}

\subsection{Proof of \cref{thm:sigma}}\label{sec:proof_sigma}

\subsection{Proof of \cref{lem:QRO_completeness_soundness}}\label{sec:proof_QRO_completeness_soundness}

\subsection{Proof of \cref{lem:QRO_ZK}}\label{sec:proof_QRO_ZK}

\fi
\newpage
  \tableofcontents
  \thispagestyle{empty}
\end{document}